\theoremstyle{plain}
\newaliascnt{theorem}{dummy}
\newtheorem{theorem}[theorem]{Theorem}
\newaliascnt{proposition}{dummy}
\newtheorem{proposition}[proposition]{Proposition}
\newaliascnt{corollary}{dummy}
\newtheorem{corollary}[corollary]{Corollary}
\newaliascnt{lemma}{dummy}
\newtheorem{lemma}[lemma]{Lemma}
\newaliascnt{conjecture}{dummy}
\theoremstyle{definition}
\newaliascnt{definition}{dummy}
\newtheorem{definition}[definition]{Definition}
\newaliascnt{example}{dummy}
\theoremstyle{remark}
\newaliascnt{remark}{dummy}
\newtheorem{remark}[remark]{Remark}
\newcommand{\titlezh}[1]{\gdef\@titlezh{#1}}%
\newcommand{\@titlezh}{\@latex@warning@no@line{No \noexpand\titlezh given}}
\newcommand{\degree}[1]{\gdef\@degree{#1}}%
\newcommand{\@degree}{\@latex@warning@no@line{No \noexpand\degree given}}
\newcommand{\programme}[1]{\gdef\@programme{#1}}%
\newcommand{\@programme}{\@latex@warning@no@line{No \noexpand\programme given}}
\newcommand{\institution}[1]{\gdef\@institution{#1}}%
\newcommand{\@institution}{\@latex@warning@no@line{No \noexpand\institution given}}
\newcommand{\committee}[1]{\gdef\@committee{#1}}%
\newcommand{\@committee}{\@latex@warning@no@line{No \noexpand\committee given}}
\renewcommand{\maketitle}{%
	\thispagestyle{empty}
	\pdfbookmark{Titlepage}{titlepage}
	\vspace*{5mm}
	\begin{center}
	\Large{ \bf \@title}
	\end{center}
	\vspace*{20mm}
	
	\begin{center}
	\@author
	\end{center}
	\vspace*{20mm}

	\begin{center}
	A Thesis Submitted in Partial Fulfilment \\
	of the Requirements for the Degree of \\
	\@degree \\
	in \\
	\@programme
	\end{center}
	\vspace*{20mm}

	\begin{center}
	\@institution \\
	\@date
	\end{center}

}
\newcommand{%
	\chapter*{Abstract}
	\addcontentsline{toc}{chapter}{Abstract}

\emph{One-shot information theory} addresses scenarios in source coding and
channel coding where the signal blocklength is assumed to be $1$. 
In this case, each source and channel can be used only once, and the sources and channels are arbitrary and not required to be memoryless or ergodic. 
We study the \emph{achievability} part of one-shot information theory, i.e., we consider explicit coding schemes in the one-shot scenario.
The objective is to derive one-shot achievability results that can imply existing (first-order and second-order) asymptotic results when applied to memoryless sources and channels, or applied to systems with memory that behave ergodically.

Poisson functional representation was first proposed as a one-shot channel simulation technique by Li and El Gamal~\cite{li2018strong} for proving a strong functional representation lemma. It was later extended to the Poisson matching lemma by Li and Anantharam~\cite{li2021unified}, which provided a unified one-shot coding scheme for a broad class of information-theoretic problems. The main contribution of this thesis is to extend the applicability of Poisson functional representation to various more complicated scenarios, where the original version cannot be applied directly and further extensions must be developed. Below, we highlight some of the key contributions.

\begin{enumerate}
    \item In Chapter~\ref{chp:nnc}, we design a unified one-shot coding framework for the communication and compression of messages among multiple nodes across a \emph{general} acyclic noisy network. This framework can be viewed as a one-shot counterpart to the unified random coding bound studied by Lee and Chung~\cite{lee2018unified}, as well as the noisy network coding developed by Lim \emph{et al.}~\cite{lim2011noisy}. Our general framework not only recovers a wide range of existing one-shot and asymptotic results but also provides novel one-shot achievability results for various network information theory problems.

    \item In Chapter~\ref{chp:hiding}, we examine two classes of secrecy problems where the channel conditions are \emph{unknown} to the encoder and the decoder, based on the Poisson matching lemma and a covering argument. 
    We provide one-shot achievability results for a generalized information hiding setting~\cite{moulin2003information} and the compound wiretap channel~\cite{liang2009compound}, each of which recovers many existing problems as special cases.

    \item In Chapter~\ref{chp:ppr}, leveraging the Poisson functional representation, we design a novel construction called \emph{Poisson private representation} that can compress arbitrary differential privacy mechanisms. 
    It is the first scheme that achieves a close-to-optimal compression size (within a logarithmic gap), exactness of the output distribution (thus preserving all the desirable statistical properties of the original privacy mechanism, such as unbiasedness and Gaussianity), while ensuring local differential privacy. 
New trade-offs among communication, accuracy, and central and local differential privacy are established, and experimental advantages are demonstrated across different applications.
\end{enumerate}

	\clearpage
}{%
	\chapter*{Abstract}
	\addcontentsline{toc}{chapter}{Abstract}

\emph{One-shot information theory} addresses scenarios in source coding and
channel coding where the signal blocklength is assumed to be $1$. 
In this case, each source and channel can be used only once, and the sources and channels are arbitrary and not required to be memoryless or ergodic. 
We study the \emph{achievability} part of one-shot information theory, i.e., we consider explicit coding schemes in the one-shot scenario.
The objective is to derive one-shot achievability results that can imply existing (first-order and second-order) asymptotic results when applied to memoryless sources and channels, or applied to systems with memory that behave ergodically.

Poisson functional representation was first proposed as a one-shot channel simulation technique by Li and El Gamal~\cite{li2018strong} for proving a strong functional representation lemma. It was later extended to the Poisson matching lemma by Li and Anantharam~\cite{li2021unified}, which provided a unified one-shot coding scheme for a broad class of information-theoretic problems. The main contribution of this thesis is to extend the applicability of Poisson functional representation to various more complicated scenarios, where the original version cannot be applied directly and further extensions must be developed. Below, we highlight some of the key contributions.

\begin{enumerate}
    \item In Chapter~\ref{chp:nnc}, we design a unified one-shot coding framework for the communication and compression of messages among multiple nodes across a \emph{general} acyclic noisy network. This framework can be viewed as a one-shot counterpart to the unified random coding bound studied by Lee and Chung~\cite{lee2018unified}, as well as the noisy network coding developed by Lim \emph{et al.}~\cite{lim2011noisy}. Our general framework not only recovers a wide range of existing one-shot and asymptotic results but also provides novel one-shot achievability results for various network information theory problems.

    \item In Chapter~\ref{chp:hiding}, we examine two classes of secrecy problems where the channel conditions are \emph{unknown} to the encoder and the decoder, based on the Poisson matching lemma and a covering argument. 
    We provide one-shot achievability results for a generalized information hiding setting~\cite{moulin2003information} and the compound wiretap channel~\cite{liang2009compound}, each of which recovers many existing problems as special cases.

    \item In Chapter~\ref{chp:ppr}, leveraging the Poisson functional representation, we design a novel construction called \emph{Poisson private representation} that can compress arbitrary differential privacy mechanisms. 
    It is the first scheme that achieves a close-to-optimal compression size (within a logarithmic gap), exactness of the output distribution (thus preserving all the desirable statistical properties of the original privacy mechanism, such as unbiasedness and Gaussianity), while ensuring local differential privacy. 
New trade-offs among communication, accuracy, and central and local differential privacy are established, and experimental advantages are demonstrated across different applications.
\end{enumerate}

	\clearpage
}
\newcommand{%
	\chapter*{Acknowledgement}
	\addcontentsline{toc}{chapter}{Acknowledgement}

This dissertation would not have been possible without the support of my advisors, mentors, collaborators, colleagues, friends, family, and many other individuals. 
I am deeply indebted to all of them for their kind assistance throughout this pleasant journey.

First and foremost, I would like to express my deepest gratitude to my advisors, Prof.\ Cheuk Ting Li and Prof.\ Raymond W.\ Yeung, for their support throughout my time at CUHK. 
I feel honored to hold the distinguished title of being one of the first two PhD students of Prof.\ Li, which afforded me privileges that very few PhD students enjoy: whenever I came up with vague ideas or felt confused, Prof.\ Li always made time to meet with me with great patience. 
Most of the time, these ideas would later prove to be useless, especially at the beginning of my PhD journey, but Prof.\ Li was always patient enough to listen, take them seriously, and think through them with me. 
Such support is crucial for a young PhD student. 
Along with his guidance on proving theorems, writing academic papers, preparing presentations, and allowing me to explore freely in various areas of information theory, I wish to express my deepest gratitude to him. 
If I have the opportunity to advise students in the future, I will remember the patience and generosity I received from Prof.\ Li, and I hope I can inspire my students as he inspired me. 
I also feel privileged to be co-advised by Prof.\ Yeung, one of the giants in information theory. 
As I gained more experience, I learned more and more from Prof.\ Yeung about how to embrace unexpected results in research, discover the extraordinary in the ordinary, and distill complicated findings into clear insights. 
His two courses laid the foundation of my knowledge in information theory. 
I sincerely thank both of my advisors for their outstanding guidance and support.

Throughout my PhD journey, I had the privilege of visiting Stanford University for half a year. 
I am grateful to Prof.\ Li for supporting my visit and to Prof.\ Ayfer \"{O}zg\"{u}r for hosting me. 
This experience was crucial for my academic development, as I attended several workshops in California, finished an interesting project with Prof.\ Li and Prof. \"{O}zg\"{u}r, and made many friends in the Bay Area. 
I am also thankful to Prof.\ \"{O}zg\"{u}r for her guidance not only on the project we worked on but also on research methodology in general. 
She was so kind and supportive that I felt re-energized after every meeting with her.

I would like to extend my heartfelt thanks to Prof.\ Chandra Nair. Though not officially my advisor, he was like one to me. 
From him, I learned not only research skills but also the attitude a great scholar should possess. 
The first time I felt capable of working something out independently was during a project in his course. 
No matter how naïve my questions were, he was always willing to answer them and support my ideas. 
I have learned a lot from his persistence, creativity, discipline, and life philosophy.

I would also like to thank Prof.\ Amin Gohari for his guidance on how to explain sophisticated ideas clearly and how to ask good questions to elaborate on concepts. 
I am grateful to Prof.\ Pascal O.\ Vontobel for his excellent course on coding theory and his inspiring papers, from which I learned how to explain complex ideas with detailed and illustrative examples. 
It has been a privilege to study information theory in a department with so many experts in the field, a rarity anywhere in the world. 

Moreover, I would also like to thank Prof.\ Vincent Y. F. Tan and Prof.\ Angela Yingjun Zhang for serving on my thesis committee. Their comments and suggestions are valuable to this thesis.

My PhD journey would not have been possible without the help of Prof.\ Shenghao Yang, who guided me for more than three years when I was an undergraduate and introduced me to the field of information theory. 
He taught me how to conduct good research, solve difficult problems step by step, and write first-class academic papers, even before I began my PhD studies, and he continued to support me throughout the journey. 
This journey would also not have been possible without Prof.\ Ximing Fu, who collaborated with me on a project closely and taught me many things.

I would also like to extend my thanks to professors who offered me valuable advice when I was an undergraduate deciding to pursue a PhD at CUHK, including Prof.\ Jianwei Huang, who provided extremely helpful advice on how to plan my PhD studies wisely; Prof.\ Gongqiu Zhang, who hosted my final-year project and advised me on PhD opportunities; Prof.\ Qin Wang, who guided me through multiple courses and provided me invaluable advices; Prof.\ Kenneth Shum, who encouraged me to pursue information theory despite its challenges; and Prof.\ Christopher Kluz and Dr.\ Lucas Scripter.

I would like to thank my other coauthors, each of whom taught me a lot during our collaborations, including Prof.\ Yi Chen, Yijun Fan, Chih Wei Ling, Wei-Ning Chen, and Sepehr Heidari Advary. 
Special thanks to Chih Wei and Yijun, who supported me many times when I was frustrated.

Except my coauthors, my life at CUHK would have been much more tedious without my other friends from the department: Jianguo Zhao, Xiang Li, Chin Wa (Ken) Lau, Jinpei Zhao, Zijie Chen, Zhaobang Zhu, Yi Liu, Jiaxin Qing, Xiaohong Cai, Yulin Chen, Yuwen Huang, Junda Zhou, Yicheng Cui, Binghong Wu, Zhenduo Wen, Chenyu Wang, and Chun Hei (Michael) Shiu. 
Many thanks also to the friends I met at Stanford: Dan Song and Andy Dong.

I am deeply appreciative of Prof.\ Chak Wong, who was the most important mentor to me outside of research. 
He taught me how to read, how to think, how to live a happy life, and how to overcome difficulties. 
Whenever I feel lost, I think about what Chak shared with me and the books we read together, and I regain my passion and confidence. 
I cannot be more grateful.

I have been fortunate to have great friends who took care of me when I was unwell: Qingyan Chen, Xiaoyu Yue, Taolin Liu, and Yue Qin. 
I would also like to thank my great friends with whom I can discuss not only research but also personal life: Chengchang Liu, Guodong Li, Licheng Mao, Jie Wang, and Yanyan Dong. 
My deepest thanks also go to Xiao Tan for her encouragement.

Last but not least, I would like to thank Dr.\ Raymond Tung for his treatment, along with the other doctors, nurses, and helpers from Prince of Wales Hospital whose names I do not know. 
Without their careful treatment and support, I would not have been able to write this thesis. 
There are no words that can fully express my gratitude to them.

Lastly, I dedicate this thesis to my parents, Jiayong Liu and Baorong Ma, whose unconditional love and support have been my constant motivation. 
I am forever indebted to them.

	\clearpage
}{%
	\chapter*{Acknowledgement}
	\addcontentsline{toc}{chapter}{Acknowledgement}

This dissertation would not have been possible without the support of my advisors, mentors, collaborators, colleagues, friends, family, and many other individuals. 
I am deeply indebted to all of them for their kind assistance throughout this pleasant journey.

First and foremost, I would like to express my deepest gratitude to my advisors, Prof.\ Cheuk Ting Li and Prof.\ Raymond W.\ Yeung, for their support throughout my time at CUHK. 
I feel honored to hold the distinguished title of being one of the first two PhD students of Prof.\ Li, which afforded me privileges that very few PhD students enjoy: whenever I came up with vague ideas or felt confused, Prof.\ Li always made time to meet with me with great patience. 
Most of the time, these ideas would later prove to be useless, especially at the beginning of my PhD journey, but Prof.\ Li was always patient enough to listen, take them seriously, and think through them with me. 
Such support is crucial for a young PhD student. 
Along with his guidance on proving theorems, writing academic papers, preparing presentations, and allowing me to explore freely in various areas of information theory, I wish to express my deepest gratitude to him. 
If I have the opportunity to advise students in the future, I will remember the patience and generosity I received from Prof.\ Li, and I hope I can inspire my students as he inspired me. 
I also feel privileged to be co-advised by Prof.\ Yeung, one of the giants in information theory. 
As I gained more experience, I learned more and more from Prof.\ Yeung about how to embrace unexpected results in research, discover the extraordinary in the ordinary, and distill complicated findings into clear insights. 
His two courses laid the foundation of my knowledge in information theory. 
I sincerely thank both of my advisors for their outstanding guidance and support.

Throughout my PhD journey, I had the privilege of visiting Stanford University for half a year. 
I am grateful to Prof.\ Li for supporting my visit and to Prof.\ Ayfer \"{O}zg\"{u}r for hosting me. 
This experience was crucial for my academic development, as I attended several workshops in California, finished an interesting project with Prof.\ Li and Prof. \"{O}zg\"{u}r, and made many friends in the Bay Area. 
I am also thankful to Prof.\ \"{O}zg\"{u}r for her guidance not only on the project we worked on but also on research methodology in general. 
She was so kind and supportive that I felt re-energized after every meeting with her.

I would like to extend my heartfelt thanks to Prof.\ Chandra Nair. Though not officially my advisor, he was like one to me. 
From him, I learned not only research skills but also the attitude a great scholar should possess. 
The first time I felt capable of working something out independently was during a project in his course. 
No matter how naïve my questions were, he was always willing to answer them and support my ideas. 
I have learned a lot from his persistence, creativity, discipline, and life philosophy.

I would also like to thank Prof.\ Amin Gohari for his guidance on how to explain sophisticated ideas clearly and how to ask good questions to elaborate on concepts. 
I am grateful to Prof.\ Pascal O.\ Vontobel for his excellent course on coding theory and his inspiring papers, from which I learned how to explain complex ideas with detailed and illustrative examples. 
It has been a privilege to study information theory in a department with so many experts in the field, a rarity anywhere in the world. 

Moreover, I would also like to thank Prof.\ Vincent Y. F. Tan and Prof.\ Angela Yingjun Zhang for serving on my thesis committee. Their comments and suggestions are valuable to this thesis.

My PhD journey would not have been possible without the help of Prof.\ Shenghao Yang, who guided me for more than three years when I was an undergraduate and introduced me to the field of information theory. 
He taught me how to conduct good research, solve difficult problems step by step, and write first-class academic papers, even before I began my PhD studies, and he continued to support me throughout the journey. 
This journey would also not have been possible without Prof.\ Ximing Fu, who collaborated with me on a project closely and taught me many things.

I would also like to extend my thanks to professors who offered me valuable advice when I was an undergraduate deciding to pursue a PhD at CUHK, including Prof.\ Jianwei Huang, who provided extremely helpful advice on how to plan my PhD studies wisely; Prof.\ Gongqiu Zhang, who hosted my final-year project and advised me on PhD opportunities; Prof.\ Qin Wang, who guided me through multiple courses and provided me invaluable advices; Prof.\ Kenneth Shum, who encouraged me to pursue information theory despite its challenges; and Prof.\ Christopher Kluz and Dr.\ Lucas Scripter.

I would like to thank my other coauthors, each of whom taught me a lot during our collaborations, including Prof.\ Yi Chen, Yijun Fan, Chih Wei Ling, Wei-Ning Chen, and Sepehr Heidari Advary. 
Special thanks to Chih Wei and Yijun, who supported me many times when I was frustrated.

Except my coauthors, my life at CUHK would have been much more tedious without my other friends from the department: Jianguo Zhao, Xiang Li, Chin Wa (Ken) Lau, Jinpei Zhao, Zijie Chen, Zhaobang Zhu, Yi Liu, Jiaxin Qing, Xiaohong Cai, Yulin Chen, Yuwen Huang, Junda Zhou, Yicheng Cui, Binghong Wu, Zhenduo Wen, Chenyu Wang, and Chun Hei (Michael) Shiu. 
Many thanks also to the friends I met at Stanford: Dan Song and Andy Dong.

I am deeply appreciative of Prof.\ Chak Wong, who was the most important mentor to me outside of research. 
He taught me how to read, how to think, how to live a happy life, and how to overcome difficulties. 
Whenever I feel lost, I think about what Chak shared with me and the books we read together, and I regain my passion and confidence. 
I cannot be more grateful.

I have been fortunate to have great friends who took care of me when I was unwell: Qingyan Chen, Xiaoyu Yue, Taolin Liu, and Yue Qin. 
I would also like to thank my great friends with whom I can discuss not only research but also personal life: Chengchang Liu, Guodong Li, Licheng Mao, Jie Wang, and Yanyan Dong. 
My deepest thanks also go to Xiao Tan for her encouragement.

Last but not least, I would like to thank Dr.\ Raymond Tung for his treatment, along with the other doctors, nurses, and helpers from Prince of Wales Hospital whose names I do not know. 
Without their careful treatment and support, I would not have been able to write this thesis. 
There are no words that can fully express my gratitude to them.

Lastly, I dedicate this thesis to my parents, Jiayong Liu and Baorong Ma, whose unconditional love and support have been my constant motivation. 
I am forever indebted to them.

	\clearpage
}
\newcommand{%
	\chapter*{List of Publications}
	\addcontentsline{toc}{chapter}{List of Publications}
	During the PhD studies, Yanxiao Liu has published the following works: 

\begin{enumerate}
    \item Liu, Yanxiao, Sepehr Heidari Advary, and Cheuk Ting Li. ``Nonasymptotic Oblivious Relaying and Variable-Length Noisy Lossy Source Coding.'' 2025 IEEE International Symposium on Information Theory (ISIT). 
    
    \item Liu, Yanxiao, Wei-Ning Chen, Ayfer \"{O}zg\"{u}r, and Cheuk Ting Li. ``Universal exact compression of differentially private mechanisms.'' Advances in Neural Information Processing Systems 37 (2024): 91492-91531.
    
    \item Liu, Yanxiao, and Cheuk Ting Li. ``One-Shot Information Hiding.'' 2024 IEEE Information Theory Workshop (ITW), pp. 169-174. IEEE, 2024. (c) 2024 IEEE. Reprinted, with permission. 
    
    \item Liu, Yanxiao, and Cheuk Ting Li. ``One-shot coding over general noisy networks.'' 2024 IEEE International Symposium on Information Theory (ISIT), pp. 3124-3129. IEEE, 2024. 
    Full version was submitted to IEEE Transactions on Information Theory. (c) 2024 IEEE. Reprinted, with permission. 
    
    \item Ling, Chih Wei, Yanxiao Liu, and Cheuk Ting Li. ``Weighted parity-check codes for channels with state and asymmetric channels.'' IEEE Transactions on Information Theory (2024). Short version was presented at 2023 IEEE International Symposium on Information Theory (ISIT), pp. 3103-3108. IEEE, 2022.
    
    \item Yang, Shenghao, Jun Ma, and Yanxiao Liu. ``Wireless network scheduling with discrete propagation delays: Theorems and algorithms.'' IEEE Transactions on Information Theory 70, no. 3 (2023): 1852-1875. 
    Short version was presented at IEEE INFOCOM 2021-IEEE Conference on Computer Communications, pp. 1-10.
    
    \item Fan, Yijun, Yanxiao Liu, Yi Chen, Shenghao Yang, and Raymond W. Yeung. ``Reliable throughput of generalized collision channel without synchronization.'' 2023 IEEE International Symposium on Information Theory (ISIT), pp. 2697-2702. IEEE, 2023.
    
    \item Fan, Yijun, Yanxiao Liu, and Shenghao Yang. ``Continuity of link scheduling rate region for wireless networks with propagation delays.'' 2022 IEEE International Symposium on Information Theory (ISIT), pp. 730-735. IEEE, 2022.
\end{enumerate}

This thesis covers the second, third and fourth publications. 
	\clearpage
}{%
	\chapter*{List of Publications}
	\addcontentsline{toc}{chapter}{List of Publications}
	During the PhD studies, Yanxiao Liu has published the following works: 

\begin{enumerate}
    \item Liu, Yanxiao, Sepehr Heidari Advary, and Cheuk Ting Li. ``Nonasymptotic Oblivious Relaying and Variable-Length Noisy Lossy Source Coding.'' 2025 IEEE International Symposium on Information Theory (ISIT). 
    
    \item Liu, Yanxiao, Wei-Ning Chen, Ayfer \"{O}zg\"{u}r, and Cheuk Ting Li. ``Universal exact compression of differentially private mechanisms.'' Advances in Neural Information Processing Systems 37 (2024): 91492-91531.
    
    \item Liu, Yanxiao, and Cheuk Ting Li. ``One-Shot Information Hiding.'' 2024 IEEE Information Theory Workshop (ITW), pp. 169-174. IEEE, 2024. (c) 2024 IEEE. Reprinted, with permission. 
    
    \item Liu, Yanxiao, and Cheuk Ting Li. ``One-shot coding over general noisy networks.'' 2024 IEEE International Symposium on Information Theory (ISIT), pp. 3124-3129. IEEE, 2024. 
    Full version was submitted to IEEE Transactions on Information Theory. (c) 2024 IEEE. Reprinted, with permission. 
    
    \item Ling, Chih Wei, Yanxiao Liu, and Cheuk Ting Li. ``Weighted parity-check codes for channels with state and asymmetric channels.'' IEEE Transactions on Information Theory (2024). Short version was presented at 2023 IEEE International Symposium on Information Theory (ISIT), pp. 3103-3108. IEEE, 2022.
    
    \item Yang, Shenghao, Jun Ma, and Yanxiao Liu. ``Wireless network scheduling with discrete propagation delays: Theorems and algorithms.'' IEEE Transactions on Information Theory 70, no. 3 (2023): 1852-1875. 
    Short version was presented at IEEE INFOCOM 2021-IEEE Conference on Computer Communications, pp. 1-10.
    
    \item Fan, Yijun, Yanxiao Liu, Yi Chen, Shenghao Yang, and Raymond W. Yeung. ``Reliable throughput of generalized collision channel without synchronization.'' 2023 IEEE International Symposium on Information Theory (ISIT), pp. 2697-2702. IEEE, 2023.
    
    \item Fan, Yijun, Yanxiao Liu, and Shenghao Yang. ``Continuity of link scheduling rate region for wireless networks with propagation delays.'' 2022 IEEE International Symposium on Information Theory (ISIT), pp. 730-735. IEEE, 2022.
\end{enumerate}

This thesis covers the second, third and fourth publications. 
	\clearpage
}
\newcommand{\addloftotoc}{%
	\cleardoublepage
	\phantomsection
	\addcontentsline{toc}{chapter}{\listfigurename}
}
\newcommand{\addbibtotoc}{%
	\phantomsection
	\addcontentsline{toc}{chapter}{\bibname}
}
\newcommand{\addtocbm}{%
	\cleardoublepage
	\pdfbookmark{\contentsname}{toc}
}
\title{One-Shot Coding and Applications}
\author{LIU, Yanxiao}
\institution{The Chinese University of Hong Kong}
\date{July 2025}
\begin{document}
\maketitle 

\pagenumbering{roman}
\setcounter{page}{0}

	\chapter*{Abstract}
	\addcontentsline{toc}{chapter}{Abstract}
	
	\clearpage

%
	\chapter*{Acknowledgement}
	\addcontentsline{toc}{chapter}{Acknowledgement}
	
	\clearpage

%
	\chapter*{List of Publications}
	\addcontentsline{toc}{chapter}{List of Publications}
	
	\clearpage

\hypersetup{linkcolor=black}
\addtocbm
\tableofcontents 

\addloftotoc
\listoffigures 

\hypersetup{linkcolor=blue}
\clearpage
\pagenumbering{arabic}
\setcounter{page}{1}


\chapter{Introduction} \label{chp:intro}

\section{Background of One-Shot Information Theory}
\label{sec::bkgrd_1shot}

In information theory, which originated from Shannon~\cite{shannon1948mathematical}, the goal is to determine optimal and reliable transmission rates over channels, or optimal compression rates for sources. 
Conventional information-theoretic analyses often rely on the asymptotic equipartition property, typicality-based proofs, and the law of large numbers to characterize the behavior of channels and sources in the asymptotic regime~\cite{el2011network}.

\begin{figure}[htpb]
	\centering
	\includegraphics[scale=0.37]{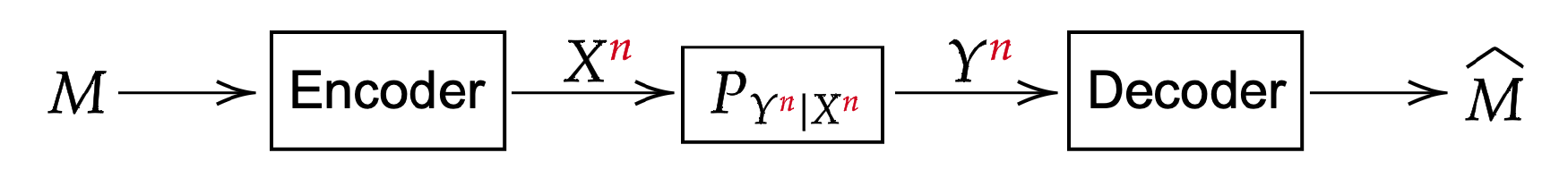}
	\caption{Channel coding setting in the large blocklength limit.} 
	\label{fig::channel_asymp}
\end{figure}

Take channel coding as an example. Figure~\ref{fig::channel_asymp} illustrates the conventional channel coding setting: a message $M$ of length $k$ is encoded to an input sequence $X^n = (X_1,\ldots, X_n)$; a decoder observes $Y^n = (Y_1,\ldots, Y_n)$ through a discrete memoryless channel, and outputs $\hat{M}$. 
Shannon's channel coding theorem~\cite{shannon1948mathematical} states that when the blocklength $n$ is large, i.e.,  $n\rightarrow \infty$, the channel capacity, which is defined as the maximum communication rate $k/n$ in bits per channel transmissions such that $\mathbf{P}(M \neq \hat{M})$ can be made
arbitrarily small~\cite{shannon1948mathematical}, is given by 
\begin{equation}
    C = \max_{P_X}\, I(X; Y),
    \label{eq::channel_capacity}
\end{equation}
where $I(X;Y)$ is the mutual information between $X$ and $Y$.

However, a critical practical issue is that packet lengths are never infinite and can, in fact, be very short in real-world applications—for example, in ultra-reliable low-latency communications~\cite{durisi2016toward}. 
Motivated by this, \emph{finite blocklength information theory} has been extensively studied over the past decade. 
The goal is to provide nonasymptotic guarantees in scenarios where the number of channel uses is limited~\cite{kostina2013lossy, wang2011dispersion, tan2013dispersions, polyanskiy2010channel}. That is, in Figure~\ref{fig::channel_asymp}, when $n$ is finite, what is the guarantee on the error probability $\mathbf{P}(M \neq \hat{M})$?

An even more general scenario is the \emph{one-shot} setting~\cite{feinstein1954new, shannon1957certain, hayashi2009information, polyanskiy2010channel, verdu2012non, yassaee2013technique, liu2015one, song2016likelihood, watanabe2015nonasymptotic, li2021unified}, where the blocklength is assumed to be $1$. 
That is, each source and channel can be used only \emph{once}.
Note that ``one-shot'' does not mean transmitting only $1$ bit. Instead, it represents the most general case, where the sources and channels can be arbitrary.
This line of research is primarily motivated by the generality of the setting: no assumptions are imposed on the sources or channels (e.g., memorylessness, ergodicity, etc.). 
The difficulty is that well-known techniques such as joint typicality and time sharing are not applicable. This setting is more general than some finite-blocklength cases; for instance, the finite-blocklength bounds in~\cite{yassaee2013non} do not seem to yield one-shot results due to their use of the method of types.

\begin{figure}[htpb]
	\centering
	\includegraphics[scale=0.37]{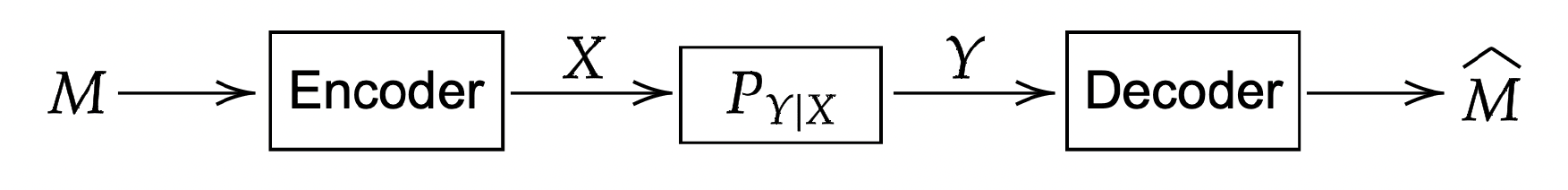}
	\caption{Channel coding setting in the one-shot regime.} 
	\label{fig::channel_oneshot}
\end{figure}

We use the one-shot channel coding setting as an example in Figure~\ref{fig::channel_oneshot}. 
Upon observing a message $M\sim \mathrm{Unif}[1:\mathsf{L}]$, the encoder produces $X$ that is sent through the channel $P_{Y|X}$. The decoder observes $Y$ and recovers $\hat{M}$ with error probability $P_e := \mathbf{P}\{M\neq\hat{M} \}$. 
Take the dependence testing bound by Polyanskiy, Poor and Verdú~\cite{polyanskiy2010channel} as an example, we have: 
\begin{equation}
    P_e\leq \mathbf{E}\left[\min\left\{
    \frac{\mathsf{L}-1}{2}\cdot 2^{-\iota_{X;Y}(X;Y)}
    , 1\right\}\right],\label{eq::1shot_eg_channel1}
\end{equation}
where $\iota_{X;Y}(X;Y) := \log \left(\frac{\mathrm{d} P_{X|Y}(x |y)}{\mathrm{d} P_X(x)}\right)$ is the information density and $\frac{\mathrm{d} P_{X|Y}(x |y)}{\mathrm{d} P_X(x)} =\frac{\mathrm{d} P_{X|Y}(\cdot |y)}{\mathrm{d} P_X}(x)$ denotes the Radon-Nikodym derivative.

We show how the one-shot result~\eqref{eq::1shot_eg_channel1} recovers the asymptotic channel capacity~\eqref{eq::channel_capacity}. 
Consider $\mathsf{L} = 2^{nR}$, due to the channel being discrete memoryless $P_{Y^n|X^n}(y^n|x^n) = \prod^n_{i=1} P_{Y|X}(y_i| x_i)$ in the absence of feedback, applying the one-shot result~\eqref{eq::1shot_eg_channel1} gives 
\begin{equation}
    P_e\leq \mathbf{E}\left[\min\{(2^{nR} - 1)\cdot 2^{-1-\sum^n_{i=1} \iota_{X;Y}(X_i; Y_i)}, 1\}\right]\label{eq::1shot_eg_channel2}
\end{equation}
where $(X_i; Y_i)\sim P_XP_{Y|X}$ i.i.d. for $i=1,\ldots,n$. 

When $n\rightarrow \infty$, by the law of large numbers we know $\sum^n_{i=1} \iota_{X;Y}(X_i; Y_i)\approx nI(X;Y)$, and therefore by~\eqref{eq::1shot_eg_channel2} we know $P_e\rightarrow 0$ if $R< I(X;Y)$, which recovers the channel capacity $C$ in~\eqref{eq::channel_capacity}.

One-shot settings are general, and we expect good one-shot achievability results can imply existing (first-order and second-order) asymptotic results when applied to memoryless sources and channels as above presents, or applied to systems with memory that behave ergodically~\cite{verdu1994general}.
For point-to-point channel coding, the achievability of the channel capacity is implied by the one-shot bounds by Feinstein~\cite{feinstein1954new} and Shannon~\cite{shannon1957certain}, which are precursors of the dependence testing bound~\cite{polyanskiy2010channel} in~\eqref{eq::1shot_eg_channel1}.

For settings more complex than the point-to-point channel, one-shot coding schemes have also been studied.
We briefly review existing one-shot results for multi-user coding settings, and this part also appeared in~\cite{liu2024one}.
In~\cite{verdu2012non}, 
one-shot versions of the covering and packing lemmas have been proposed and applied to various problems in multiuser information theory, for example, multiple access channels and broadcast channels. 
In~\cite{yassaee2013technique}, a proof technique based on stochastic likelihood encoders and decoders has been used to derive various one-shot achievability results in several multi-user settings, including broadcast channels, multiterminal source coding and multiple description coding. 
A one-shot mutual covering lemma has been proposed in~\cite{liu2015one} for broadcast channels, which recovers Marton's inner bound. 
In~\cite{song2016likelihood}, the multiterminal source coding inner bound has been examined by a likelihood encoder. 
A finite-blocklength version of the random binning technique has been used in~\cite{yassaee2013non} to derive second order regions for broadcast channels. 
Recently, in~\cite{li2021unified}, a technique called the \emph{Poisson matching lemma} has been introduced to prove various one-shot achievability results for a range of coding settings, and the achievable one-shot bounds improve upon the best known one-shot bounds in several settings with shorter proofs. This technique has been applied to unequal message protection~\cite{khisti2024unequal}, hypothesis testing~\cite{guo2024hypothesis} and secret key generation~\cite{hentila2024communication}.
The Poisson matching lemma is based on the \emph{Poisson functional representation}~\cite{li2018strong}, which has been applied to various fields recently, e.g., neural estimation~\cite{lei2022neural}, minimax learning~\cite{li2020minimax} and differential privacy~\cite{liu2024universal}, together with other related techniques. 
We will provide more details on the Poisson functional representation~\cite{li2018strong} in Chapter~\ref{chp:existing_techniques}.

\section{Background of Differential Privacy}

In this section we review the background of differential privacy, part of which also appeared in~\cite{liu2024universal}.

In modern data science and wireless communications, there is a growing dependence on large amounts of high-quality data, often generated by edge devices (e.g., photos and videos captured by smartphones, or messages hosted by social networks). However, this data inherently contains personal information, making it susceptible to privacy breaches during acquisition, collection, or utilization. For instance, despite the significant recent advancement in foundational models \citep{bommasani2021opportunities}, studies have shown that these models can accidentally memorize their training data. This poses a risk where malicious users, even with just API access, can extract substantial portions of sensitive information \citep{carlini2021extracting, carlini2023extracting}.

In recent years, differential privacy (DP) \citep{dwork2006calibrating} has emerged as a powerful framework for safeguarding users' privacy by ensuring that local data is properly randomized before leaving users' devices. 
With the local data $X$, a DP mechanism $\mathcal{A}$ satisfies $(\epsilon, \delta)$-DP maps $X$ to the output $Z = \mathcal{A}(X)\in\mathcal{Z}$, where $\mathcal{A}(\cdot)$ is a randomized function, such that for any neighboring $(x,x')$ and $\mathcal{S}\subseteq \mathcal{Z}$, \begin{equation}
    \Pr ( Z \in \mathcal{S}\, |\, X=x ) \leq  e^\varepsilon \Pr ( Z \in \mathcal{S}\, |\, X=x' ) + \delta,
    \label{eq::DP}
\end{equation}
where neighboring $(x,x')$ are neighboring if they differ in a single data point.
This definition can be understood as follows:  A differentially private mechanism (satisfying~\eqref{eq::DP}) guarantees that small changes in its input lead to only insignificant changes in its output.  If condition~\eqref{eq::DP} is violated, an adversary could infer whether specific data was included in the input.  At a high level, differential privacy prevents attackers from gaining significant knowledge about the input by observing changes in the output~\citep{dwork2006calibrating}.

Apart from privacy concerns, communicating local data from edge devices to the central server often becomes a bottleneck in the system pipeline, especially with high-dimensional data common in many machine learning scenarios. This leads to the following fundamental question: how can we efficiently communicate differentially privatized data? 

Recent works have shown that a wide range of differential privacy mechanisms can be ``simulated'' and ``compressed'' using shared randomness, resulting in a ``compressed mechanism'' which has a smaller communication cost compared to the original mechanism, while retaining the (perhaps slightly weakened) privacy guarantee.
This can be done via rejection sampling \citep{feldman2021lossless}, importance sampling \citep{shah2022optimal, triastcyn2021dp}, 
or dithered quantization \citep{lang2023joint,shahmiri2024communication,hasircioglu2023communication,hegazy2024compression,yan2023layered} with each approach having its own advantages and disadvantages. For example, importance-sampling-based methods \citep{shah2022optimal, triastcyn2021dp} and the rejection-sampling-based method \citep{feldman2021lossless} can simulate a wide range of privacy mechanisms; however, the output distribution of the induced mechanism does
not perfectly match the original mechanism. This is limiting in scenarios where the original mechanism is designed to satisfy some desired statistical properties, e.g. it is often desirable for the local randomizer to be unbiased or to be ``summable'' noise such as Gaussian or other infinitely divisible distributions. Since the induced mechanism is different from the original one, these statistical properties are not preserved. On the other hand, dithered-quantization-based approaches \citep{hegazy2022randomized,lang2023joint,shahmiri2024communication,hasircioglu2023communication,hegazy2024compression,yan2023layered} can ensure a correct simulated distribution, but they can only simulate additive noise mechanisms. More importantly, 
dithered quantization relies on shared randomness between the user and the server, and the server needs to know the dither for decoding. 
This annuls the local privacy guarantee on the user data, unless we are willing to assume a trusted aggregator \citep{hasircioglu2023communication}, use an additional secure aggregation step \citep{hegazy2024compression}, or restrict attention to specific privacy mechanisms (e.g., one-dimensional Laplace \citep{shahmiri2024communication}).

\section{One-Shot Codes Meet Differential Privacy}

In this section, we discuss the interplay between information theory—specifically, one-shot coding schemes—and differential privacy, which motivates the studies presented in this thesis. Our goal is to develop a unified one-shot coding framework that is applicable to both network information theory problems and differential privacy mechanisms. We emphasize that our focus is on information-theoretic coding schemes, rather than information-theoretic measures of privacy. For discussions on the latter, we refer the reader to~\cite{unsal2023information}.

As discussed in Section~\ref{sec::bkgrd_1shot}, one-shot information theory addresses the most fundamental setting, where no assumptions are made about the signal length or the structure of the channel/source. 
Similar to the asymptotic analyses in network information theory~\cite{el2011network}, the goal remains to investigate the fundamental limits of signal transmission over noisy channels or source compression. 
To ensure reliable message reconstruction—and thus high accuracy for downstream tasks—effective one-shot coding schemes must be constructed to withstand noise. 
In this thesis, we present various one-shot codes based on early works on Poisson functional representation~\cite{li2018strong, li2021unified}.

However, when considering privacy, the requirement may initially appear to conflict with the goal of communication: achieving differential privacy typically involves deliberately adding noise, which increases the message entropy and reduces its compressibility. 
This tension is sometimes referred to as the ``communication-privacy-accuracy trilemma''~\cite{chen2022breaking}. 
Nevertheless, it has been shown that through careful encoding and controlled noise injection for privacy, it is possible to simultaneously achieve communication efficiency and privacy, while still maintaining accuracy for various tasks~\cite{feldman2021lossless, bassily2017practical, acharya2019communication, chen2022breaking}. 
In this sense, noise is \emph{utilized} as a tool for both compression and privacy.

One might initially think that controlling noise is difficult. 
However, the idea of using random noise for source coding dates back to~\cite{zamir2002universal, ziv1985universal}, which proposed additive noise as a tool for universally good lossy compression schemes. 
These works represent early-stage studies of \emph{channel simulation}. 
Channel simulation, also known as \emph{channel synthesis} or \emph{reverse channel coding}, aims to \emph{simulate} a noisy channel using as few communication bits as possible~\cite{cuff2013distributed, bennett2002entanglement}. 
This approach has the potential to simultaneously achieve communication efficiency and privacy. 
We will provide a more detailed review of channel simulation in Chapter~\ref{chp:ppr}. 
Various channel simulation schemes—such as dithered quantization~\cite{ziv1985universal}, rejection sampling~\cite{harsha2007communication}, and importance sampling (or more specifically, minimal random coding)~\cite{havasi2019minimal}—have been employed to compress differential privacy mechanisms~\cite{bun2019heavy, feldman2021lossless, triastcyn2021dp, shah2022optimal, hegazy2022randomized, shahmiri2024communication, yan2023layered}.
Readers are referred to~\cite{li2024channel} for a comprehensive review.

Though we will first discuss one-shot codes based on the Poisson functional representation~\cite{li2018strong, li2021unified} for various network information theory problems, it is worth noting that the Poisson functional representation is also a good channel simulation scheme~\cite{li2018strong} (with some improvements in analysis found in~\cite{li2024pointwise, li2021unified, li2025discrete}). 
With unlimited common randomness, it can provide the smallest known bound on the expected length for one-shot channel simulation. 
However, it is a deterministic mapping: the input, together with the common randomness, deterministically determines the output. 
As a result, a small change in the input (in the sense of differential privacy) can lead to a deterministic change in the compressed output, making this method unsuitable for directly ensuring privacy. 
To address this issue, we propose a way to \emph{randomize} the Poisson functional representation. 
This randomized variant will be shown to preserve differential privacy while achieving a compression size close to the optimal.

In summary, though at first glance the goals of privacy protection and efficient communication may appear to be in conflict, recent works have shown that this ``paradox'' can be resolved, and channel simulation emerges as a promising candidate for achieving both. 
The various one-shot codes proposed in this thesis are based on the Poisson functional representation~\cite{li2018strong}, which is also a state-of-the-art channel simulation scheme. 
Although its deterministic nature poses challenges for direct application to differential privacy mechanisms, we show that it can be \emph{randomized} to provide privacy protection. 
From this unified perspective, the Poisson functional representation serves as a bridge between one-shot information theory and differential privacy. 
We note that while importance sampling has also been used in both network information theory~\cite{phan2024importance} and differential privacy~\cite{triastcyn2021dp, shah2022optimal}, it is \emph{not exact}; that is, the output distribution is distorted, resulting in only approximate simulation. 
We will elaborate on this distinction in Chapter~\ref{chp:ppr}. 

\section{Our Contributions}

\subsection{Contributions in Chapter~\ref{chp:nnc}}

In Chapter~\ref{chp:nnc}, we present a unified one-shot coding framework designed for the communication and compression of messages among multiple nodes across a general acyclic noisy network. Our setting can be seen as a one-shot version of the acyclic discrete memoryless network studied by Lee and Chung~\cite{lee2018unified}, and noisy network coding studied by Lim, Kim, El Gamal and Chung~\cite{lim2011noisy}. 
We design a proof technique, called the exponential process refinement lemma, that is rooted in the Poisson matching lemma by Li and Anantharam, and can significantly simplify the analyses of one-shot coding over multi-hop networks. Our one-shot coding theorem not only recovers a wide range of existing asymptotic results, but also yields novel one-shot achievability results in different multi-hop network information theory problems, such as compress-and-forward and partial-decode-and-forward bounds for a one-shot (primitive) relay channel, and a bound for one-shot cascade multiterminal source coding. In a broader context, our framework provides a unified one-shot bound applicable to any combination of source coding, channel coding and coding for computing problems. This chapter is based on~\cite{liu2024one}.

\subsection{Contributions in Chapter~\ref{chp:hiding}}
In Chapter~\ref{chp:hiding}, we present one-shot information-theoretic analyses of two secrecy problems: a generalization of the information hiding problem~\cite{moulin2003information} and the compound wiretap channel~\cite{liang2009compound}.
The former admits a game-theoretic formulation, where one party (the information hider and decoder) seeks to embed secret messages into a host signal for later reconstruction, while the opposing party (an attacker) attempts to remove or degrade the embedded information.
The latter generalizes Wyner's wiretap channel by allowing multiple potential channel states, making it more suitable for the rapidly changing characteristics of modern wireless communications.
Although these two secrecy problems seem unrelated, we study both utilizing a covering argument and similar techniques under a unified framework. 
We derive one-shot achievability results for both problems using techniques based on the Poisson matching lemma, which enables us to handle both discrete and continuous cases.
We show that our one-shot results readily recover existing asymptotic results.
Unlike previous asymptotic results, ours apply to any source distribution and any class of channels, not necessarily memoryless or ergodic. 
This chapter is partially based on~\cite{liu2024hiding}.

\subsection{Contributions in Chapter~\ref{chp:ppr}}

In Chapter~\ref{chp:ppr}, we introduce a novel construction, called Poisson private representation (PPR), designed to compress and simulate any local randomizer while ensuring local differential privacy, hence reduce the communication cost of differential privacy mechanisms, 
Unlike previous simulation-based local differential privacy mechanisms, PPR exactly preserves the joint distribution of the data and the output of the original local randomizer. Hence, the PPR-compressed privacy mechanism retains all desirable statistical properties of the original privacy mechanism such as unbiasedness and Gaussianity. Moreover, PPR achieves a compression size within a logarithmic gap from the theoretical lower bound. Using the PPR, we give a new order-wise trade-off between communication, accuracy, central and local differential privacy for distributed mean estimation. Experiment results on distributed mean estimation show that PPR consistently gives a better trade-off between communication, accuracy and central differential privacy compared to the coordinate subsampled Gaussian mechanism, while also providing local differential privacy.
This chapter is based on~\cite{liu2024universal}.



\chapter{Poisson Functional Representation}
\label{chp:existing_techniques}

In this chapter, we review the Poisson functional representation~\cite{li2018strong} and discuss some technical background on related schemes. 
We begin by introducing our notations.

\section{Notations}

We assume the logarithm and entropy are to the base $2$ unless otherwise stated, and logarithm to the base $e$ is denoted as $\ln(x)$. 
For a statement $S$, we use $\mathbf{1}\{S\}$ to denote its indicator, i.e., $\mathbf{1}\{S\}$ is $1$ if $S$ holds and otherwise $\mathbf{1}\{S\} = 0$.
$\delta_{a}$ denotes the degenerate distribution $\mathbf{P}\{X=a\}=1$.

We use $[i..j]$ to denote $\{i,i+1,\ldots,j \}$ and $[j] := [1..j]$. 
For a set $\mathcal{S} \subseteq [k]$ and random sequence $U_1,\ldots,U_k$, we write $U^k := (U_1,\ldots,U_k)$, $U_\mathcal{S}:= (U_j)_{j\in \mathcal{S}}$.  
For two random variables $X,Y$, the information density is defined as \begin{equation*}
    \iota_{X;Y}(x;y) = \log \left(\frac{\mathrm{d} P_{X|Y}(x |y)}{\mathrm{d} P_X(x)}\right),
\end{equation*} 
where $\frac{\mathrm{d} P_{X|Y}(x |y)}{\mathrm{d} P_X(x)}$ denotes the Radon-Nikodym derivative. 
For two random variables $X,Y$, we sometimes omit the subscript and write $\iota(X;Y)$ instead of $\iota_{X;Y}(X;Y)$ if the random variables are clear from the context. 
In discrete case, the conditional information density is defined to be 
\begin{equation*}
    \iota_{X;Y|Z}(x;y|z):=\log \left(\frac{P_{X,Y|Z}(x,y|z)}{P_{X|Z}(x|z)P_{Y|Z}(y|z)} \right),
\end{equation*}

The total variation (TV) distance between two distributions $P,Q$ over $\mathcal{X}$ is $\Vert P-Q\Vert_{\mathrm{TV}} := \sup_{A \subseteq \mathcal{X} \; \text{measurable}} |P(A)-Q(A)|$.

For two distributions $P$ and $Q$, we write $P \ll Q$ to denote that $P$ is absolutely continuous with respect to $Q$.

\section{Poisson Functional Representation}

In this section, we introduce the Poisson Functional Representation~\cite{li2018strong} and the Poisson Matching Lemma~\cite{li2021unified}, which serve as the building blocks of this thesis.

The Poisson functional representation was introduced in \citep{li2018strong} as a channel simulation scheme, where a strong functional representation lemma is proved. 
Related constructions for Monte Carlo simulations can be found in \citep{maddison2016poisson}. 
Together with other techniques, the Poisson functional representation~\cite{li2018strong} has been applied to various fields, including neural estimation~\cite{lei2022neural} and minimax learning~\cite{li2018minimax}. 
As discussed in \citep{flamich2022fast}, the Poisson functional representation can be viewed as a certain variant of the $\mathrm{A}^*$ sampling \citep{maddison2014sampling, maddison2016poisson}, and hence an optimized version with better runtime for one-dimensional unimodal distribution has been proposed in~\citep{flamich2022fast}. 
A greedy-search version can be found in~\cite{flamich2024greedy}.

Based on the Poisson functional representation, the Poisson Matching Lemma was proposed in~\cite{li2021unified}, and it has been shown to improve upon previously known one-shot bounds in various settings with simpler analyses. 
Recent applications of the Poisson Matching Lemma include unequal message protection~\cite{khisti2024unequal}, hypothesis testing~\cite{guo2024hypothesis}, and secret key generation~\cite{hentila2024communication}.

We start with the discrete case, which we refer to as the exponential functional representation~\cite{li2018strong}.

\begin{definition}[Exponential Functional Representation~\cite{li2018strong}]
    Consider a finite set $\mathcal{U}$. Let $\mathbf{U}:=(Z_{u})_{u\in\mathcal{U}}$ be i.i.d. $\mathrm{Exp}(1)$ random variables.\footnote{$\mathrm{Exp}(1)$ random variables follow an exponential distribution with rate parameter $1$.} Given a distribution $P$ over $\mathcal{U}$, 
        \begin{equation}
        \mathbf{U}_{P} :=\mathrm{argmin}_{u}\frac{Z_{u}}{P(u)} \label{eq:pfr}
        \end{equation}
    is called the exponential functional representation \cite{li2018strong}. 
\end{definition}
By~\cite{li2018strong}, we have $\mathbf{U}_{P} \sim P$.

The exponential functional representation~\cite{li2018strong} is designed for finite alphabets, which is the case in Chapter~\ref{chp:nnc}. 
When the space is continuous, as in Chapter~\ref{chp:hiding} and Chapter~\ref{chp:ppr}, a generalization via Poisson processes is utilized~\cite{li2018strong, li2021unified}. 
Further discussions and detailed derivations of the connection between the two cases can be found in~\cite{li2021unified, li2024channel}; we omit them here. 
We introduce the generalization, called the Poisson functional representation~\cite{li2018strong}, as follows.

\begin{definition}[Poisson Functional Representation~\cite{li2018strong}]
Let $(T_{i})_{i}$ be a Poisson process with rate $1$ (i.e., $T_{1},T_{2}-T_{1},T_{3}-T_{2},\ldots\stackrel{iid}{\sim}\mathrm{Exp}(1)$), independent of $\bar{U}_{i}\stackrel{iid}{\sim} Q$ for $i=1,2,\ldots$, and we denote $\mathbf{U}:= (\bar{U}_i)_i$. 
$(\bar{U}_i, T_{i})_{i}$ is a Poisson process with intensity 
measure $Q\times\lambda_{[0,\infty)}$ \cite{last2017lectures}, where $\lambda_{[0,\infty)}$ is the Lebesgue measure over $[0,\infty)$.
Fix any distribution $P$ over $\mathcal{U}$ that is absolutely continuous with respect to $Q$. Let 
\begin{equation}
    \tilde{T}_{i}:=T_{i} \cdot \Big(\frac{\mathrm{d}P}{\mathrm{d}Q}(\bar{U}_i)\Big)^{-1}, 
    \label{eq:PFR}
\end{equation}
where $\frac{\mathrm{d}P}{\mathrm{d}Q}(\cdot)$ is the Radon-Nikodym derivative. 
By the mapping theorem \citep{last2017lectures}, $(\bar{U}_i,\tilde{T}_{i})$ is a Poisson process with intensity
measure $P\times\lambda_{[0,\infty)}$. 
Then the Poisson functional representation~\cite{li2018strong} selects 
\begin{equation*}
    \mathbf{U}_P := \bar{U}_{K}, 
\end{equation*} where
\begin{equation*}
    K:=\mathrm{argmin}_{i}\tilde{T}_{i}.
\end{equation*}
\end{definition}
Note that since the $T_i$'s are continuous, with probability $1$, there do not exist two equal values among $\tilde{T}_i$'s. 
The Poisson functional representation~\cite{li2018strong} holds
for general $Q$ which may be discrete or continuous.

The Poisson functional representation~\cite{li2018strong} selects a sample following the target distribution $P$ among samples from another distribution $Q$, i.e., $\mathbf{U}_{P} \sim P$. 
It draws a random sequence $(\bar{U}_i)_i$ from $Q$ and a sequence of times $(T_i)_i$ according to a Poisson process. 
If we select the sample $\bar{U}_i$ with the smallest $T_i$, then the selected sample will follow distribution $Q$. 
To obtain a sample from $P$ instead, we multiply the time by the factor $(\frac{\mathrm{d}P}{\mathrm{d}Q}(\bar{U}_i))^{-1}$ in~\eqref{eq:PFR} to give $\tilde{T}_i$, so the $\bar{U}_i$ with the smallest $\tilde{T}_i$ will follow $P$.

The Poisson functional representation~\cite{li2018strong} was originally developed to prove the strong functional representation lemma, and possibly tighter guarantees via different analyses can be found in~\cite{li2024pointwise, li2025discrete}.

The way this Poisson process is used in communication settings (e.g., in~\cite{li2021unified}) is that the encoder would query the process using the prior distribution of the signal to obtain the signal to be sent, and the decoder would query using the posterior distribution of the signal given the noisy observation to obtain the message. 
There is no error in the communication if the two queries return the same sample.
The probability of error can be bounded by the Poisson matching lemma~\cite{li2021unified}, which will be discussed in Section~\ref{sec::PML}.

\section{Poisson Matching Lemma}
\label{sec::PML}

In this section, we introduce a technique that is based on the Poisson Functional Representation, called the Poisson matching lemma~\cite{li2021unified}, which has been shown to be able to provide good one-shot achievability results on a large class of network information theory problems~\cite{li2021unified}. 

The Poisson matching lemma has been shown to be quite useful in proving one-shot achievability results of network information theory~\cite{li2021unified, liu2024one}. 
It is rooted in the Poisson functional representation~\cite{li2018strong} that is reviewed as follows.

\begin{lemma}[Poisson matching lemma~\cite{li2021unified}]
Consider two distributions $P_1, P_2 \ll Q$. Almost surely, we have
\begin{equation*} 
    \mathbf{P}\big(
    \mathbf{U}_{P_2}\neq \mathbf{U}_{P_1} \,\big|\, \mathbf{U}_{P_1} \big) \leq 1 - \Big( 1 + \frac{\mathrm{d}P_1}{\mathrm{d}P_2}(\mathbf{U}_{P_1})
    \Big)^{-1}. 
\end{equation*}
\end{lemma}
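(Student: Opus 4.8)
The plan is to condition on the entire Poisson process $(\bar{U}_i, T_i)_i$ that underlies the representation, and to compute the conditional error probability by exploiting the mapping theorem together with the Poisson process structure. First I would recall that, with the same underlying process $(\bar{U}_i, T_i)_i$, the encoder forms $\tilde{T}_i^{(1)} := T_i \cdot (\tfrac{\mathrm{d}P_1}{\mathrm{d}Q}(\bar{U}_i))^{-1}$ and selects $\mathbf{U}_{P_1} = \bar{U}_{K_1}$ with $K_1 = \mathrm{argmin}_i \tilde{T}_i^{(1)}$, while the decoder forms $\tilde{T}_i^{(2)} := T_i \cdot (\tfrac{\mathrm{d}P_2}{\mathrm{d}Q}(\bar{U}_i))^{-1}$ and selects $\mathbf{U}_{P_2} = \bar{U}_{K_2}$ with $K_2 = \mathrm{argmin}_i \tilde{T}_i^{(2)}$. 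The two queries return the same sample iff $K_1 = K_2$ (up to the probability-zero event of ties), so I want to bound $\mathbf{P}(K_2 \neq K_1 \mid \mathbf{U}_{P_1})$.

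The key step is to use the restriction/thinning properties of Poisson processes. Fix the index $K_1$ and its value $u := \mathbf{U}_{P_1} = \bar{U}_{K_1}$; write $t_1 := \tilde{T}_{K_1}^{(1)}$ for the (minimal) $P_1$-scaled arrival time. Since $(\bar{U}_i, \tilde{T}_i^{(1)})_i$ is a Poisson process with intensity $P_1 \times \lambda_{[0,\infty)}$, conditioned on its minimum being achieved at value $u$ with time $t_1$, the remaining points form an independent Poisson process on $\mathcal{U} \times [t_1, \infty)$ with the same intensity. Translating back, the $P_2$-scaled time of the encoder's chosen point is $\tilde{T}_{K_1}^{(2)} = t_1 \cdot \tfrac{\mathrm{d}P_1}{\mathrm{d}P_2}(u)$ (using $\tfrac{\mathrm{d}P_1}{\mathrm{d}P_2} = \tfrac{\mathrm{d}P_1}{\mathrm{d}Q}/\tfrac{\mathrm{d}P_2}{\mathrm{d}Q}$), and the decoder picks a different index exactly when some other point $i$ has $\tilde{T}_i^{(2)} < \tilde{T}_{K_1}^{(2)}$. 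The probability that the $P_2$-process (which has intensity $P_2 \times \lambda$) places \emph{no} point with $P_2$-scaled time below the threshold $\tau := t_1 \cdot \tfrac{\mathrm{d}P_1}{\mathrm{d}P_2}(u)$, given the configuration already conditioned on, is the void probability, and a short computation should give that the conditional success probability is at least $(1 + \tfrac{\mathrm{d}P_1}{\mathrm{d}P_2}(u))^{-1}$; equivalently the error bound $1 - (1 + \tfrac{\mathrm{d}P_1}{\mathrm{d}P_2}(\mathbf{U}_{P_1}))^{-1}$ follows. The cleanest route is to compare the two processes on the common substrate $(\bar{U}_i, T_i)$: the decoder fails only if some $\bar{U}_i$ with $i \neq K_1$ has $T_i / \tfrac{\mathrm{d}P_2}{\mathrm{d}Q}(\bar{U}_i) < T_{K_1} / \tfrac{\mathrm{d}P_2}{\mathrm{d}Q}(u)$; since we already know all $i \neq K_1$ satisfy $T_i / \tfrac{\mathrm{d}P_1}{\mathrm{d}Q}(\bar{U}_i) > T_{K_1}/\tfrac{\mathrm{d}P_1}{\mathrm{d}Q}(u) =: t_1$, the conditional law of the competing points is a Poisson process and the success event is a void event over a region whose $P_2 \times \lambda$-measure I can integrate explicitly, yielding the stated $(1 + \mathrm{d}P_1/\mathrm{d}P_2)^{-1}$.

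The main obstacle I anticipate is handling the conditioning carefully: the event $\{\mathbf{U}_{P_1} = u\}$ is not simply ``the first point of the $Q$-process lands at $u$'' but rather a statement about the $P_1$-reweighted process, so I must be precise about which process's Palm/conditional distribution I am invoking and make sure the ``other points'' really do form a fresh Poisson process on the correct truncated region after conditioning on $(K_1, \bar{U}_{K_1}, \tilde{T}_{K_1}^{(1)})$. A secondary technical point is the ``almost surely'' qualifier and the $\mathbf{U}_{P_1}$-measurability of the bound: since $\tfrac{\mathrm{d}P_1}{\mathrm{d}P_2}(\mathbf{U}_{P_1})$ depends only on $\mathbf{U}_{P_1}$, I need the bound to hold for $P_{\mathbf{U}_{P_1}}$-almost every value, which requires being careful about null sets where Radon--Nikodym derivatives are ill-behaved (e.g. where $\tfrac{\mathrm{d}P_2}{\mathrm{d}Q} = 0$); the hypotheses $P_1, P_2 \ll Q$ should be enough to push these into probability-zero events, but I would state this explicitly. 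Modulo these measure-theoretic bookkeeping issues, the computation itself is a one-line void-probability integral.
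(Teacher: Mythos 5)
The thesis itself does not prove this lemma --- it is imported verbatim from Li and Anantharam~\cite{li2021unified} --- and your sketch follows essentially the same route as the original proof there: condition on the minimal point $(\mathbf{U}_{P_1},\tilde{T}^{(1)}_{K_1})=(u,t_1)$ of the $P_1$-scaled process, use the restriction property so that the remaining points form a Poisson process of intensity $P_1\times\lambda$ on $\mathcal{U}\times(t_1,\infty)$, and bound the mismatch event by a void probability. The one stage your write-up glosses is that the fixed-$t_1$ computation yields only a conditional success probability of at least $e^{-t_1 r}$ with $r=\frac{\mathrm{d}P_1}{\mathrm{d}P_2}(u)$ (the error region, viewed in the $P_2$-scaled picture, has $P_2\times\lambda$-measure at most $t_1 r$), whereas the stated bound $\big(1+r\big)^{-1}$ emerges only after averaging over $t_1\sim\mathrm{Exp}(1)$, which is independent of $\mathbf{U}_{P_1}$ because the intensity is a product measure, via $\int_0^\infty e^{-t}e^{-tr}\,\mathrm{d}t=(1+r)^{-1}$. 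Your sketch slightly conflates these two steps (it suggests the void probability alone gives $(1+r)^{-1}$), but both ingredients are present in your plan, so modulo making that final integration explicit, the argument is the correct one.
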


The Poisson matching lemma~\cite{li2021unified} provides a bound
on the probability of mismatch between the Poisson functional
representations applied on different distributions. 
Various information theory problems have been studied by using the Poisson matching lemma~\cite{li2021unified}. 
In chapter~\ref{chp:nnc}, we will extend it to a tool that can provide one-shot achievability results over arbitrary acyclic noisy networks, and hence recover many one-shot results in~\cite{li2021unified}.

\section{Discussions on Other Existing Techniques}

Compared to the one-shot coding scheme in~\cite{yassaee2013technique}, the Poisson matching lemma utilizes a Poisson process to create a codebook, instead of the conventional i.i.d. random codebook~\cite{yassaee2013technique}, and each codeword is assigned a bias $T_i$. The scheme is thus a biased maximum likelihood decoder, rather than a stochastic decoder as in~\cite{yassaee2013technique}. The idea of using a biased, or \emph{soft}, coding scheme has been extended to linear codes, known as \emph{weighted parity-check codes}; see~\cite{ling2022weighted, ling2023weighted_tit}.

Except for the one-shot coding scheme based on Poisson functional representation~\cite{li2018strong, li2021unified}, other unified frameworks for one-shot coding include the schemes based on random binning~\cite{yassaee2013technique, yassaee2013non, yassaee2015one}, the likelihood encoder~\cite{song2016likelihood}, and importance sampling~\cite{phan2024importance}. Other one-shot coding schemes~\cite{feinstein1954new, shannon1957certain, hayashi2009information, verdu2012non, watanabe2015nonasymptotic} have been reviewed in detail in Chapter~\ref{chp:intro}; here, we discuss some connections between one-shot codes and channel simulation via the likelihood encoder~\cite{song2016likelihood} and importance sampling~\cite{phan2024importance}, since we will utilize channel simulation to compress differential privacy mechanisms in Chapter~\ref{chp:ppr}.

As mentioned above, the Poisson functional representation can be viewed as a variant of A* sampling~\cite{maddison2014sampling, maddison2016poisson}, and a very useful property is that the output sample follows \emph{exactly} the input distribution. 
This can also be understood as a remote sampling problem; we refer readers to~\cite{li2024channel} for a comprehensive explanation. 
The Poisson functional representation can be used to prove the strong functional representation lemma~\cite{li2018strong}: 
to simulate a channel $P_{Y|X}$, the communication cost (expected number of bits) required is bounded by
\begin{equation*}
    I(X; Y) + \log\big(I(X; Y) + 1\big) + 5,
\end{equation*}
which with some finer analyses can be improved to$I(X; Y) + \log\big(I(X; Y) + 2\big) + 3$~\cite{li2025discrete, li2024pointwise}.

If one considers greedy rejection sampling for channel simulation, a weaker guarantee has been proved by~\cite{harsha2010communication}, and improved by~\cite{braverman2014public}, as follows: 
\begin{equation*}
    I(X; Y) + \log\big(I(X; Y) + 1\big) + c,
\end{equation*}
where $c$ is an unspecified constant.
The guarantee by using greedy rejection sampling was later improved by~\cite{flamich2023adaptive} to
\begin{equation*}
    I(X; Y) + \log\big(I(X; Y) +\log(4e)\big)  +\log(4e) + 1. 
\end{equation*}

Another popular sampling scheme is importance sampling, which was considered by~\cite{cuff2009communication} for asymptotic channel simulation, and later dubbed the \emph{likelihood encoder}~\cite{song2016likelihood} for one-shot coding. 
In machine learning, a similar scheme named \emph{minimal random coding} was also studied by~\cite{havasi2019minimal}, where it was applied to model compression and later to lossy image compression~\cite{flamich2020compressing}. 
Besides the likelihood encoder, a recent work~\cite{phan2024importance} used importance sampling (an \emph{importance matching lemma} that shares some similarity with the Poisson matching lemma~\cite{li2021unified}) for coding in information theory, and it has the potential to be extended to other information theory problems. 
We would like to emphasize here that a major difference compared to the Poisson functional representation is that the output sample does not exactly follow the input distribution.

This difference also appears in the study of compressing differential privacy mechanisms. 
Compression of differential privacy mechanisms can be viewed as a channel simulation problem, where the channel is subject to an additional privacy constraint. 
Importance sampling (or more specifically, minimal random coding~\cite{havasi2019minimal}) has been used for compressing differential privacy mechanisms~\cite{triastcyn2021dp, shah2022optimal}. 
However, as mentioned in the previous paragraph, since minimal random coding is not exact, the output distribution is only approximate. 
On the other hand, rejection sampling can also be utilized~\cite{feldman2021lossless, bun2019heavy}, although the communication cost and privacy guarantees were not close to optimal. 
In Chapter~\ref{chp:ppr}, we will utilize a variant of the Poisson functional representation~\cite{li2018strong} to compress differential privacy mechanisms.

In summary, importance sampling has been applied in both one-shot coding and the compression of differential privacy mechanisms. 
In contrast, the Poisson functional representation~\cite{li2018strong} offers an exact simulation framework with close-to-optimal communication cost guarantees. 
In the following chapter, we will demonstrate how to extend the construction of the Poisson functional representation to various problems, ensuring good performance.


\chapter{One-Shot Coding over General Noisy Networks} \label{chp:nnc}

\section{Overview}

In this chapter, we study a general class of networks, which we call \emph{acyclic discrete networks}, where there are $N$ nodes connected by noisy channels in an acyclic manner. Each node can play the role of an encoder or a decoder (or both) in source coding or channel coding settings. This is a one-shot version of the asymptotic acyclic discrete memoryless network studied by Lee and Chung~\cite{lee2018unified}, and includes a wide range of settings as special cases, such as source and channel coding, primitive relay channel~\cite{el2011network, kim2007coding, mondelli2019new, el2021achievable, el2022strengthened}, Gelfand-Pinsker~\cite{gelfand1980coding,Heegard1980}, relay-with-unlimited-look-ahead~\cite{el2005relay, el2007relay}, Wyner-Ziv~\cite{wyner1976rate, wyner1978rate}, coding for computing~\cite{yamamoto1982wyner},  multiple access channels~\cite{ahlswede1971multi, liao1972multiple, ahlswede1974capacity}, broadcast channels~\cite{marton1979coding} and cascade multiterminal source coding~\cite{cuff2009cascade}. In a broader context, our one-shot achievability results are general enough to be applicable to any
combination of source coding, channel coding and coding for computing problems. 
This chapter is based on~\cite{liu2024one}.

In order to alleviate the difficulty of keeping track of a large number of auxiliary random variables in a general $N$-node network, we propose a tool called the \emph{exponential process refinement lemma} based on the Poisson matching lemma~\cite{li2021unified},\footnote{We only present the discrete case in this chapter for the sake of simplicity. Hence, instead of Poisson processes, we may use an i.i.d. exponential processes 
instead~\cite{li2018strong}. While we expect the results to be extended to the continuous case, this is left for future studies. 
For the use of Poisson functional representation in continuous case in other settings, see Chapter~\ref{chp:hiding} and Chapter~\ref{chp:ppr}. }
which simplifies the analyses of the evolution of the posterior distribution of the sources, messages and/or auxiliary random variables at the decoder.
We utilize the lemma to prove a one-shot achievability result for general acyclic discrete networks, which recovers existing one-shot results in a range of settings in~\cite{li2021unified, verdu2012non, yassaee2013technique, watanabe2015nonasymptotic}, and also give novel one-shot results for various multi-hop settings, namely primitive relay channels~\cite{el2011network, kim2007coding, mondelli2019new, el2021achievable, el2022strengthened}, relay-with-unlimited-look-ahead~\cite{el2005relay, el2007relay}, and cascade multiterminal source coding~\cite{cuff2009cascade}.

The chapter is organized as follows. 
We present our proof technique, called the exponential process refinement lemma, in Section~\ref{sec:expon}. 
We describe our general acyclic discrete network in Section~\ref{sec::net_model}, and prove our main theorem in Section~\ref{sec::main}. 
In Section~\ref{sec::relay}, we use a one-shot relay channel and related settings to elaborate our coding scheme in detail. 
We then discuss a novel one-shot cascade multiterminal source coding problem in Section~\ref{sec::cascade}. 
We also show our coding scheme provides one-shot bounds on various network information theory settings in Section~\ref{sec::NIT}.

\section{Exponential Process Refinement Lemma}
\label{sec:expon}

Recall we have introduced the Exponential functional representation~\cite{li2018strong} in Chapter~\ref{chp:existing_techniques}. 
We briefly review it here together with the Poisson matching lemma~\cite{li2021unified} for the sake of completeness. 
We then design a tool for proving one-shot achievability results over noisy multi-hop networks based on the Poisson matching lemma,  
called the \emph{exponential process refinement lemma}.

Consider a finite set $\mathcal{U}$. Let $\mathbf{U}:=(Z_{u})_{u\in\mathcal{U}}$
be i.i.d. $\mathrm{Exp}(1)$ random variables.\footnote{When the space $\mathcal{U}$ is continuous, 
a Poisson process is used in~\cite{li2018strong, li2021unified}.
} 
Given a distribution $P$ over $\mathcal{U}$, 
\begin{equation}
\mathbf{U}_{P} :=\mathrm{argmin}_{u}\frac{Z_{u}}{P(u)} \label{eq:pfr}
\end{equation}
is called the exponential functional representation in Chapter~\ref{chp:existing_techniques}.\footnote{In~\cite{li2018strong}, even for discrete case using exponential random variables, \eqref{eq:pfr} was still called the \emph{Poisson} functional representation. Here and similar to~\cite{li2024channel} we call it exponential functional representation to distinguish it with the Poisson functional representation that works for continuous cases in Chapter~\ref{chp:hiding} and Chapter~\ref{chp:ppr}.}
    
As explained in Chapter~\ref{chp:existing_techniques} and also~\cite{li2018strong}, we have $\mathbf{U}_{P} \sim P$. 

We can generalize this by letting $\mathbf{U}_{P}(1),\ldots,\mathbf{U}_{P}(|\mathcal{U}|)\in\mathcal{U}$
be the elements of $\mathcal{U}$ sorted in ascending order of $Z_{u}/P(u)$:
\[
\frac{Z_{\mathbf{U}_{P}(1)}}{P(\mathbf{U}_{P}(1))}\le\cdots\le\frac{Z_{\mathbf{U}_{P}(|\mathcal{U}|)}}{P(\mathbf{U}_{P}(|\mathcal{U}|))}.
\]

We break ties arbitrarily and treat $1/0=\infty$. 
This is similar to the mapped Poisson process in the generalized Poisson matching lemma \cite{li2021unified}, though unlike \cite{li2021unified},
$\mathbf{U}_{P}(1),\ldots,\mathbf{U}_{P}(|\mathcal{U}|)$ is not an
i.i.d. sequence following $P$. Write $\mathbf{U}_{P}^{-1}: \mathcal{U} \to [|\mathcal{U}|]$ for
the inverse function of $i \mapsto \mathbf{U}_{P}(i)$. 
The following is a
direct corollary of the generalized Poisson matching lemma~\cite{li2021unified}.

\begin{lemma}
\label{lem::GPML}
For distributions $P,Q$ over $\mathcal{U}$, we have the following almost surely:
\[
\mathbf{E}\left[\mathbf{U}_{Q}^{-1}(\mathbf{U}_{P})\,\Big|\,\mathbf{U}_{P}\right]\le\frac{P(\mathbf{U}_{P})}{Q(\mathbf{U}_{P})}+1.
\]
\end{lemma}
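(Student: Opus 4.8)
The plan is to prove the inequality by a direct computation: condition first on the realized argmin $\mathbf{U}_P$ together with its exponential clock, bound the conditional expectation there, and then integrate the clock out. Write $V := \mathbf{U}_P$. Since ties among the ratios $Z_u/Q(u)$ occur with probability zero, almost surely
\[
\mathbf{U}_Q^{-1}(V) \;=\; 1 + \sum_{u \neq V} \mathbf{1}\!\left\{ \frac{Z_u}{Q(u)} \le \frac{Z_V}{Q(V)} \right\},
\]
so it suffices to control each of these indicator probabilities conditionally on $V$ and $Z_V$. (If $Q(V) = 0$ the claimed bound is $\infty$ and there is nothing to prove, and $P(V) > 0$ almost surely since $\mathbf{P}(\mathbf{U}_P = v) = P(v)$.)

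First I would identify the conditional law of the exponentials given $\{V = v\}$. The event $\{\mathbf{U}_P = v\}$ is precisely $\{ Z_u > (P(u)/P(v)) Z_v \text{ for all } u \neq v \}$. A short Bayes computation using $\mathbf{P}(Z_u > t) = e^{-t}$ together with the known identity $\mathbf{P}(\mathbf{U}_P = v) = P(v)$ shows that, conditionally on $\{V = v\}$, the clock $Z_v$ is $\mathrm{Exp}(1/P(v))$ — in particular $\mathbf{E}[Z_v \mid V = v] = P(v)$ — and that, conditionally on $\{V = v, Z_v = z\}$, memorylessness gives $Z_u = (P(u)/P(v)) z + E_u$ for $u \neq v$ with the $E_u$ i.i.d.\ $\mathrm{Exp}(1)$ and independent of $Z_v$. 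Substituting, the event $\{Z_u/Q(u) \le z/Q(v)\}$ becomes $\{ E_u \le c_u \}$ with $c_u := z\big( Q(u)/Q(v) - P(u)/P(v) \big)$, whose probability is $\max\{1 - e^{-c_u},\, 0\} \le \max\{c_u,\, 0\} \le z\,Q(u)/Q(v)$. Summing over $u \neq v$ and using $\sum_{u} Q(u) \le 1$,
\[
\mathbf{E}\!\left[ \mathbf{U}_Q^{-1}(V) \,\big|\, V = v,\, Z_v = z \right] \;\le\; 1 + \frac{z}{Q(v)}\sum_{u \neq v} Q(u) \;\le\; 1 + \frac{z}{Q(v)}.
\]

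Finally I would apply the tower property and $\mathbf{E}[Z_v \mid V = v] = P(v)$ to conclude $\mathbf{E}[\mathbf{U}_Q^{-1}(V) \mid V = v] \le 1 + P(v)/Q(v)$, which is the assertion. I do not expect a genuine obstacle: the statement is flagged in the text as a direct corollary of the generalized Poisson matching lemma of~\cite{li2021unified}, so an alternative is simply to translate that lemma, reading $\mathbf{U}_Q^{-1}(\mathbf{U}_P)$ as the index at which the $Q$-ordered list of $\mathcal{U}$ first reaches the point $\mathbf{U}_P$. The only mildly delicate point is the bookkeeping for the conditional distribution of the clock $Z_V$ given the argmin — this is where the factor $P(\mathbf{U}_P)$, rather than $1$, enters — while the two elementary estimates $1 - e^{-x} \le x$ and $\sum_u Q(u) \le 1$ are what collapse the sum to the clean $z/Q(v)$.
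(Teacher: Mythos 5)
Your proof is correct. I checked the two conditional-law claims that carry the argument: given $\{\mathbf{U}_P=v\}$ the joint density of the clocks factorizes as $e^{-z_v/P(v)}\prod_{u\neq v}e^{-e_u}$ after the substitution $e_u=z_u-(P(u)/P(v))z_v$, so indeed $Z_v$ is conditionally exponential with mean $P(v)$ and the shifted clocks are i.i.d.\ $\mathrm{Exp}(1)$ independent of $Z_v$; the rank identity, the bound $\max\{1-e^{-c_u},0\}\le z\,Q(u)/Q(v)$, the summation over $u\neq v$, and the tower step are all sound, and the degenerate cases ($Q(\mathbf{U}_P)=0$, $P(u)=0$, ties at $\infty$) are handled adequately.

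Where you differ from the paper: the paper gives no self-contained argument for this lemma at all — it is stated as ``a direct corollary of the generalized Poisson matching lemma'' of Li and Anantharam, i.e.\ the bound on the index of $\mathbf{U}_P$ in the $Q$-ordered process is imported from the continuous/Poisson-process version of that result and specialized to the discrete exponential setting. You instead prove the discrete statement from scratch by conditioning on the argmin and its clock, using memorylessness and the two elementary estimates $1-e^{-x}\le x$ and $\sum_u Q(u)\le 1$. What your route buys is a short, self-contained and fully elementary derivation that makes transparent where the factor $P(\mathbf{U}_P)$ enters (it is exactly the conditional mean of the winning clock); what the citation route buys is that the same external lemma also covers the general Poisson-process case used elsewhere in the thesis (Chapters on hiding and PPR), so the paper avoids re-deriving a special case of machinery it already relies on. Either way the statement stands; your argument would serve as a valid stand-alone proof of the lemma as used in this chapter.
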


We now define a convenient tool. 
\begin{definition}[Refining a distribution by an exponential process]
\label{def:refine}
For a joint distribution $Q_{V,U}$
over $\mathcal{V}\times\mathcal{U}$, the refinement of $Q_{V,U}$ by $\mathbf{U}$, denoted as $Q_{V,U}^\mathbf{U}$, is a joint distribution 
\[
Q_{V,U}^\mathbf{U}(v,u):= 
\frac{Q_{V}(v)}{\mathbf{U}_{Q_{U|V}(\cdot|v)}^{-1}(u)\sum_{i=1}^{|\mathcal{U}|}i^{-1}}
\]
for all $(v, u)$ in the support of $Q_{V,U}$, 
where $Q_{V}$ is the
$V$-marginal of $Q_{V,U}$ and $Q_{U|V}$ is the conditional distribution of $U$ given $V$. 
When $V = \emptyset$, the above definition becomes
\[
Q_{U}^\mathbf{U}(u)=\frac{1}{\mathbf{U}_{Q_{U}}^{-1}(u)\sum_{i=1}^{|\mathcal{U}|}i^{-1}}.
\]
\end{definition}

While the exponential functional representation $\mathbf{U}_{Q_U}$ (which only gives one value of $U$) is used for the unique decoding of $U$, the refinement $Q_{U}^\mathbf{U}(u)$ is for the \emph{soft decoding} of $U$, which gives a distribution over $U$, with $\mathbf{U}_{Q_U}$ having the largest probability. This is useful in non-unique decoding. For example, if we want to decode $U_1$ uniquely, while utilizing $U_2$ via non-unique decoding, we can first obtain the distribution $(\mathbf{U}_2)_{Q_{U_2}}$, and then compute the marginal distribution of $U_1$ in $(\mathbf{U}_2)_{Q_{U_2}}P_{U_1|U_2}$ and use this marginal distribution to recover $U_1$ via the exponential functional representation.

Loosely speaking, if the distribution $Q_{V,U}$ represents our ``prior distribution'' of $(V,U)$, then the refinement $Q_{V,U}^\mathbf{U}$ is our updated ``posterior distribution'' after taking the exponential process $\mathbf{U}$ into account. 
In multiterminal coding settings that a node decodes multiple random variables, the prior distribution of those random variables will be refined by multiple exponential processes. 
To keep track of the evolution of the ``posterior probability'' of the correct values of those random variables through the refinement process, we use the following lemma, called the \emph{exponential process refinement lemma}. 
Although its proof still relies on the Poisson matching lemma~\cite{li2021unified}, it significantly simplifies our analyses. 

\begin{lemma}[Exponential Process Refinement Lemma]
\label{lemma::PML2}
For a distribution $P$ over $\mathcal{U}$ and a joint distribution
$Q_{V,U}$ over a finite $\mathcal{V}\times\mathcal{U}$, for every
$v\in\mathcal{V}$, we have, almost surely,
\begin{align*}
 \mathbf{E}\bigg[\frac{1}{Q_{V,U}^\mathbf{U}(v,\mathbf{U}_{P})}\bigg|\mathbf{U}_{P}\bigg] 
 \le \frac{\ln|\mathcal{U}|+1}{Q_{V}(v)}\left(\frac{P(\mathbf{U}_{P})}{Q_{U|V} (\mathbf{U}_{P}|v)}+1\right).
\end{align*}
\end{lemma}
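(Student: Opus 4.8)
The plan is to unfold the definition of $Q_{V,U}^{\mathbf{U}}$ and reduce the claim to an application of Lemma~\ref{lem::GPML}. By Definition~\ref{def:refine}, for $(v,u)$ in the support of $Q_{V,U}$ we have
\[
\frac{1}{Q_{V,U}^{\mathbf{U}}(v,u)} = \frac{\mathbf{U}_{Q_{U|V}(\cdot|v)}^{-1}(u)\,\sum_{i=1}^{|\mathcal{U}|} i^{-1}}{Q_V(v)}.
\]
Since $\sum_{i=1}^{|\mathcal{U}|} i^{-1} \le \ln|\mathcal{U}| + 1$, substituting $u = \mathbf{U}_P$ gives
\[
\mathbf{E}\!\left[\frac{1}{Q_{V,U}^{\mathbf{U}}(v,\mathbf{U}_P)}\,\Big|\,\mathbf{U}_P\right]
\le \frac{\ln|\mathcal{U}|+1}{Q_V(v)}\;\mathbf{E}\!\left[\mathbf{U}_{Q_{U|V}(\cdot|v)}^{-1}(\mathbf{U}_P)\,\Big|\,\mathbf{U}_P\right].
\]

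Next I would apply Lemma~\ref{lem::GPML} with the two distributions over $\mathcal{U}$ being $P$ and $Q_{U|V}(\cdot|v)$ (here $v$ is fixed, so $Q_{U|V}(\cdot|v)$ is a genuine distribution on $\mathcal{U}$). One subtlety is that in Lemma~\ref{lem::GPML} the argument of the inverse function is $\mathbf{U}_P$, the exponential functional representation of $P$ itself, which matches exactly what appears above; the conditioning on $\mathbf{U}_P$ is likewise the same. Lemma~\ref{lem::GPML} then yields, almost surely,
\[
\mathbf{E}\!\left[\mathbf{U}_{Q_{U|V}(\cdot|v)}^{-1}(\mathbf{U}_P)\,\Big|\,\mathbf{U}_P\right] \le \frac{P(\mathbf{U}_P)}{Q_{U|V}(\mathbf{U}_P|v)} + 1,
\]
and combining this with the previous display gives precisely the claimed bound.

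The main obstacle I anticipate is bookkeeping rather than depth: one must be careful that the exponential process $\mathbf{U}$ used to build $\mathbf{U}_{Q_{U|V}(\cdot|v)}$ in the refinement is the \emph{same} process $\mathbf{U}$ underlying $\mathbf{U}_P$, so that Lemma~\ref{lem::GPML} (a corollary of the generalized Poisson matching lemma) applies with the correct joint law — both $\mathbf{U}_P$ and $\mathbf{U}_{Q_{U|V}(\cdot|v)}$ are deterministic functions of the single collection $(Z_u)_{u\in\mathcal{U}}$. A second point to handle cleanly is the support condition: the bound is asserted for every $v\in\mathcal{V}$, but $Q_{V,U}^{\mathbf{U}}(v,\mathbf{U}_P)$ is only defined when $(v,\mathbf{U}_P)$ lies in the support of $Q_{V,U}$; when $Q_V(v)=0$ the right-hand side is $+\infty$ and the inequality is vacuous, and when $Q_{U|V}(\mathbf{U}_P|v)=0$ (but $Q_V(v)>0$) the same convention $1/0=\infty$ used in Lemma~\ref{lem::GPML} makes the right-hand side infinite as well, so there is nothing to prove in those cases. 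Modulo these conventions, the argument is a one-line substitution followed by one invocation of Lemma~\ref{lem::GPML}.
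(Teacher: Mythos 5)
Your proposal is correct and follows essentially the same route as the paper's proof: unfold Definition~\ref{def:refine}, invoke Lemma~\ref{lem::GPML} with the distributions $P$ and $Q_{U|V}(\cdot|v)$, and bound the harmonic sum $\sum_{i=1}^{|\mathcal{U}|} i^{-1}$ by $\ln|\mathcal{U}|+1$. The only difference is the order in which you apply the harmonic-sum bound (before rather than after the matching lemma), which is immaterial.
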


\begin{proof}
We have 
\begin{align*} & \mathbf{E}\bigg[\frac{1}{Q_{V,U}^\mathbf{U}(v,\mathbf{U}_{P})}\,\bigg|\,\mathbf{U}_{P}\bigg] \\ 
& \stackrel{(a)}{=}   \mathbf{E}\bigg[  \frac{ \mathbf{U}_{Q_{U|V(\cdot | v)}}^{-1}(\mathbf{U}_{P})\sum_{i=1}^{|\mathcal{U}|}i^{-1}}{Q_{V}(v)}\Bigg|\mathbf{U}_P \bigg] \\ 
& \stackrel{(b)}{\leq }  \frac{\sum_{i=1}^{|\mathcal{U}|}i^{-1}}{Q_{V}(v)} \left( \frac{P(\mathbf{U}_{P})}{Q_{U|V}(\mathbf{U}_{P}|v)} + 1 \right) \\  
& \stackrel{(c)}{\leq }  \frac{\ln|\mathcal{U}|+1}{Q_{V}(v)}\left(\frac{P(\mathbf{U}_{P})}{Q_{U|V}(\mathbf{U}_{P}|v)}+1\right),
\end{align*} 
where $(a)$ is by Definition~\ref{def:refine}, $(b)$ is by Lemma~\ref{lem::GPML} and $(c)$ is by $\sum_{i=1}^n i^{-1} \leq \int_1^n x^{-1} dx + 1 = \ln n+1 $.  
\end{proof}

\section{Network Model}
\label{sec::net_model}

We describe a general $N$-node network model, which is the one-shot version of the \emph{acyclic discrete memoryless network (ADMN)}~\cite{lee2018unified}. 
There are $N$ nodes labelled $1,\ldots,N$. 
Node $i$ observes $Y_i \in \mathcal{Y}_i$ and produces $X_i \in \mathcal{X}_i$ (while we assume $\mathcal{X}_i,\mathcal{Y}_i$ are finite). 
Unlike conventional asymptotic settings (e.g. \cite{lee2018unified}), here $X_i$ is only \emph{one symbol}, instead of a sequence $(X_{i,1},\ldots,X_{i,n})$.
The transmission is performed sequentially, and each $Y_i$ is allowed to depend on all previous inputs and outputs (i.e., $X^{i-1},Y^{i-1}$) in a stochastic manner, as shown in Figure~\ref{fig::ADN_net}. 
Therefore, we can formally define an $N$\emph{-node acyclic discrete network (ADN)} as a collection of channels 
$(P_{Y_i | X^{i-1},Y^{i-1}})_{i\in [N]}$, where $P_{Y_i | X^{i-1},Y^{i-1}}$ is a conditional distribution from $(\prod_{j=1}^{i-1}\mathcal{X}_j) \times (\prod_{j=1}^{i-1}\mathcal{Y}_j)$ to $\mathcal{Y}_i$. 
In particular, $Y_1$ follows $P_{Y_1}$ and does not depend on any other random variable.
The asymptotic ADMN \cite{lee2018unified} can be considered as the $n$-fold ADN $(P^n_{Y_i | X^{i-1}, Y^{i-1}})_{i\in [N]}$, where $P^n_{Y_i | X^{i-1},Y^{i-1}}$ denotes the $n$-fold product conditional distribution (i.e., $n$ copies of a memoryless channel), and we take the blocklength $n \to \infty$.

\begin{figure}[htpb]
	\centering
	\includegraphics[scale=0.28]{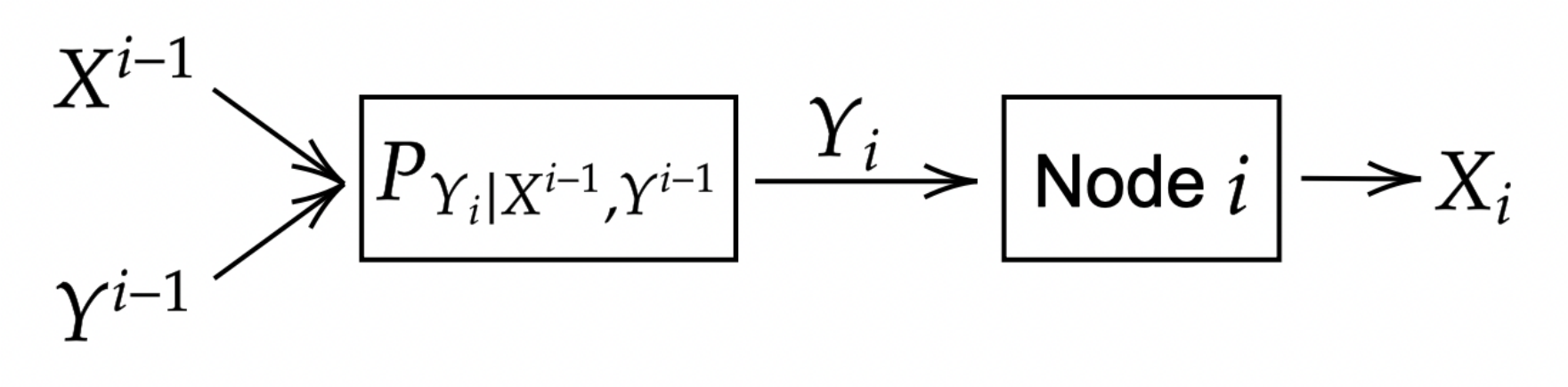}
	\caption{Acyclic discrete memoryless network.} 
	\label{fig::ADN_net}
\end{figure}

We remark that, similar to the asymptotic unified random coding bound~\cite{lee2018unified}, the $X_i$'s and $Y_i$'s can represent sources, states, channel inputs, outputs and messages in source coding and channel coding settings. 
For example, for point-to-point channel coding, we take $Y_1$ to be the message, which the encoder (node $1$) encodes into the channel input $X_1$, which in turn is sent through the channel $P_{Y_2|X_1}$. 
The decoder (node $2$) observes $Y_2$ and outputs $X_2$, which is the decoded message. 
For lossless source coding, $Y_1$ is the source, $X_1=Y_2$ is the description by the encoder, and $X_2$ is the reconstruction.

\begin{figure}[htpb]
    \centering
    \includegraphics[scale=0.35]{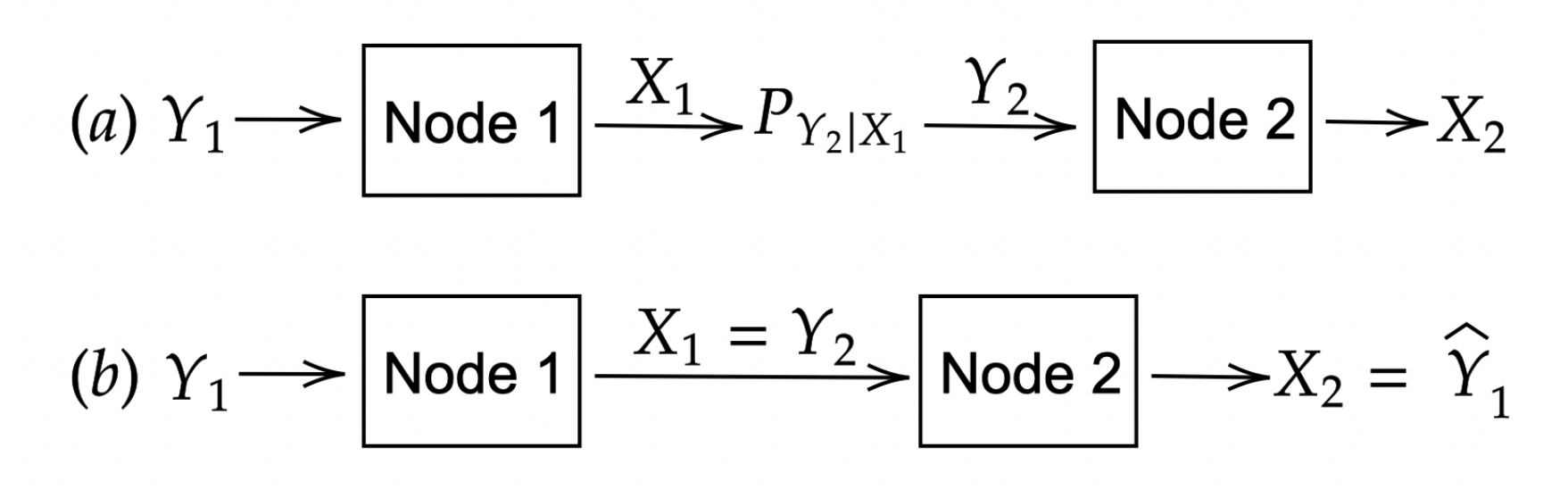}
    \caption{(a) Channel coding. (b) Source coding.} 
    \label{fig::channel_src}
\end{figure}

We give the definition of a coding scheme below.

\begin{definition}
A \emph{deterministic coding scheme} consists of a sequence of encoding functions $(f_i)_{i\in [N]}$, where $f_i : \mathcal{Y}_i \to \mathcal{X}_i$.
For $i=1,\ldots,N$, the following operations are performed:
\begin{itemize}
    \item \textbf{Noisy channel.} The output $\tilde{Y}_i$ is generated conditional on $\tilde{X}^{i - 1},\tilde{Y}^{i-1}$ according to $P_{Y_i | X^{i-1},Y^{i-1}}$. 
    For $i=1$, $\tilde{Y}_1 \sim P_{Y_1}$  can be regarded as a source or a channel state. 

    \item \textbf{Node operation.} Node $i$ observes $\tilde{Y}_i$ and outputs $\tilde{X}_i = f_i(\tilde{Y}_i)$.
\end{itemize}
\end{definition}

We sometimes allow an additional unlimited public randomness available to all nodes. 

\begin{definition}
A \emph{public-randomness coding scheme} for the network consists of a pair $(P_W, (f_i)_{i\in [N]})$, where $P_W$ is the distribution of the public randomness $W \in \mathcal{W}$ available to all nodes and $f_i : \mathcal{Y}_i \times \mathcal{W} \to \mathcal{X}_i$ is the encoding function of node $i$ mapping its observation $Y_i$ and the public randomness $W$ to its output $X_i$.
The operations are as follows. 
First, generate $W \sim P_W$. 
For $i=1,\ldots,N$, generate $\tilde{Y}_i$ conditional on $\tilde{X}^{i-1},\tilde{Y}^{i-1}$ according to $P_{Y_i | X^{i-1},Y^{i-1}}$, 
and take $\tilde{X}_i = f_i(\tilde{Y}_i,W)$.
\end{definition}

We do not impose any constraint on the public randomness $W$. 
In reality, to carry out a public-randomness coding scheme, the nodes share a common random seed to initialize their pseudorandom number generators before the scheme commences.

We use $\tilde{X}_i,\tilde{Y}_i$ to denote the actual random variables from the coding scheme. 
In contrast, $X_i,Y_i$ usually denote the random variables following an ideal distribution.
For example, in channel coding, the ideal distribution is $Y_1=X_2 \sim \mathrm{Unif}[\mathsf{L}]$ (i.e., the message is decoded without error), independent of $(X_1,Y_2) \sim P_{X_1}P_{Y_2|X_1}$. 
If we ensure that the actual $\tilde{X}^2,\tilde{Y}^2$ is ``close to'' the ideal $X^2,Y^2$, this would imply that $\tilde{Y}_1=\tilde{X}_2$ with high probability as well, giving a small error probability.

The goal (the ``achievability'') is to make the actual joint distribution $P_{\tilde{X}^N,\tilde{Y}^N}$ ``approximately as good as'' the ideal joint distribution $P_{X^N,Y^N}$. 
If we have an ``error set'' $\mathcal{E} \subseteq \big(\prod_{i=1}^N \mathcal{X}_i\big) \times \big(\prod_{i=1}^N \mathcal{Y}_i\big)$ that we do not want $(\tilde{X}^N,\tilde{Y}^N)$ to fall into (e.g., for channel coding, $\mathcal{E}$ is the set where $\tilde{Y}_1 \neq \tilde{X}_2$, i.e., an error occurs; for lossy source coding, $\mathcal{E}$ is the set where $d(\tilde{Y}_1, \tilde{X}_2)>\mathsf{D}$, i.e., the distortion exceeds the limit), we want 
\begin{align}
\mathbf{P}\big((\tilde{X}^N,\tilde{Y}^N) \in \mathcal{E} \big) \lesssim \mathbf{P}\big((X^N,Y^N) \in \mathcal{E} \big).\label{eq:errorset}
\end{align}
If $P_{\tilde{X}^N,\tilde{Y}^N}$ is close to $P_{X^N,Y^N}$ in total variation distance, i.e., 
\begin{align}
\delta_{\mathrm{TV}}\big(P_{X^N,Y^N},\, P_{\tilde{X}^N,\tilde{Y}^N} \big) \approx 0, 
\label{eq:error_tv}
\end{align}
then \eqref{eq:errorset} is guaranteed. 
For public-randomness coding, we show that \eqref{eq:error_tv} can be achieved, which can be seen as a channel simulation~\cite{bennett2002entanglement, cuff2013distributed} or a coordination~\cite{cuff2010coordination} result. 
For deterministic coding, since the node operations are deterministic, there might not be sufficient randomness to make $P_{\tilde{X}^N,\tilde{Y}^N}$ close to $P_{X^N,Y^N}$, and hence we use the  error bound in \eqref{eq:errorset}.

\section{Main Theorem for Acyclic Discrete Networks} 
\label{sec::main}

We show a one-shot achievability result for ADN via public-randomness coding scheme. 

\begin{theorem}
\label{thm::network_achievability}
Fix any ADN $(P_{Y_i | X^{i-1},Y^{i-1}})_{i\in [N]}$. 
For any collection of indices $(a_{i,j})_{i\in [N], j \in [d_i]}$ where $(a_{i,j})_{j \in [d_i]}$ is a sequence of distinct indices in $[i-1]$ for each $i$, any sequence $(d'_i)_{i\in [N]}$ with $0\le d'_i \le d_i$ and any collection of conditional distributions $(P_{U_i|Y_i, \overline{U}'_i}, P_{X_i|Y_i,U_i, \overline{U}'_i})_{i \in [N]}$ (where $\overline{U}_{i,\mathcal{S}} := (U_{a_{i,j}})_{j\in \mathcal{S}}$ for $\mathcal{S} \subseteq [d_i]$ and $\overline{U}'_{i}:=\overline{U}_{i,[d'_i]}$), which induces the joint distribution of $X^N,Y^N,U^N$ (the ``ideal distribution''), there exists a public-randomness coding scheme $(P_W, (f_i)_{i\in [N]})$ such that the joint distribution of $\tilde{X}^N, \tilde{Y}^N$ induced by the scheme (the ``actual distribution'') satisfies 
\begin{equation*}
\delta_{\mathrm{TV}}\big(P_{X^N,Y^N},\, P_{\tilde{X}^N,\tilde{Y}^N} \big)  \le
\mathbf{E}\bigg[\min\bigg\{
\sum_{i=1}^N 
\sum_{j=1}^{d'_i} 
B_{i,j},\, 1
\bigg\}
\bigg],
\end{equation*}
where 
\begin{equation}
B_{i,j}  := \gamma_{i,j} \prod_{k =j}^{d_i} \bigg(
  2^{-\iota (\overline{U}_{i,k}; \overline{U}_{i, [d_i] \backslash [j..k]} , Y_i) + \iota(\overline{U}_{i,k}; \overline{U}'_{a_{i,k}}, Y_{a_{i,k}}) } + \mathbf{1}\{k \! >\! j\}\bigg) \label{eq::beta}
\end{equation} 
such that\footnote{Note that the logarithmic terms $\gamma_{i,j}$ do not affect the first and second order results.} \begin{align*}
&\gamma_{i,j} := \prod_{k=j+1}^{d_i}
\Big( \ln|\mathcal{U}_{a_{i,k}}| + 1\Big).
\end{align*} 

\end{theorem}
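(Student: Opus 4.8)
The plan is to use a public-randomness scheme whose randomness $W$ consists of one i.i.d.\ $\mathrm{Exp}(1)$ process $\mathbf{U}^{(i)}$ on $\mathcal{U}_i$ for every node $i$ (this process carries the auxiliary codeword $U_i$), together with independent seeds $W_1,\dots,W_N$ used to realize the conditionals $P_{X_i\mid Y_i,U_i,\overline{U}'_i}$. The nodes act in the order $i=1,\dots,N$. Node $i$, on observing $\tilde Y_i$, reconstructs estimates $(\hat{\overline{U}}_{i,1},\dots,\hat{\overline{U}}_{i,d'_i})$ of $\overline{U}'_i$ one coordinate at a time; to recover coordinate $j$ (for $j=1,\dots,d'_i$), having already produced estimates of coordinates $1,\dots,j-1$, it takes the ideal conditional of $(\overline{U}_{i,j},\dots,\overline{U}_{i,d_i})$ given $Y_i$ and those coordinates, refines it (in the sense of Definition~\ref{def:refine}) one coordinate at a time from $d_i$ down to $j+1$ using $\mathbf{U}^{(a_{i,d_i})},\dots,\mathbf{U}^{(a_{i,j+1})}$ --- each refinement of coordinate $k$ being against $P_{\overline{U}_{i,k}\mid Y_i,\overline{U}_{i,[d_i]\backslash[j..k]}}$, i.e.\ conditioned on $Y_i$, the already-fixed coordinates $[1..j-1]$ and the already-refined $[k+1..d_i]$ --- mixes the resulting joint soft posterior with $P_{\overline{U}_{i,j}\mid Y_i,\overline{U}_{i,[d_i]\backslash\{j\}}}$ to obtain a prior on $\overline{U}_{i,j}$, and applies the exponential functional representation on $\mathbf{U}^{(a_{i,j})}$ to that prior, declaring $\hat{\overline{U}}_{i,j}$. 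The non-unique coordinates $d'_i+1,\dots,d_i$ enter only through the refinements and are never committed to. Having formed $\hat{\overline{U}}'_i$, node $i$ outputs $\tilde U_i:=(\mathbf{U}^{(i)})_{P_{U_i\mid\tilde Y_i,\hat{\overline{U}}'_i}}$ and then $\tilde X_i$, a sample of $P_{X_i\mid Y_i,U_i,\overline{U}'_i}(\,\cdot\mid\tilde Y_i,\tilde U_i,\hat{\overline{U}}'_i)$ drawn using $W_i$.

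\textbf{Reduction to a union bound.} Next I would couple the ideal system and the scheme on one probability space using the same processes, seeds and channel noise. By induction on $i$: if nodes $1,\dots,i-1$ have reproduced the ideal $(U_m,X_m,Y_m)_{m<i}$, then $\tilde Y_i=Y_i$ and, since $\tilde U_i,\tilde X_i$ are the same functions of the same randomness as $U_i,X_i$ whenever $\hat{\overline{U}}'_i=\overline{U}'_i$, any discrepancy must originate from a mis-decoded coordinate; hence $(\tilde X^N,\tilde Y^N)=(X^N,Y^N)$ off the event $\mathcal{E}$ of some mis-decoding, so $\delta_{\mathrm{TV}}(P_{X^N,Y^N},P_{\tilde X^N,\tilde Y^N})\le\mathbf{P}(\mathcal{E})$. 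For $j\le d'_i$ let $E_{i,j}$ be the hypothetical event that node $i$'s step for coordinate $j$, \emph{fed the correct side information} $\overline{U}_{i,[1..j-1]}$, mis-decodes $\overline{U}_{i,j}$. Inspecting the first mis-decoded coordinate gives $\mathcal{E}\subseteq\bigcup_{i,j}E_{i,j}$. Conditioning on the ideal tuple $\mathcal{I}:=(U^N,X^N,Y^N)$, with respect to which every $B_{i,j}$ in \eqref{eq::beta} is measurable, and combining a union bound with $\mathbf{P}(\mathcal{E}\mid\mathcal{I})\le1$ yields $\delta_{\mathrm{TV}}\le\mathbf{P}(\mathcal{E})\le\mathbf{E}\big[\min\{\sum_{i,j}\mathbf{P}(E_{i,j}\mid\mathcal{I}),\,1\}\big]$; it then remains to prove $\mathbf{P}(E_{i,j}\mid\mathcal{I})\le B_{i,j}$ almost surely.

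\textbf{The per-step bound.} The core of the argument is this last inequality, which I would establish as follows. Fix $(i,j)$ and work conditionally on $\mathcal{I}$; all true values $u^*_k:=\overline{U}_{i,k}$ and all relevant conditionals are then deterministic, while each $\mathbf{U}^{(a_{i,k})}$, $k\in[j..d_i]$, is (conditionally, and independently across $k$) an $\mathrm{Exp}(1)$ process conditioned only on its argmin against the encoding conditional $P_{\overline{U}_{i,k}\mid Y_{a_{i,k}},\overline{U}'_{a_{i,k}}}$ being $u^*_k$. Unwinding the refinement recursion, the joint soft posterior assigns the true sub-tuple the value $\prod_{k=j+1}^{d_i}\tilde\pi_k(u^*_k)$, where $\tilde\pi_k$ is the refinement of $P_{\overline{U}_{i,k}\mid Y_i,\overline{U}_{i,[d_i]\backslash[j..k]}}$ by $\mathbf{U}^{(a_{i,k})}$, so the $\tilde\pi_k(u^*_k)$ are conditionally independent; Lemma~\ref{lemma::PML2} with $V=\emptyset$ gives $\mathbf{E}[\tilde\pi_k(u^*_k)^{-1}\mid\mathcal{I}]\le(\ln|\mathcal{U}_{a_{i,k}}|+1)\big(2^{-\iota(\overline{U}_{i,k};\overline{U}_{i,[d_i]\backslash[j..k]},Y_i)+\iota(\overline{U}_{i,k};\overline{U}'_{a_{i,k}},Y_{a_{i,k}})}+1\big)$, the exponent coming from rewriting $P_{\overline{U}_{i,k}\mid Y_{a_{i,k}},\overline{U}'_{a_{i,k}}}(u^*_k)/P_{\overline{U}_{i,k}\mid Y_i,\overline{U}_{i,[d_i]\backslash[j..k]}}(u^*_k)$ through information densities. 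Finally, the prior used to decode coordinate $j$ has value at $u^*_j$ at least $\big(\prod_{k=j+1}^{d_i}\tilde\pi_k(u^*_k)\big)\,P_{\overline{U}_{i,j}\mid Y_i,\overline{U}_{i,[d_i]\backslash\{j\}}}(u^*_j)$, so the Poisson matching lemma bounds $\mathbf{P}(E_{i,j}\mid\mathcal{I},(\mathbf{U}^{(a_{i,k})})_{k>j})$ by $\min\big\{2^{-\iota(\overline{U}_{i,j};\overline{U}_{i,[d_i]\backslash\{j\}},Y_i)+\iota(\overline{U}_{i,j};\overline{U}'_{a_{i,j}},Y_{a_{i,j}})}\big/\prod_{k>j}\tilde\pi_k(u^*_k),\,1\big\}$; dropping the $\min$, taking $\mathbf{E}[\cdot\mid\mathcal{I}]$, and multiplying the conditionally independent factors reproduces exactly $B_{i,j}$ as in \eqref{eq::beta}, with $\gamma_{i,j}=\prod_{k=j+1}^{d_i}(\ln|\mathcal{U}_{a_{i,k}}|+1)$ and the missing ``$+1$'' at $k=j$ reflecting that $E_{i,j}$ forces a mismatch there.

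\textbf{Main obstacle.} I expect the crux to be the bookkeeping of the refinement recursion that produces $B_{i,j}$: one must choose the ``outside-in'' order ($k=d_i$ first, $k=j$ last) and marginalize so that the conditional handed to $\mathbf{U}^{(a_{i,k})}$ is \emph{precisely} $P_{\overline{U}_{i,k}\mid Y_i,\overline{U}_{i,[d_i]\backslash[j..k]}}$ --- conditioned on the fixed coordinates $[1..j-1]$ and the already-refined $[k+1..d_i]$ but not on $[j..k-1]$ --- and must verify that the product form and the conditional independence of the $\tilde\pi_k(u^*_k)$ survive these marginalizations, so that the telescoping yields the index sets $[d_i]\backslash[j..k]$ and the indicator $\mathbf{1}\{k>j\}$ of \eqref{eq::beta} exactly. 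A secondary point requiring care is the measurability step: showing that conditioning on $\mathcal{I}$ leaves each $\mathbf{U}^{(a_{i,k})}$ an $\mathrm{Exp}(1)$ process conditioned only on its own argmin and leaves distinct processes independent, so that Lemma~\ref{lemma::PML2} and the Poisson matching lemma may be invoked verbatim.
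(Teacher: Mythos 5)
Your proposal is correct and follows essentially the same route as the paper's proof: the same per-node exponential-process codebooks with the outside-in refinement decoder, the same genie-aided/ideal-network coupling reducing the total variation to the probability of a first mis-decoding with correct side information, and the same combination of the Poisson matching lemma and the exponential process refinement lemma to produce each term $B_{i,j}$. The only differences are cosmetic: you use a union bound over correct-side-information events and factorize the refined joint at the true tuple into conditionally independent single-coordinate refinements, whereas the paper decomposes by the first error and peels off coordinates iteratively; both give the identical bound.
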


The sequences $(a_{i,j})_j$ control which auxiliaries $U_j$ node $i$ decodes and in which order. 
Node $i$ uniquely decodes $\overline{U}'_i = $ $(U_{a_{i,j}})_{j \in [d'_i]}$ while utilizing $(U_{a_{i,j}})_{j \in [d'_i+1 .. d_i]}$ by non-unique decoding via the exponential process refinement (Definition~\ref{def:refine}). 
For brevity, we say ``the \emph{decoding order} of node $i$ is $\overline{U}_{i,1}, \ldots,$ $\overline{U}_{i,d'_i}, \overline{U}_{i,d'_i+1}?,\ldots,\overline{U}_{i,d_i}?$'' where ``?'' means the random variable is only used in non-unique decoding. 
Node $i$ decodes $\overline{U}'_i$, creates its own $U_i$ by using the exponential functional representation on $P_{U_i|Y_i, \overline{U}'_i}$, and generates $X_i$ from $P_{X_i|Y_i,U_i, \overline{U}'_i}$.

We also have the following result for deterministic coding schemes. 

\begin{theorem}\label{thm:det}
Fix any ADN $(P_{Y_i | X^{i-1},Y^{i-1}})_{i\in [N]}$.
For any $(a_{i,j})_{i\in [N], j \in [d_i]}$, $(d'_i)_{i\in [N]}$, $(P_{U_i|Y_i, \overline{U}'_i}, P_{X_i|Y_i,U_i, \overline{U}'_i})_{i \in [N]}$ as defined in Theorem~\ref{thm::network_achievability}, 
which induce the joint distribution of $X^N,Y^N,U^N$, and any set $\mathcal{E} \subseteq \big(\prod_{i=1}^N \mathcal{X}_i\big) \times \big(\prod_{i=1}^N \mathcal{Y}_i\big)$, there is a deterministic coding scheme $(f_i)_{i\in [N]}$ such that $\tilde{X}^N,\tilde{Y}^N$ induced by the scheme satisfy
\begin{align}
&\mathbf{P}\big((\tilde{X}^N,\tilde{Y}^N) \in \mathcal{E} \big) \nonumber \\
&\le \mathbf{E}\bigg[\min\bigg\{
\mathbf{1}\big\{(X^N,Y^N) \in \mathcal{E} \big\} + \sum_{i=1}^N 
\sum_{j=1}^{d'_i} 
B_{i,j},\, 1
\bigg\}
\bigg], 
\label{eq:error_bound_det}
\end{align}
where $B_{i,j}$ is defined in Theorem~\ref{thm::network_achievability}. 
\end{theorem}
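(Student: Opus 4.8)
The plan is to obtain Theorem~\ref{thm:det} from essentially the same construction and probabilistic coupling that underlies Theorem~\ref{thm::network_achievability}, adding only two short ingredients: incorporating the error set $\mathcal{E}$, and derandomizing the public randomness. So the first step is to set up the scheme and the coupling exactly as in the public-randomness statement. Let the public randomness $W$ collect an exponential process for each auxiliary $U_i$ and each channel input $X_i$. Node $i$, observing $Y_i$ and $W$, decodes the tuple $\overline{U}'_i$ by walking the decoding order $\overline{U}_{i,1},\ldots,\overline{U}_{i,d'_i},\overline{U}_{i,d'_i+1}?,\ldots,\overline{U}_{i,d_i}?$, applying the exponential functional representation at the uniquely-decoded positions and the \emph{soft decoding} of Definition~\ref{def:refine} at the non-uniquely-decoded ones; it then produces $\tilde{U}_i$ from $P_{U_i|Y_i,\overline{U}'_i}$ via its own exponential functional representation and $\tilde{X}_i$ from $P_{X_i|Y_i,U_i,\overline{U}'_i}$. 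On a single probability space I would couple the actual run $(\tilde{X}^N,\tilde{Y}^N,\tilde{U}^N)$ with ideal variables $(X^N,Y^N,U^N)$ so that every encoder-side exponential functional representation returns the ideal value, and let $\mathcal{A}$ be the event that every node's decoded auxiliaries coincide with the ideal ones; on $\mathcal{A}$ we have $(\tilde{X}^N,\tilde{Y}^N)=(X^N,Y^N)$.

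The second step is the bound on $\mathbf{P}(\mathcal{A}^c)$, which is the heart of the argument and runs along the same lines as the proof of Theorem~\ref{thm::network_achievability}: proceed by induction over nodes $i$ and over positions $j=1,\ldots,d'_i$, and, conditioning at each step on the correctly decoded values accumulated so far, bound the probability of a fresh disagreement by the Poisson matching lemma at the uniquely-decoded position and by Lemma~\ref{lemma::PML2} for the non-uniquely-decoded ones carried along. Accumulating density ratios across $k=j,\ldots,d_i$ produces the product in \eqref{eq::beta}: the factors $2^{-\iota(\overline{U}_{i,k};\overline{U}_{i,[d_i]\backslash[j..k]},Y_i)+\iota(\overline{U}_{i,k};\overline{U}'_{a_{i,k}},Y_{a_{i,k}})}$ are the relevant conditional-density ratios evaluated at the realized values, the $\mathbf{1}\{k>j\}$ terms are the ``$+1$''s contributed by the Poisson matching lemma and Lemma~\ref{lemma::PML2}, and the $\gamma_{i,j}$ collect the $\ln|\mathcal{U}_{a_{i,k}}|+1$ factors from the refinement steps. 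A union bound over $(i,j)$ then gives $\mathbf{P}(\mathcal{A}^c \mid X^N,Y^N,U^N)\le\sum_{i=1}^N\sum_{j=1}^{d'_i}B_{i,j}$, using that each $B_{i,j}$ is a deterministic function of $(Y^N,U^N)$ and hence is determined once we condition on $(X^N,Y^N,U^N)$.

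The third step brings in the error set and is short. Condition on $(X^N,Y^N,U^N)$. If $(X^N,Y^N)\in\mathcal{E}$ then trivially $\mathbf{P}\big((\tilde{X}^N,\tilde{Y}^N)\in\mathcal{E}\,\big|\,X^N,Y^N,U^N\big)\le 1=\min\{\mathbf{1}\{(X^N,Y^N)\in\mathcal{E}\}+\sum_{i,j}B_{i,j},\,1\}$. If $(X^N,Y^N)\notin\mathcal{E}$ then on $\mathcal{A}$ we have $(\tilde{X}^N,\tilde{Y}^N)=(X^N,Y^N)\notin\mathcal{E}$, so $\{(\tilde{X}^N,\tilde{Y}^N)\in\mathcal{E}\}\subseteq\mathcal{A}^c$ and the conditional probability is at most $\min\{\mathbf{P}(\mathcal{A}^c\mid X^N,Y^N,U^N),\,1\}\le\min\{\sum_{i,j}B_{i,j},\,1\}=\min\{\mathbf{1}\{(X^N,Y^N)\in\mathcal{E}\}+\sum_{i,j}B_{i,j},\,1\}$. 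Taking the expectation over $(X^N,Y^N,U^N)$ yields \eqref{eq:error_bound_det} for the public-randomness scheme. Finally, since the right-hand side depends only on the ideal joint law and not on $W$, there is a realization $w$ of $W$ whose conditional error does not exceed the average, and fixing $W=w$ turns $(f_i(\cdot,W))_i$ into a deterministic scheme $(f_i(\cdot,w))_i$ satisfying the claim.

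The genuinely hard part is the second step: the node-by-node coupling and the bookkeeping that chains the Poisson matching lemma and Lemma~\ref{lemma::PML2} into exactly the product form of \eqref{eq::beta}, in particular getting the conditioning right so that at position $j$ of node $i$ the ratio attached to $\overline{U}_{i,k}$ compares the ``network-side'' information $\overline{U}_{i,[d_i]\backslash[j..k]},Y_i$ available to node $i$ against the ``generation-side'' information $\overline{U}'_{a_{i,k}},Y_{a_{i,k}}$ used when $\overline{U}_{i,k}$ was created, and handling the non-unique-decoding posterior weights through the refinement of Definition~\ref{def:refine}. This is precisely the work done in the proof of Theorem~\ref{thm::network_achievability}, so in practice I would establish that theorem first and then read Theorem~\ref{thm:det} off the coupling; the error-set case split and the derandomization above are routine once that machinery is in hand.
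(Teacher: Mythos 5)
Your proposal is correct and follows essentially the same route as the paper: the same coupling of the actual scheme with an ideal (genie-aided) network, the same chained application of the Poisson matching lemma and the exponential process refinement lemma with a union bound over $(i,j)$ to get $\sum_{i,j} B_{i,j}$, the same absorption of the error set via $\mathbf{P}((\tilde{X}^N,\tilde{Y}^N)\in\mathcal{E}) \le \mathbf{E}[\min\{\mathbf{1}\{(X^N,Y^N)\in\mathcal{E}\}+F,1\}]$, and the same derandomization by fixing a good realization of the public randomness.
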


Theorem~\ref{thm::network_achievability} implies the following result for the asymptotic ADMN \cite{lee2018unified} by directly applying the law of large numbers. 

\begin{corollary}\label{cor:admn}
Fix any ADN $(P_{Y_i | X^{i-1},Y^{i-1}})_{i\in [N]}$.
Fix any $(a_{i,j})_{i\in [N], j \in [d_i]}$, $(d'_i)_{i\in [N]}$, $(P_{U_i|Y_i, \overline{U}'_i}, P_{X_i|Y_i,U_i, \overline{U}'_i})_{i \in [N]}$ as defined in Theorem~\ref{thm::network_achievability}, 
which induces the joint distribution of $X^N,Y^N,U^N$. If for every $i \in [N]$, $j \in [d'_i]$, 
\begin{align*}
& I(\overline{U}_{i,j}; \overline{U}_{i, [d_i] \backslash \{j\}} , Y_i) - I(\overline{U}_{i,j}; \overline{U}'_{a_{i,j}}, Y_{a_{i,j}})  > \\
&\qquad \sum_{k=j+1}^{d_i} \bigg( \max \big\{ I(\overline{U}_{i,k}; \overline{U}'_{a_{i,k}}, Y_{a_{i,k}})  - I(\overline{U}_{i,k}; \overline{U}_{i, [d_i] \backslash [j..k]} , Y_i),\,0 \big\}\bigg),
\end{align*}
then there is a sequence of public-randomness coding (indexed by $n$) for the $n$-fold ADN $(P^n_{Y_i | X^{i-1},Y^{i-1}})_{i\in [N]}$ such that the induced $\tilde{X}^{N,n},\tilde{Y}^{N,n}$ (write $\tilde{X}^{N,n}=(\tilde{X}_{i,j})_{i\in [N], j \in [n]}$) satisfy
\begin{equation}
\lim_{n \to \infty} \delta_{\mathrm{TV}}\big(P^n_{X^{N},Y^{N}},\, P_{\tilde{X}^{N,n},\tilde{Y}^{N,n}} \big) = 0.
\end{equation}
\end{corollary}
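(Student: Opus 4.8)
The plan is to deduce the corollary by applying Theorem~\ref{thm::network_achievability} to the $n$-fold network and letting $n\to\infty$. Concretely, I would regard $(P^n_{Y_i | X^{i-1},Y^{i-1}})_{i\in [N]}$ as a single one-shot ADN over the enlarged finite alphabets $\mathcal{X}_i^n,\mathcal{Y}_i^n$, keep the same index data $(a_{i,j})$, $(d'_i)$, and use the $n$-fold product kernels $P^n_{U_i|Y_i,\overline{U}'_i}$ and $P^n_{X_i|Y_i,U_i,\overline{U}'_i}$. Because every channel and kernel is memoryless, the induced ``ideal'' law is the $n$-fold product $P^n_{X^N,Y^N,U^N}$, each information density in \eqref{eq::beta} becomes a sum of $n$ i.i.d.\ single-letter information densities, and each factor $\ln|\mathcal{U}_{a_{i,k}}^n|+1=n\ln|\mathcal{U}_{a_{i,k}}|+1$ in $\gamma_{i,j}$ is polynomial in $n$, so $\gamma_{i,j}$ becomes a polynomial $p_{i,j}(n)$ of degree $d_i-j$. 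Writing $B^{(n)}_{i,j}$ for the resulting bound terms, Theorem~\ref{thm::network_achievability} gives $\delta_{\mathrm{TV}}(P^n_{X^N,Y^N},P_{\tilde X^{N,n},\tilde Y^{N,n}})\le \mathbf{E}[\min\{\sum_{i\in[N]}\sum_{j\in[d'_i]}B^{(n)}_{i,j},1\}]$; since the integrand is dominated by $1$, by bounded convergence it suffices to show $\sum_{i,j}B^{(n)}_{i,j}\to 0$ in probability.

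Next I would analyze a single $B^{(n)}_{i,j}$. Abbreviate the exponent in the $k$-th factor of \eqref{eq::beta} as $-\Delta_{i,k}$, where $\Delta_{i,k}=\iota(\overline{U}_{i,k};\overline{U}_{i,[d_i]\backslash[j..k]},Y_i)-\iota(\overline{U}_{i,k};\overline{U}'_{a_{i,k}},Y_{a_{i,k}})$, and let $\delta_{i,k}=I(\overline{U}_{i,k};\overline{U}_{i,[d_i]\backslash[j..k]},Y_i)-I(\overline{U}_{i,k};\overline{U}'_{a_{i,k}},Y_{a_{i,k}})$ be its single-letter mean; note $\delta_{i,j}$ is exactly the left-hand side of the corollary's hypothesis. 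Expanding the product $\prod_{k=j}^{d_i}(2^{-\Delta^{(n)}_{i,k}}+\mathbf{1}\{k>j\})$ (the $k=j$ factor carries no indicator) produces a sum over subsets $S\subseteq\{j+1,\dots,d_i\}$ of terms $2^{-\Delta^{(n)}_{i,j}-\sum_{k\in S}\Delta^{(n)}_{i,k}}$, where each $\Delta^{(n)}_{i,k}$ is a sum of $n$ i.i.d.\ single-letter terms. Since the alphabets are finite, the single-letter information densities are a.s.\ finite with finite mean, so the weak law of large numbers gives $\frac1n\big(\Delta^{(n)}_{i,j}+\sum_{k\in S}\Delta^{(n)}_{i,k}\big)\to \delta_{i,j}+\sum_{k\in S}\delta_{i,k}$ in probability. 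The minimum of $\delta_{i,j}+\sum_{k\in S}\delta_{i,k}$ over all $S$ is attained by taking $S=\{k:\delta_{i,k}<0\}$ and equals $\delta_{i,j}-\sum_{k=j+1}^{d_i}\max\{I(\overline{U}_{i,k};\overline{U}'_{a_{i,k}},Y_{a_{i,k}})-I(\overline{U}_{i,k};\overline{U}_{i,[d_i]\backslash[j..k]},Y_i),0\}$, which by the corollary's hypothesis is strictly positive. Hence each $S$-term — and therefore $B^{(n)}_{i,j}$ and the finite sum over $i,j$ — is a fixed polynomial in $n$ times $2^{nZ_n}$ with a random $Z_n$ converging in probability to a strictly negative constant, so it tends to $0$ in probability.

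The one point that needs a little care is this last claim: $p(n)\,2^{nZ_n}\to 0$ in probability whenever $p$ is a polynomial and $Z_n\to -c$ in probability for some $c>0$. This holds because, for such $c$, $\mathbf{P}(Z_n>-c/2)\to 0$, while on the event $\{Z_n\le -c/2\}$ one has $p(n)2^{nZ_n}\le p(n)2^{-cn/2}$, which is deterministically below any given threshold for all large $n$. Everything else is bookkeeping: checking that the memoryless product structure makes the $n$-letter information densities additive and the ideal law an $n$-fold product, that finiteness of all alphabets makes the relevant mutual informations finite so the WLLN (or the SLLN) applies, and that the outer $\min\{\cdot,1\}$ lets convergence in probability of the bound pass to convergence of its expectation. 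The conceptual crux is simply the observation that expanding the product in \eqref{eq::beta} yields exactly the ``worst-subset'' quantity $\sum_{k}\max\{\cdot,0\}$ on the right-hand side of the corollary's inequality, so that hypothesis is precisely what forces every exponent to be negative.
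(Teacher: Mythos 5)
Your proposal is correct and follows essentially the same route the paper intends: apply Theorem~\ref{thm::network_achievability} to the $n$-fold product network, note that the information densities become i.i.d.\ sums and the $\gamma_{i,j}$ factors are only polynomial in $n$, and invoke the law of large numbers (plus bounded convergence) to drive the bound to zero under the stated strict inequalities. Your expansion of the product in \eqref{eq::beta} over subsets $S$, identifying the worst subset with the $\sum_k \max\{\cdot,0\}$ term in the hypothesis, is exactly the bookkeeping the paper leaves implicit, and it is carried out correctly.
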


While this result is not as strong as the general asymptotic result in \cite{lee2018unified}, a one-shot analogue of \cite{lee2018unified} will likely be significantly more complicated than Theorem~\ref{thm::network_achievability}. 
We choose to present Theorem~\ref{thm::network_achievability} since it is simple but already general and powerful enough to give a wide range of tight one-shot results.

\section{One-shot Relay Channel}
\label{sec::relay}

To explain our scheme, we first discuss a \emph{one-shot relay channel} in Figure~\ref{fig::setup_relay_one_shot}. 
An encoder observes $M \sim \mathrm{Unif}[\mathsf{L}]$ and outputs $X$, which is passed through the channel $P_{Y_{\mathrm{r}}|X}$.
The relay observes $Y_{\mathrm{r}}$ and outputs $X_{\mathrm{r}}$.
Then $(X,X_{\mathrm{r}},Y_{\mathrm{r}})$ is passed through the channel $P_{Y|X,X_{\mathrm{r}},Y_{\mathrm{r}}}$. 
The decoder observes $Y$ and recovers $\hat{M}$.
For generality, we allow $Y$ to depend on all of $X,X_{\mathrm{r}},Y_{\mathrm{r}}$, and $X_{\mathrm{r}}$ may interfere with $(X,Y_{\mathrm{r}})$, 
which can happen if the relay outputs $X_{\mathrm{r}}$ instantaneously or the channel has a long memory, or it is a storage device. 
It is a one-shot version of the \emph{relay-without-delay} and \emph{relay-with-unlimited-look-ahead}~\cite{el2005relay, el2007relay}, and is an  ADN by taking $Y_1=M$, $X_1=X$, $Y_2=Y_{\mathrm{r}}$, $X_2=X_{\mathrm{r}}$, $Y_3=Y$, and $X_3=M$ (in the ideal distributions).

In case if $Y=(Y',Y'')$ consists of two components and the channel $P_{Y|X,X_{\mathrm{r}},Y_{\mathrm{r}}}=P_{Y'|X,Y_{\mathrm{r}}}P_{Y''|X_{\mathrm{r}}}$ can be decomposed into two orthogonal components (so $X_{\mathrm{r}}$ does not interfere with $(X,Y_{\mathrm{r}})$), this becomes the one-shot version of the \emph{primitive relay channel}~\cite{el2011network, kim2007coding, mondelli2019new, el2021achievable, el2022strengthened} since the $n$-fold version of this ADN (with $n\to \infty$) is precisely the asymptotic primitive relay channel.
However, the $n$-fold version of the ADN in Figure~\ref{fig::setup_relay_one_shot} in general is not the conventional relay channel~\cite{el2011network, van1971three, cover1979capacity} (it is the relay-with-unlimited-look-ahead instead). 
The conventional relay channel, due to its causal assumption that the relay can only look at past $Y_{\mathrm{r},t}$'s, has no one-shot counterpart.

\begin{figure*}[htpb]
\centering
\includegraphics[scale=0.36]{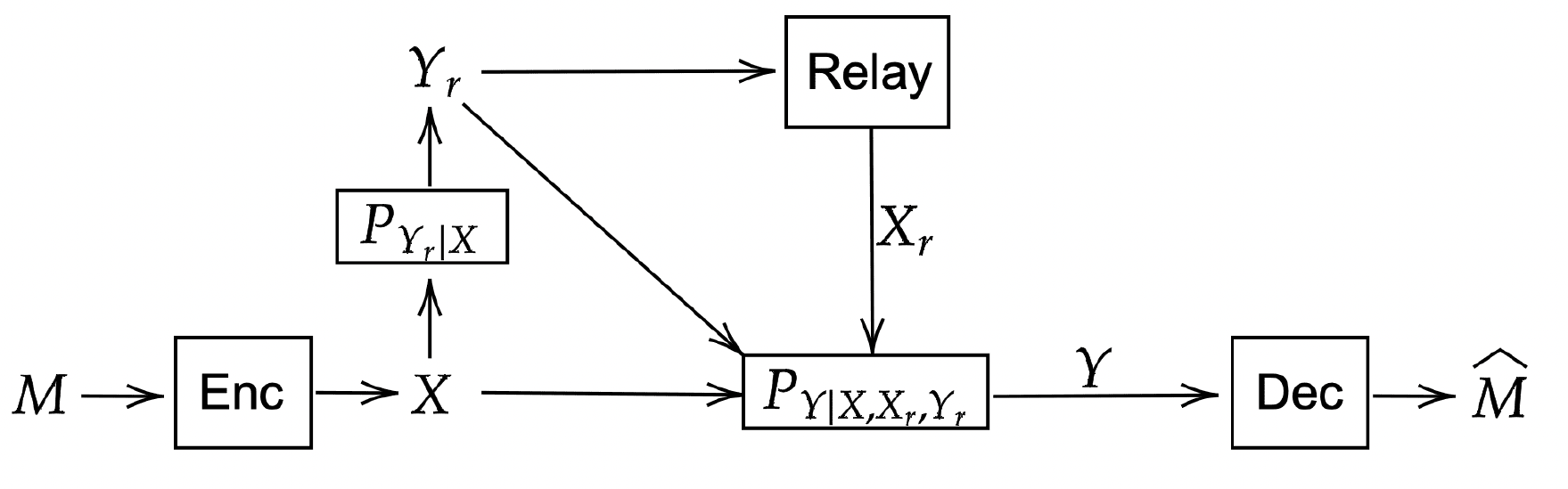}
\caption{One-shot relay channel setting.} \label{fig::setup_relay_one_shot}
\end{figure*}

We use the following corollary of Theorem \ref{thm:det} to demonstrate the use of the exponential process refinement lemma (Lemma~\ref{lemma::PML2}).

\begin{corollary}
\label{cor::one_shot_relay_eg}
For any $P_{X}$, $P_{U|Y_{\mathrm{r}}}$, function $x_{\mathrm{r}}(y_{\mathrm{r}},u)$, 
there is a deterministic coding scheme for the one-shot relay channel such that 
the error probability satisfies
\begin{equation} 
\label{eq::one_shot_relay}
      P_e\leq \mathbf{E} \Big[
    \min\big\{
    \gamma \mathsf{L} 
     2^{-\iota (X;U,Y)}
     \big(2^{-\iota (U;Y) + \iota (U;Y_{\mathrm{r}}) } +1 \big) , 1
    \big\}
    \Big],
\end{equation} 
where $(X,Y_{\mathrm{r}},U, X_{\mathrm{r}},Y) \sim$$P_{X} P_{Y_{\mathrm{r}}|X} P_{U|Y_{\mathrm{r}}} \delta_{x_{\mathrm{r}}(Y_{\mathrm{r}},U)} P_{Y|X,Y_{\mathrm{r}},X_{\mathrm{r}}}$, and $\gamma :=  \ln |\mathcal{U}| + 1$. 
\end{corollary}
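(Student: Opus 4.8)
The plan is to obtain Corollary~\ref{cor::one_shot_relay_eg} as a direct instantiation of Theorem~\ref{thm:det} (the deterministic-coding version), followed by a simplification of the resulting bound. First I would recast the one-shot relay channel as the ADN with $N = 3$ nodes: set $Y_1 = M$, $X_1 = X$, $Y_2 = Y_{\mathrm{r}}$, $X_2 = X_{\mathrm{r}}$, $Y_3 = Y$, $X_3 = \hat{M}$, with channels $P_{Y_1} = \mathrm{Unif}[\mathsf{L}]$, $P_{Y_2 \mid X^1, Y^1} = P_{Y_{\mathrm{r}} \mid X}$ and $P_{Y_3 \mid X^2, Y^2} = P_{Y \mid X, X_{\mathrm{r}}, Y_{\mathrm{r}}}$, and take the error set $\mathcal{E} = \{(x^3, y^3) : y_1 \neq x_3\}$, so that $\mathbf{P}\big((\tilde{X}^3, \tilde{Y}^3) \in \mathcal{E}\big) = P_e$.

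The crucial choice is how to encode the message so that it contributes the union-bound factor $\mathsf{L}$. I would bundle the message into node $1$'s auxiliary: put $\mathcal{U}_1 = [\mathsf{L}] \times \mathcal{X}$, let $P_{U_1 \mid Y_1}(\cdot \mid m)$ fix the first coordinate at $m$ and draw the second from $P_X$, and let $P_{X_1 \mid Y_1, U_1}$ output the second coordinate of $U_1$; node $1$ decodes nothing, so $d_1 = 0$. At node $2$ (the relay) take $d_2 = 0$, $U_2 = U$ with $P_{U_2 \mid Y_2} = P_{U \mid Y_{\mathrm{r}}}$ and $P_{X_2 \mid Y_2, U_2} = \delta_{x_{\mathrm{r}}(Y_{\mathrm{r}}, U)}$. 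At node $3$ (the decoder) take $d_3 = 2$, $d'_3 = 1$, $a_{3,1} = 1$, $a_{3,2} = 2$, so that node $3$ uses $U_2 = U$ via non-unique (soft) decoding and then uniquely decodes $\overline{U}'_3 = U_1$; here $U_3$ is trivial and $P_{X_3 \mid Y_3, U_3, \overline{U}'_3}$ outputs the first coordinate of the decoded $U_1$, which is $\hat{M}$. One checks this induces precisely the ideal joint distribution in the statement: $Y_1 = X_3 = M$ is recovered error-free, and $(X, Y_{\mathrm{r}}, U, X_{\mathrm{r}}, Y)$ is independent of $M$ with the stated factorization.

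It then remains to evaluate \eqref{eq:error_bound_det}. Since the ideal distribution has $Y_1 = X_3$, we have $\mathbf{1}\{(X^3, Y^3) \in \mathcal{E}\} = 0$ almost surely, and the only nonzero term in $\sum_i \sum_j B_{i,j}$ is $B_{3,1}$ (nodes $1$ and $2$ contribute nothing since $d'_1 = d'_2 = 0$). From the formula for $\gamma_{i,j}$ in Theorem~\ref{thm::network_achievability}, $\gamma_{3,1} = \ln|\mathcal{U}_2| + 1 = \gamma$. In the product $\prod_{k=1}^{2}$ defining $B_{3,1}$: using $[2] \backslash [1..1] = \{2\}$ and $\overline{U}'_1 = \emptyset$, the $k = 1$ factor is $2^{-\iota(U_1; U_2, Y_3) + \iota(U_1; Y_1)}$; since $U_1 = (M, X)$ with $M \sim \mathrm{Unif}[\mathsf{L}]$ independent of $X$, a short computation gives $\iota(U_1; Y_1) = \iota(U_1; M) = \log \mathsf{L}$ and $\iota(U_1; U_2, Y_3) = \iota(X; U, Y)$, so this factor equals $\mathsf{L}\, 2^{-\iota(X; U, Y)}$; using $[2] \backslash [1..2] = \emptyset$ and $\overline{U}'_2 = \emptyset$, the $k = 2$ factor is $2^{-\iota(U_2; Y_3) + \iota(U_2; Y_2)} + \mathbf{1}\{2 > 1\} = 2^{-\iota(U; Y) + \iota(U; Y_{\mathrm{r}})} + 1$. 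Multiplying, $B_{3,1} = \gamma\, \mathsf{L}\, 2^{-\iota(X; U, Y)}\big(2^{-\iota(U; Y) + \iota(U; Y_{\mathrm{r}})} + 1\big)$, and substituting into \eqref{eq:error_bound_det} gives \eqref{eq::one_shot_relay}.

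The main obstacle is not any hard estimate --- Theorem~\ref{thm:det} does the heavy lifting --- but the bookkeeping of matching the abstract index data $(a_{i,j}, d_i, d'_i)$ and the conditional kernels to the concrete relay setup, and in particular recognizing the message-bundling device $U_1 = (M, X)$ that converts the contribution $\iota(U_1; Y_1)$ into $\log \mathsf{L}$. One should also record the admissibility conditions of Theorem~\ref{thm:det} --- the $a_{3,\cdot}$ are distinct indices in $[2]$, $0 \le d'_i \le d_i$ for each $i$, and all the auxiliary alphabets are finite --- all of which are immediate here.
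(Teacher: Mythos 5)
Your proposal is correct and follows essentially the same route as the paper: the paper's proof first works through the bound by hand via the exponential process refinement lemma purely for demonstration, and then notes that the corollary follows by invoking Theorem~\ref{thm:det} on the three-node ADN with $U_1=(X,M)$, $U_2=U$, and decoding order ``$U_1, U_2?$'' at node $3$, exactly as you do. Your identification of $B_{3,1}=\gamma\,\mathsf{L}\,2^{-\iota(X;U,Y)}\big(2^{-\iota(U;Y)+\iota(U;Y_{\mathrm{r}})}+1\big)$, including the reduction $\iota(U_1;Y_1)=\log\mathsf{L}$ and $\iota(U_1;U_2,Y_3)=\iota(X;U,Y)$, matches the paper's computation.
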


\begin{proof}

For the sake of demonstration, we first give a detailed proof via the exponential process refinement lemma without invoking Theorem \ref{thm:det}.
Let $U_1 := (X,M)$, $U_2 := U$. 
Let $\mathbf{U}_1$, $\mathbf{U}_2$ be two independent exponential processes, which serve as the ``random codebooks''. 
The encoder (node 1) uses the exponential functional representation~\eqref{eq:pfr} to compute $U_1 = (\mathbf{U}_1)_{P_{U_1} \times  \delta_M }$ and outputs $X$-component of $U_1$. 
The relay (node 2) computes 
$U_2 = (\mathbf{U}_2)_{P_{U_2|Y_{\mathrm{r}}}(\cdot| Y_{\mathrm{r}})}$ and outputs $X_{\mathrm{r}}=x_{\mathrm{r}}(Y_{\mathrm{r}}, U_2)$. 
Note that $X,Y_{\mathrm{r}},U_2, X_{\mathrm{r}},Y$ follow the ideal distribution in the corollary due to the property of exponential functional representation, and hence we write $X_{\mathrm{r}}$ instead of $\tilde{X}_{\mathrm{r}}$.
The decoder (node 3) observes $Y$, and performs the following steps. 

\begin{enumerate}
\item Refine $P_{U_2|Y}(\cdot|Y)$ (written as $P_{U_2|Y}$ for brevity) to ${Q}_{U_2} := P_{U_2|Y}^{\mathbf{U}_2}$ using Definition~\ref{def:refine}. 
By the exponential process refinement lemma (Lemma~\ref{lemma::PML2}, with $V = \emptyset$), 
\begin{align*}
      \mathbf{E}
     \bigg[\frac{1}{{Q}_{U_2}(U_2)} \bigg| \, U_2,Y,Y_{\mathrm{r}}\bigg]
     \leq   (\ln |\mathcal{U}_2|+1)
    \left(
    \frac{P_{U_2|Y_{\mathrm{r}}}(U_2)}{P_{U_2|Y}(U_2) } + 1
    \right).
\end{align*}

\item Compute the joint distribution ${Q}_{U_2} P_{U_1 | U_2,Y}$ over $\mathcal{U}_1 \times \mathcal{U}_2$, 
	the semidirect product between ${Q}_{U_2}$ and $P_{U_1 | U_2,Y}(\cdot | \cdot , Y)$. 
 Let its $U_1$-marginal be $\tilde{Q}_{U_1}$.

\item Let $\tilde{U}_1 = (\mathbf{U}_1)_{\tilde{Q}_{U_1}\times P_M}$, and output its $M$-component. 
\end{enumerate}
Let $A :=( X,Y_{\mathrm{r}},U_2, X_{\mathrm{r}},Y ,M)$ and $\gamma := \ln|\mathcal{U}_2| + 1$,

\begin{align*}
    &\mathbf{P} (
    \tilde{U}_1 \neq U_1 \, | \, A
    ) \\
    & \stackrel{(a)}{\leq} \mathbf{E}\left[
    \min\left\{
      \frac{P_{U_1}(U_1)  \delta_M(M)}{ 
    P_{U_1 | U_2,Y}(U_1 | U_2,Y)
     {Q} _{U_2}(U_2)  P_M(M) } , 1
    \right\} \,\bigg|\, A   \right]  \\ 
    & \stackrel{(b)}{=} \mathbf{E}\left[
    \min\left\{
      \mathsf{L}\frac{P_{U_1}(U_1)}{ 
    P_{U_1 | U_2,Y}(U_1 | U_2,Y)
     {Q} _{U_2}(U_2)} , 1
    \right\} \,\bigg|\, A   \right]  \\ 
    & \stackrel{(c)}{\leq} 
    \min\bigg\{\mathsf{L}
      \frac{ P_{U_1}(U_1)}{ P_{U_1|U_2,Y} (U_1|U_2,Y)  } \gamma        \bigg(
    \frac{P_{U_2|Y_{\mathrm{r}}}(U_2)}{P_{U_2|Y}(U_2) } + 1
    \bigg)  , 1
    \bigg\}     \\
    & = 
    \min\big\{
    \gamma \mathsf{L}
     2^{-\iota (X;U_2,Y)}
     \big(2^{-\iota (U_2;Y) + \iota (U_2;Y_{\mathrm{r}}) } +1 \big) , 1
    \big\},
\end{align*}
where $(a)$ is by the generalized Poisson matching lemma~\cite{li2021unified} (Lemma \ref{lem::GPML}), $(b)$ is by $\delta_M(M)=1$ and $P_M(M)=1/\mathsf{L}$, and $(c)$ is by step 1) and Jensen's inequality. Taking expectation over $A$ gives the desired error bound. Although the codebooks $\mathbf{U}_1$, $\mathbf{U}_2$ are random (so this is a public-randomness scheme), we can convert it to a deterministic scheme by fixing one particular choice $(\mathbf{u}_1,\mathbf{u}_2)$ that satisfies the error bound.
Alternatively, Theorem \ref{thm:det} allows us to derive bounds for general acyclic discrete networks in a systematic manner, without going through the above arguments for every specific ADN. To prove Corollary \ref{cor::one_shot_relay_eg}, we can invoke Theorem~\ref{thm:det} on the ADN with nodes $1,2,3$, with inputs $Y_i$'s, outputs $X_i$'s, auxiliaries $U_i$'s and the terms $B_{i,j}$'s as follows:
\begin{enumerate}
    \item Node $1$ has input $Y_1 = M$, output $X_1 = X$ and auxiliary $U_1 = (X,M)$. 
    
    \item Node $2$ has input $Y_2 = Y_r$, output $X_2 = X_r$ and auxiliary $U_2 = U$.  
    
    \item Node $3$ has input $Y_3 = Y$, output $X_3 = M$ and decodes with the order ``$U_1, U_2?$''. 
    Applying Theorem~\ref{thm:det} (note that $d_3 =2$ and $d_3' =1$), we have 
    \begin{align*}
        B_{3,1} = (\ln|\mathcal{U}_2| + 1) 
        \mathsf{L} 2^{-\iota (X;U_2,Y)}
         \big(2^{-\iota (U_2;Y) + \iota (U_2;Y_{\mathrm{r}}) } +1 \big),
    \end{align*}
    and hence we obtain the bound \eqref{eq::one_shot_relay} by invoking Theorem \ref{thm:det}.

\end{enumerate} 
\end{proof}

Corollary~\ref{cor::one_shot_relay_eg} yields the following asymptotic achievable rate: 
\begin{align*}
    &R\leq I(X;U, Y) - \max\big\{ I(U; Y_{\mathrm{r}}) - I(U; Y) ,\, 0\big\}
\end{align*}
for some $P_{U|Y_{\mathrm{r}}}$ and function $x_{\mathrm{r}}(y_{\mathrm{r}},u_2)$. 

We also consider a one-shot primitive relay channel (as shown in Figure~\ref{fig::primitive_relay}), where $P_{Y|X,X_{\mathrm{r}}, Y_{\mathrm{r}}}=P_{Y'|X, Y_{\mathrm{r}}} P_{Y''|X_{\mathrm{r}}}$ can be decomposed into two orthogonal components. 
Consider $(X,Y_{\mathrm{r}},Y')$ independent of $(X_{\mathrm{r}},Y'')$ in the ideal distribution and take $U = (U',X_{\mathrm{r}})$ where $U'$ follows $P_{U'|Y_{\mathrm{r}}}$, Corollary~\ref{cor::one_shot_relay_eg} specializes to the following Corollary~\ref{cor::one_shot_primrelay_eg}.

\begin{figure*}[htpb]
	\centering
	\includegraphics[scale=0.4]{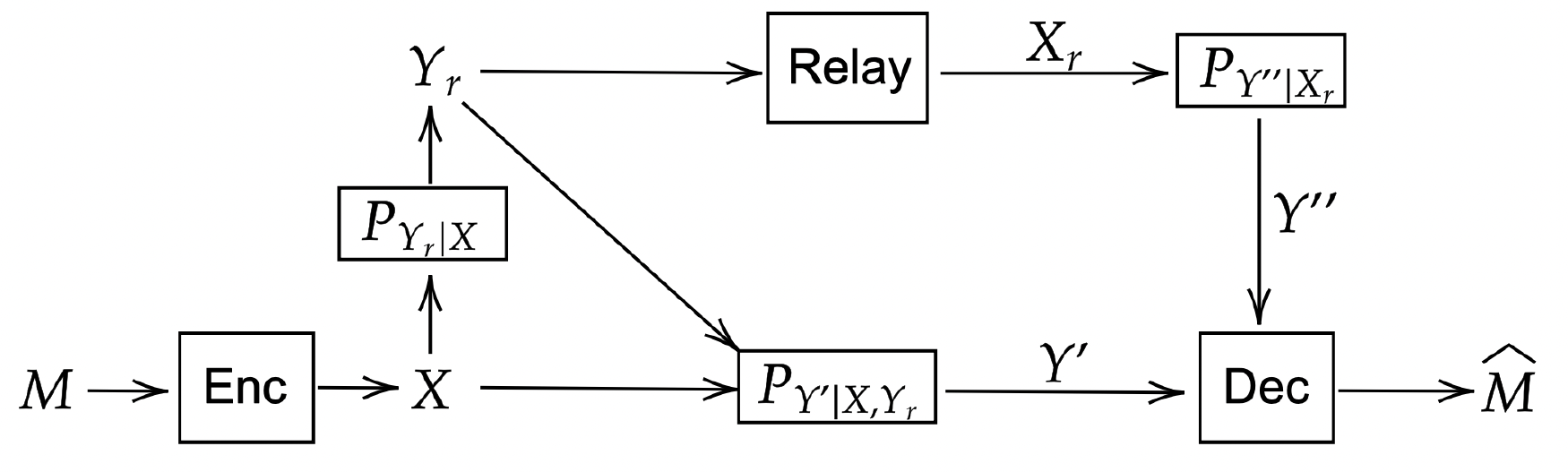}
	\caption{One-shot primitive relay channel setting. } 
    \label{fig::primitive_relay}
\end{figure*}

\begin{corollary}
\label{cor::one_shot_primrelay_eg}
For any $P_{X}$, $P_{X_{\mathrm{r}}}$, $P_{U'|Y_{\mathrm{r}}}$, there is a deterministic coding scheme for the one-shot primitive relay channel with $M\sim\mathrm{Unif}[\mathsf{L}]$ such that the error probability satisfies
\begin{align*} 
    P_e\leq \mathbf{E} \Big[\!
    \min\!\Big\{
    \gamma \mathsf{L} 
    2^{-\iota (X;U',Y')}
    \big(2^{- \iota(X_{\mathrm{r}};Y'') + \iota (U';Y_{\mathrm{r}}|Y') } \! +\! 1 \big) , 1
    \!\Big\}\!
    \Big]\! ,
\end{align*}
where $(X,Y_{\mathrm{r}},U', Y') \sim P_{X} P_{Y_{\mathrm{r}}|X} P_{U'|Y_{\mathrm{r}}}P_{Y'|X,Y_{\mathrm{r}}}$ is independent of $(X_{\mathrm{r}},Y'') \sim P_{X_{\mathrm{r}}} P_{Y''|X_{\mathrm{r}}}$, and $\gamma :=  \ln (|\mathcal{U}'||\mathcal{X}_{\mathrm{r}}|) + 1$.
\end{corollary}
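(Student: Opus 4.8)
The plan is to obtain Corollary~\ref{cor::one_shot_primrelay_eg} as a direct specialization of Corollary~\ref{cor::one_shot_relay_eg}. In the general one-shot relay channel, I would take the auxiliary $U := (U', X_{\mathrm{r}})$, with $P_{U|Y_{\mathrm{r}}} := P_{U'|Y_{\mathrm{r}}} \times P_{X_{\mathrm{r}}}$ (so the $X_{\mathrm{r}}$-component is drawn independently from $P_{X_{\mathrm{r}}}$), and the relay function $x_{\mathrm{r}}(y_{\mathrm{r}}, u', x_{\mathrm{r}}) := x_{\mathrm{r}}$, the projection onto the last component. Since $P_{Y|X,X_{\mathrm{r}},Y_{\mathrm{r}}} = P_{Y'|X,Y_{\mathrm{r}}} P_{Y''|X_{\mathrm{r}}}$, substituting these choices into the ideal distribution $P_X P_{Y_{\mathrm{r}}|X} P_{U|Y_{\mathrm{r}}} \delta_{x_{\mathrm{r}}(Y_{\mathrm{r}},U)} P_{Y|X,Y_{\mathrm{r}},X_{\mathrm{r}}}$ of Corollary~\ref{cor::one_shot_relay_eg} reproduces exactly the ideal distribution asserted here, in which $(X, Y_{\mathrm{r}}, U', Y') \sim P_{X} P_{Y_{\mathrm{r}}|X} P_{U'|Y_{\mathrm{r}}} P_{Y'|X,Y_{\mathrm{r}}}$ is independent of $(X_{\mathrm{r}}, Y'') \sim P_{X_{\mathrm{r}}} P_{Y''|X_{\mathrm{r}}}$.

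Next I would rewrite every information-density term appearing in the bound \eqref{eq::one_shot_relay} under this product structure. Using the independence $(X, Y_{\mathrm{r}}, U', Y') \perp (X_{\mathrm{r}}, Y'')$, one gets $\iota(X; U, Y) = \iota(X; U', X_{\mathrm{r}}, Y', Y'') = \iota(X; U', Y')$ (conditioning on $(U',Y')$ already decouples $X$ from $(X_{\mathrm{r}},Y'')$), and similarly $\iota(U; Y) = \iota(U', X_{\mathrm{r}}; Y', Y'') = \iota(U'; Y') + \iota(X_{\mathrm{r}}; Y'')$ and $\iota(U; Y_{\mathrm{r}}) = \iota(U', X_{\mathrm{r}}; Y_{\mathrm{r}}) = \iota(U'; Y_{\mathrm{r}})$ (since $X_{\mathrm{r}} \perp Y_{\mathrm{r}}$). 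Combining, $-\iota(U;Y) + \iota(U;Y_{\mathrm{r}}) = \iota(U'; Y_{\mathrm{r}}) - \iota(U'; Y') - \iota(X_{\mathrm{r}}; Y'')$. The remaining identity is to recognize $\iota(U'; Y_{\mathrm{r}}) - \iota(U'; Y')$ as the conditional information density $\iota(U'; Y_{\mathrm{r}} | Y')$: from the ideal factorization $P_X P_{Y_{\mathrm{r}}|X} P_{U'|Y_{\mathrm{r}}} P_{Y'|X,Y_{\mathrm{r}}}$ we have the Markov chain $U' \to Y_{\mathrm{r}} \to (X, Y')$, hence $P_{U'|Y_{\mathrm{r}}, Y'} = P_{U'|Y_{\mathrm{r}}}$, so $\iota(U'; Y_{\mathrm{r}}|Y') = \log \frac{P_{U'|Y_{\mathrm{r}},Y'}}{P_{U'|Y'}} = \log \frac{P_{U'|Y_{\mathrm{r}}}}{P_{U'|Y'}} = \iota(U'; Y_{\mathrm{r}}) - \iota(U'; Y')$. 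Plugging these into \eqref{eq::one_shot_relay}, and noting $\gamma = \ln|\mathcal{U}| + 1 = \ln(|\mathcal{U}'||\mathcal{X}_{\mathrm{r}}|) + 1$, gives exactly the claimed bound.

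The argument is essentially bookkeeping, so the only place requiring care is the manipulation of the information densities: one must check that the product structure of the ideal distribution survives the conditioning inside each density (e.g. that $X$ is conditionally independent of $(X_{\mathrm{r}}, Y'')$ given $(U', Y')$, not merely marginally independent), and correctly invoke the Markov chain $U' \to Y_{\mathrm{r}} \to (X, Y')$ to collapse $\iota(U';Y_{\mathrm{r}}) - \iota(U';Y')$ into the single conditional density $\iota(U';Y_{\mathrm{r}}|Y')$. No new probabilistic construction is needed beyond what Corollary~\ref{cor::one_shot_relay_eg} already supplies.
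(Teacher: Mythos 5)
Your proposal is correct and matches the paper's own argument: the paper obtains Corollary~\ref{cor::one_shot_primrelay_eg} precisely by specializing Corollary~\ref{cor::one_shot_relay_eg} with $U=(U',X_{\mathrm{r}})$, $P_{U|Y_{\mathrm{r}}}=P_{U'|Y_{\mathrm{r}}}\times P_{X_{\mathrm{r}}}$, and the relay function equal to the projection onto $X_{\mathrm{r}}$, with the ideal distribution factoring into the two independent blocks. Your information-density bookkeeping (including the use of the Markov chain $U'\to Y_{\mathrm{r}}\to(X,Y')$ to get $\iota(U';Y_{\mathrm{r}})-\iota(U';Y')=\iota(U';Y_{\mathrm{r}}|Y')$) is exactly the verification the paper leaves implicit.
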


This gives the asymptotic achievable rate $R \le I(X;U',Y') -\max\{ I(U';Y_{\mathrm{r}}|Y') - C_{\mathrm{r}} ,\, 0 \}$ where $C_{\mathrm{r}} = \max_{P_{X_{\mathrm{r}}}}I(X_{\mathrm{r}};Y'')$ is the capacity of the channel $P_{Y''|X_{\mathrm{r}}}$. 
It implies the compress-and-forward bound~\cite{kim2007coding}, which is the maximum of $I(X;U',Y')$ subject to the constraint $C_{\mathrm{r}} \ge I(U';Y_{\mathrm{r}}|Y')$ (where the random variables are distributed as in Corollary~\ref{cor::one_shot_primrelay_eg}). Hence, Corollary~\ref{cor::one_shot_primrelay_eg} can be treated as a one-shot compress-and-forward bound.

\subsection{Partial-Decode-and-Forward Bound}

We extend Corollary~\ref{cor::one_shot_relay_eg} to allow partial decoding of the message~\cite{cover1979capacity,kim2007coding,el2007relay}.
To this end, we split the message and encoder into two. 
The message $M\sim \mathrm{Unif}[\mathsf{L}]$ is split into $M_1\sim \mathrm{Unif}[\mathsf{J}]$ and $M_2\sim \mathrm{Unif}[\mathsf{L}/\mathsf{J}]$ (assume $\mathsf{J}$ is a factor of $\mathsf{L}$).
The encoder controls two nodes (node $1$ and $2$), where node $1$ observes $Y_1 = M_1$, outputs $X_1=V$, and has an auxiliary $U_1=(M_1,V)$; node $2$ observes $Y_2=(M_1,M_2,V)$, outputs $X_2=X$, and has an auxiliary $U_2=(M_1,M_2,X)$.
The relay (node $3$) observes $Y_3 = Y_r$, decodes $U_1$, outputs $X_3 = X_r$, and has an auxiliary $U_3=(M_1,U)$.
The decoder (node $4$) observes $Y_4=Y$ and uses the decoding order ``$U_2,U_3?,U_1?$''. 

\begin{corollary}
\label{cor::pdcf}
Fix any $P_{X, V}$, $P_{U|Y_{\mathrm{r}}, V}$, function $x_{\mathrm{r}}(y_{\mathrm{r}},u,v)$, and $\mathsf{J}$ which is a factor of $\mathsf{L}$.
There exists a deterministic coding scheme for the one-shot relay channel with
\begin{align*}
P_{e} & \leq\mathbf{E}\Big[\min\Big\{\mathsf{J}2^{-\iota(V;Y_{\mathrm{r}})}+\gamma\mathsf{L}\mathsf{J}^{-1}2^{-\iota(X;U,Y|V)} \\
&\;\;\;\;\;\;\;\;\; \cdot\big(2^{-\iota(U;V,Y)+\iota(U;V,Y_{\mathrm{r}})}+1\big)\big(\mathsf{J}2^{-\iota(V;Y)}+1\big),1\Big\}\Big],
\end{align*}
where $(X,V,Y_{\mathrm{r}},U, X_{\mathrm{r}},Y) \sim P_{X, V} P_{Y_{\mathrm{r}}|X,V} P_{U|Y_{\mathrm{r}}, V} \delta_{x_{\mathrm{r}}(Y_{\mathrm{r}},U, V)}  
P_{Y|X,Y_{\mathrm{r}},X_{\mathrm{r}}}$ and $\gamma :=  (\ln (\mathsf{J}|\mathcal{U}|) + 1) ( \ln(\mathsf{J}|\mathcal{V}|) +1)$.
\end{corollary}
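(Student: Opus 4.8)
The plan is to apply Theorem~\ref{thm:det} to a carefully chosen four-node ADN that implements message splitting and partial-decode-and-forward, and then simplify the resulting bound $\sum_{i,j} B_{i,j}$ into the stated form. First I would set up the network exactly as described in the paragraph preceding the corollary: node $1$ with $Y_1 = M_1 \sim \mathrm{Unif}[\mathsf{J}]$, $X_1 = V$, $U_1 = (M_1, V)$; node $2$ with $Y_2 = (M_1, M_2, V)$, $X_2 = X$, $U_2 = (M_1, M_2, X)$; node $3$ (the relay) with $Y_3 = Y_{\mathrm{r}}$, decoding $U_1$, output $X_3 = X_{\mathrm{r}} = x_{\mathrm{r}}(Y_{\mathrm{r}}, U, V)$, auxiliary $U_3 = (M_1, U)$; node $4$ (the decoder) with $Y_4 = Y$, output $X_4 = (M_1, M_2)$, and decoding order ``$U_2, U_3?, U_1?$''. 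The index assignments $(a_{i,j})$ are: $a_{3,1} = 1$ (relay decodes $U_1$, with $d_3 = d_3' = 1$); and for node $4$, $d_4 = 3$, $d_4' = 1$, with $(a_{4,1}, a_{4,2}, a_{4,3}) = (2, 3, 1)$. The conditional distributions $P_{U_i \mid Y_i, \overline U_i'}$ and $P_{X_i \mid Y_i, U_i, \overline U_i'}$ are all degenerate or the prescribed kernels, chosen so that the induced ideal joint law of $(X, V, Y_{\mathrm{r}}, U, X_{\mathrm{r}}, Y)$ matches the one stated in the corollary, and the error set is $\mathcal{E} = \{(x^4, y^4) : (\tilde M_1, \tilde M_2) \ne (M_1, M_2)\}$, which has zero probability under the ideal distribution.

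Second, I would compute the nonzero $B_{i,j}$ terms. At the relay (node $3$), the single term $B_{3,1}$ is a direct decoding term: from \eqref{eq::beta} with $d_3 = 1$, $k = j = 1$, it reduces to $\gamma_{3,1}\, 2^{-\iota(\overline U_{3,1}; Y_3) + \iota(\overline U_{3,1}; \overline U'_{a_{3,1}}, Y_{a_{3,1}})}$, which with $U_1 = (M_1, V)$, $Y_3 = Y_{\mathrm{r}}$ and $U_1$ independent of everything at node $1$ except $M_1$ becomes (up to the log factor $\gamma_{3,1} = 1$ since $d_3 = 1$) of order $\mathsf{J}\, 2^{-\iota(V; Y_{\mathrm{r}})}$. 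At the decoder (node $4$), only $B_{4,1}$ survives since $d_4' = 1$. Expanding \eqref{eq::beta} with $j = 1$ and the product running over $k = 1, 2, 3$: the $k = 1$ factor contributes $2^{-\iota(\overline U_{4,1}; \overline U_{4, \{2,3\}}, Y_4) + \iota(\overline U_{4,1}; \overline U'_{a_{4,1}}, Y_{a_{4,1}})}$; the $k = 2$ factor contributes $2^{-\iota(\overline U_{4,2}; \overline U_{4,3}, Y_4) + \iota(\overline U_{4,2}; \ldots)} + 1$; the $k = 3$ factor contributes $2^{-\iota(\overline U_{4,3}; Y_4) + \iota(\overline U_{4,3}; \ldots)} + 1$. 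Substituting $\overline U_{4,1} = U_2 = (M_1, M_2, X)$, $\overline U_{4,2} = U_3 = (M_1, U)$, $\overline U_{4,3} = U_1 = (M_1, V)$ and using the chain rule for information density together with the structure $M_1 \leftrightarrow V \leftrightarrow$ everything and $M_2$ uniform and independent, these collapse: the $k = 1$ factor becomes $\mathsf{L}\mathsf{J}^{-1} 2^{-\iota(X; U, Y \mid V)}$ after accounting for the message cardinalities $\mathsf{L}/\mathsf{J}$ and the conditioning on $V = M_1$-side information shared via $U_1$ non-uniquely decoded; the $k = 2$ factor becomes $2^{-\iota(U; V, Y) + \iota(U; V, Y_{\mathrm{r}})} + 1$; and the $k = 3$ factor becomes $\mathsf{J}\, 2^{-\iota(V; Y)} + 1$. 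Collecting $\gamma_{4,1} = (\ln(\mathsf{J}|\mathcal{U}|) + 1)(\ln(\mathsf{J}|\mathcal{V}|) + 1)$ and multiplying gives precisely the second summand in the claimed bound, while $B_{3,1}$ gives $\mathsf{J}\, 2^{-\iota(V; Y_{\mathrm{r}})}$ (the $\gamma$ in the corollary absorbs the two log factors from node $4$; the node-$3$ log factor is $1$).

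Third, I would wrap up by invoking Theorem~\ref{thm:det}: since $\mathbf{P}((X^4, Y^4) \in \mathcal{E}) = 0$, the bound \eqref{eq:error_bound_det} reads $P_e \le \mathbf{E}[\min\{B_{3,1} + B_{4,1}, 1\}]$, which after the substitutions above is exactly the displayed inequality. A few routine verifications remain: that the $\iota$ terms involving $M_1$ contribute the factors $\mathsf{J}$ correctly (using $\iota(M_1; \cdot) $ expansions and $H(M_1) = \log \mathsf{J}$), that the independence $(X, V, Y_{\mathrm{r}}, U, X_{\mathrm{r}}, Y') \perp \ldots$ structure makes the cross-terms vanish, and that conditioning on $V$ is legitimate because node $4$ recovers $V$'s contribution through the non-unique decoding of $U_1$. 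The main obstacle I anticipate is the bookkeeping in the third step — carefully tracking how the splitting of $M$ into $(M_1, M_2)$ distributes the cardinality factors $\mathsf{J}$ and $\mathsf{L}/\mathsf{J}$ across the three factors of $B_{4,1}$, and confirming via the chain rule that $\iota(U_2; U_1, U_3, Y_4) - \iota(U_2; \ldots)$ telescopes to the clean conditional information density $\iota(X; U, Y \mid V)$ rather than leaving residual $M_1$-dependent terms. This is purely mechanical but error-prone, and I would double-check it against the already-proven special case Corollary~\ref{cor::one_shot_relay_eg} (which is recovered by setting $\mathsf{J} = 1$, $V = \emptyset$).
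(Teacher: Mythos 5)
Your proposal is correct and follows essentially the same route as the paper: the corollary is obtained by instantiating Theorem~\ref{thm:det} on exactly the four-node ADN described before the statement (nodes with $U_1=(M_1,V)$, $U_2=(M_1,M_2,X)$, $U_3=(M_1,U)$, relay decoding $U_1$, decoder order ``$U_2,U_3?,U_1?$''), computing $B_{3,1}=\mathsf{J}2^{-\iota(V;Y_{\mathrm{r}})}$ and the three factors of $B_{4,1}$ via the cardinality/chain-rule simplifications you describe, and noting the error set has zero ideal-distribution probability. Your factor-by-factor evaluations (including $\gamma_{3,1}=1$, $\gamma_{4,1}=(\ln(\mathsf{J}|\mathcal{U}|)+1)(\ln(\mathsf{J}|\mathcal{V}|)+1)$, and the collapse of the $k=1$ exponent to $\mathsf{L}\mathsf{J}^{-1}2^{-\iota(X;U,Y|V)}$) all check out.
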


Applying the law of large numbers to Corollary~\ref{cor::pdcf}, and Fourier-Motzkin elimination (using the PSITIP software~\cite{li2023automated}), we obtain the following asymptotic achievable rate:
\[
\min\!\left\{\!\!\! \begin{array}{l}
I(V;Y)+I(U,Y;X|V),\\
I(V;Y_{\mathrm{r}})+I(U,Y;X|V),\\
I(V,U;Y)+I(U,Y;X|V)-I(U;Y_{\mathrm{r}}|V),\\
I(V;Y_{\mathrm{r}})+I(U;Y|V)+I(U,Y;X|V)-I(U;Y_{\mathrm{r}}|V)
\end{array}\!\!\!\right\}\! ,
\]
where $(X,V,Y_{\mathrm{r}},U, X_{\mathrm{r}},Y) \sim P_{X, V} P_{Y_{\mathrm{r}}|X,V} P_{U|Y_{\mathrm{r}}, V} \delta_{x_{\mathrm{r}}(Y_{\mathrm{r}},U, V)}  \linebreak[1] P_{Y|X,Y_{\mathrm{r}},X_{\mathrm{r}}}$, 
subject to the constraint $I(U;Y_{\mathrm{r}}|V)\le I(U;Y|V)+I(U,Y;X|V)$. 

Taking $x_\mathrm{r}(y_\mathrm{r},(v',x'_\mathrm{r})) = x'_\mathrm{r}$, $U=\emptyset$, $V $ $=(V',X'_\mathrm{r})$, it gives an achievable rate $\min\{I(X,X_{\mathrm{r}};Y),\,I(V';Y_{\mathrm{r}})+I(X;  Y|X_{\mathrm{r}},V')\}$, recovering the partial noncausal decode-forward bound for relay-with-unlimited-look-ahead~\cite[Prop. 3]{el2007relay}.

Specializing to the primitive relay channel, and again substituting $U=\emptyset$, $V=(V',X'_\mathrm{r})$, $x_\mathrm{r}(y_\mathrm{r},(v',x'_\mathrm{r}))=x'_\mathrm{r}$, we have
\begin{align*}
P_{e} & \leq\mathbf{E}\Big[\min\Big\{\mathsf{J}2^{-\iota(V';Y_{\mathrm{r}})}+2\gamma\mathsf{L}\mathsf{J}^{-1}2^{-\iota(X;Y'|V')} \\
& \;\;\;\;\;\;\;\;\; \cdot\big(\mathsf{J}2^{-\iota(V';Y')-\iota(X_{\mathrm{r}};Y'')}+1\big),1\Big\}\Big],
\end{align*}
where $\gamma :=  (\ln \mathsf{J} + 1) ( \ln(\mathsf{J}|\mathcal{V}'||\mathcal{X}_\mathrm{r}|) +1)$ and $(X,V',Y_{\mathrm{r}},Y') \sim P_{X,V'} P_{Y_{\mathrm{r}}|X} P_{Y'|X,Y_{\mathrm{r}}}$ is independent of $(X_{\mathrm{r}},Y'') \sim P_{X_{\mathrm{r}}} P_{Y''|X_{\mathrm{r}}}$. 
It gives the asymptotic rate $\min\{I(V';Y_{\mathrm{r}})+I(X;Y|V'),\,I(X;Y)+C_{\mathrm{r}}\}$ and recovers the partial decode-forward lower bound for primitive relay channels~\cite{cover1979capacity,kim2007coding}.
One-shot versions of other asymptotic bounds for primitive relay channels (e.g., \cite{mondelli2019new,el2021achievable}) are left for future studies.

\section{Cascade multiterminal source coding with computing} 
\label{sec::cascade}

We consider the cascade multiterminal source coding problem~\cite{cuff2009cascade} (which is also called the \emph{cascade coding for computing} in~\cite[Section 21.4]{el2011network}).
It is similar to the traditional multiterminal source coding problem introduced by Berger and Tung~\cite{berger1978multiterminal, tung1978multiterminal}, where two information sources are encoded in a distributed fashion with loss, though the communication between encoders replaces one of the direct channels to the decoder in the cascade case. 
It can include different variations, e.g., the decoder desires to estimate both $X$ and $Y$, $X$ only, $Y$ only and some functions of both. 
It is tightly related to real problems where it is required to pass messages to neighbors in order to compute functions of data, e.g., distributed data collection, aggregating measurements in sensor networks, interactive coding for computing and distributed lossy averaging (see~\cite{cuff2009cascade} and references therein).

The asymptotic rate-distortion region for the general cascade multiterminal source coding problem is unknown, even for the case where $X$ and $Y$ are independent. 
We study the one-shot setting of this problem, which has not been discussed in literature to the best of our knowledge. 
We provide a novel one-shot bound on the cascade multiterminal source coding problem, and show that our one-shot achievability result recovers the best known asymptotic inner bound, i.e., the \emph{local-computing-and-forwarding inner bound}~\cite{cuff2009cascade} (which in turn recovers various other existing bounds as special cases, also see~\cite[Section 21.4]{el2011network} for a detailed discussion). 

The one-shot cascade multiterminal source coding problem is described as follows (see Figure~\ref{fig::cascade_asym}).   
Consider two sources $X$ and $Y$ that are jointly distributed according to $P_{X,Y}$.  
They will be described by separate encoders and passed to a single decoder in a cascade fashion.
Upon observing $X$, encoder $\mathrm{a}$ sends a message $M\in [\mathsf{L}_1]$ about $X$ to encoder $\mathrm{b}$. 
Encoder $\mathrm{b}$ then creates a final message $M' \in [\mathsf{L}_2]$ summarizing both sources $X$ and $Y$ and sends it to the decoder. 
We investigate the error probability $P_e$, which is the probability of the decoder recovers $\tilde{Z} \in \mathcal{Z}$ with excess distortion $P_e := \mathbf{P}\{d(X,Y,\tilde{Z}) > \mathsf{D} \}$, where $d: \mathcal{X} \times  \mathcal{Y}\times  \mathcal{Z}\rightarrow \mathbb{R}_{\geq 0}$ is a distortion measure. 
Due to the flexibility of the distortion function $d$, in general one can estimate any function of $X$ and $Y$ to tackle various objectives in practice~\cite{cuff2009cascade}. 

\begin{figure}[htpb]
    \centering
    \includegraphics[scale=0.3]{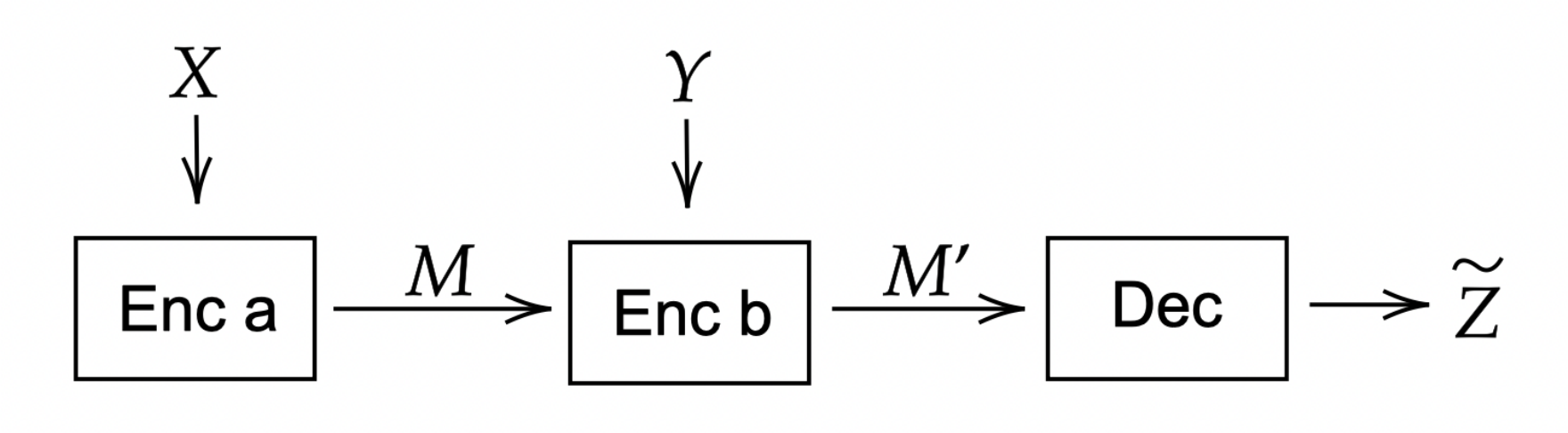}
    \caption{One-shot cascade multiterminal source coding setting. } 
    \label{fig::cascade_asym}
\end{figure}

By Theorem~\ref{thm:det}, we bound $P_e$ by the following corollary.

\begin{corollary}
\label{cor::cascade}
    Fix $P_{X,Y}$, $P_{U,V|X}$, function $z: \mathcal{U} \times \mathcal{V} \times \mathcal{Y} \rightarrow \mathcal{Z}$ and $\tilde{\mathsf{L}}_i, i=1,2,3$ with $\tilde{\mathsf{L}}_1 \tilde{\mathsf{L}}_2 \leq \mathsf{L}_1$ and $\tilde{\mathsf{L}}_2 \tilde{\mathsf{L}}_3 \leq \mathsf{L}_2$, there exists a deterministic coding scheme for the one-shot cascade multiterminal source coding problem such that the probability of excess distortion is bounded by
    \begin{align*}
        P_e 
        & \leq \mathbf{E}\bigg[\min\Big\{ 
        \mathbf{1}\{d(X, Y, Z)>\mathsf{D}\}  + 
        \gamma \tilde{\mathsf{L}}_1^{-1}\tilde{\mathsf{L}}_2^{-1} 2^{\iota(U,V; X | Y)} \\
        & \qquad \qquad \qquad + \gamma \tilde{\mathsf{L}}_1^{-1} 2^{-\iota(V; U,Y) + \iota(V; U,X)} + 
        \tilde{\mathsf{L}}_2^{-1} 2^{- \iota(U; V, Y) + \iota(U; X)}
         \\
        &\qquad\qquad\qquad  + \gamma \tilde{\mathsf{L}}_3^{-1} 2^{\iota(Z;V,Y|U)} 
        \left(\tilde{\mathsf{L}}_2^{-1} 2^{\iota(U; X)} + 1 \right)  
        , \,\, 1
        \Big\}\bigg],
    \end{align*}
where $\gamma = \ln(|\mathcal{U}| \tilde{\mathsf{L}}_2 )+1$ and $X,Y,Z,U,V\sim P_{X}  P_{Y|X}  P_{U,V|X} P_{Z|Y,U,V}$. 
\end{corollary}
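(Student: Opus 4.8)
The plan is to derive Corollary~\ref{cor::cascade} as a special case of Theorem~\ref{thm:det}, by realizing the one-shot cascade problem as a suitable acyclic discrete network and reading off the terms $B_{i,j}$. Since Theorem~\ref{thm:det} attaches one auxiliary to each node while the problem uses two auxiliaries $U,V$ generated at encoder $\mathrm{a}$ (with $P_{U,V|X}=P_{U|X}P_{V|U,X}$), I would split encoder $\mathrm{a}$ into two nodes: node $1$ observes the source $X$ and generates $U$ from $P_{U|X}$; node $2$ also has $X$ and has $U$ available (arranged so that $U$ is revealed through node $1$'s output, hence no decoding penalty at node $2$) and generates $V$ from $P_{V|U,X}$, its output being the link-$1$ message. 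Node $3$ is encoder $\mathrm{b}$: it observes $Y$ together with the link-$1$ message, decodes $U$ and $V$ uniquely with $Y$ as side information, forms $Z=z(U,V,Y)$, and outputs the link-$2$ message, which forwards a description of $U$ and additionally describes $Z$. Node $4$ is the decoder: it observes the link-$2$ message, decodes $Z$ uniquely while using $U$ by non-unique decoding (decoding order ``$Z,U?$''), and outputs $\tilde Z=Z$. The limited links are encoded as message alphabets of sizes $\mathsf{L}_1,\mathsf{L}_2$, and the sub-rates $\tilde{\mathsf{L}}_1,\tilde{\mathsf{L}}_2,\tilde{\mathsf{L}}_3$ come from rate-splitting them: $\tilde{\mathsf{L}}_1,\tilde{\mathsf{L}}_2$ carry the descriptions of $V$ and of $U$ on link $1$ (so $\tilde{\mathsf{L}}_1\tilde{\mathsf{L}}_2\le\mathsf{L}_1$), while $\tilde{\mathsf{L}}_2,\tilde{\mathsf{L}}_3$ carry the forwarded $U$ and the fresh description of $Z$ on link $2$ (so $\tilde{\mathsf{L}}_2\tilde{\mathsf{L}}_3\le\mathsf{L}_2$); $\tilde{\mathsf{L}}_2$ appears in both constraints because the description of $U$ traverses both links.

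\textbf{Applying the theorem.} With $(a_{i,j})$ and $d_i,d'_i$ chosen to encode the decoding orders above, I invoke Theorem~\ref{thm:det} with $\mathcal{E}$ taken to be the set on which $d(X,Y,Z)>\mathsf{D}$. Then $\mathbf{1}\{(X^N,Y^N)\in\mathcal{E}\}=\mathbf{1}\{d(X,Y,Z)>\mathsf{D}\}$ is exactly the first term of the stated bound, and it remains to evaluate $\sum_{i,j}B_{i,j}$. Using $P_{V|U,X}$, the chain rule for information densities, and the bound $\sum_{k=1}^{|\mathcal{U}|}k^{-1}\le\ln|\mathcal{U}|+1$ (which produces the $\gamma$ factor), node $3$ should contribute the joint covering-type term $\gamma\tilde{\mathsf{L}}_1^{-1}\tilde{\mathsf{L}}_2^{-1}2^{\iota(U,V;X|Y)}$ --- the ``both auxiliaries fail'' summand coming from the product in \eqref{eq::beta} --- together with the two single-auxiliary decoding terms $\gamma\tilde{\mathsf{L}}_1^{-1}2^{-\iota(V;U,Y)+\iota(V;U,X)}$ and $\tilde{\mathsf{L}}_2^{-1}2^{-\iota(U;V,Y)+\iota(U;X)}$; node $4$ should contribute $\gamma\tilde{\mathsf{L}}_3^{-1}2^{\iota(Z;V,Y|U)}\bigl(\tilde{\mathsf{L}}_2^{-1}2^{\iota(U;X)}+1\bigr)$, with the parenthetical factor being the $\mathbf{1}\{k>j\}$ part of \eqref{eq::beta} that reflects the non-unique decoding of $U$. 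Collecting these four summands and the indicator inside $\mathbf{E}[\min\{\cdot,1\}]$ gives the claimed inequality.

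\textbf{Main obstacle.} The delicate part is pinning down the network --- the channels $P_{Y_i|X^{i-1},Y^{i-1}}$ and the message components of the auxiliaries --- so that the mechanical substitution into Theorem~\ref{thm:det} produces exactly these four summands with the $\gamma$'s and the $\tilde{\mathsf{L}}_i^{-1}$ factors in the right places: making node $2$ see $U$ without paying a decoding cost; checking that node $3$'s unique decoding of the ordered pair $(V,U)$ generates the joint exponent $\iota(U,V;X|Y)$ in addition to $\iota(V;U,X)-\iota(V;U,Y)$ and $\iota(U;X)-\iota(U;V,Y)$; and verifying that reusing the codebook for the forwarded description of $U$ on link $2$ makes the decoder's non-unique-decoding slack exactly $\tilde{\mathsf{L}}_2^{-1}2^{\iota(U;X)}+1$ rather than something larger. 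Once the network is fixed, what remains is the same routine $B_{i,j}$ bookkeeping already carried out for the one-shot relay channel in the proof of Corollary~\ref{cor::one_shot_relay_eg}.
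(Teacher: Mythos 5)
Your proposal matches the paper's proof in essentially every respect: the paper likewise splits encoder $\mathrm{a}$ into two nodes (node $1$ outputs $U$ so node $2$ sees it with no decoding cost, with auxiliaries $U_1=(U,M_2)$ and $U_2=(V,M_1)$), has encoder $\mathrm{b}$ (node $3$) uniquely decode in the order $(V,\dots)$ then $(U,\dots)$ from $(Y,M_1,M_2)$ while forwarding $M_2$ and a fresh description $M_3$ of $Z$, and has the decoder (node $4$) use the order ``$U_3,U_1?$'', then invokes Theorem~\ref{thm:det} with $\mathcal{E}$ the excess-distortion event, yielding exactly the terms $B_{3,1},B_{3,2},B_{4,1}$ you list. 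The bookkeeping you flag as the remaining obstacle goes through exactly as you anticipate, so the proposal is correct and takes the same route as the paper.
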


\begin{proof}

We adapt the problem into our ADN framework by splitting the encoder $\mathrm{a}$, as shown in Figure~\ref{fig::cascade} (see next page). 
    The encoder $\mathrm{a1}$, encoder $\mathrm{a2}$, encoder $\mathrm{b}$ and decoder are referred to as nodes $1,2,3,4$, respectively.  
    
    \begin{figure*}[htpb]
    \centering \includegraphics[scale=0.42]{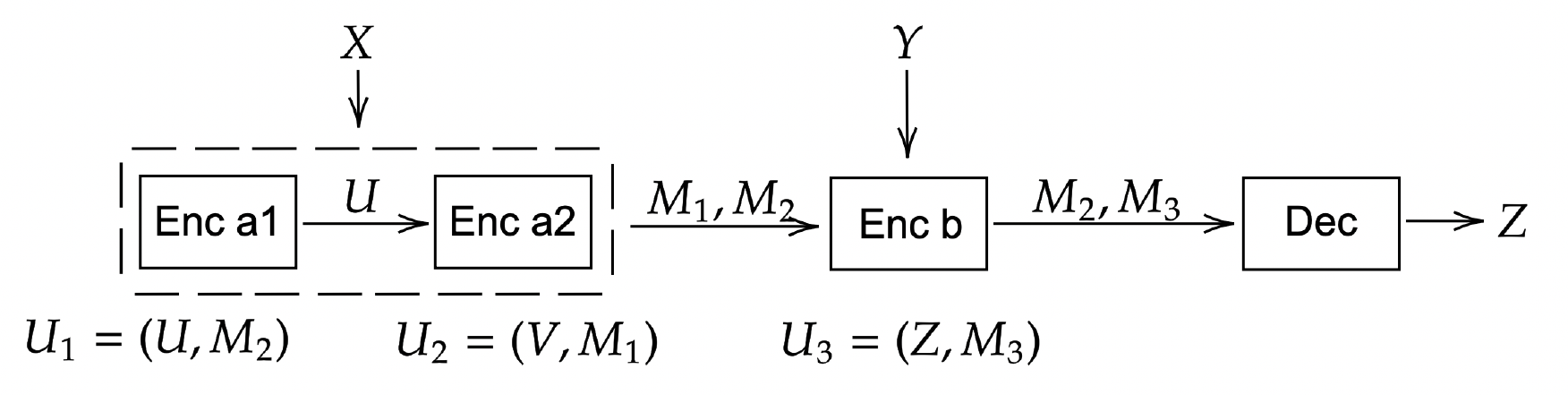}
    \caption{One-shot cascade multiterminal source coding in AND framework by splitting the first encoder. } 
    \label{fig::cascade}
    \end{figure*}
    
    Let $M_i\in [\mathsf{L}_i]$ for $i=1,2,3$. 
    Encoder $\mathrm{a1}$ (node $1$) observes $X$, outputs $U$, and has an auxiliary $U_1 = (U, M_2)$. 
    Encoder $\mathrm{a2}$ (node $2$) observes $(U,X)$, outputs $(M_1, M_2)$, and has an auxiliary $U_2 = (V, M_1)$. 
    Encoder $\mathrm{b}$ (node $3$) observes $M_1$, $M_2$ and $Y$, outputs $(M_2, M_3)$, and has an auxiliary $U_3 = (Z, M_3)$. 
    The decoder observes $M_2, M_3$ and recovers $Z$ by using the function $z$ and our coding scheme. 
    
    For each node $i=1,\ldots,4$ in the ADN, we describe its input $Y_i$, output $X_i$, auxiliary $U_i$ and the terms $B_{i,j}$ in Theorem~\ref{thm:det} as follows:
    \begin{enumerate}
        \item Node $1$ has input $Y_1 = X$, output $X_1 = U$ and auxiliary $U_1 = (U,M_2)$.
        
        \item Node $2$ has input $Y_2 = (U,X)$, output $X_2 = M_1$ and auxiliary $U_2 = (V,M_1)$. 
        
        \item Node $3$ has input $Y_3 = (Y, M_1, M_2)$, output $X_3 = M_3$, auxiliary $U_3 = (Z,M_3)$, and decodes with the order ``$U_2, U_1$'' (i.e., ``$U_{\mathrm{Enc}\, 1b}, U_{\mathrm{Enc}\, \mathrm{1a}}$''). 
        We have $d_3' = d_3 = 2$, and
        \begin{align*}
            B_{3, 1} & = \gamma \tilde{\mathsf{L}}_1^{-1}\tilde{\mathsf{L}}_2^{-1} 2^{\iota(U,V; X | Y)} + \gamma \tilde{\mathsf{L}}_1^{-1} 2^{-\iota(V; U,Y) + \iota(V; U,X)}, \\
            B_{3, 2} & = \tilde{\mathsf{L}}_2^{-1} 2^{- \iota(U; V, Y) + \iota(U; X)},
        \end{align*} 
        where $\gamma = \ln(|\mathcal{U}| \tilde{\mathsf{L}}_2 )+1$. 
        
        \item Node $4$ has input $Y_4 = (M_2, M_3)$, output $X_4 = Z$, and decodes with the order ``$U_3, U_1 ?$'' (i.e., ``$U_{\mathrm{Enc}\, 2}, U_{\mathrm{Enc}\, \mathrm{1a}} ?$''). 
        By applying Theorem~\ref{thm:det} (note that $d_4' = 1$ and $d_4 = 2$), it gives 
        \begin{equation*}
            B_{4, 1} = \gamma \tilde{\mathsf{L}}_3^{-1} 2^{\iota(Z;V,Y|U)} 
            \left(\tilde{\mathsf{L}}_2^{-1} 2^{\iota(U; X)} + 1 \right),
        \end{equation*}
        where $\gamma = \ln(|\mathcal{U}| \tilde{\mathsf{L}}_2 )+1$. 
        
    \end{enumerate}

     Therefore, by applying Theorem~\ref{thm:det}, the probability of distortion exceeds the limit $P_e := \mathbf{P}\{d(X, Y, \tilde{Z}) > \mathsf{D} \}$ can be  bounded as the result stated in this corollary: 
    \begin{align*}
    P_e 
    & \leq \mathbf{E}\bigg[\min\Big\{ \mathbf{1}\{d(X, Y, Z)>\mathsf{D}\} + 
    B_{3,1} + B_{3,2} + B_{4,1}
    , 1\Big\}\bigg] \\
    & = \mathbf{E}\bigg[\min\Big\{ 
    \mathbf{1}\{d(X, Y, Z)>\mathsf{D}\}  + 
    \gamma \tilde{\mathsf{L}}_1^{-1}\tilde{\mathsf{L}}_2^{-1} 2^{\iota(U,V; X | Y)} \\
    & \qquad \qquad \qquad + \gamma \tilde{\mathsf{L}}_1^{-1} 2^{-\iota(V; U,Y) + \iota(V; U,X)}  + 
    \tilde{\mathsf{L}}_2^{-1} 2^{- \iota(U; V, Y) + \iota(U; X)}
     \\
    &\qquad\qquad\qquad  + \gamma \tilde{\mathsf{L}}_3^{-1} 2^{\iota(Z;V,Y|U)} 
    \left(\tilde{\mathsf{L}}_2^{-1} 2^{\iota(U; X)} + 1 \right)  
    , \,\, 1
    \Big\}\bigg],
    \end{align*}
    where $\gamma = \ln(|\mathcal{U}| \tilde{\mathsf{L}}_2 )+1$. 
\end{proof}

Following the one-shot bound as shown above, we let $\tilde{\mathsf{L}}_i = 2^{n \tilde{R}_i}$ for $i=1,2,3$ and apply the law of large numbers. We obtain the asymptotic achievable region
\begin{align*}
    \tilde{R}_1 + \tilde{R}_2 & > I(X;U,V|Y), \\
    \tilde{R}_1               & > I(V; U, X) - I(V; U, Y), \\
    \tilde{R}_2               & > I(X;U) -  I(V, Y; U), \\
    \tilde{R}_2 + \tilde{R}_3 & > I(X;U) + I(Z;V,Y|U), \\
    \tilde{R}_3               & > I(Z;V,Y|U), 
\end{align*}
and $ D > \mathbf{E}[d(X,Y,Z)]$. 
By $\tilde{\mathsf{L}}_1 \tilde{\mathsf{L}}_2 \leq \mathsf{L}_1$ and $\tilde{\mathsf{L}}_2 \tilde{\mathsf{L}}_3 \leq \mathsf{L}_2$ and consider $R_1 = \tilde{R}_1 + \tilde{R}_2$ and $R_2 = \tilde{R}_2 + \tilde{R}_3$, by applying the Fourier-Motzkin elimination (using the PSITIP software~\cite{li2023automated}), we recover an asymptotic achievable region for the cascade multiterminal source coding problem: 
\begin{align*}
    R_1 & > I(X;U,V|Y), \\
    R_2 & > I(X;U)+ I(Z;V,Y|U),
\end{align*}
and $D > \mathbf{E}[d(X,Y,Z)]$, where $X,Y,Z,U,V\sim P_{X}  P_{Y|X}  P_{U,V|X} P_{Z|Y,U,V}$. 
This is the local-computing-and-forwarding inner bound in asymptotic case, as discussed in~\cite{cuff2009cascade} and also in~\cite[Section 21.4]{el2011network}.

\section{Examples of Acyclic Discrete Networks}
\label{sec::NIT}

In this section, to demonstrate the use of our main results, we apply  Theorem~\ref{thm::network_achievability} and Theorem~\ref{thm:det} on several settings in network information theory:  Gelfand-Pinsker~\cite{gelfand1980coding,Heegard1980}, Wyner-Ziv~\cite{wyner1976rate, wyner1978rate}, coding for computing~\cite{yamamoto1982wyner}, multiple access channels~\cite{ahlswede1971multi, liao1972multiple, ahlswede1974capacity} and broadcast channels~\cite{marton1979coding}, recovering similar results as the results in~\cite{li2021unified} and other works.

\subsection{Gelfand-Pinsker Problem}
The one-shot version of the Gelfand-Pinsker problem~\cite{gelfand1980coding} is described as follows. 

Upon observing $M\sim \mathrm{Unif}[\mathsf{L}]$ and $S\sim P_S$, the encoder generates $X$ and sends $X$ through a channel $P_{Y|X,S}$. 
The decoder receives $Y$ and recovers $\hat{M}$. 
This can be considered as an ADN as follows (see Figure~\ref{fig::gp} for an illustration): 
in the ideal situation, let $Y_1 := (M,S)$ represent all the information coming into node $1$, $Y_2 := Y$, $P_{Y_2|Y_1, X_1}$ be $P_{Y|S, X}$, and $X_2 := M$.  
The auxiliary of node $1$ is $U_1 = (U,M)$ for some $U$ following $P_{U|S}$ given $S$. The decoding order of node $2$ is ``$U_1$'' (i.e., it only wants $U_1$). Since node $2$ has decoded $U_1$, $X_2$ is allowed to depend on $U_1=(U,M)$, and hence the choice $X_2 := M$ is valid in the ideal situation. Nevertheless, in the actual situation where we have $\tilde{X},\tilde{Y}$ instead of $X,Y$, the actual output $\tilde{X}_2$ will not be exactly $M$, though the error probability $P_e := \mathbf{P}(\tilde{X}_2 \neq M)$ can still be bounded. 
Applying Theorem~\ref{thm:det}, we obtain the following Corollary~\ref{cor::GP}.

\begin{figure}[htpb]
    \centering
    \includegraphics[scale=0.33]{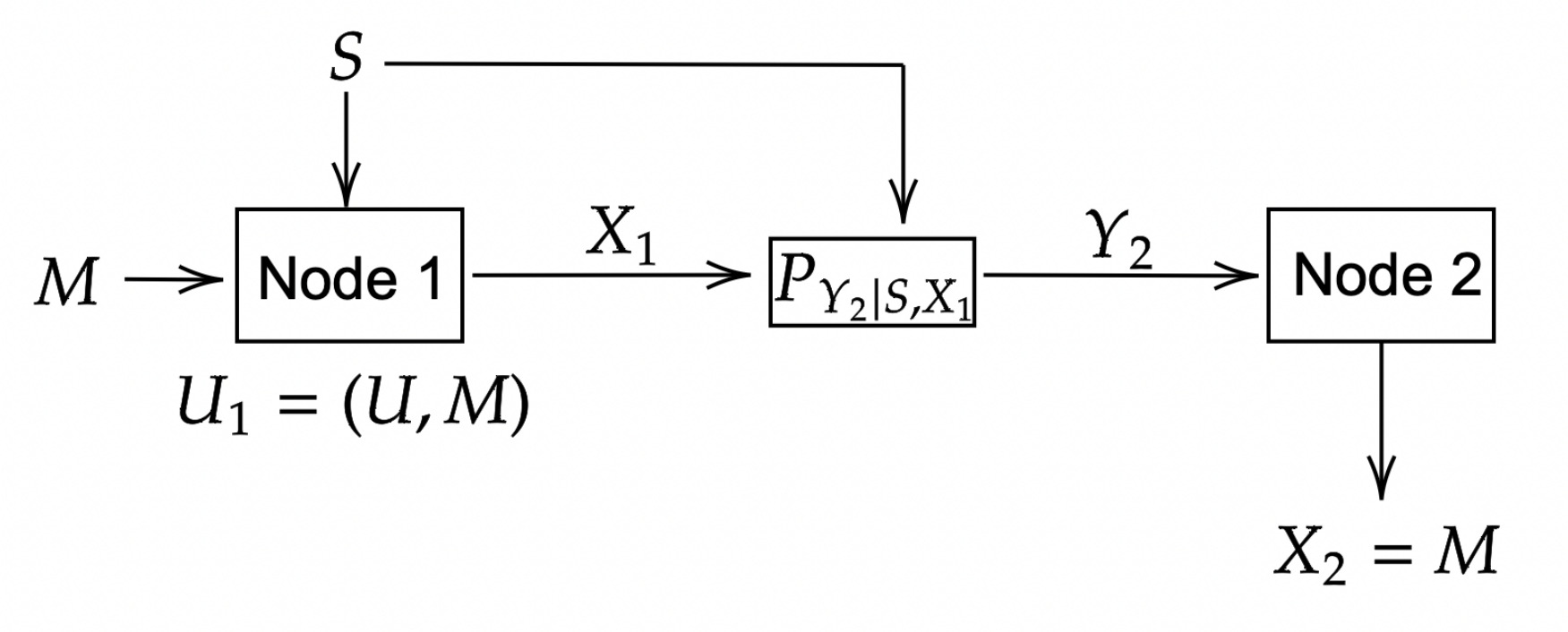}
    \caption{Gelfand-Pinsker problem in ADN framework.} 
    \label{fig::gp}
\end{figure}

\begin{corollary}
\label{cor::GP}
Fix $P_{U|S}$ and function $x:\mathcal{U}\times \mathcal{S}\rightarrow \mathcal{X}$. 
There exists a deterministic coding scheme for the channel $P_{Y|X,S}$ with state $S\sim P_S$ and message $M\sim \mathrm{Unif}[\mathsf{L}]$ such that
\begin{equation*}
	P_e \leq \mathbf{E}\big[  \min\big\{ 
	\mathsf{L}  2^{-\iota (U;Y) + \iota (U;S)}, 1 \big\}
	  \big], 
\end{equation*}
where $S,U,X,Y\sim P_S P_{U|S} \delta_{x(U,S)} P_{Y|X, S}$. 

\end{corollary}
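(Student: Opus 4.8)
The plan is to cast the one-shot Gelfand--Pinsker problem as a two-node ADN and invoke Theorem~\ref{thm:det}. I take node $1$ (the encoder) to have input $Y_1 = (M,S)$ with $P_{Y_1} = \mathrm{Unif}[\mathsf{L}] \times P_S$, auxiliary $U_1 = (U,M)$ generated by $P_{U_1 \mid Y_1}$ which copies the message coordinate of $Y_1$ and draws the $U$-coordinate from $P_{U\mid S}$ given the state coordinate of $Y_1$, and output $X_1 = X = x(U,S)$, i.e. $P_{X_1 \mid Y_1, U_1} = \delta_{x(U,S)}$. Node $2$ (the decoder) has input $Y_2 = Y$ with $P_{Y_2\mid X_1, Y_1} = P_{Y\mid X, S}$, decoding order ``$U_1$'' (so $d_2 = d_2' = 1$ and $a_{2,1} = 1$), and output $X_2 = M$, the message coordinate of its decoded copy of $U_1$. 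Node $1$ decodes nothing, so $d_1 = d_1' = 0$ and there are no $B_{1,j}$ terms. A routine check shows the ideal joint law induced on $(S,U,X,Y)$ is exactly $P_S P_{U\mid S}\,\delta_{x(U,S)}\,P_{Y\mid X,S}$, and that $M$ is independent of $(S,U,X,Y)$ under this ideal law.

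Next I choose the error set $\mathcal{E}$ to be the event that the decoder output disagrees with the true message, i.e. the set of $(x^2,y^2)$ whose $x_2$ differs from the message coordinate of $y_1$. Under the ideal distribution $X_2 = M$ identically, so $\mathbf{1}\{(X^2,Y^2)\in\mathcal{E}\} = 0$ almost surely, and Theorem~\ref{thm:det} reduces to $P_e \le \mathbf{E}[\min\{B_{2,1},1\}]$.

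It remains to evaluate $B_{2,1}$ from~\eqref{eq::beta}. With $d_2 = 1$ and $j = 1$, the product over $k$ collapses to the single factor $k=1$: the index set $[d_2]\setminus[j..k]$ is empty, the indicator $\mathbf{1}\{k>j\}$ vanishes, and $\gamma_{2,1}$ is an empty product equal to $1$, so $B_{2,1} = 2^{-\iota((U,M);Y) + \iota((U,M);(M,S))}$. The one genuine computation is to unfold these two information densities using the independence of $M$: given $(M,S)=(m,s)$ the pair $(U,M)$ equals $(U,m)$ with $U\sim P_{U\mid S}(\cdot\mid s)$ while marginally $(U,M)\sim P_U \times \mathrm{Unif}[\mathsf{L}]$, which gives $\iota((U,M);(M,S)) = \log\mathsf{L} + \iota(U;S)$; and since $M$ is independent of $Y$, one gets $\iota((U,M);Y) = \iota(U;Y)$. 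Substituting yields $B_{2,1} = \mathsf{L}\,2^{-\iota(U;Y)+\iota(U;S)}$, which combined with the previous paragraph is exactly the claimed bound.

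The only obstacle is bookkeeping: one must track the copied message coordinate inside the auxiliary $U_1 = (U,M)$ so that it contributes precisely the factor $\mathsf{L} = 2^{\log\mathsf{L}}$ (and nothing more) to $B_{2,1}$, and confirm that the $\gamma$ and indicator terms in~\eqref{eq::beta} all degenerate because $d_2 = d_2' = 1$. Everything else is a direct substitution into Theorem~\ref{thm:det}.
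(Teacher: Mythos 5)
Your proposal is correct and takes essentially the same route as the paper: the paper reduces Gelfand--Pinsker to exactly this two-node ADN (node $1$ with input $(M,S)$, auxiliary $U_1=(U,M)$, output $x(U,S)$; node $2$ observing $Y$, decoding $U_1$, outputting $M$) and invokes Theorem~\ref{thm:det}. Your explicit evaluation of $B_{2,1}$, including $\gamma_{2,1}=1$, the vanishing indicator, and the splitting $\iota((U,M);(M,S))=\log\mathsf{L}+\iota(U;S)$, $\iota((U,M);Y)=\iota(U;Y)$, simply fills in the computation the paper leaves implicit.
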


This bound is similar to the one given in~\cite{li2021unified} (which is stronger than the one-shot bounds in~\cite{verdu2012non, yassaee2013technique, watanabe2015nonasymptotic} in the second order). 
Both of them attain the second-order result in~\cite{scarlett2015dispersions}.

\subsection{Wyner-Ziv Problem and Coding for Computing}
The Wyner-Ziv problem~\cite{wyner1976rate, wyner1978rate} in a one-shot setting is described as follows (see Figure~\ref{fig::wz} for an illustration).

Upon observing $X \sim P_X$, the encoder outputs $M \in [\mathsf{L}]$. 
The decoder receives $M$ and the side information $T \sim P_{T|X}$, and recovers $\tilde{Z}\in \mathcal{Z}$ with probability of excess distortion $P_e :=\mathbf{P} \{d(X,\tilde{Z})>\mathsf{D} \}$, where $d: \mathcal{X} \times\mathcal{Z} \rightarrow \mathbb{R}_{\geq 0} $ is a distortion measure. 
This can be considered as an ADN: in the ideal situation, $Y_1:= X$, $X_1 := M$, $Y_2 :=(M,T)$, $X_2 := Z$.
The auxiliary of node $1$ is $U_1 = (U,M)$ for some $U$ following $P_{U|X}$ given $X$. 
By Theorem~\ref{thm:det}, we bound  $P_e$ as follows.

\begin{figure}[htpb]
    \centering
    \includegraphics[scale=0.28]{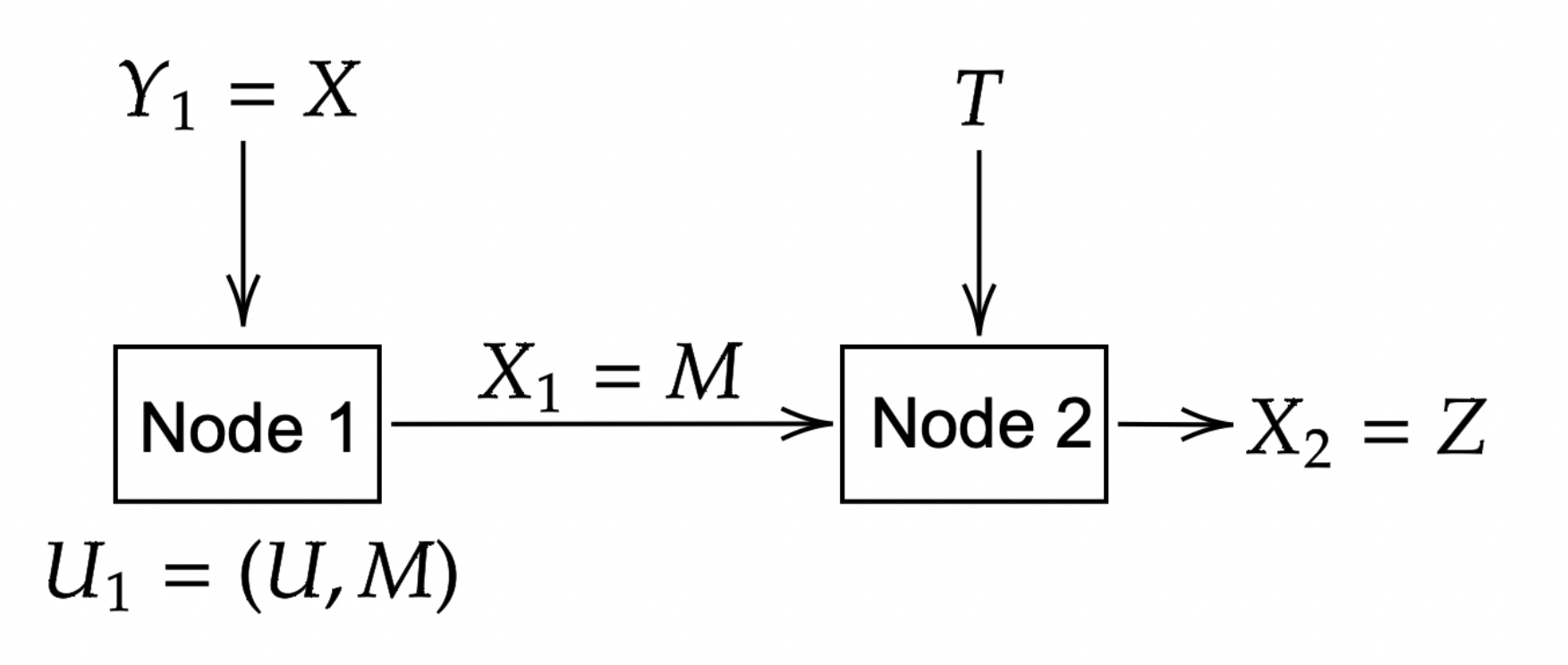}
    \caption{Wyner-Ziv problem in ADN framework.} 
    \label{fig::wz}
\end{figure}

\begin{corollary}
\label{cor::WZ}
Fix $P_{U|X}$ and function $z: \mathcal{U}\times \mathcal{Y}\rightarrow \mathcal{Z}$. 
There exists a deterministic coding scheme  for lossy source coding with source $X\sim P_X$, side information at the decoder $T\sim P_{T|X}$ and description $M\in [\mathsf{L}]$ such that 
\begin{equation}
	P_e \leq \mathbf{E}\Big[  \min\Big\{  
	\mathbf{1}\{d(X, Z)>\mathsf{D}\} + \mathsf{L}^{-1}  2^{-\iota (U;T) + \iota (U;X)}, 1 \Big\}
	  \Big], \label{eq:wyner_ziv}
\end{equation}
where $X,Y,U,Z \sim P_X P_{Y|X}  P_{U|X} \delta_{z(U,Y)}$. 
\end{corollary}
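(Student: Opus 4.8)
The plan is to recast the one-shot Wyner-Ziv problem as a two-node ADN and invoke Theorem~\ref{thm:det}, in the same spirit as the Gelfand-Pinsker reduction behind Corollary~\ref{cor::GP}. Take $N = 2$. Node $1$ (the encoder) has input $Y_1 = X$, output $X_1 = M$, and auxiliary $U_1 = (U, M)$, where in the ideal distribution $M \sim \mathrm{Unif}[\mathsf{L}]$ is independent of $(X, U, T)$ and $U \sim P_{U|X}(\cdot|X)$; thus $P_{U_1|Y_1}((u,m)|x) = P_{U|X}(u|x)/\mathsf{L}$, and $X_1$ is the deterministic projection of $U_1$ onto its $M$-coordinate. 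Node $2$ (the decoder) has input $Y_2 = (M, T)$, generated by $P_{Y_2|X_1,Y_1}((m,t)|(m',x)) = \mathbf{1}\{m = m'\}\,P_{T|X}(t|x)$ --- a noiseless copy of the description together with the side information produced from $X$ --- output $X_2 = Z = z(U, T)$, and decoding order ``$U_1$'' (unique decoding, so $d_2 = d_2' = 1$ and $a_{2,1} = 1$). For the forbidden set take $\mathcal{E} = \{(x, y, z) : d(x, z) > \mathsf{D}\}$, so that $\mathbf{1}\{(X^2, Y^2) \in \mathcal{E}\} = \mathbf{1}\{d(X, Z) > \mathsf{D}\}$.

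All alphabets involved are finite ($\mathcal{U}$ by assumption, $[\mathsf{L}]$, and the supports of $X$, $T$, $Z$), so Theorem~\ref{thm:det} applies. Node $1$ decodes nothing ($d_1' = 0$) and hence contributes no $B_{1,j}$; the only surviving term is $B_{2,1}$, and since $d_2 = 1$ the product in \eqref{eq::beta} has a single factor and the prefactor $\gamma_{2,1}$ is an empty product equal to $1$, so $B_{2,1} = 2^{-\iota(U_1; Y_2) + \iota(U_1; Y_1)}$. It then remains to evaluate the two information densities at the realized point.

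This is the crux of the calculation. At the realization (where the two copies of $M$ coincide), independence of $M$ from $(X, U, T)$ and the Markov chain $U - X - T$ give $P_{U_1}(u, m) = P_U(u)/\mathsf{L}$, $P_{U_1|Y_1}((u,m)|x) = P_{U|X}(u|x)/\mathsf{L}$, and $P_{U_1|Y_2}((u,m)|(m,t)) = P_{U|T}(u|t)$. Hence the $1/\mathsf{L}$ factors cancel in $\iota(U_1; Y_1) = \iota(U; X)$, whereas $\iota(U_1; Y_2) = \log \mathsf{L} + \iota(U; T)$, the noiselessly conveyed uniform index accounting for exactly $\log \mathsf{L}$ of information density (this is the binning gain). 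Substituting, $B_{2,1} = \mathsf{L}^{-1}\, 2^{-\iota(U;T) + \iota(U;X)}$, and Theorem~\ref{thm:det} then yields precisely \eqref{eq:wyner_ziv}. Alternatively, one can give a self-contained proof mirroring the direct argument in Corollary~\ref{cor::one_shot_relay_eg}: the encoder computes $U_1$ by the exponential functional representation on $P_{U|X}(\cdot|X) \times \mathrm{Unif}[\mathsf{L}]$ and transmits its $M$-coordinate; the decoder refines its prior $P_{U_1|Y_2}$ by the exponential process and recovers $U_1$, with the mismatch probability bounded by the generalized Poisson matching lemma (Lemma~\ref{lem::GPML}).

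The step needing the most care is the ADN reduction itself --- putting the description index $M$ simultaneously into node $1$'s channel output and into its auxiliary $U_1$, and then checking that the information-density bookkeeping in $B_{2,1}$ reproduces exactly the $\mathsf{L}^{-1}$ factor that a binning argument would predict. After that, the asymptotic Wyner-Ziv rate $R \ge I(U; X) - I(U; T)$ follows from the law of large numbers as in the other corollaries of this section.
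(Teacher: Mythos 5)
Your proposal is correct and follows essentially the same route as the paper: the paper's proof is precisely the two-node ADN reduction $Y_1=X$, $X_1=M$, $U_1=(U,M)$, $Y_2=(M,T)$, $X_2=Z$ with unique decoding of $U_1$ at node $2$, followed by an application of Theorem~\ref{thm:det}. Your explicit evaluation of $B_{2,1}$ (with $\iota(U_1;Y_1)=\iota(U;X)$ and $\iota(U_1;Y_2)=\log\mathsf{L}+\iota(U;T)$, giving the $\mathsf{L}^{-1}$ binning factor) is exactly the bookkeeping the paper leaves implicit, and it is carried out correctly.
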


This bound is similar to, though slightly weaker than, the bound given in~\cite{li2021unified} (which improves upon the one-shot bounds in~\cite{verdu2012non,watanabe2015nonasymptotic} in second-order performance). 
Our main contribution lies in the generality of our one-shot coding framework, and we do not always derive bounds identical to those in~\cite{li2021unified}, despite both methods employing Poisson functional representations.

This reduces to lossy source coding with $T=\emptyset$. 
Let $U=Z$, we have $P_e \leq \mathbf{P}(d(X, Z)>\mathsf{D}) + \mathbf{E}\left[ \min\left\{\mathsf{L}^{-1} 2^{\iota (Z;X)}, 1 \right\}\right]$.  

We also consider coding for computing~\cite{yamamoto1982wyner}, where node $2$ recovers a function $f(X, T)$ of $X$ and $T$ with respect to distortion level $\mathsf{D}$ with a distortion measure $d(\cdot, \cdot)$. 
The probability of excess distortion is $P_e := \mathbf{P}\{d(f(X,T),\tilde{Z})>\mathsf{D} \}$.
We obtain a result similar to Corollary \ref{cor::WZ}, where \eqref{eq:wyner_ziv} is changed to \begin{align*}
    P_e 
    & \leq \mathbf{E}\big[  \min \big\{ \mathbf{1}\{d(f(X,T),Z)>\mathsf{D}\} + \mathsf{L}^{-1}  2^{-\iota (U;T) + \iota (U;X)}, \,\, 1 \big\}\big].
\end{align*}

\subsection{Multiple Access Channel}
The multiple access channel~\cite{ahlswede1971multi, liao1972multiple, ahlswede1974capacity} in a one-shot setting is described as follows.

There are two encoders, one decoder, and two independent messages $M_j\sim \mathrm{Unif}[\mathsf{L}_j]$ for $j=1,2$.
Encoder $j$ observes $M_j$ and creates $X_j$ for $j=1,2$. 
The decoder observes the output $Y$ of the channel $P_{Y |X_1, X_2}$ and produces the reconstructions $(\hat{M}_1, \hat{M}_2)$ of the messages. 
The error probability is defined as $P_e := \mathbf{P} \{(M_1, M_2)\neq (\hat{M}_1, \hat{M}_2) \}$. 
To consider this as an ADN, in the ideal situation, we let $Y_1 := M_1$, $Y_2 := M_2$, $Y_3 := Y$ and $X_3 := (M_1, M_2) $. 
We let $U_1 := (X_1, M_1)$ and $U_2 := (X_2, M_2)$. 
The decoding order of node $3$ is 
``$U_2, U_1$'' (i.e., decode $U_1$ (soft), and then $U_2$ (unique), and then $U_1$ (unique)). 
By Theorem~\ref{thm:det}, we have the following result. 

\begin{figure}[htpb]
    \centering
    \includegraphics[scale=0.3]{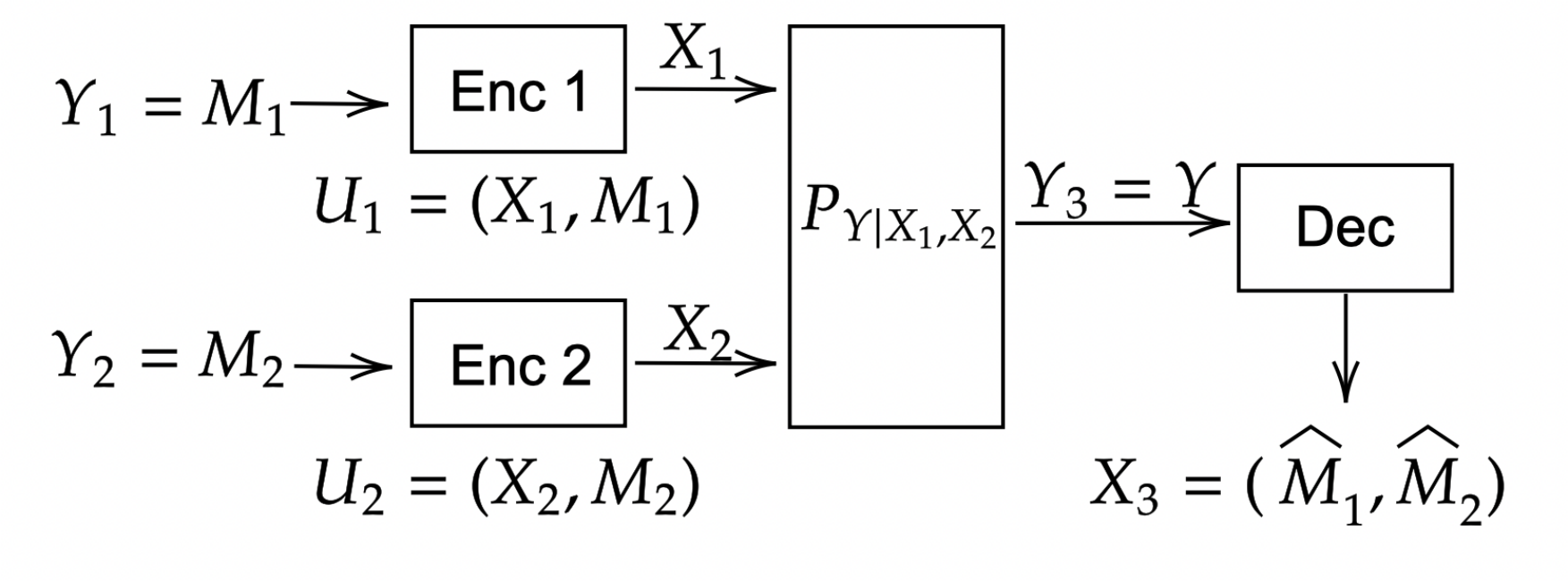}
    \caption{Multiple access channel in ADN framework.} 
    \label{fig::wz}
\end{figure}

\begin{corollary}
\label{cor::MAC_2}
    Fix $P_{X_1}, P_{X_2}$. 
    There exists a deterministic coding scheme for the multiple access channel $P_{Y |X_1, X_2}$ for messages $M_j\sim \mathrm{Unif}[1 : \mathrm{L}_j]$ for $j=1,2$ such that 
\begin{align*}
    P_e  
    \leq  \mathbf{E}\Big[  \min \Big\{ \gamma \mathsf{L}_1 \mathsf{L}_2  2^{-\iota (X_1,X_2;Y)} + \gamma \mathsf{L}_2 2^{-\iota(X_2;Y|X_1)} +  \mathsf{L}_1
    2^{-\iota(X_1;Y|X_2)} ,1\Big\}
    \Big], 
\end{align*}
    where $\gamma := \ln (\mathsf{L}_1 |\mathcal{X}_1|)+1$, $(X_1, X_2, Y) \sim  P_{X_1} P_{X_2} P_{Y|X_1, X_2}$. 
\end{corollary}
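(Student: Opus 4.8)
The plan is to realize the multiple access channel as a four-node acyclic discrete network by splitting the decoder's decoding into sequential refinement steps, and then invoke Theorem~\ref{thm:det} with the decoding order described above. Concretely, in the ideal distribution I take node $1$ with input $Y_1 = M_1$, output $X_1$, and auxiliary $U_1 = (X_1, M_1)$; node $2$ with input $Y_2 = M_2$, output $X_2$, and auxiliary $U_2 = (X_2, M_2)$; and node $3$ with input $Y_3 = Y$, output $X_3 = (M_1, M_2)$. The message pair is $M_1 \sim \mathrm{Unif}[\mathsf{L}_1]$, $M_2 \sim \mathrm{Unif}[\mathsf{L}_2]$, independent of each other, and $(X_1, X_2, Y) \sim P_{X_1} P_{X_2} P_{Y|X_1,X_2}$. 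The decoding order of node $3$ is ``$U_1, U_2, U_1$'' in the notation of the theorem --- that is, $d_3 = 3$, $d_3' = 2$ with $(a_{3,1}, a_{3,2}, a_{3,3}) = (1, 2, 1)$: first soft-decode $U_1$ (non-unique, index $3$), then uniquely decode $U_2$ (index $2$), then uniquely decode $U_1$ (index $1$). The error set is $\mathcal{E} = \{(M_1, M_2) \neq \hat M_1 \text{ or } \hat M_2\}$, and since the ideal distribution never errs we have $\mathbf{1}\{(X^N, Y^N)\in\mathcal{E}\} = 0$.

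The second step is to compute the two terms $B_{3,1}$ and $B_{3,2}$ from \eqref{eq::beta}. For $B_{3,2}$ (the unique decoding of $U_2$ with $j = 2$, $d_3 = 3$), the product over $k$ ranges over $k = 2, 3$; the $k = 3$ factor involves $\iota(U_1; \cdot) $ and, noting that $U_1 = (X_1, M_1)$ is independent of $Y_{a_{3,3}} = Y_1 = M_1$-side information appropriately, and the $k = 2$ factor gives $2^{-\iota(U_2; Y) + \iota(U_2; M_2)}$; using $\iota(U_2; M_2) = \log \mathsf{L}_2$ and that, conditioned on decoding $U_1 = (X_1, M_1)$, the relevant density reduces to $\mathsf{L}_1 \, 2^{-\iota(X_1; Y | X_2)}$ for the $U_1$ part. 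After collecting terms and using that $X_1, X_2, M_1, M_2$ are deterministic functions within $U_1, U_2$ and $M_1 \perp M_2 \perp (X_1, X_2, Y)$ appropriately, $B_{3,2}$ should collapse to $\mathsf{L}_1 \, 2^{-\iota(X_1; Y | X_2)}$ (the $\gamma$ factor here is $1$ since $j = d_3' = 2 < d_3$ contributes only $\ln|\mathcal{U}_1| + 1$ --- wait, $j = 2$, so $\gamma_{3,2} = \ln|\mathcal{U}_{a_{3,3}}| + 1 = \ln|\mathcal{U}_1| + 1 = \ln(\mathsf{L}_1|\mathcal{X}_1|) + 1 = \gamma$; I will need to track which term carries $\gamma$). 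For $B_{3,1}$ (with $j = 1$), the product runs over $k = 1, 2, 3$ and $\gamma_{3,1} = (\ln|\mathcal{U}_2| + 1)(\ln|\mathcal{U}_1| + 1)$, but the information-theoretic exponents telescope: the $k=1$ factor contributes $2^{-\iota(U_1; U_2, Y) + \iota(U_1; M_1)}$, and expanding the product of the ``$+\mathbf{1}\{k>j\}$'' terms produces a sum that, after the densities for $U_1, U_2$ are substituted and the chain rule for information density is applied, yields $\gamma \mathsf{L}_1 \mathsf{L}_2 \, 2^{-\iota(X_1, X_2; Y)} + \gamma \mathsf{L}_2 \, 2^{-\iota(X_2; Y | X_1)}$. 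Summing $B_{3,1} + B_{3,2}$ and applying the $\min\{\cdot, 1\}$ and expectation from Theorem~\ref{thm:det} gives exactly the claimed bound.

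The main obstacle is the bookkeeping in the third step: correctly expanding the product in \eqref{eq::beta} for $B_{3,1}$ when $d_i > d_i'$, keeping track of which conditional information densities appear (in particular, conditioning on the already-decoded auxiliaries $\overline U_{i,[d_i]\setminus[j..k]}$), and verifying that the ``$\mathbf{1}\{k > j\}$'' cross-terms generate precisely the two summands $\mathsf{L}_1\mathsf{L}_2 2^{-\iota(X_1,X_2;Y)}$ and $\mathsf{L}_2 2^{-\iota(X_2;Y|X_1)}$ with the right $\gamma$ prefactors --- and that no spurious third term survives. A secondary subtlety is checking the independence structure carefully so that, e.g., $\iota(U_1; \overline U'_{a_{3,1}}, Y_{a_{3,1}}) = \iota((X_1, M_1); M_1) = \log \mathsf{L}_1$ (since $X_1$ is a function of $M_1$ under the ideal coding distribution? no --- $X_1 = f_1(M_1)$ deterministically in the ideal distribution, so this needs care: actually $U_1 = (X_1, M_1)$ with $X_1 \sim P_{X_1}$ generated independently, so $\iota(U_1; M_1) = \log\mathsf{L}_1$ holds because $M_1$ is uniform and independent of $X_1$ in the ideal distribution --- I would double-check this matches the construction in the proof of Corollary~\ref{cor::one_shot_relay_eg}). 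Once these reductions are verified, the bound follows immediately from Theorem~\ref{thm:det}.
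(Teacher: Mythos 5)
Your high-level route (cast the MAC as a three-node ADN with $U_1=(X_1,M_1)$, $U_2=(X_2,M_2)$, $Y_3=Y$, $X_3=(M_1,M_2)$ and invoke Theorem~\ref{thm:det}) is exactly the paper's, but your instantiation of the decoding order is not admissible in that theorem, and this is a genuine gap rather than bookkeeping. Theorem~\ref{thm::network_achievability}/Theorem~\ref{thm:det} require $(a_{i,j})_{j\in[d_i]}$ to be a sequence of \emph{distinct} indices in $[i-1]$; your choice $d_3=3$, $(a_{3,1},a_{3,2},a_{3,3})=(1,2,1)$ repeats the index $1$, so the theorem does not apply as you invoke it. The pattern ``decode $U_1$ (soft), then $U_2$ (unique), then $U_1$ (unique)'' is not encoded by listing $U_1$ twice: it is what the theorem's decoding procedure produces automatically once you set $d_3=d'_3=2$ with $(a_{3,1},a_{3,2})=(2,1)$, i.e.\ decoding order ``$U_2,U_1$'' (for each $j\in[d'_i]$ the scheme soft-decodes $\overline{U}_{i,d_i},\ldots,\overline{U}_{i,j+1}$ and then uniquely decodes $\overline{U}_{i,j}$). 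Moreover, taking $d'_3=2<d_3=3$ would mark $\overline{U}_{3,3}=U_1$ as available for non-unique decoding only (the ``?'' variables are exactly those with index in $[d'_i+1..d_i]$), contradicting your intention of uniquely recovering $M_1$.

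Even setting admissibility aside, the terms you extract from \eqref{eq::beta} do not give the stated bound. With $d_3=3$ you would have $\gamma_{3,1}=(\ln|\mathcal{U}_{2}|+1)(\ln|\mathcal{U}_{1}|+1)=(\ln(\mathsf{L}_2|\mathcal{X}_2|)+1)(\ln(\mathsf{L}_1|\mathcal{X}_1|)+1)$, which cannot reduce to the single factor $\gamma=\ln(\mathsf{L}_1|\mathcal{X}_1|)+1$ appearing in the corollary; and the claim that $B_{3,2}$ ``collapses to $\mathsf{L}_1 2^{-\iota(X_1;Y|X_2)}$'' is not right, since for $j=2$ the uniquely decoded auxiliary is $\overline{U}_{3,2}=U_2$, so the $k=2$ factor is an $\mathsf{L}_2 2^{-\iota(X_2;\cdot)}$ term and the $k=3$ factor involves $\iota(U_1;U_1,Y)$ together with a spurious ``$+1$''. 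The computation under the valid parameterization $d_3=d'_3=2$, $(a_{3,1},a_{3,2})=(2,1)$ is
\begin{align*}
B_{3,1}&=\gamma\,\mathsf{L}_2 2^{-\iota(X_2;Y|X_1)}\big(\mathsf{L}_1 2^{-\iota(X_1;Y)}+1\big)=\gamma\mathsf{L}_1\mathsf{L}_2\, 2^{-\iota(X_1,X_2;Y)}+\gamma\mathsf{L}_2\, 2^{-\iota(X_2;Y|X_1)},\\
B_{3,2}&=\mathsf{L}_1\, 2^{-\iota(X_1;Y|X_2)},
\end{align*}
where one uses $\iota(U_j;M_j)=\log\mathsf{L}_j$ (your check that $X_j\perp M_j$ in the ideal distribution is correct), $\iota(U_2;U_1,Y)=\iota(X_2;Y|X_1)$, $\iota(U_1;Y)=\iota(X_1;Y)$, $\iota(U_1;U_2,Y)=\iota(X_1;Y|X_2)$, and the chain rule $\iota(X_1;Y)+\iota(X_2;Y|X_1)=\iota(X_1,X_2;Y)$ (valid since $X_1\perp X_2$). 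Applying Theorem~\ref{thm:det} with $\mathcal{E}=\{x_3\neq(m_1,m_2)\}$, whose indicator vanishes under the ideal distribution, then yields exactly the corollary.
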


This bound is similar to the one-shot bounds in~\cite{li2021unified,verdu2012non}. 
In the asymptotic setting, this will give the region $R_1 <I(X_1;Y|X_2) $, $ R_2 < I(X_2;Y|X_1)$, $ R_1 +R_2 < I(X_1,X_2;Y)$.

\subsection{Broadcast Channel with Private Messages}
The broadcast channel with private messages~\cite{marton1979coding} in a one-shot setting is described as follows. 

Upon observing independent messages $M_j\sim \mathrm{Unif}[\mathsf{L}_j]$ for $j=1,2$, the encoder produces $X$ and sends it through a channel $P_{Y_1,Y_2|X}$. 
Decoder $j$ observes $Y_j$ and reconstructs $\hat{M}_j$ for $j=1,2$. 
By Theorem~\ref{thm:det}, we have the following result. 

\begin{corollary}
\label{cor::Broadcast_2}
Fix any $P_{U_1, U_2}$ and function $x: \mathcal{U}_1 \times \mathcal{U}_2 \rightarrow \mathcal{X}$. 
There exists a deterministic coding scheme for the broadcast channel $P_{Y_1, Y_2 |X}$  for independent messages $M_k \sim \mathrm{Unif}[\mathsf{L}_j]$ for $j = 1,2$, with the error probability bounded by 
\begin{align*}
    P_e  \leq \mathbf{E}\Big[\min\Big\{ \mathsf{L}_1  2^{-\iota (U_1; Y_1) } + \mathsf{L}_2  2^{-\iota (U_2;Y_2) + \iota (U_1; U_2)}, 1 \Big\} \Big],
\end{align*} 
where $(U_1,U_2, X, Y_1, Y_2)\sim P_{U_1, U_2} \delta _{x(U_1,U_2)} P_{Y_1, Y_2|X}$. 
\end{corollary}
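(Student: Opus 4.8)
The plan is to realize the broadcast channel as a four-node ADN and invoke Theorem~\ref{thm:det}, in the same manner as the multiple access and Gelfand--Pinsker examples above (cf.\ Corollary~\ref{cor::MAC_2}). Because each node of the ADN carries a single auxiliary, I split the physical encoder into two nodes. Node $1$ observes $Y_1 = M_1$, outputs $X_1 = U_1$, and has auxiliary $(U_1, M_1)$ (the Marton auxiliary $U_1$ together with the message), generated by the exponential functional representation on $P_{U_1}\times\delta_{M_1}$. Node $2$ observes $Y_2 = (M_2, U_1)$ --- the internal channel $P_{Y_2|X_1,Y_1}$ copies $X_1 = U_1$ and appends a fresh $M_2\sim\mathrm{Unif}[\mathsf{L}_2]$ --- has auxiliary $(U_2, M_2)$ generated by the exponential functional representation on $P_{U_2|U_1}(\cdot|U_1)\times\delta_{M_2}$, and outputs $X_2 = x(U_1, U_2)$. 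Node $3$ (decoder $1$) observes $Y_3 = Y_1$ (the first broadcast output), decodes node $1$'s auxiliary uniquely with the trivial order ``$(U_1,M_1)$'', and outputs its $M_1$-coordinate; node $4$ (decoder $2$) observes $Y_4 = Y_2$ (the second broadcast output), decodes node $2$'s auxiliary uniquely with the order ``$(U_2,M_2)$'', and outputs its $M_2$-coordinate. One checks this is a valid ADN whose ideal joint law is $P_{U_1,U_2}\,\delta_{x(U_1,U_2)}\,P_{Y_1,Y_2|X}$ with $M_1, M_2$ additional independent uniforms. A key structural point: $U_1$ is \emph{handed to} node $2$ through the channel rather than decoded by it, so node $2$ incurs no spurious $\mathsf{L}_1$-type term, whereas $U_2$ is generated \emph{conditionally} on $U_1$, which will be responsible for the binning penalty $\iota(U_1;U_2)$.

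Next I specialize Theorem~\ref{thm:det}. Under the ideal law the two decoders output $M_1, M_2$ deterministically, so taking $\mathcal{E} = \{\hat M_1 \ne M_1 \text{ or } \hat M_2 \ne M_2\}$ makes the indicator $\mathbf{1}\{(X^N,Y^N)\in\mathcal{E}\}$ in~\eqref{eq:error_bound_det} vanish, leaving $P_e \le \mathbf{E}[\min\{B_{3,1} + B_{4,1}, 1\}]$. Since node $3$ and node $4$ each uniquely decode exactly one auxiliary and nothing else (so $d_3 = d'_3 = d_4 = d'_4 = 1$), the product in~\eqref{eq::beta} has a single factor, the term $\mathbf{1}\{k>j\}$ is absent, and the prefactors $\gamma_{3,1}, \gamma_{4,1}$ are empty products equal to $1$ --- which is precisely why the bound in the statement carries no logarithmic terms. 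Thus $B_{3,1} = 2^{-\iota((U_1,M_1);\,Y_1) + \iota((U_1,M_1);\,M_1)}$ and $B_{4,1} = 2^{-\iota((U_2,M_2);\,Y_2) + \iota((U_2,M_2);\,(M_2,U_1))}$, the second information density in each exponent being the $\iota(\overline{U}_{i,k};\overline{U}'_{a_{i,k}},Y_{a_{i,k}})$ factor, read off from what the node creating that auxiliary observed.

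The last step is to evaluate these information densities from the factorization of the ideal law. Since $M_1$ is uniform on $[\mathsf{L}_1]$ and independent of $(U_1,U_2,X,Y_1,Y_2)$, one gets $\iota((U_1,M_1);M_1) = \log\mathsf{L}_1$ and $\iota((U_1,M_1);Y_1) = \iota(U_1;Y_1)$; likewise $\iota((U_2,M_2);Y_2) = \iota(U_2;Y_2)$, while generating $U_2$ from $P_{U_2|U_1}$ gives $\iota((U_2,M_2);(M_2,U_1)) = \log\mathsf{L}_2 + \iota(U_1;U_2)$. Substituting yields $B_{3,1} = \mathsf{L}_1\, 2^{-\iota(U_1;Y_1)}$ and $B_{4,1} = \mathsf{L}_2\, 2^{-\iota(U_2;Y_2)+\iota(U_1;U_2)}$, and \eqref{eq:error_bound_det} then gives the claimed bound. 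I expect the main obstacle to be purely organizational: getting the node structure right so that $U_1$ is relayed to node $2$ without being decoded while $U_2$ is built conditionally on it, and then carefully propagating the auxiliary-plus-message pairs $(U_j, M_j)$ through the abstract $B_{i,j}$ of Theorem~\ref{thm:det} so that the $\mathsf{L}_j$'s surface correctly; there is no genuine analytic difficulty beyond this bookkeeping.
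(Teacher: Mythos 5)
Your proposal is correct and is essentially the paper's intended argument: the paper proves this corollary simply by invoking Theorem~\ref{thm:det}, and your split-encoder ADN (node 1 generating $U_1$ from $P_{U_1}\times\delta_{M_1}$, node 2 generating $U_2$ from $P_{U_2|U_1}\times\delta_{M_2}$ with $U_1$ relayed through the internal channel, and each decoder uniquely decoding a single auxiliary) is exactly the construction the framework is designed for, mirroring the paper's treatment of the MAC, Gelfand--Pinsker, partial-decode-forward and cascade examples. Your evaluation of $B_{3,1}=\mathsf{L}_1 2^{-\iota(U_1;Y_1)}$ and $B_{4,1}=\mathsf{L}_2 2^{-\iota(U_2;Y_2)+\iota(U_1;U_2)}$, including the empty $\gamma$ products explaining the absence of logarithmic factors, is accurate and yields the stated bound.
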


In the asymptotic case, this gives a corner point in Marton's region~\cite{marton1979coding}: $R_1 < I(U_1;Y_1), R_2 < I(U_2;Y_2)-I(U_1;U_2)$. 
Another corner point can be obtained by swapping the decoders.


\chapter{One-Shot Coding on Secrecy Problems with Channel Uncertainties} 
\label{chp:hiding}

\section{Overview}
\label{sce::intro}

In this chapter, we study two fundamental information theory problems in the one-shot regime, namely the information hiding problem~\cite{moulin2003information} and the compound wiretap channel~\cite{liang2009compound}. 
The former concerns active attacks during information transmission, while the latter addresses passive eavesdropping and information leakage. 
These two problems have become crucial in the era of data science, where secrecy and privacy are increasingly important due to the growing dependence on and reliance upon large amounts of communicated, analyzed, and utilized data, which inherently contain sensitive and personal information.
This chapter is partially based on~\cite{liu2024hiding}.

For the information hiding problem, \cite{moulin2003information} formulated it as a communication system from a game-theoretic perspective, where an encoder-decoder team seeks to transmit a confidential message \emph{embedded} in a host data source, while the opposing side is an attacker, modeled as a noisy channel, attempts to destroy or degrade the message. 
The information-theoretic limits of different variations of information hiding have been extensively investigated over the past two decades~\cite{somekh2003error, somekh2004capacity, cohen2002gaussian}, due to its wide range of applications, including watermarking, fingerprinting, steganography, and copyright protection. 
Existing analyses of information hiding problems borrow techniques from various fields, including wireless communication, signal processing, cryptography, and game theory.

For the compound wiretap channel, \cite{liang2009compound} modeled the problem as a generalization of Wyner's wiretap channel setting~\cite{wyner1975wire}, where the communication channel can take multiple potential states.
The objective is to ensure reliable transmission and minimize information leakage regardless of which state occurs.
This model is more general and better suited to practical scenarios where the transmitter may not have knowledge of the channel conditions or where channel characteristics change rapidly, yet communication performance must still be guaranteed.

In all existing studies on these two problems, the information-theoretic limits have been analyzed in the \emph{asymptotic} regime, assuming that the signal has a blocklength approaching infinity.  
However, this assumption does not hold in practice, as packets have bounded lengths, which can be quite short in many applications~\cite{ji2018ultra}. 
Similar to Chapter~\ref{chp:nnc}, we study the one-shot achievability results of a generalized information hiding setting and the compound wiretap channel, where the channel or source is arbitrary and used only \emph{once}, the law of large numbers does not apply and conventional typicality-based tools are inapplicable.

Most of the existing asymptotic analyses on the two problems~\cite{moulin2003information, liang2009compound} assume the decoder know the channel condition (in information hiding, the attacker's strategy), since  when the blocklength is large, one can utilize training symbols at the beginning of transmission, whose size becomes negligible compared to the blocklength. 
However, this assumption is questionable in the one-shot case (see Section~\ref{subsec::disc} for discussions).\footnote{Note that this assumption was also removed in~\cite{somekh2004capacity}, although that work assumes the side information is an independent shared key of unlimited size and is chosen as part of the coding scheme, whereas in our information hiding setting, we allow it to be correlated with the host and fixed, as in~\cite{moulin2003information}.} 
We only assume the attack channel belongs to a set (which may have infinite cardinality). 
Our goal is to provide \emph{distributionally robust} coding strategies (also see~\cite{malik2024distributionally}) for the two problems, by utilizing a classical covering argument~\cite{blackwell1959capacity} to handle the uncertainties of channels. 
To the best of the authors' knowledge, our results are novel and have not been studied in the literature.

This chapter is organized as follows.
We begin with a literature review on information hiding, watermarking, compound wiretap channels, and one-shot information theory in Section~\ref{sec::review}.
Next, we present the one-shot generalized information hiding problem in Section~\ref{sec::generalized_info_hiding} and recover existing asymptotic results in Section~\ref{sec::recovery_hiding}.
Within the same framework, we study the one-shot compound wiretap channel in Section~\ref{sec::wiretap}.

\section{Related Work}
\label{sec::review}

We review related literature on the information hiding and the compound wiretap channel in this section.

\subsection{Information Hiding}
\label{subsec::review_infohiding}

The information hiding problem has been studied since~\cite{cohen2002gaussian, moulin2003information, somekh2003error, somekh2004capacity}, due to its wide range of applications, including watermarking, fingerprinting, audio/image/video processing, copyright protection, and steganography.
The goal is to \emph{hide} a message into some host signal (by introducing a certain level of distortion), so that the message can be correctly reconstructed after suffering \emph{attacks} (which introduce another level of distortion).
This problem was modeled as a communication problem, and asymptotic information-theoretic capacity was derived in~\cite{moulin2003information}.
In general, information hiding is closely related to the Gelfand–Pinsker problem~\cite{gelfand1980coding,heegard1983capacity}, and various extensions have been studied, e.g., the case where side information is available to the encoder, decoder, and adversary~\cite{moulin2007capacity}, and the case where the decoder has rate-limited side information~\cite{steinberg2008coding}.
See~\cite{barron2003duality} for its duality with the Wyner–Ziv problem, and~\cite{keshet2008channel} for a comprehensive survey.
We discuss its applications and related settings with different objectives as follows.

\subsubsection{\textbf{Watermarking, Fingerprinting, and Steganography}} 

The setting in~\cite{moulin2003information} can be viewed as \emph{public} watermarking~\cite{somekh2004capacity}, where the host signal is available only at the encoder. In contrast, when it is also available at the decoder, \emph{private} watermarking has been studied in~\cite{somekh2003error, cox1997secure}. In the Gaussian case, public and private watermarking have the same capacity~\cite{cohen2002gaussian}, but this is not true in general. Watermarking problems consider messages containing personal identification information to be protected from attacks, but secrecy is not always required. In comparison, \emph{digital fingerprinting}~\cite{boneh1998collusion, moulin2003information} embeds fingerprints into the host data to uniquely identify users for tracing illegal data usage, which can be more challenging due to potential collusion. A provably good data embedding strategy was introduced by~\cite{chen2001quantization}. Random coding error exponents have been investigated in these problems~\cite{moulin2007capacity, merhav2000random, moulin2008universal}. Although~\cite[Sec. VII.C]{moulin2003information} indicated the applicability of information hiding to steganography, the discussion was later extended by~\cite{moulin2004new, wang2008perfectly} to the capacity of perfectly secure steganographic systems. Other steganographic code designs include using trellis codes~\cite{guan2022double} and polar codes~\cite{li2020designing}.

\subsubsection{\textbf{Host, Stegotext, and Reversibility}} 

In the conventional information hiding setting~\cite{moulin2003information}, the message is embedded into host data by producing an encoded signal (``stegotext''), with the goal of recovering the message only. Other objectives have been considered later, such as conveying the host~\cite{kim2008state} or reconstructing the stegotext~\cite{grover2015information, xu2023information}. \emph{Reversible} information embedding has also been investigated~\cite{kalker2002capacity, steinberg2006reversible, sumszyk2009information}, where the host signal needs to be decoded. However, this can incur a high cost when the host has high entropy~\cite{kalker2002capacity}, making perfect reversibility even impossible for continuous host signals~\cite{sumszyk2009information}. Nevertheless, in practice, the goal is often to enable retransmission of the stegotext, and codes for stegotext recovery have been studied~\cite{grover2015information, xu2023information}. 
For this setting, single-letter capacity-distortion tradeoffs are known only for logarithmic distortion~\cite{kim2008state} and quadratic distortion in the Gaussian case~\cite{sutivong2005channel}.

\subsection{Compound Wiretap Channels}
\label{subsec::cpd_wiretap}

Compound wiretap channels~\cite{liang2009compound} generalize the wiretap channel model by Wyner~\cite{wyner1975wire} by allowing both the legitimate channel and the eavesdropper's channel to have multiple possible states. The objective is to guarantee reliable and secure signal transmission regardless of which state occurs. This is a practical model for channel uncertainty, where the transmitter may have no knowledge of the channel (due to the dynamic nature of the wireless medium or unavoidable implementation/estimation inaccuracies), but zero performance outage is still required (e.g., for ultra-reliable communications~\cite{ji2018ultra}). 
\cite{liang2009compound} proposed achievable and converse results, with the converse bounds shown to be tight in certain cases by~\cite{bjelakovic2013secrecy}. They also studied the achievable secrecy degrees of freedom (s.d.o.f.) region for a multi-input multi-output (MIMO) model, which was later extended to the case of two confidential messages in~\cite{kobayashi2009compound}. The s.d.o.f. of compound wiretap parallel channels was also studied in~\cite{liu2008secrecy}. See~\cite{ekrem2010gaussian, ekrem2012degraded} for discussions on Gaussian MIMO compound wiretap channels.

In~\cite{liang2009compound, bjelakovic2013secrecy}, the results focused on discrete memoryless channels with a countably finite uncertainty set (i.e., the set from which the exact channel realization is drawn). This was later extended to \emph{arbitrary uncertainty sets}, including continuous alphabets, by~\cite{schaefer2015secrecy}, which is also one of our contributions. Moreover,~\cite{boche2015continuity} showed that the secrecy capacity is a continuous function of the uncertainty set.

\section{One-shot Generalized Information Hiding}
\label{sec::generalized_info_hiding}

In this section, we formulate the generalized one-shot information hiding problem, which is more general than the one-shot information hiding setting in~\cite{liu2024hiding} and can be seen as a natural generalization of~\cite{liu2024hiding}, \cite{xu2023information}, and \cite{moulin2007capacity}.  
More specifically, as an extension of our conference version~\cite{liu2024hiding}, we adopt the idea from~\cite{moulin2007capacity} that considers the side information available at both the encoder and the decoder.\footnote{In~\cite{moulin2007capacity}, it was assumed that there were three sources of side information, available at the encoder, the attacker, and the decoder, respectively. 
We model this scenario by considering  sources of side information available at the encoder and the decoder, together with the attack channel being in an unspecified set, since the decoder has no knowledge on the attacker. 
We can recover the Gelfand-Pinsker coding~\cite{gelfand1980coding, heegard1983capacity} by letting $\mathcal{A} = \{A_{Y|X}\}$ be a singleton set. } 
Moreover, similar to~\cite{xu2023information}, we require the decoder to not only reconstruct the message $M$, but also the stegotext $X$.

Our generalized information hiding problem recovers many existing settings as special cases, including the conventional information hiding problem~\cite{moulin2003information, liu2024hiding, somekh2003error}, the information embedding with stegotext reconstruction problem~\cite{xu2023information, grover2015information}, the conventional Gelfand-Pinsker coding~\cite{gelfand1980coding,heegard1983capacity}, the generalized Gelfand–Pinsker family~\cite{moulin2007capacity} and the compound channel~\cite{blackwell1959capacity, dobrushin1959optimum, wolfowitz1980simultaneous, polyanskiy2013dispersion}, and also the special cases recovered therein. 

\begin{figure*}[htpb]
    \centering
    \includegraphics[scale = 0.3]
    {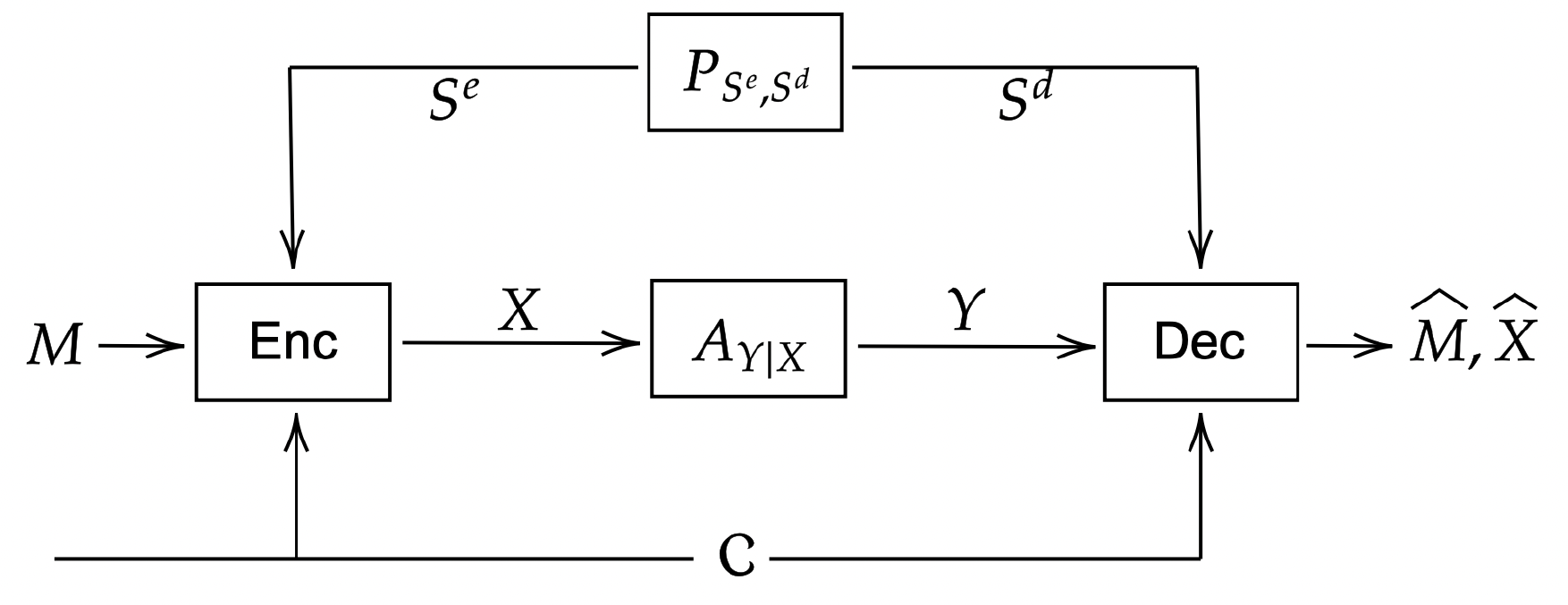}
    \caption{
    One-shot generalized information hiding setting. 
    }
    \label{fig:generalized_info_hiding}
\end{figure*}

\subsection{Problem Formulation}
\label{subsec::gene_infohid_prob_formu}

The one-shot generalized information hiding problem is shown in Figure~\ref{fig:generalized_info_hiding}. 
The goal is to hide a message $M$ into a host signal $S^e$, so that even though there is an attacker during the signal transmission which aims at removing the hidden message, the decoder can still reconstruct the original message and also the stegotext $X$, within a range defined by the distortion functions. 
We further elaborate their roles and assumptions in detail as follows.

\begin{itemize}
    \item \textbf{Encoder}: 
    The encoder observes a message $M$ that is uniformly chosen from the set $[1:\mathsf{L}]$, and the goal of encoding is to \emph{hide} $M$ into a host data source $S^e \in \mathcal{S}^e$ by introducing some tolerable level of distortions. 
    Given $S^e$ and $M$, the encoding function $f:\mathcal{S}^e \times [1:\mathsf{L}] \to \mathcal{X}$ outputs $X=f(S^e,M)$. 
    It is expected that $X$ is close to $S^e$, in the sense that $d_1(S,X)$ is small, where $d_1:\mathcal{S}^e \times \mathcal{X} \to [0,\infty)$ is a distortion measure. 
    We want $d_1(S,X) \le \mathsf{D}_1$ with high probability. This will be elaborated later.
    The encoded signal $X$ is then transmitted through the a channel $A_{Y|X}\in \mathcal{A}$.

    \item \textbf{Attacker}: The attacker is formulated as a noisy channel $P_{Y|X}$. 
    With input $X$, it performs data processing attacks on $X$ by introducing another level of distortion and produces $Y$, a corrupted version of $X$. 
    Its objective is to (partially) remove or degrade the message and/or the stegotext $X$, so that the decoder cannot have a correct reconstruction with a high fidelity. 
    Unlike the conventional asymptotic information hiding~\cite{moulin2003information}, we do not assume the attacker's strategy is known by the encoder and the decoder. 
    Instead, the attacker is free to choose from a class of channels $\mathcal{A}$ (e.g., the class of channels satisfying some distortion constraint between $X$ and $Y$, or the class of memoryless channels in case $X$ and $Y$ are sequences). 
    Both deterministic attacks or randomized attacks could be performed. 
    We assume the attacker has knowledge of the distributions (but not the values) of $M,S^e,S^d$, 
    and also the code $\mathcal{C}$ that is used by the encoder and the decoder.

    \item \textbf{Decoder}: The decoder observes $Y$, the output of the attacker, together with another source of side information $S^d$, and computes $(\hat{M}, \hat{X}) = \phi(S^d, Y)$, the distorted versions of $M$ and $X$, where $\phi: \mathcal{S}^d \times \mathcal{Y}\rightarrow [1:\mathsf{L}] \times \mathcal{X}$. 
    When the stogetext $X$ is expected to be reconstructed, we expect $d_2(X, \hat{X})$ is small, where $d_2:\mathcal{X} \times \mathcal{X} \to [0,\infty)$ is another distortion measure.
    Since we assume that the decoder is \emph{uninformed} of the attacker's strategy (different from~\cite{moulin2003information, xu2023information, grover2015information}), we intend to bound the encoder-decoder team's worst case failure probability
    \begin{equation}
    P_e := \sup_{A_{Y|X} \in \mathcal{A}}\mathbf{P}\big( d_1(S,X) > \mathsf{D}_1 \;\; \mathrm{OR} \;\; d_2(X, \hat{X}) > \mathsf{D}_2 \;\; \mathrm{OR} \;\; M \neq \hat{M}\big). \label{eq:hiding_fail}
    \end{equation}
    to be small, where we assume $(S^e,S^d,M)\sim P_{S^e,S^d} \times \mathrm{Unif}[\mathsf{L}]$, $X=f(S^e,M)$, $Y|X \sim A_{Y|X}$ and $(\hat{M}, \hat{X}) = \phi(S^d, Y)$ in the probability.\footnote{Note that \cite{moulin2003information} imposes a constraint on the expected distortion $\mathbf{E}[d_1(S,X)]$, which is reasonable in the context of \cite{moulin2003information} because the memoryless assumption and the law of large numbers ensure that the actual distortion is close to the expected distortion. 
    Since we are considering a one-shot setting where we only assume the attack channel is chosen from a set $\mathcal{A}$, if constraint need to be specified, it might be more reasonable to consider $d_1(S,X) > \mathsf{D}_1$ as a failure event and bound the probability of failure, i.e., the excess distortion probability instead, compared to expected distortion.
    }

    \item \textbf{Side information}: The side information $S_e, S_a, S_d$ can be viewed as certain \emph{common randomness} (or some resource) available at the encoder, the attacker and the decoder, respectively. 
    The joint distribution $P_{S^e, S^d}$ reveals information about the host data source $S^e$ to the decoder. 
\end{itemize}

\begin{remark}
    Our formulation can be viewed as a one-shot version of the generalized Gelfand-Pinsker problem~\cite{moulin2007capacity} (which only considered discrete case though), or a one-shot compound channel with side information at the encoder and/or the decoder. 
\end{remark}

\begin{remark}
    As noted in~\cite{moulin2003information, somekh2004capacity} (also see~\cite{somekh2003error, cohen2002gaussian}), these settings can be viewed as a \emph{game} between two parties: 
    the first party consists of the encoder (information hider) and the decoder, who are cooperatively transmitting the message $M$; 
    the second party is an attacker, who is trying to destroy or degrade the hidden message $M$ in $S^e$ so that the decoder cannot correctly decode $M$ or reconstruct a good $\hat{X}$. 
    More discussions on such game-theoretic perspective can be found in~\cite{moulin2003information}. 
\end{remark}

\subsection{One-shot Achievability Results}
\label{subsec::one_shot_gene_infohid_bd}

We then provide one-shot achievability results of the generalized information hiding problem. 

Note that in one-shot settings, as we discussed above, the techniques in~\cite{moulin2003information, somekh2004capacity} (e.g., the tools based on the \emph{typical sets}), which have resemblances to the Gelfand-Pinsker coding~\cite{gelfand1980coding,heegard1983capacity} as discussed in~\cite{moulin2003information}) are not suitable. 
Similar to Chapter~\ref{chp:nnc}, we utilize Poisson Matching Lemma has been shown to perform well in various one-shot settings~\cite{li2021unified} and was been introduced in~\ref{sec::PML} as one part of our proof technique. 
Our one-shot results apply to both discrete and continuous cases.

Briefly recall Chapter~\ref{chp:existing_techniques}. 
Fix a distribution $Q$ over $\mathcal{U}$. 
Let $(T_i)_{i=1,2,\ldots}$ be a Poisson process with rate $1$. 
Let $\mathbf{U} := (\bar{U}_i)_i$ be an independent i.i.d. sequence with distribution $Q$.
The ``marked'' Poisson process $(\bar{U}_i,T_i)_i$ supports a ``query operation'' given by the Poisson functional representation, where one can input a distribution $P$ over $\mathcal{U}$, and obtain one sample $\tilde{U}_P$ with distribution $P$. 
The Poisson functional representation is given by
\[
\mathbf{U}_P := \bar{U}_K, \;\; \text{where}\; K:= \underset{i}{\arg \min} \; T_i \cdot \Big(\frac{\mathrm{d}P}{\mathrm{d}Q}(\bar{U}_i)\Big)^{-1}.
\]

Since we let the encoder-decoder team account for all possible attack channels in a set $\mathcal{A}$, the achievability results have to suffer a penalty depending on the ``size'' of $\mathcal{A}$.  
Though the cardinality of $\mathcal{A}$ could be infinite, we can often find a finite subset $\tilde{\mathcal{A}}$ such that every attack channel $A \in \mathcal{A}$ is close enough to some $\tilde{A} \in \tilde{\mathcal{A}}$. 
We capture this notion of size by the $\epsilon$-covering number defined below (see similar covering arguments in~\cite{blackwell1959capacity,moulin2003information}).

\begin{definition}
\label{def::cov_num}
Given a set of channels $\mathcal{A}$ from $\mathcal{X}$ to $\mathcal{Y}$, its $\epsilon$-\emph{covering number} is defined as
\begin{align*}
N_{\epsilon}(\mathcal{A}) := \min \Big\{|\tilde{\mathcal{A}}|:\, \tilde{\mathcal{A}}\subseteq \mathcal{A},\,\,  \sup_{A \in \mathcal{A}} \,\, \min_{\tilde{A} \in \tilde{\mathcal{A}}} \,\,  \sup_{x \in \mathcal{X}} 
\left\Vert A_{Y|X}(\cdot|x) - \tilde{A}_{Y|X}(\cdot|x) \right\Vert_{\mathrm{TV}} \le \epsilon\Big\},
\end{align*}
where $\Vert A_{Y|X}(\cdot|x) - \tilde{A}_{Y|X}(\cdot|x) \Vert_{\mathrm{TV}} \in [0,1]$ denotes the total variation distance between $A_{Y|X}(\cdot|x)$ (the distribution of $Y$ if $X = x$, and $Y$ follows $A_{Y|X}$) and $\tilde{A}_{Y|X}(\cdot|x)$.
\end{definition}

We now present the main result, which is a one-shot achievability result with a bound on the error probability in terms of $N_{\epsilon}(\mathcal{A})$ and information density terms.

\begin{theorem}
\label{thm:info_hiding_cover}
    Fix any $P_{U,X|S^e,S^d}$
    and channel $\hat{A}_{Y|X}$. 
    For any $\epsilon \ge 0$, there exists a scheme for the generalized information hiding problem satisfying
    \begin{align*}
    & P_e  \le N_{\epsilon}(\mathcal{A})\, \sup_{A \in \mathcal{A}} \, \mathbf{E}_{Y|X \sim A}\Bigg[ 1 -  \mathbf{1}\{d_1(S^e, X)\leq \mathsf{D}_1  \} 
     \\
    & \cdot \mathbf{1}\{d_2(X, \hat{X})\leq \mathsf{D}_2  \} \cdot \Big(
    1 + \mathsf{L} \cdot
    2^{ -\hat{\iota}(U; Y, S^d) + \iota(U;S^e)}
    \Big)^{-1}
    \Bigg] + \epsilon,  
\end{align*}
where we assume $(S^e ,S^d,U,X,Y)\sim P_{S^e ,S^d} P_{U,X|S^e} A_{Y|X}$ in the expectation, and $\hat{\iota}(U; Y, S^d)$ is the information density computed by the joint distribution $P_{S,K} P_{U,X|S,K} \hat{A}_{Y|X }$ (instead of $A_{Y|X}$), assuming that $\iota(U;S^e), \hat{\iota}(U; Y, S^d)$ are almost surely finite for every $A_{Y|X} \in \mathcal{A}$.
\end{theorem}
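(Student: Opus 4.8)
The plan is to combine a Poisson-functional-representation coding scheme with a covering argument over the uncertainty set $\mathcal A$, controlling the decoding error with the Poisson matching lemma~\cite{li2021unified}. First I would fix, offline, a finite $\epsilon$-cover $\tilde{\mathcal A}=\{\tilde A^{(1)},\dots,\tilde A^{(N)}\}\subseteq\mathcal A$ with $N=N_\epsilon(\mathcal A)$ as in Definition~\ref{def::cov_num}, together with a reconstruction map $\hat x(u,y,s^d)$. The common randomness is a Poisson process on $\mathcal U$ carrying i.i.d.\ marks $W_i\sim\mathrm{Unif}[\mathsf L]$ (message labels); write $\mathbf U$ for the marked process and $\mathbf U_P$ for the Poisson functional representation applied to a distribution $P$ on $\mathcal U\times[\mathsf L]$. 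On observing $(S^e,M)=(s^e,m)$ the encoder sets $U:=\mathbf U_{P_{U|S^e}(\cdot|s^e)\times\delta_m}$, draws $X\sim P_{X|U,S^e}(\cdot|U,s^e)$, and transmits $X$. On observing $(S^d,Y)=(s^d,y)$ the decoder computes, for each $j$, the posterior $\hat P^{(j)}_{U|Y,S^d}(\cdot|y,s^d)$ that would arise if the attack were $\tilde A^{(j)}$ (using the fixed prior $P_{S^e,S^d}P_{U,X|S^e}$), forms $\bar P_{U|Y,S^d}:=\frac1N\sum_{j}\hat P^{(j)}_{U|Y,S^d}$, recovers $\hat U:=\mathbf U_{\bar P_{U|Y,S^d}(\cdot|y,s^d)\times\mathrm{Unif}[\mathsf L]}$ with its label $\hat M$, and outputs $(\hat M,\hat x(\hat U,y,s^d))$.

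\textbf{From the scheme to the bound.} By the property $\mathbf U_P\sim P$, for every fixed $A\in\mathcal A$ the scheme induces precisely the ideal law $P_{S^e,S^d}\times\mathrm{Unif}[\mathsf L]\times P_{U,X|S^e}\times A_{Y|X}$ on $(S^e,S^d,M,U,X,Y)$, and on the event $\{\hat U=U\}$ we have $\hat M=M$ and $\hat X=\hat x(U,Y,S^d)$; hence the failure probability in~\eqref{eq:hiding_fail} is at most $\mathbf E_{Y|X\sim A}\big[1-\mathbf 1\{d_1(S^e,X)\le\mathsf D_1\}\,\mathbf 1\{d_2(X,\hat x(U,Y,S^d))\le\mathsf D_2\}\,\mathbf 1\{\hat U=U\}\big]$. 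Conditioning on the ideal environment and on $U$ being the encoder's codeword, the Poisson matching lemma~\cite{li2021unified} applied to the decoder's query over the labelled process gives $\mathbf P(\hat U\ne U\mid\text{env.})\le 1-\big(1+\mathsf L\,\frac{\mathrm{d}P_{U|S^e}(\cdot|S^e)}{\mathrm{d}\bar P_{U|Y,S^d}(\cdot|Y,S^d)}(U)\big)^{-1}$, where $\mathsf L$ is the label-multiplexing penalty ($\delta_m$ versus $\mathrm{Unif}[\mathsf L]$). The covering step enters through the pointwise inequality $\bar P_{U|Y,S^d}\ge\frac1N\hat P^{(j_0)}_{U|Y,S^d}$, where $\tilde A^{(j_0)}=:\hat A$ is a cover element within $\epsilon$ of the attack channel actually in force, which turns the last ratio into $\le N\,2^{\iota(U;S^e)-\hat\iota(U;Y,S^d)}$, with $\hat\iota$ the information density under $P_{S^e,S^d}P_{U,X|S^e}\hat A_{Y|X}$. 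I would then use $1-\mathbf 1\{E\}(1+Nt)^{-1}\le N\big(1-\mathbf 1\{E\}(1+t)^{-1}\big)$ for $N\ge1$ to pull $N=N_\epsilon(\mathcal A)$ outside the expectation, take $\sup_{A\in\mathcal A}$, and add the residual $\epsilon$ coming from the $\epsilon$-cover — a maximal coupling of the attack channel's output with that of the nearby cover element, with the disagreement event bounded by $1$. Fixing one good realisation of the common randomness (or keeping it public) completes the argument.

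\textbf{Main difficulty.} The genuinely delicate step is the conditioning inside the Poisson matching argument: the decoder's query $\bar P_{U|Y,S^d}$ depends on $Y$, which in turn depends on the very codeword $U$ drawn from the same Poisson process, so one must run the ``reveal the achieving point and the remainder of the process'' argument behind the Poisson matching lemma while carrying the two multiplicative penalties ($\mathsf L$ from the message labels and $N_\epsilon(\mathcal A)$ from the mixture over the cover) and the distortion indicators simultaneously. The remaining ingredients — the continuous-alphabet bookkeeping ($P_{U|S^e}\ll Q$, Radon--Nikodym derivatives, a.s.\ finiteness of $\iota(U;S^e)$ and $\hat\iota(U;Y,S^d)$, which is assumed) and arranging $\hat x$ so that the $d_2$-constraint is met in the ideal model — are routine given the machinery of Chapter~\ref{chp:existing_techniques}.
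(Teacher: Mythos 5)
Your encoder, the Poisson-matching step, and the overall shape of the analysis are close to the paper's, but the way you deploy the $\epsilon$-cover is genuinely different from the paper, and as sketched it does not establish the stated bound. In the paper the decoder is built from the \emph{single} fixed reference channel $\hat{A}_{Y|X}$ given in the theorem: it queries the process with $\hat{P}_{U|Y,S^d}(\cdot|Y,S^d)\times P_M$ computed under $P_{S^e,S^d}P_{U,X|S^e}\hat{A}_{Y|X}$, and the conditional Poisson matching lemma directly yields the per-attack bound $\overline{P_e}$ with $\hat{\iota}$ under $\hat{A}$ evaluated on samples from the actual $A$. The cover $\tilde{\mathcal{A}}$ never enters the code construction at all; it is used only at the de-randomization stage, where for a fixed codebook $c$ one has $\sup_{A}P_e(A,c)\le\max_{\tilde{A}\in\tilde{\mathcal{A}}}P_e(\tilde{A},c)+\epsilon\le\sum_{\tilde{A}}P_e(\tilde{A},c)+\epsilon$ (TV-continuity of the induced joint law in the attack channel), so that $\mathbf{E}_{\mathcal{C}}[\sup_A P_e(A,\mathcal{C})]\le N_\epsilon(\mathcal{A})\,\overline{P_e}+\epsilon$ and a single good codebook exists. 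That union bound over the finite cover is precisely where the factor $N_\epsilon(\mathcal{A})$ in the theorem comes from: the scheme must be a deterministic pair $(f,\phi)$ that is simultaneously good for every $A\in\mathcal{A}$, and sup and expectation over the codebook do not commute without it.

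Your proposal instead spends the cover on a compound-style \emph{mixture decoder} $\bar{P}=\frac1N\sum_j\hat{P}^{(j)}_{U|Y,S^d}$, extracting the $N$ via $\bar{P}\ge\frac1N\hat{P}^{(j_0)}$. Two problems follow. First, the resulting information density is computed under the cover element nearest to the attack in force, not under the arbitrary fixed $\hat{A}_{Y|X}$ of the statement, so even granting everything else you prove a different (incomparable) bound, not the theorem. Second, and more seriously, your closing sentence "fixing one good realisation of the common randomness (or keeping it public) completes the argument" hides the actual difficulty: the problem formulation in Section~\ref{subsec::gene_infohid_prob_formu} allows no shared codebook beyond $S^d$ (and the attacker knows the code), so you cannot keep the process public; and fixing one realisation requires a codebook good for \emph{all} $A\in\mathcal{A}$ simultaneously, which forces exactly the paper's union bound over $\tilde{\mathcal{A}}$. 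Running that on top of your mixture decoder produces an $N_\epsilon(\mathcal{A})$ penalty \emph{twice} (once inside, multiplying $\mathsf{L}$, and once outside from $\sum_{\tilde{A}}P_e(\tilde{A},c)$), which is strictly weaker than the claimed bound. The fix is to drop the mixture decoder, decode with the single $\hat{A}$-posterior as in the paper, and reserve the covering argument for the de-randomization step.
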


\begin{proof}

The idea is that we design the decoder assuming that the attack channel is fixed to $A_{Y|X}$, and hope that this decoder works for every attack channel $A_{Y|X} \in \mathcal{A}$. 
Let $\mathcal{C} := ((\bar{U}_i, \bar{M}_i),T_i)_i$ where $(T_i)_i$ is a Poisson process, $\bar{U}_i \stackrel{\mathrm{iid}}{\sim} P_U$, and $\bar{M}_i \stackrel{\mathrm{iid}}{\sim} P_M$ (where $P_M = \mathrm{Unif}[\mathsf{L}]$). This will act as a random codebook shared between the encoder and the decoder (and this codebook will be fixed later).

The encoder observes the message $M\sim P_M$ and the encoder-side host signal $S^e$, by the Poisson functional representation~\cite{li2018strong, li2021unified} on the distribution $P_{U|S^e}(\cdot | S^e)\times \delta_M$ over $\mathcal{U} \times [1:\mathsf{L}]$ it produces $U = \mathbf{U}_{P_{U|S^e}(\cdot | S^e)\times \delta_M}$,\footnote{The Poisson functional representation produces a pair $(\mathbf{U},\mathbf{M})$, and $U$ is set to the first component of the pair.} and sends the generated $X | (S^e, U) \sim P_{X|S^e, U}$.
The decoder observes $Y, S^d$, outputs $\hat{M} = \mathbf{M}_{\hat{P}_{U|Y, S^d}(\cdot | Y,K) \times P_M}$ by the Poisson functional representation, and computes the reconstruction sequence $\hat{X}$ by $\hat{X} = \hat{\mathbf{x}}(\hat{M}, Y
)$, where $\hat{P}_{U|Y, S^d}$ is the conditional distribution computed by the joint distribution $P_{S^e , S^d} P_{U,X|S^e} \hat{A}_{Y|X }$. 
When the attack channel is $A_{Y|X } \in \mathcal{A}$, the error probability is bounded as follows: 
\begin{align*}
P_e(A) & :=  1 - \mathbf{P}_{Y|X  \sim A_{Y|X }}\big( d_1(S^e, X)\leq \mathsf{D}_1  \;\;  \mathrm{ AND } \;\; M= \hat{M}  \big)\\
& = \mathbf{E}
\bigg[ 1 - \mathbf{1}\{d_1(S^e, X)\leq \mathsf{D}_1  \} \\
&  \qquad  \qquad  \qquad  \qquad \cdot \mathbf{1}\{d_2(X, \hat{X})\leq \mathsf{D}_2  \}\cdot \mathbf{1}\{M = \hat{M} \} 
\bigg] \\
& 
 \leq \mathbf{E}
\bigg[ 1 - \mathbf{1}\{d_1(S^e, X)\leq \mathsf{D}_1  \} \cdot \mathbf{1}\{d_2(X, \hat{X})\leq \mathsf{D}_2  \}  \\
&  \qquad  \qquad   \qquad  \qquad  \cdot \mathbf{P}\big(M = \hat{M} |  M,S^e,S^d,U,Y \big) \bigg]  \\
& 
 \leq\mathbf{E}
\bigg[ 1 - \mathbf{1}\{d_1(S^e, X)\leq \mathsf{D}_1  \}  \cdot \mathbf{1}\{d_2(X, \hat{X})\leq \mathsf{D}_2  \} \\
&  \cdot \mathbf{P}\big((U,M) = (\mathbf{U}, \mathbf{M})_{\hat{P}_{U|Y, S^d}(\cdot|Y, S^d)\times P_M} | M,S^e,S^d,U,Y \big) \bigg]    \\
& 
 \stackrel{(a)}{\leq} \mathbf{E}
\bigg[ 1 - \mathbf{1}\{d_1(S^e,X)\leq \mathsf{D}_1  \}  \cdot \mathbf{1}\{d_2(X, \hat{X})\leq \mathsf{D}_2  \} \\
&   \qquad  \qquad  \cdot
\Big(
1 +  \frac{\mathrm{d} P_{U|S^e} (\cdot | S^e) \times \delta_M}{\mathrm{d}  \hat{P}_{U|Y, S^d}(\cdot | Y, S^d)\times P_M }
(U, M)
\Big)^{-1} \bigg]  \\
&
=
\mathbf{E}\Bigg[
1 -  \mathbf{1}\{d_1(S^e,X)\leq \mathsf{D}_1  \} \cdot \mathbf{1}\{d_2(X, \hat{X})\leq \mathsf{D}_2  \}  \\
& \qquad  \qquad  \cdot \Big(
1 + \mathsf{L} \cdot
\frac{ P_{U|S^e} (\cdot | S^e)}{  \hat{P}_{U|Y, S^d}(\cdot | Y, S^d) }
(U)
\Big)^{-1}
\Bigg] \\
&
= \mathbf{E}\bigg[
1 -  \mathbf{1}\{d_1(S^e, X)\!\leq\! \mathsf{D}_1  \}  \cdot \mathbf{1}\{d_2(X, \hat{X})\leq \mathsf{D}_2  \} \\
& \qquad  \qquad \cdot  \Big(
1 + \mathsf{L} \cdot 
2^{ -\hat{\iota}(U; Y, S^d) + \iota(U;S^e)}
\Big)^{\! -1}
\bigg]   \\
&
\le  \sup_{A_{Y|X} \in \mathcal{A}}\mathbf{E}_{Y|X  \sim A_{Y|X}}\bigg[
1 - \mathbf{1}\{d_1(S^e,X)\leq \mathsf{D}_1  \}  \\
& \qquad   \cdot \mathbf{1}\{d_2(X, \hat{X})\leq \mathsf{D}_2  \} \cdot \Big(
1 + \mathsf{L} \cdot
2^{ -\hat{\iota}(U; Y, S^d) + \iota(U;S^e)}
\Big)^{-1}
\bigg] 
\end{align*}
where $(a)$ is by the Poisson matching lemma.\footnote{The Poisson matching lemma is applied on the conditional distributions given $M,S^e ,S^d,U,Y$. Also see the conditional Poisson matching lemma~\cite{li2021unified}.} 
If we allow the encoder and the decoder to share unlimited additional common randomness, we can assume the codebook $\mathcal{C} = ((\bar{U}_i, \bar{M}_i),T_i)_i$ is actually shared, and conclude that $P_e = \sup_{A \in \mathcal{A}}P_e(A) \le \overline{P_e}$. Nevertheless, the only actual common randomness between the encoder and the decoder is $K$, which we cannot control. Therefore, we have to fix the codebook.

Let $P_e(A,c)$ be the probability of error when the attack channel is $A$ and the codebook is $\mathcal{C}=c$. We have $P_e(A) = \mathbf{E}_{\mathcal{C}}[P_e(A,\mathcal{C})]$. 
Let $\tilde{\mathcal{A}}\subseteq \mathcal{A}$ attain the minimum in $N_{\epsilon}(\mathcal{A})$.

Consider any $A \in \mathcal{A}$, and let $\tilde{A} \in \tilde{\mathcal{A}}$ satisfy 
\begin{equation*}
    \sup_{x \in \mathcal{X}}
    \left \Vert A_{Y|X} (\cdot|x) - \tilde{A}_{Y|X}(\cdot|x) 
    \right \Vert_{\mathrm{TV}} \le \epsilon.
\end{equation*} 
The total variation distance between the joint distribution of $M,S,K,U,X,Y$ under the attack channel $A$ conditional on $\mathcal{C}=c$ and the joint distribution under the attack channel $\tilde{A}$ conditional on $\mathcal{C}=c$ is also bounded by $\epsilon$. Hence $|P_e(A,c) - P_e(\tilde{A},c)| \le \epsilon$ and
\begin{align*}
P_e(A,c) &\le P_e(\tilde{A},c)+\epsilon \\
&\le \sum_{\tilde{A}\in \tilde{\mathcal{A}}}P_e(\tilde{A},c)+\epsilon.
\end{align*}  
Therefore,
\begin{align*}
\mathbf{E}_{\mathcal{C}}\Big[\sup_{A\in \mathcal{A}} P_e(A,\mathcal{C})\Big] & \le \mathbf{E}_{\mathcal{C}}\Big[\sum_{\tilde{A}\in \tilde{\mathcal{A}}}P_e(\tilde{A},\mathcal{C})+\epsilon\Big]  \\
& = \sum_{\tilde{A}\in \tilde{\mathcal{A}}} P_e(\tilde{A})+\epsilon \\
& \le |\tilde{\mathcal{A}}| \cdot \overline{P_e} +\epsilon.
\end{align*} 
The proof is completed by the existence of a codebook $c$ such that \begin{equation*}
    \sup_{A\in \mathcal{A}} P_e(A,c) \le |\tilde{\mathcal{A}}| \cdot \overline{P_e} +\epsilon.
\end{equation*}
\end{proof}

\begin{remark}
It is straightforward to convert this to a finite blocklength result where $n$ is a fixed number using the Berry-Esseen
theorem~\cite{berry1941accuracy, esseen1942liapunov}. 
\end{remark}

\begin{remark}
In Theorem~\ref{thm:info_hiding_cover}, we use a penalty term $N_{\epsilon}(\mathcal{A})$ to measure the effect of the ``size'' of $\mathcal{A}$, which introduces a degradation in the error probability. The choice of $\epsilon$ can be viewed as a way to balance the two terms in Theorem~\ref{thm:info_hiding_cover}: increasing $\epsilon$ will result in a larger $\epsilon$ but a smaller $N_{\epsilon}(\mathcal{A})$ (see also Proposition~\ref{prop:covering_bound}). 
Although directly investigating $N_{\epsilon}(\mathcal{A})$ following Definition~\ref{def::cov_num} may not be straightforward, it is possible to optimize the one-shot bound in Theorem~\ref{thm:info_hiding_cover} with respect to $\epsilon$ and the bound on $N_{\epsilon}(\mathcal{A})$ in Proposition~\ref{prop:covering_bound}. We leave more detailed analysis of these manipulations and potential second-order results as future work. For now, we only require that our one-shot bound suffices to recover the asymptotic hiding capacity~\cite{moulin2003information} when applied to discrete memoryless channels, as demonstrated in Section~\ref{sec::recovery_hiding} where we take $\epsilon = 1/n$ in the asymptotic analysis.
\end{remark}

\begin{remark}
    Note that when $S^d=S^e=\emptyset$, $d_1(s,x)=0$, and $\mathcal{A}=\{A_{Y|X}\}$ is a singleton set, taking $\hat{A}_{Y|X}=A_{Y|X}$, Theorem~\ref{thm:info_hiding_cover} reduces to the one-shot Gelfand-Pinsker coding result in~\cite{li2021unified}.
\end{remark}

\subsection{Discussions} 
\label{subsec::disc}

In~\cite{moulin2003information}, it is assumed that the attack channel must be memoryless, and hence the decoder can obtain full knowledge about the attack channel, justified by the large blocklength of signals. In this paper, similar to~\cite{somekh2004capacity, moulin2007capacity} (which focus on different targets or are under settings different to us), we drop this assumption, and consider a one-shot setting where the set of possible attack channels $\mathcal{A}$ can be \emph{any} set of channels. 
Also,  we do not assume that the decoder knows the attack channel, which is unrealistic in the one-shot setting where the attacker can be arbitrary.
In~\cite{somekh2004capacity} (which is a specialized information hiding setting that is similar to Section~\ref{sec::recovery_hiding}) the memoryless assumption is also dropped, and an asymptotic hiding capacity expressed as the limit of a sequence of single-letter expressions has been derived using constant composition codes. The key difference between \cite{somekh2004capacity} and our setting in Section~\ref{sec::recovery_hiding} (and also \cite{moulin2003information}) is that the side information in \cite{somekh2004capacity} is a shared key of unlimited size independent of $M,S^e$ that can be chosen as a part of the coding scheme, whereas in our paper and \cite{moulin2003information} the side information is given and may be correlated with the host signal (where the dependence is from the joint distribution), and cannot be changed. 
In some watermarking problems~\cite{cox1997secure, hartung1999multimedia} certain components can be further constrained, e.g., there may exist a mapping from the message $M$ to a codeword $V(M)$ which is independent of the host, and then composite data are obtained by a mapping from $V(M)$ and the side information.

The information hiding can be regarded as a variant of Gelfand-Pinsker coding for channels with side information at the encoder~\cite{gelfand1980coding,heegard1983capacity}, where the channel is fixed and not chosen by the attacker, and there is no shared side information between the encoder and the decoder. 
Since the encoder and the decoder have to account for all possible attack channels, this can be regarded as a combination of Gelfand-Pinsker coding and compound channel~\cite{blackwell1959capacity, dobrushin1959optimum, wolfowitz1980simultaneous}. 
The analyses in~\cite{moulin2003information, somekh2004capacity} utilize techniques such as random binning, joint typicality decoding and constant composition codes, which are also commonly utilized in the asymptotic analyses of Gelfand-Pinsker coding~\cite{gelfand1980coding, scarlett2015dispersions}.
These techniques may not be suitable for our one-shot setting. 
Strong typicality and constant composition codes are inapplicable when the blocklength is $1$. While random binning can be applied to one-shot Gelfand-Pinsker coding~\cite{verdu2012non,yassaee2013technique,watanabe2015nonasymp}, it produces weaker results compared to the Poisson matching lemma~\cite{li2021unified}. 
To obtain tight one-shot bounds for information hiding, we utilize the Poisson matching lemma instead. 
The main tool used to prove the coding theorems of generalized Gelfand-Pinsker problems~\cite{moulin2007capacity} is also the method of types~\cite{csiszar1998method}, which does not work in the one-shot analysis in general.

\section{Recovery of the Asymptotic Information Hiding}
\label{sec::recovery_hiding}

In this section, we discuss a special case of our generalized information hiding setting, which is the information hiding setting that was investigated in~\cite{moulin2003information}. 
We show that one-shot achievability results readily recover their asymptotic results on this setting, when we apply our results on discrete and memoryless channels. 

We first provide a simple bound on the $\epsilon$-covering number in the case that $X$ and $Y$ are discrete and finite.

\begin{proposition}
\label{prop:covering_bound}

If $\mathcal{X}$ and $\mathcal{Y}$ are finite, then 
\[
N_{\epsilon}(\mathcal{A})\le \Big(\frac{1}{2\epsilon} + \frac{|\mathcal{Y}|+1}{2}\Big)^{|\mathcal{X}|\cdot|\mathcal{Y}|}.
\]
\end{proposition}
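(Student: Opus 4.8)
The plan is to build an explicit covering set and count it. The starting point is that the set of all channels from $\mathcal{X}$ to $\mathcal{Y}$ is the product $\prod_{x\in\mathcal{X}}\Delta(\mathcal{Y})$ of $|\mathcal{X}|$ copies of the probability simplex $\Delta(\mathcal{Y})$ on the $|\mathcal{Y}|$ outcomes, and the distance in Definition~\ref{def::cov_num} is exactly the supremum over the coordinates $x$ of the per-coordinate total variation distances. So if, for a parameter $\eta>0$, one has a finite set $G_0\subseteq\Delta(\mathcal{Y})$ such that every distribution on $\mathcal{Y}$ is within total variation distance $\eta$ of some element of $G_0$, then $G:=\prod_{x\in\mathcal{X}}G_0$ is a set of channels of size $|G_0|^{|\mathcal{X}|}$ with every channel within distance $\eta$ (in the sup metric) of a member of $G$.

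Second, I would handle the requirement that the covering set $\tilde{\mathcal{A}}$ must lie \emph{inside} $\mathcal{A}$. Let $\tilde{\mathcal{A}}$ be a maximal subset of $\mathcal{A}$ whose distinct elements are pairwise at distance $>\epsilon$ (existence by Zorn's lemma). Maximality forces every $A\in\mathcal{A}$ to be within $\epsilon$ of some element of $\tilde{\mathcal{A}}$, so $N_{\epsilon}(\mathcal{A})\le|\tilde{\mathcal{A}}|$. To bound $|\tilde{\mathcal{A}}|$, take the set $G$ above with $\eta\le\epsilon/2$ and send each $\tilde A\in\tilde{\mathcal{A}}$ to a nearest element of $G$; if two distinct elements landed on the same $g\in G$ they would be at distance $\le 2\eta\le\epsilon$, contradicting the separation. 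Hence this map is injective and $N_{\epsilon}(\mathcal{A})\le|G|=|G_0|^{|\mathcal{X}|}$, reducing everything to bounding the cardinality of an $(\epsilon/2)$-net of a single simplex.

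Third, I would construct $G_0$ by a grid adapted to the sum-to-one constraint. Write $m:=|\mathcal{Y}|$ and represent a distribution by its first $m-1$ coordinates (equivalently, by its $m-1$ cumulative sums), round each to a multiple of $1/N$ for a suitable integer $N$, and recover the remaining coordinate by normalization. One then checks that the total variation distance to $p$ is controlled by the rounding errors, so that a suitable $N$ of order $|\mathcal{Y}|/\epsilon$ (up to the exact covering radius of such a grid, which must be estimated carefully) yields an $(\epsilon/2)$-net whose cardinality is at most the number of such grid points, a binomial coefficient of the shape $\binom{N+m-1}{m-1}$. Raising to the power $|\mathcal{X}|$ and simplifying the binomial coefficient is meant to produce the stated bound $\bigl(\tfrac{1}{2\epsilon}+\tfrac{|\mathcal{Y}|+1}{2}\bigr)^{|\mathcal{X}|\cdot|\mathcal{Y}|}$.

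The main obstacle is precisely this last step: landing on the \emph{exact} closed form rather than a cruder estimate such as $\bigl(\tfrac{|\mathcal{Y}|-1}{2\epsilon}+2\bigr)^{(|\mathcal{Y}|-1)|\mathcal{X}|}$. A naive analysis (round $|\mathcal{Y}|-1$ coordinates, each to precision $\sim\epsilon/|\mathcal{Y}|$ to be safe) overshoots the claimed constant; the economy has to come from (i) counting only the \emph{valid} grid points, which is a binomial coefficient rather than a full power $(N+1)^{|\mathcal{Y}|-1}$, so a factor $(|\mathcal{Y}|-1)!$ is saved, and (ii) passing from the exponent $|\mathcal{Y}|-1$ to $|\mathcal{Y}|$, together with the additive slack $\tfrac{|\mathcal{Y}|+1}{2}$, to absorb the remaining lower-order terms, the ceiling in the choice of $N$, and the factor two lost in the internal-net conversion. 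An alternative route for the count is the volumetric (packing) bound: the $(\epsilon/2)$-balls around the points of a maximal $\epsilon$-separated subset of $\mathcal{A}$ are disjoint and contained in the $(\epsilon/2)$-neighbourhood of the channel product, and comparing $\ell_1$-volumes on the hyperplanes $\{\sum_y v(y|x)=1\}$ gives a bound of the same leading order $(2\epsilon)^{-(|\mathcal{Y}|-1)}$ per coordinate, with the boundary inflation again requiring care.
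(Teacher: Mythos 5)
Your reduction to a counting problem is sound: taking a maximal $\epsilon$-separated subset $\tilde{\mathcal{A}}\subseteq\mathcal{A}$ (so internality is automatic and maximality gives covering) is exactly how the paper starts, and your injection into an external $(\epsilon/2)$-net is a valid, if slightly lossier, alternative. The genuine gap is that you never complete the count, and you say so yourself: the grid construction of $G_0$ with mesh $\sim\epsilon/|\mathcal{Y}|$ and the binomial-coefficient count $\binom{N+m-1}{m-1}$ demonstrably overshoots, and your closing hope that the slack $\frac{|\mathcal{Y}|+1}{2}$ and the jump from exponent $|\mathcal{Y}|-1$ to $|\mathcal{Y}|$ will "absorb" the losses is not an argument. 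Your alternative volumetric sketch is also pointed in a slightly wrong direction: you propose comparing $(|\mathcal{Y}|-1)$-dimensional volumes on the hyperplanes $\{\sum_y v(y|x)=1\}$, which leads to exponent $|\mathcal{Y}|-1$ per input symbol and messy cross-sections, not the stated form with exponent $|\mathcal{X}|\cdot|\mathcal{Y}|$.

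The missing idea, which is the whole content of the paper's proof, is to do the packing bound with \emph{full-dimensional} Lebesgue measure in $\mathbb{R}^{|\mathcal{Y}|\times|\mathcal{X}|}$, not on the simplex hyperplanes. Since $d(A,\tilde{A})=\frac{1}{2}\max_x\sum_y|A_{y,x}-\tilde{A}_{y,x}|$, the ball $\{A:\,d(A,\tilde{A})\le\epsilon/2\}$ is, column by column, an $\ell_1$-ball of radius $\epsilon$ in $\mathbb{R}^{|\mathcal{Y}|}$, with volume $(2\epsilon)^{|\mathcal{Y}|}/|\mathcal{Y}|!$ per column, hence $\big((2\epsilon)^{|\mathcal{Y}|}/|\mathcal{Y}|!\big)^{|\mathcal{X}|}$ in total; these balls around the points of $\tilde{\mathcal{A}}$ are disjoint by the $\epsilon$-separation. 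Each such ball centered at a stochastic matrix lies in the inflated region $\{A:\,\min_{x,y}A_{y,x}\ge-\epsilon,\ \max_x\sum_y A_{y,x}\le 1+\epsilon\}$, whose per-column volume is $(1+(|\mathcal{Y}|+1)\epsilon)^{|\mathcal{Y}|}/|\mathcal{Y}|!$. Dividing, the $|\mathcal{Y}|!$ factors cancel and you get exactly $\big(\frac{1+(|\mathcal{Y}|+1)\epsilon}{2\epsilon}\big)^{|\mathcal{Y}|\cdot|\mathcal{X}|}=\big(\frac{1}{2\epsilon}+\frac{|\mathcal{Y}|+1}{2}\big)^{|\mathcal{X}|\cdot|\mathcal{Y}|}$, with no rounding, no binomial estimates, and no factor-of-two loss from an internal-to-external net conversion. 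Without this step (or an equally sharp substitute), your proposal does not establish the stated inequality.
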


The proof can be found in Appendix~\ref{appdx::pf_prop_covering_bound}.

We show that we recover the information hiding capacity that was discovered by~\cite{moulin2003information}. 
We can employ similar procedure to recover either the achievable bound of information hiding with stegotext reconstruction~\cite[Theorem 1]{xu2023information} (which in turn is an extension of~\cite{grover2015information} and~\cite{sumszyk2009information}), or the similar bounds in~\cite{moulin2007capacity}. 
For the simplicity, we only show the details of recovering the hiding capacity in~\cite{moulin2003information} here.

The setting is shown in Figure~\ref{fig:info_hiding}, where there exist a host signal $S$ available to the encoder, in which the encoder hides the message, and another source of side information $K$ that is available to the encoder and the decoder. 
By letting $S_e := (S, K)$ and $S_d := K$, we now show that Theorem~\ref{thm:info_hiding_cover} recovers the asymptotic result in \cite{moulin2003information} when $S,K,X,Y$ are finite and discrete, and the attack channel must be memoryless and is subject to a distortion constraint, and hence giving a simple alternative proof to \cite{moulin2003information}. 

\begin{figure*}[htpb]
    \centering
    \includegraphics[scale = 0.28]
    {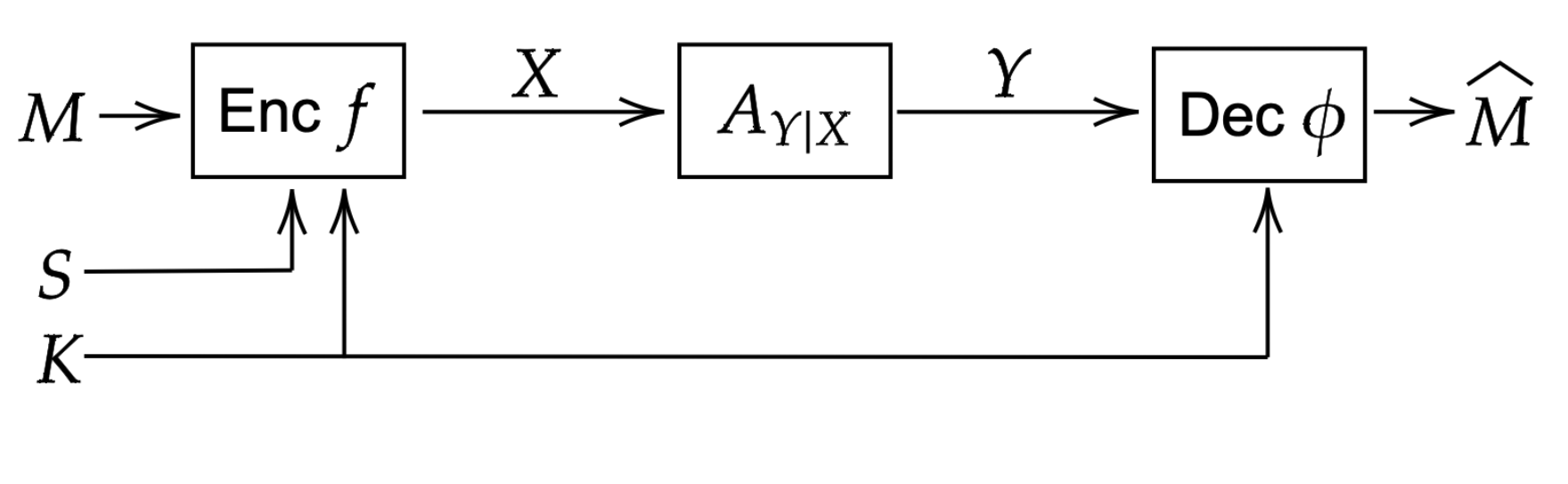}
    \caption{
    Information hiding setting~\cite{moulin2003information, somekh2004capacity}. 
    }
    \label{fig:info_hiding}
\end{figure*}

Consider sequences $S^n=(S_1,\ldots,S_n)$, $K^n$, $X^n$, $Y^n$ where $(S_i,K_i)\stackrel{\mathrm{iid}}{\sim} P_{S,K}$. Consider a channel input distribution $P_X$. The class of attack channels $\mathcal{A}_n=\mathcal{A}_n(P_X)$ (which depends on $P_X$) is taken to be 
\begin{equation*}
    \mathcal{A}_n(P_X) := \big\{A_{Y|X}^n:\, A_{Y|X} \in \mathcal{A}(P_X) \big\},
\end{equation*}
and we let 
\begin{equation*}
    \mathcal{A}(P_X) := \big\{A_{Y|X}:\,  \mathbf{E}_{(X,Y) \sim P_X A_{Y|X}}[d_3(X,Y)] \le \mathsf{D}_3\big\},
\end{equation*}
where $d_3: \mathcal{X} \times \mathcal{Y} \to [0,\infty)$ is a distortion measure, and $\mathsf{D}_3$ is the allowed distortion level. In other words, the attacker can only use memoryless channels $A_{Y|X}^n$ that satisfy the expected distortion constraint $\mathbf{E}[d_3(X,Y)] \le \mathsf{D}_3$.
The asymptotic hiding capacity given in~\cite{moulin2003information} is
\begin{equation*}
	C = \max_{P_{U,X|S,K}} \min_{A_{Y|X}:\, \mathbf{E}[d_2(X,Y)] \le \mathsf{D}_3} \big( I(U;Y|K) - I(U;S|K) \big).
\end{equation*}
where the maximum is over $P_{U,X|S,K}$ with $\mathbf{E}[d_1(S,X)]\le\mathsf{D}_{1}$.

We now show the achievability of the above asymptotic rate as a direct corollary of Theorem~\ref{thm:info_hiding_cover}. Fix $P_{U,X|S,K}$ which achieves the above maximum 
subject to $\mathbf{E}[d_1(S,X)]\le\mathsf{D}'_{1}$ where $\mathsf{D}'_1 < \mathsf{D}_1$.
Take $\hat{A}_{Y|X}$ to be the minimizer of the rate-distortion function
$\min_{A_{Y|X}:\, \mathbf{E}[d_2(X,Y)] \le \mathsf{D}_2}  I(U;Y|K)$,
and assume $(S,K,U,X,Y)\sim P_{S,K} P_{U,X|S,K} \hat{A}_{Y|X}$.
Write the information density and mutual information obtained from this distribution as $\hat{\iota}_{U;Y|K}$ and $\hat{I}(U;Y|K)$, respectively. Fix a coding rate $R < \hat{I}(U;Y|K)- I(U;S|K)$. We want to show that this rate is achievable.

Consider any attack channel $A_{Y|X}$ with $\mathbf{E}[d_3(X,Y)] \le \mathsf{D}_3$. 
Let $A_{Y|X}^{\lambda}:=(1-\lambda)\hat{A}_{Y|X}+\lambda A_{Y|X}$ for $0\le \lambda\le 1$. Write $I_{\lambda}(U;Y|K)$ for the mutual information computed assuming $Y|X \sim A_{Y|X}^{\lambda}$. It is straightforward to check that
\[
\frac{\mathrm{d}}{\mathrm{d}\lambda}I_{\lambda}(U;Y|K)\Big|_{\lambda=0}\!\!=\mathbf{E}_{Y|X\sim A_{Y|X}}[\hat{\iota}(U;Y|K)]-\hat{I}(U;Y|K).
\]
By the optimality of $\hat{A}$, the above derivative is nonnegative, and hence \begin{equation*}
    \mathbf{E}_{Y|X\sim A_{Y|X}}[\hat{\iota}(U;Y|K)] \ge \hat{I}(U;Y|K).
\end{equation*} 
Therefore, when we have i.i.d. sequences $(S^n,K^n,U^n,X^n,Y^n)\sim P^n_{S,K} P^n_{U,X|S,K} A^n_{Y|X}$ and $\mathsf{L}=\lfloor 2^{nR}\rfloor$, by the law of large numbers, 
\begin{align*}
& \mathsf{L} 2^{-\hat{\iota}(U^n;Y^n|K^n)+\iota(U^n;S^n|K^n)} \\
& \le 2^{nR-\sum_{i=1}^n (\hat{\iota}(U_i;Y_i|K_i)-\iota(U_i;S_i|K_i))} \\
& \to \; 0
\end{align*}
exponentially as $n\to \infty$ since \begin{align*}
    & \mathbf{E}[\hat{\iota}(U_i;Y_i|K_i)-\iota(U_i;S_i|K_i))] \\
    & \ge \hat{I}(U;Y|K)-I(U;S|K)\\
    & > R
\end{align*} 
We also have 
\begin{equation*}
    d_1(S^n,X^n) = \sum_{i=1}^n d_1(S_i,X_i) > n\mathsf{D}_1
\end{equation*} with probability approaching $0$ exponentially since $\mathsf{D}'_1 < \mathsf{D}_1$. These convergences are uniform over all such attack channels $A_{Y|X}$ since the random variables are discrete and finite. 

Therefore, to bound $P_e$ using Theorem~\ref{thm:info_hiding_cover}, it is left to bound the $\epsilon$-covering number $N_{\epsilon}(\mathcal{A}_n(P_X))$. Note that 
\begin{align*}
    & \left \Vert A_{Y|X}^n(\cdot | x^n) - \tilde{A}_{Y|X}^n(\cdot | x^n) 
    \right \Vert_{\mathrm{TV}} \\
    & \le \sum_{i=1}^n 
    \left \Vert A_{Y|X}(\cdot | x_i)-\tilde{A}_{Y|X}(\cdot | x_i) 
    \right \Vert_{\mathrm{TV}}, 
\end{align*}
and hence we can construct an $\epsilon$-cover of $\mathcal{A}_n(P_X)$ using an $(\epsilon/n)$-cover of $\mathcal{A}(P_X)$. Therefore, \begin{align*}
    N_{\epsilon}(\mathcal{A}_n(P_X)) 
    & \le N_{\epsilon/n}(\mathcal{A}(P_X)) \\
    & = O((n/\epsilon)^{|\mathcal{X}|\cdot|\mathcal{Y}|})
\end{align*}
by Proposition~\ref{prop:covering_bound}, which grows much slower than the exponential decrease of the expectation in Theorem~\ref{thm:info_hiding_cover}. Therefore, taking $\epsilon =1/n$, we have $P_e \to 0$ as $n\to \infty$. Taking $\mathsf{D}'_1 \to \mathsf{D}_1$ completes the proof.

\section{One-shot Compound Wiretap Channels}
\label{sec::wiretap}

In this section, we consider the compound wiretap channel~\cite{liang2009compound} in the one-shot setting. 
We utilize the Poisson matching lemma~\cite{li2021unified}, under a framework similar to the one-shot codes of the information hiding problem. 
We provide novel one-shot achievablity results for the compound wiretap channel~\cite{liang2009compound}. 
To the best of our knowledge, the one-shot results of this problem has not been discussed in literature, though finite-blocklength bounds on single (without channel uncertainties) wiretap channels can be found in~\cite{hayashi2006general, yassaee2015one, liu2016e_, yang2019wiretap}. 

Unlike the asymptotic analysis of the compound wiretap channel~\cite{liang2009compound}, our results also apply to continuous cases. 
Note that~\cite{schaefer2015secrecy} also studied the continuous case of compound wiretap channels, but the focus in~\cite{schaefer2015secrecy} was mainly on the compound Gaussian MIMO wiretap channels, and the analysis was not one-shot. 
In modern wireless communication, handling continuous cases can be essential in various applications for capturing the inherent variability and nuances of real-world signal propagation, which has a dynamic nature.

\subsection{Problem Formulation}
\label{subsec::comp_wiretap}

The one-shot compound wiretap channel setting is described as follows. 
A message $M$ is uniformly chosen from $\mathrm{Unif}[\mathsf{L}]$. 
Upon observing $M\sim \mathrm{Unif}[\mathsf{L}]$, the encoder produces $X = f(M)$, where $f:[\mathsf{L}]\rightarrow \mathcal{X}$ is a randomized encoding function. 
Then $X$ is sent through a channel $P_{Y,Z|X}$ that is unknown to the encoder and the decoder but known to the eavesdropper. 
A legitimate decoder observes $Y$ and recovers $\hat{M} = g(Y)$, where $g:\mathcal{Y}\rightarrow[\mathsf{L}]$ is a decoding function. 
The eavesdropper observes $Z\in \mathcal{Z}$. 
Justified by~\cite{liang2009compound} and~\cite[Lemma 1]{liang2008multiple}, we can assume the transition probability distribution is $P_{Y | X} P_{Z | X}$ by decomposing $P_{Y,Z|X}$ without loss of optimality.

We assume $P_{Y | X}$ is from a set $\mathcal{D}$ for \textbf{d}ecoding, while $P_{Z | X}$ is from a set $\mathcal{E}$ for \textbf{e}avesdropping. Unlike~\cite{liang2009compound}, we assume $\mathcal{D}, \mathcal{E}$ can be infinite, which captures the infinite variability of real-world signals and their propagation characteristics in practical applications. 
Even though their cardinalities can be infinite, we can often find a finite subset $\tilde{\mathcal{D}}$ (or $\tilde{\mathcal{E}}$) such that every receiver (or eavesdropper) in $\tilde{\mathcal{D}}$ (or $\tilde{\mathcal{E}}$) would be close enough to some $\tilde{D} \in \tilde{\mathcal{D}}$ (or $\tilde{E} \in \tilde{\mathcal{E}}$). 
This idea has appeared in Section~\ref{subsec::one_shot_gene_infohid_bd} and also in~\cite{schaefer2015secrecy}.

The objective is to bound the worst case error probability
\begin{equation}
    P_e := \sup_{P_{Y|X}\in \mathcal{D}} \mathbf{P}\left(
    M\neq \hat{M}
    \right),
\end{equation}
while also ensure the secrecy is guaranteed, which is measured by the total variation distance
\begin{equation}
     \gamma := \sup_{P_{Z|X}\in \mathcal{E}} \left\Vert P_{M,Z} - P_M\times P_Z\right\Vert_{\mathrm{TV}}
\end{equation}
being small.

\subsection{One-Shot Achievability Results}
\label{subsec::comp_wiretap}

We then provide the one-shot achievability results of the compound wiretap channel. 
Note the result can be viewed as a combination of the covering argument appeared in Section~\ref{sec::generalized_info_hiding} and the one-shot soft covering lemma in~\cite[Proposition 3]{li2021unified}. 
Other existing one-shot wiretap channel results~\cite{hayashi2006general, yassaee2015one, liu2016e_} might also be utilized in a similar framework as well.

\begin{theorem}
\label{thm:compound_wiretap}
    Fix any $P_{U,X}$ and any wiretap channel $\hat{P}_{Y|X}\hat{P}_{Z|X}$. 
    For any $\nu \geq 0$, any $\epsilon_1, \epsilon_2 \geq 0$ and $\mathsf{A}, \mathsf{B}\in\mathbb{N}$, there exists a code for the compound wiretap channel setting, with message $M\sim \mathrm{Unif}[\mathsf{L}]$, satisfying
    \begin{align*}
        & P_e +\nu\cdot \gamma \\ 
        & \leq N_{\epsilon_1}(\mathcal{D}) 
        \sup_{P_{Y|X}\in \mathcal{D}} 
        \mathbf{E}_{Y|X\sim P_{Y|X}} 
        \Big[
        \min\left\{
        \mathsf{L}\mathsf{A}2^{- \hat{\iota}(U;Y)}, 1
        \right\}
        \Big] + \epsilon_1 \\
        & + \nu\cdot N_{\epsilon_2}(\mathcal{E}) 
        \Bigg(
        \sup_{P_{Z|X}\in \mathcal{E}} 
        2 \cdot \mathbf{E}_{Z|X\sim P_{Z|X}} 
        \Big[ 
        \big(1 + 2^{-\hat{\iota}(U;Z)} \big)^{-\mathsf{B}} 
        \Big]   + \sqrt{ \mathsf{B}\mathsf{A}^{-1}}
        \Bigg)
        +  \nu\cdot \epsilon_2, 
    \end{align*}
    where we assume $(U, X, Y, Z)\sim P_{U, X} P_{Y|X} P_{Z|X}$ in the expectation, and $\hat{\iota}(U;Y)$, $\hat{\iota}(U;Z)$ are the information densities computed for compound channels by the joint distribution $P_{U,X} \hat{P}_{Y|X} \hat{P}_{Z|X}$, assuming that $\hat{\iota}(U;Y), \hat{\iota}(U;Z)$ are almost surely finite for every $P_{Y|X} \in \mathcal{D}$, $P_{Z|X}\in \mathcal{E}$. 
\end{theorem}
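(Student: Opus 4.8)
The plan is to follow the two-pronged strategy already used for Theorem~\ref{thm:info_hiding_cover}: a Poisson matching lemma argument for reliable decoding and the one-shot soft covering lemma of \cite[Proposition 3]{li2021unified} for secrecy, with both the decoding family $\mathcal{D}$ and the eavesdropping family $\mathcal{E}$ absorbed by a finite-net (covering number) argument. Fix $P_{U,X}$ and the nominal wiretap channel $\hat{P}_{Y|X}\hat{P}_{Z|X}$. Let $\mathcal{C} := ((\bar{U}_i,\bar{M}_i,\bar{J}_i),T_i)_i$ be a rate-$1$ marked Poisson process with $\bar{U}_i \stackrel{\mathrm{iid}}{\sim} P_U$, $\bar{M}_i \stackrel{\mathrm{iid}}{\sim} \mathrm{Unif}[\mathsf{L}]$, and $\bar{J}_i$ a local-randomness label of alphabet size $\mathsf{A}$; this is the random codebook, to be fixed at the end. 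To transmit $M=m$, the encoder draws a private local index $J$, forms $U = \mathbf{U}_{P_U \times \delta_m \times \delta_J}$ by the Poisson functional representation, and sends $X \sim P_{X|U}(\cdot\,|\,U)$. The decoder is built \emph{as if} the channel were $\hat{P}_{Y|X}$: it outputs $\hat{M}$, the message component of $(\mathbf{U},\mathbf{M},\mathbf{J})_{\hat{P}_{U|Y}(\cdot\,|\,Y)\times P_M \times P_J}$.

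For reliability, fix an arbitrary $P_{Y|X}\in\mathcal{D}$ and condition on $\mathcal{C}$. Exactly as in the proof of Theorem~\ref{thm:info_hiding_cover}, but with the decoded pair $(M,J)$ ranging over a set of size $\mathsf{L}\mathsf{A}$, the Poisson matching lemma bounds the conditional error probability by $\mathbf{E}[\min\{\mathsf{L}\mathsf{A}\,2^{-\hat{\iota}(U;Y)},1\}]$, where the information density is evaluated at the nominal channel since it is the decoder's reference distribution. To eliminate the dependence on the particular $P_{Y|X}$, pick a finite $\epsilon_1$-cover $\tilde{\mathcal{D}}\subseteq\mathcal{D}$ attaining $N_{\epsilon_1}(\mathcal{D})$; the total variation closeness of $A_{Y|X}(\cdot|x)$ to its nearest member of $\tilde{\mathcal{D}}$ transfers to the induced joint law, so $P_e(P_{Y|X},c) \le \sum_{\tilde{A}\in\tilde{\mathcal{D}}} P_e(\tilde{A},c) + \epsilon_1$, and averaging over $\mathcal{C}$ gives the first line of the stated bound.

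For secrecy, fix $P_{Z|X}\in\mathcal{E}$ and condition on $\mathcal{C}$. Conditioned on $M=m$, the eavesdropper's output $Z$ is a mixture of $P_{Z|U}$ over the (Poisson-functional-representation-weighted) codewords in bin $m$; applying the one-shot soft covering lemma \cite[Proposition 3]{li2021unified} with the nominal channel $\hat{P}_{Z|X}$ to the first $\mathsf{B}$ points of this bin shows that $\Vert P_{Z|M=m} - P_Z \Vert_{\mathrm{TV}}$ is at most $2\,\mathbf{E}[(1+2^{-\hat{\iota}(U;Z)})^{-\mathsf{B}}]$ plus a slack $\sqrt{\mathsf{B}\mathsf{A}^{-1}}$ accounting for the replacement of the full size-$\mathsf{A}$ bin by its size-$\mathsf{B}$ truncation (so $\mathsf{B}$ governs covering while $\mathsf{A}$ governs decoding). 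Averaging over $m$ and then running the $\epsilon_2$-cover argument over $\mathcal{E}$ introduces the factor $N_{\epsilon_2}(\mathcal{E})$ and the additive $\epsilon_2$, giving the second line of the bound.

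Finally, these two steps produce bounds on the codebook-averaged worst-case error $\mathbf{E}_{\mathcal{C}}[\sup_{P_{Y|X}\in\mathcal{D}} P_e(P_{Y|X},\mathcal{C})]$ and on the codebook-averaged worst-case leakage $\mathbf{E}_{\mathcal{C}}[\sup_{P_{Z|X}\in\mathcal{E}} \Vert P_{M,Z}-P_M\times P_Z \Vert_{\mathrm{TV}}]$, so by linearity $\mathbf{E}_{\mathcal{C}}[P_e(\mathcal{C}) + \nu\,\gamma(\mathcal{C})]$ is at most the right-hand side, whence there is a single deterministic codebook $c$ meeting it; fixing $c$ proves the theorem. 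The main obstacle is the secrecy step: one must set up the soft covering lemma \emph{conditionally on the message and on the nominal channel $\hat{P}_{Z|X}$} and quantify precisely the loss $\sqrt{\mathsf{B}\mathsf{A}^{-1}}$ incurred by truncating the Poisson sub-codebook, so that the \emph{same} codebook simultaneously supports reliable decoding (which needs the full size $\mathsf{A}$) and near-perfect secrecy (which needs only $\mathsf{B}$ well-covered points). This one-codebook-for-everything derandomization over the whole family $\mathcal{D}\times\mathcal{E}$ is exactly what forces the statement to be phrased for the combined quantity $P_e + \nu\,\gamma$.
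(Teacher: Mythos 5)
Your proposal is correct and follows essentially the same route as the paper's proof: a Poisson-functional-representation-based stochastic encoder with $\mathsf{A}$-fold local randomness, decoding via the conditional generalized Poisson matching lemma evaluated under the nominal channel $\hat{P}_{Y|X}$, secrecy via the one-shot soft covering lemma \cite[Proposition 3]{li2021unified} under $\hat{P}_{Z|X}$, an $\epsilon$-covering argument over both $\mathcal{D}$ and $\mathcal{E}$, and derandomization of the combined quantity $P_e+\nu\cdot\gamma$. Your realization of the local randomness through an extra codebook label $\bar{J}_i$ is an immaterial variant of the paper's choice $U=\mathbf{U}_{P_U\times\delta_M}(A)$ with $A\sim\mathrm{Unif}[\mathsf{A}]$: both yield $\mathsf{A}$ i.i.d.\ $P_U$ codewords per message and the same $\mathsf{L}\mathsf{A}$ factor in the matching-lemma bound. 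One caveat on the secrecy step: in \cite[Proposition 3]{li2021unified}, $\mathsf{B}$ is an internal free parameter of the lemma applied to the \emph{full} size-$\mathsf{A}$ mixture $\mathsf{A}^{-1}\sum_{a}\hat{P}_{Z|U}(\cdot\,|\,\mathbf{U}_{P_U\times\delta_m}(a))$, not a truncation of the bin to its first $\mathsf{B}$ codewords; if you literally replaced the size-$\mathsf{A}$ mixture by a size-$\mathsf{B}$ one, the gap between the two mixtures is not controlled by $\sqrt{\mathsf{B}\mathsf{A}^{-1}}$, so the bound must be read as the lemma's stated conclusion rather than obtained by your truncation heuristic (and the factor $2$ comes from the triangle inequality through $\hat{P}_Z$ when comparing $\hat{P}_{Z|M,\mathcal{C}}$ with $\hat{P}_{Z|\mathcal{C}}$). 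Since you quote exactly that conclusion, the argument closes as in the paper.
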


\begin{proof}

We first design our coding strategy assuming that the transmission channel is fixed to $\hat{P}_{X|Y} \in \mathcal{D}$ and the eavesdropping channel is fixed to $\hat{P}_{X|Z} \in \mathcal{E}$.

Let $\mathcal{C} := ((\bar{U}_i, \bar{M}_i),T_i)_i$ where $(T_i)_i$ is a Poisson process, $\bar{U}_i \stackrel{\mathrm{iid}}{\sim} P_U$, and $\bar{M}_i \stackrel{\mathrm{iid}}{\sim} P_M$ (where $P_M = \mathrm{Unif}[\mathsf{L}]$). 
This is a random codebook that is known to both the encoder and the decoder, and it will be fixed later.

Let $A \sim \mathrm{Unif} [ \mathsf{A}]$ be independent of $(M, \mathcal{C})$. 
The encoder observes the message $M\sim P_M$, computes $U = \mathbf{U}_{P_U \times\delta_M}(A)$, and sends the generated $X|U\sim P_{X|U}$. 
The decoder observes $Y$ and recovers $\hat{M} = \mathbf{M}_{\hat{P}_{U|Y}(\cdot |Y) \times P_M}$. 
We have $(M, A, U, X, Y, Z)\sim P_M\times P_A\times P_{U,X}\hat{P}_{Y|X}\hat{P}_{Z|X}$. 
For the fixed $\hat{P}_{Y|X} \in \mathcal{D}$, we have:
\begin{align}
    & \mathbf{P}\left\{
    M\neq \hat{M}
    \right\}\nonumber \\
    & \leq \mathbf{E}
    \bigg[\mathbf{P}\big((U,M) \neq (\mathbf{U}, \mathbf{M})_{\hat{P}_{U|Y}(\cdot|Y) \times P_M} | M, A, U, Y \big) \bigg] \nonumber \\
    & \stackrel{(a)}{\leq} \mathbf{E}
    \bigg[\min \Big\{ \mathsf{A} \frac{\mathrm{d} P_U \times \delta_M}{\mathrm{d} \hat{P}_{U|Y}(\cdot | Y) \times P_M} (U, M), 1  \Big\} \bigg] \nonumber \\
    & = \mathbf{E}
    \bigg[\min \Big\{ \mathsf{L} \mathsf{A} 2^{-\hat{\iota}_{U; Y}(U; Y)} , 1  \Big\} \bigg] \nonumber \\
    & \leq \sup_{P_{Y|X}\in \mathcal{D}} 
    \mathbf{E}_{Y|X \sim P_{Y|X}}\Big[
    \min \Big\{ \mathsf{L} \mathsf{A} 2^{-\hat{\iota}_{U; Y}(U; Y)} , 1  \Big\} 
    \Big] \label{eq::P_e_comp_wiretap} \\
    & =: \overline{P_e}\nonumber 
\end{align}
where $(a)$ is by the conditional
generalized Poisson matching lemma~\cite{li2021unified} applied on $(M, A, (U,M), Y, \hat{P}_{U|Y}\times P_M)$, and we define~\eqref{eq::P_e_comp_wiretap} to be $\overline{P_e}$.

For the secrecy measure, for the fixed $\hat{P}_{Z|X} \in \mathcal{E}$, we have: 
\begin{align}
    & \mathbf{E}\Big[
    \Big\Vert
    \hat{P}_{M,Z|\mathcal{C}}(\cdot, \cdot | \mathcal{C}) - P_{M}\times \hat{P}_{Z|\mathcal{C}}(\cdot | \mathcal{C})
    \Big\Vert_{\mathrm{TV}}
    \Big]  \nonumber \\
    & =  \mathbf{E}\Big[
    \Big\Vert
    \hat{P}_{Z|M, \mathcal{C}}(\cdot, \cdot | \mathcal{C}) -  \hat{P}_{Z|\mathcal{C}}(\cdot | \mathcal{C})
    \Big\Vert_{\mathrm{TV}}
    \Big] \nonumber \\
    & \leq \mathbf{E}\Big[
    \Big\Vert
    \hat{P}_{Z|M, \mathcal{C}}(\cdot, \cdot | \mathcal{C}) - \hat{P}_{Z}(\cdot)
    \Big\Vert_{\mathrm{TV}}
    \Big] + \mathbf{E}\Big[
    \Big\Vert
    \hat{P}_{Z|\mathcal{C}}(\cdot | \mathcal{C}) - \hat{P}_{Z}(\cdot)
    \Big\Vert_{\mathrm{TV}}
    \Big] \nonumber \\
    & \stackrel{(a)}{\leq} 
    2\cdot\mathbf{E}\Big[
    \Big\Vert
    \hat{P}_{Z|M, \mathcal{C}}(\cdot, \cdot | \mathcal{C}) - \hat{P}_{Z}(\cdot)
    \Big\Vert_{\mathrm{TV}}
    \Big]  \nonumber \\
    & = 2 \cdot \mathbf{E}\Big[
    \Big\Vert \mathsf{A}^{-1}
    \sum_{a=1}^\mathsf{A}
    \hat{P}_{Z|U}(\cdot | \mathbf{U}_{P_U\times \delta_M}(a) ) -  \hat{P}_{Z}(\cdot)
    \Big\Vert_{\mathrm{TV}}
    \Big]  \nonumber \\
    & \stackrel{(b)}{\leq}   
    2\cdot\mathbf{E} 
    \Big[ 
    \big(1 + 2^{-\hat{\iota}(U;Z)} \big)^{-\mathsf{B}} 
    \Big] 
    + \sqrt{ \mathsf{B}\mathsf{A}^{-1}}
     \nonumber \\
    & \leq 
    \sup_{P_{Z|X}\in \mathcal{E}} 
    2\cdot\mathbf{E}_{Z|X\sim P_{Z|X}} 
    \Big[ 
    \big(1 + 2^{-\hat{\iota}(U;Z)} \big)^{-\mathsf{B}} 
    \Big] 
    + \sqrt{ \mathsf{B}\mathsf{A}^{-1}}  \label{eq::gamma_comp_wiretap}\\
    & =: \overline{\gamma} \nonumber 
\end{align}
where $(a)$ is by the convexity of total variation distance, $(b)$ is by~\cite[Proposition 3]{li2021unified} since $\left\{ \mathbf{U}_{P_U\times \delta_m(a)} \right\}_{a\in[\mathsf{A}]} \stackrel{\mathrm{iid}}{\sim} P_U $ for any $m$,  and we define~\eqref{eq::gamma_comp_wiretap} to be $\overline{\gamma}$.

Let $P_e\left(P_{Y|X}, c\right)$ denote be the probability of error when the legitimate channel is $P_{Y|X}$ and the codebook is $\mathcal{C} = c$ and also let $\gamma\left(P_{Z|X}, c\right)$ denote the total variation distance $\gamma$ when the wiretap channel is $P_{Z|X}$ and the codebook is $\mathcal{C} = c$. 
Let $P_e(P_{Y|X}, P_{Z|X}, c) = P_e(P_{Y|X}, c) +\nu \cdot \gamma(P_{Z|X}, c)$. 
Let $\tilde{\mathcal{D}} \subseteq \mathcal{D}$ attain the minimum in $N_{\epsilon_1}(\mathcal{D})$ and $\tilde{\mathcal{E}} \subseteq \mathcal{E}$ attain the minimum in $N_{\epsilon_2}(\mathcal{E})$.

Consider any $P_{Y|X} \in \mathcal{D}$ and any $P_{Z|X} \in \mathcal{E}$, and let $\tilde{P}_{Y|X} \in \tilde{\mathcal{D}}, \tilde{P}_{Z|X} \in \tilde{\mathcal{E}}$ satisfy 
\begin{align*}
    & \sup_{x \in \mathcal{X}} 
    \left\Vert 
    P_{Y|X}(\cdot|x) - \tilde{P}_{Y|X}(\cdot|x) 
    \right\Vert_{\mathrm{TV}} 
    \leq \epsilon_1, \\
    & \sup_{x \in \mathcal{X}} 
    \left\Vert 
    P_{Z|X}(\cdot|x) - \tilde{P}_{Z|X}(\cdot|x) 
    \right\Vert_{\mathrm{TV}} 
    \leq \epsilon_2. 
\end{align*} 

The total variation distance between the joint distribution of $M,A,U,X,Y,Z$ under the channel $P_{Y|X}$ (or $P_{Z|X}$) conditional on $\mathcal{C} = c$ and the joint distribution under the channel $\tilde{P}_{Y|X}$ (or $\tilde{P}_{Z|X}$) conditional on $\mathcal{C} = c$ is also bounded by $\epsilon_1$ (or $\epsilon_2$). 
Therefore, we have 
$\left| P_e(P_{Y|X}, c) - P_e(\tilde{P}_{Y|X}, c) \right| \leq \epsilon_1$ and
$\left| \gamma(P_{Z|X}, c) - \gamma(\tilde{P}_{Z|X}, c) \right| \leq \epsilon_2$. 
Hence, 
\begin{align*}
& P_e\left( P_{Y|X}, P_{Z|X}, c \right) \\
& \leq P_e \left(\tilde{P}_{Y|X}, c\right) + \epsilon_1 + \nu \cdot \gamma \left(\tilde{P}_{Z|X}, c\right)  + \nu \cdot \epsilon_2 \\
& \leq \sum_{\tilde{P}_{Y|X} \in \tilde{\mathcal{D}} } P_e \left(\tilde{P}_{Y|X}, c\right)+ \epsilon_1  + \nu \cdot \sum_{\tilde{P}_{Z|X} \in \tilde{\mathcal{E}} } \gamma \left(\tilde{P}_{Z|X}, c\right) + \nu \cdot \epsilon_2. 
\end{align*}
Therefore, 
\begin{align*}
& \mathbf{E}_{\mathcal{C}} \Big[
\sup_{P_{Y|X} \in \mathcal{D}, P_{Z|X} \in \mathcal{E}} 
P_e(P_{Y|X}, P_{Z|X}, \mathcal{C}) 
\Big]\\
& \leq \mathbf{E}_{\mathcal{C}}\Bigg[
\sum_{\tilde{P}_{Y|X} \in \tilde{\mathcal{D}} } P_e \left(\tilde{P}_{Y|X}, \mathcal{C} \right) + \epsilon_1  + \nu\cdot \sum_{\tilde{P}_{Z|X} \in \tilde{\mathcal{E}} } \gamma \left(\tilde{P}_{Z|X}, \mathcal{C} \right) + \nu\cdot \epsilon_2 
\Bigg] \\
& = \sum_{\tilde{P}_{Y|X}\in \tilde{\mathcal{D}} } P_e(\tilde{P}_{Y|X})+ \epsilon_1 + \nu\cdot \sum_{\tilde{P}_{Z|X} \in \tilde{\mathcal{E}} } \gamma \left(\tilde{P}_{Z|X} \right)  + \nu\cdot \epsilon_2 \\
& \leq |\tilde{\mathcal{D}}| \cdot \overline{P_e} + \nu\cdot |\tilde{\mathcal{E}}| \cdot \overline{\gamma} + \epsilon_1 + \nu\cdot \epsilon_2.
\end{align*} 
Hence the proof is completed by the existence of a codebook $c$ such that 
\begin{align*}
    &\sup_{P_{Y|X} \in \mathcal{D}, P_{Z|X} \in \mathcal{E}} 
    P_e(P_{Y|X}, P_{Z|X}, c) \\
    & \leq |\tilde{\mathcal{D}}| \cdot \overline{P_e} +\nu\cdot  |\tilde{\mathcal{E}}| \cdot \overline{\gamma} + \epsilon_1 + \nu\cdot \epsilon_2. 
\end{align*} 
\end{proof}

\begin{remark}
    This scheme can be viewed as a combination of the covering argument that has been discussed in Section~\ref{sec::generalized_info_hiding} and the one-shot soft covering lemma~\cite[Proposition 3]{li2021unified}. 
    One can possibly provide different one-shot achievability results utilizing other existing one-shot results on single wiretap channels~\cite{hayashi2006general, yassaee2015one, liu2016e_}. 
\end{remark}

\subsection{Recovery of the Asymptotic Results}

We recover the existing asymptotic results~\cite{liang2009compound} as follows. 
In~\cite{liang2009compound}, they assume all the random variables are discrete and the channels are memoryless, and $\mathcal{D} := \{P_{Y_1|X}, \ldots, P_{Y_\mathsf{J}|X} \}$ and $\mathcal{E} := \{P_{Z_1|X}, \ldots, P_{Z_\mathsf{K}|X} \}$ for some finite $\mathsf{J}, \mathsf{K}$. 
The setting can be understood as Figure~\ref{fig:wiretap}. 

\begin{figure}[htpb]
	\centering
    \includegraphics[scale = 0.3]
    {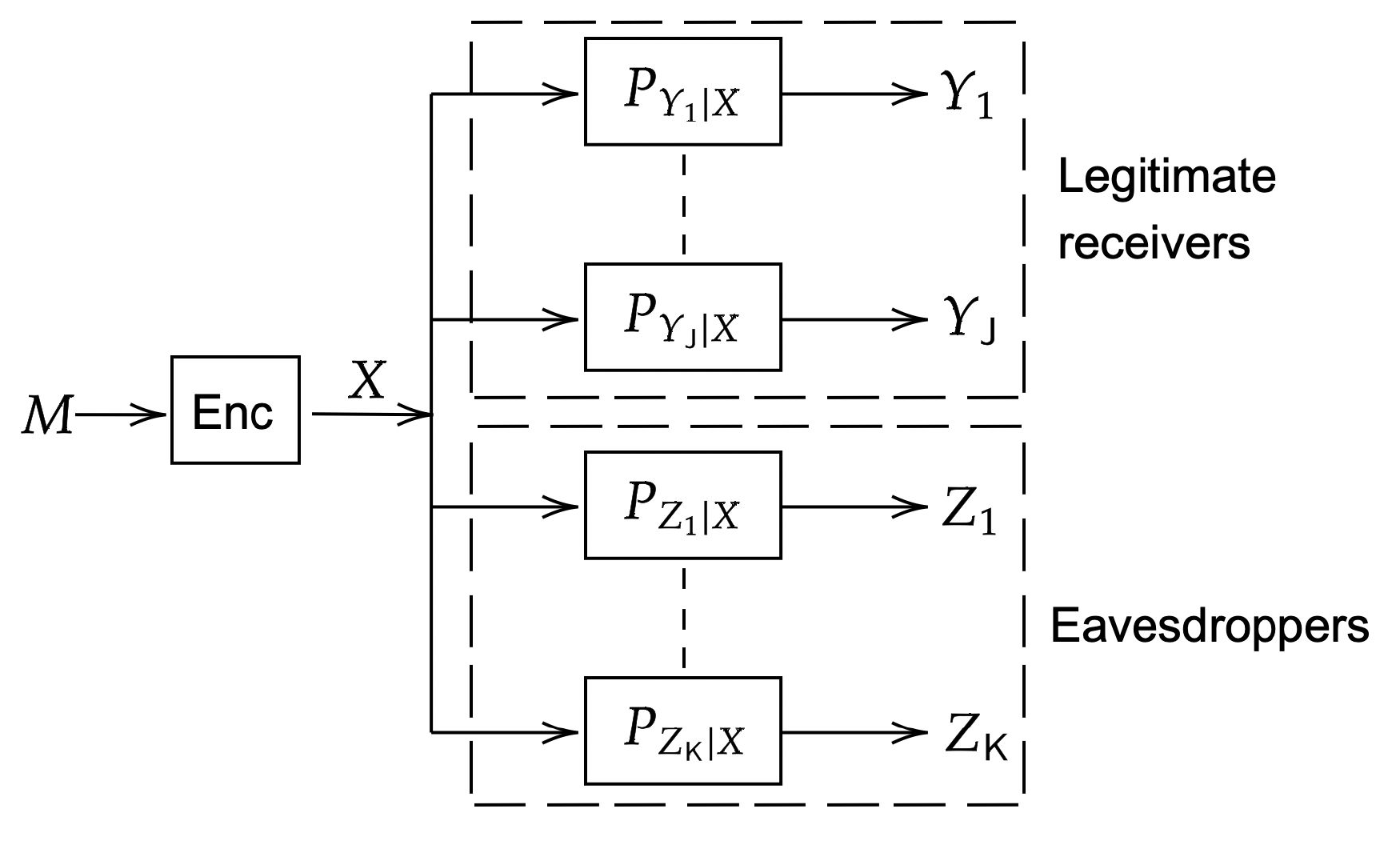}
	\caption{
	Discrete memoryless compound wiretap channel setting in~\cite{liang2009compound}. 
    }
    \label{fig:wiretap}
\end{figure} 

By~\cite{liang2009compound}, for the discrete memoryless channels, the achievable secrecy rate is 
\begin{align}
    R & = \max\left[
    \min_j I(U; Y_j) - \max_k I(U; Z_k)
    \right] \\
    & = \max \min_{j, k} \left[
     I(U; Y_j) - I(U; Z_k)
    \right],
\end{align}
where the maximum is taken over all distributions $P_{U,X}$ such that the auxiliary random variable $U$ satisfies the Markov chain $U \leftrightarrow X \leftrightarrow (Y_j, Z_k)$ for $j = 1, \ldots, \mathsf{J}$ and $k = 1, \ldots, \mathsf{K}$. 
This can be understood as the \emph{worst-case} result, i.e., one considers the worst receiver in $\mathcal{D}$ and the most-powerful eavesdropper in $\mathcal{E}$. 

To recover the above asymptotic results from our Theorem~\ref{thm:compound_wiretap}, fix $P_{U,X}$, take $\hat{P}_{Y|X} \hat{P}_{Z|X}$ which minimizes $I(U; Y) - I(U; Z)$, and assume $(M, A, U, X, Y, Z)\sim P_M\times P_A\times P_{U,X}\hat{P}_{Y|X}\hat{P}_{Z|X}$. 
Write the information density and mutual information obtained terms from this distribution as $\hat{\iota}(U;Y), \hat{\iota}(U;Z), \hat{I}(U;Y), \hat{I}(U;Z)$.  Fix a coding rate $R = \hat{I}(U;Y) - \hat{I}(U;Z) - \epsilon$, we are left to show
that this rate is achievable. 

When we have i.i.d. sequences $(A^n, U^n, X^n, Y^n, Z^n)\sim P_A^n P_{U,X}^n P_{Y|X}^n P_{Z|X}^n $, take $\mathsf{L} = \lfloor 2^{nR} \rfloor$, $\mathsf{A} = 2^{n(I(U;Z) + \epsilon / 2)}$ and $\mathsf{B} = 2^{n(I(U;Z) + \epsilon / 3)}$, by the law of large numbers, the following terms in Theorem~\ref{thm:compound_wiretap}: 
\begin{align*}
    \mathsf{L}\mathsf{A}2^{- \hat{\iota}(U;Y)} & \leq 2^{nR + n( I(U;Z) + \epsilon /2) -\sum^n_{i=1} \hat{\iota}(U_i; Y_i)}   \rightarrow 0, \\ 
    \big(1 + 2^{- \hat{\iota}(U;Z)} \big)^{-\mathsf{B}} & \stackrel{(a)}{\leq} 2^{\sum^n_{i=1} \hat{\iota}(U_i; Z_i) - n(I(U;Z) + \epsilon/3) }\rightarrow 0,  \\ 
    \sqrt{ \mathsf{B}\mathsf{A}^{-1}}  & = \sqrt{2^{n(I(U;Z) + \epsilon / 3) -  n(I(U;Z) + \epsilon/2)}}\rightarrow 0 
\end{align*}
exponentially as $n\rightarrow \infty$, where $(a)$ used  $(1+2^{-x})^{-2^y} \leq 2^{x-y}$.
It is left to bound the $\epsilon$-covering numbers $N_{\epsilon_1}(\mathcal{D}), N_{\epsilon_2}(\mathcal{E})$ in Theorem~\ref{thm:compound_wiretap}. 
Similar to Section~\ref{sec::generalized_info_hiding}, by constructing an $\epsilon$-covers of them and by Proposition~\ref{prop:covering_bound}, we know $N_{\epsilon_1}(\mathcal{D}_n) \leq N_{\epsilon_1/n}(\mathcal{D}) = O((n/\epsilon)^{|\mathcal{X}|\cdot|\mathcal{Y}|})$ and similarly $N_{\epsilon_2}(\mathcal{E}_n) \leq O((n/\epsilon)^{|\mathcal{X}|\cdot|\mathcal{Z}|})$. Hence they grow much slower than the exponential decrease of above terms in Theorem~\ref{thm:compound_wiretap}. 
Therefore we have $P_e + \nu\cdot\gamma \rightarrow 0$ as $n\rightarrow \infty$. 

\newcommand{\lp}{\left(}
\newcommand{\rp}{\right)}
\newcommand{\lb}{\left[}
\newcommand{\rb}{\right]}
\newcommand{\lbp}{\left\{}
\newcommand{\rbp}{\right\}}
\newcommand{\lba}{\left\lvert}
\newcommand{\rba}{\right\rvert}
\newcommand{\lV}{\left\lVert}
\newcommand{\rV}{\right\rVert}
\newcommand{\mv}{\middle\vert}
\newcommand{\ul}{\underline}
\newcommand{\ol}{\overline}
\newcommand{\mcal}{\mathcal}
\newcommand{\mscr}{\mathscr}
\newcommand{\what}{\widehat}
\newcommand{\wtild}{\widetilde}
\newcommand{\mb}{\mathbf}
\newcommand{\bbm}{\mathbbm}
\newcommand{\mbb}{\mathbb}
\newcommand{\msf}{\mathsf}
\newcommand{\la}{\leftarrow}
\newcommand{\ra}{\rightarrow}
\newcommand{\ua}{\uparrow}
\newcommand{\da}{\downarrow}
\newcommand{\lra}{\leftrightarrow}
\newcommand{\lgla}{\longleftarrow}
\newcommand{\lgra}{\longrightarrow}
\newcommand{\lglra}{\longleftrightarrow}
\newcommand{\lan}{\langle}
\newcommand{\ran}{\rangle}
\newcommand{\llan}{\left\langle}
\newcommand{\rran}{\right\rangle}
\newcommand{\lce}{\left\lceil}
\newcommand{\rce}{\right\rceil}
\newcommand{\lfl}{\left\lfloor}
\newcommand{\rfl}{\right\rfloor}

\newcommand{\emphb}{\textcolor{blue}}
\newcommand{\emphg}{\textcolor{Grass}}
\newcommand{\emphr}{\textcolor{red}}
\newcommand{\eqFunc}{\overset{\mathrm{f}}{=}}
\newcommand{\eqDef}{:=}
\newcommand{\diid}{\overset{\text{i.i.d.}}{\sim}}

\newcommand{\etal}{{\it et al.}}
\newcommand{\Cov}{\mathsf{Cov}}
\newcommand{\Bias}{\mathsf{Bias}}
\newcommand{\MSE}{\mathsf{MSE}}
\newcommand{\MLE}{\mathsf{MLE}}
\newcommand{\Risk}{\mathsf{R}}
\newcommand{\Ber}{\mathrm{Ber}}
\newcommand{\Binom}{\mathrm{Binom}}
\newcommand{\Unif}{\mathrm{Unif}}
\newcommand{\SNR}{\mathsf{SNR}}
\newcommand{\INR}{\mathsf{INR}}
\newcommand{\SINR}{\mathsf{SINR}}
\newcommand{\Pe}{\mathsf{P}_{\mathsf{e}}}
\newcommand{\uPe}{\mathsf{\ol{P}}_{\mathsf{e}}}
\newcommand{\lPe}{\mathsf{\ul{P}}_{\mathsf{e}}}
\newcommand{\eps}{\varepsilon}
\newcommand{\Indc}[1]{\mathbbm{1}_{\lbp #1\rbp}}
\newcommand{\Q}[1]{\mathrm{Q}\lp #1 \rp}
\newcommand{\FT}{\breve}
\newcommand{\ZT}{\check}
\newcommand{\sinc}{\mathrm{sinc}}
\newcommand{\rect}{\mathrm{rect}}
\newcommand{\argmin}{\mathop{\mathrm{argmin}}}
\newcommand{\argmax}{\mathop{\mathrm{argmax}}}
\newcommand{\dH}{\mathsf{d_H}}
\newcommand{\wei}{\mathsf{w}}
\newcommand{\herm}{\mathtt{H}}

\newcommand{\MI}[2]{{I}\left( #1\,; #2\, \right)}
\newcommand{\varMI}[1]{{\mathsf{I}}\left( #1\,\right)}
\newcommand{\CMI}[3]{{I}\left( #1\,; #2\, \middle\vert\, #3 \right)}
\newcommand{\ET}[1]{{H}\left( #1\,\right)}
\newcommand{\CET}[2]{{H}\left( #1\, \middle\vert #2\, \right)}
\newcommand{\ETR}[1]{\mcal{H}\left( #1\,\right)}
\newcommand{\tildeETR}[1]{\widetilde{\mcal{H}}\left( #1\,\right)}
\newcommand{\KLD}[2]{{D}_{\msf{KL}}\left( #1\, \middle\Vert #2 \right)}
\newcommand{\TV}[2]{\lVert #1 - #2 \rVert_{\msf{TV}}}
\newcommand{\DET}[1]{{h}\left( #1\,\right)}
\newcommand{\CDET}[2]{{h}\left( #1\, \middle\vert #2\, \right)}

\chapter{One-Shot Channel Simulation with Differential Privacy}
\label{chp:ppr}

\section{Overview}

In this chapter, we introduce a novel ``DP mechanism compressor'', 
called \emph{Poisson private representation}, designed to compress and \emph{exactly} simulate \emph{any} local randomizer while ensuring local DP, through the use of shared randomness. 
The Poisson private representation (PPR) can be viewed as a ``meta-mechanism'', in the sense that it compresses arbitrary differential privacy mechanisms.\footnote{Here ``meta-mechanism'' means a method that takes a privacy mechanism $\mathcal{A}$, and produces a new compressed mechanism $\mathcal{A}'$. While it is intuitively similar to a higher-order function in functional programming, we allow a meta-mechanism to look at the output distribution induced by $\mathcal{A}$, instead of only treating $\mathcal{A}$ as a black box.} 
This chapter is based on~\cite{liu2024universal}.

We elaborate on three main advantages of PPR, namely universality, exactness and communication efficiency.

\begin{enumerate}
    \item \textbf{Universality}: Unlike dithered-quantization-based approaches which can only simulate additive noise mechanisms, PPR can simulate any local or central DP mechanism with discrete or continuous input and output. Moreover, PPR is \emph{universal} in the sense that the user and the server only need to agree on the output space and a proposal distribution, and the user can simulate any DP mechanism with the same output space. The user can choose a suitable DP mechanism and privacy budget according to their communication bandwidth and privacy requirement, without divulging their choice to the server. 
    
    \item \textbf{Exactness}: Unlike previous DP mechanism compressors such as  \citep{feldman2021lossless, shah2022optimal, triastcyn2021dp}, PPR enables \emph{exact} simulation, ensuring that the reproduced distribution perfectly matches the original one. Exact distribution recovery offers several advantages. Firstly, the compressed sample maintains the same statistical properties as the uncompressed one. If the local randomizer is unbiased (a crucial requirement for many machine learning tasks like DP-SGD), the outcome of PPR remains unbiased. In contrast, reconstruction distributions in prior simulation-based compression methods \citep{feldman2021lossless, shah2022optimal} are often biased unless specific debiasing steps are performed (only possible for certain DP mechanisms \citep{shah2022optimal}). Secondly, when the goal is to compute the mean (e.g., for private mean or frequency estimation problems) and the local noise is ``summable'' (e.g., Gaussian noise or other infinitely divisible distributions \citep{kotz2012laplace, goryczka2015comprehensive}), exact distribution recovery of the local noise enables precise privacy accounting for the final \emph{central} DP guarantee, without relying on generic privacy amplification techniques like shuffling \citep{erlingsson2019amplification,feldman2022hiding}. PPR can compress a central DP mechanism (e.g., the Gaussian mechanism \citep{dwork2006our}) and simultaneously achieve weaker local DP (i.e., with a larger $\varepsilon_{\msf{local}}$) and stronger central DP (i.e., with a smaller $\varepsilon_{\msf{central}}$), while maintaining exactly the same privacy-utility trade-offs as the uncompressed Gaussian mechanism. 
    
    \item \textbf{Communication efficiency}:  PPR compresses the output of any DP mechanism to a size close to the theoretical lower bound. For a mechanism on the data $X$ with output $Z$, the compression size of PPR is $I(X;Z)+\log (I(X;Z)+1) + O(1)$, with only a logarithmic gap from the mutual information lower bound $I(X;Z)$.\footnote{This is similar to channel simulation~\cite{harsha2010communication} and the strong functional representation lemma~\cite{li2018strong}, though \cite{harsha2010communication,li2018strong} do not concern privacy.} The ``$O(1)$'' constant can be given explicitly in terms of a tunable parameter $\alpha > 1$ which controls the trade-off between compression size, computational time and privacy. 
An $\alpha$ close to $1$ provides a better local DP guarantee, but requires a larger compression size and longer computational time.
\end{enumerate}

The main technical tool we utilize for PPR is the Poisson functional representation \citep{li2018strong,li2021unified}, which provides precise control over the reconstructed joint distribution in channel simulation problems \citep{bennett2002entanglement, harsha2010communication, li2018strong,flamich2024greedy, goc2024channel,braverman2014public,bennet2014reverse,cuff2013distributed}. 
Channel simulation aims to achieve the minimum communication for simulating a channel (i.e., a specific conditional distribution). 
Typically, these methods rely on shared randomness between the user and server, and privacy is only preserved \emph{when the shared randomness is hidden from the adversary}. 
This setup conflicts with local DP, where the server (which requires access to shared randomness for decoding) is considered adversarial. 
To ensure local DP,
we introduce a randomized encoder based on the Poisson functional representation, which stochastically maps a private local message to its representation. 
Hence, PPR achieves order-wise trade-offs between privacy, communication, and accuracy, while preserving the original distribution of local randomizers.

\section{Related Work\label{sec:related}}

\subsection{Generic compression of local DP mechanisms}

In this work, we consider both central DP \citep{dwork2006calibrating} and local DP \citep{warner1965randomized, kasiviswanathan2011can}. Recent research has explored methods for compressing local DP randomizers when shared randomness is involved. For instance, when $\varepsilon \leq 1$, \citep{bassily2015local} demonstrated that a single bit can simulate any local DP randomizer with a small degradation of utility, as long as the output can be computed using only a subset of the users' data.
\citep{bun18hh} proposed another generic compression technique based on rejection sampling, which compresses a $\varepsilon$-DP mechanism into a $10\varepsilon$-DP mechanism. 
\citep{feldman2021lossless} proposed a distributed simulation approach using rejection sampling with shared randomness, while \citep{shah2022optimal, triastcyn2021dp} utilized importance sampling (or more specifically, minimum random coding \citep{cuff2008communication,song2016likelihood, havasi2019minimal}). However, all these methods only \emph{approximate} the original local DP mechanism, unlike our scheme, which achieves an \emph{exact} distribution recovery.

\subsection{Distributed mean estimation under DP}

Mean estimation is the canonical problems in distributed learning and analytics. They have been widely studied under privacy \citep{duchi2013local, bhowmick2018protection, duchi2019lower, asi2022optimal},  communication \citep{garg2014communication, braverman2016communication, suresh2017distributed}, or both constraints \citep{chen2020breaking, feldman2021lossless, shah2022optimal, guo2023privacy, chaudhuri2022privacy, chen2024privacy}. Among them, \citep{asi2022optimal} has demonstrated that the optimal unbiased mean estimation scheme under local differential privacy is $\msf{privUnit}$ \citep{bhowmick2018protection}. Subsequently, communication-efficient mechanisms introduced by \citep{feldman2021lossless, shah2022optimal, isik2024exact} aimed to construct communication-efficient versions of $\msf{privUnit}$, either through distributed simulation or discretization. However, these approaches only approximate the $\msf{privUnit}$ distribution, while our proposed method ensures exact distribution recovery.

\subsection{Distributed channel simulation}

Our approach relies on the notion of channel simulation \citep{bennett2002entanglement, harsha2010communication, li2018strong,flamich2024greedy, goc2024channel,braverman2014public,bennet2014reverse,cuff2013distributed}. 
One-shot channel simulation is a lossy compression task, which aims to find the minimum amount of communications over a noiseless channel that is in need to ``simulate'' some channel $P_{Z|X}$ (a specific conditional distribution). 
By \citep{harsha2010communication,li2018strong}, the average communication cost is $I(X;Z) + O(\log(I(X;Z)))$. 
In~\cite{harsha2010communication}, algorithms based on rejection sampling are proposed, and it is further generalized in~\cite{flamich2023faster} by introducing the greedy rejection coding. 
Dithered quantization~\cite{ziv1985universal} has also been used to simulate an additive noise channel in~\cite{agustsson2020universally} for neural compression. 
As also shown in~\cite{agustsson2020universally}, the time complexity of channel simulation protocols (e.g., in \cite{li2018strong}) is usually high, and~\cite{theis2022algorithms, flamich2024greedy, goc2024channel} try to improve the runtime under certain assumptions. 
Moreover, channel simulation tools have also been used in neural network compression~\cite{havasi2019minimal}, image compression via variational autoencoders~\cite{flamich2020compressing}, diffusion models with perfect realism~\cite{theis2022lossy} and differentially private federated learning~\cite{shah2022optimal}.

\section{Preliminaries}
\label{sec::preliminary}

We begin by reviewing the formal definitions of differential privacy (DP). We consider two models of DP data analysis. In the central model, introduced in \citep{dwork2006calibrating}, the data of the individuals is stored in a database $X \in \mathcal{X}$ by the server. The server is then trusted to perform data analysis whose output $Z = \mcal{A}(X) \in \mathcal{Z}$ (where $\mcal{A}$ is a randomized algorithm), which is sent to an untrusted data analyst, does not reveal too much information about any particular individual’s data. While this model requires a higher level of trust than the local model, it is possible to design significantly more accurate algorithms. We say that two databases $X,X' \in \mathcal{X}$ are neighboring if they differ in a single data point.
More generally, we can consider a symmetric neighbor relation $\mathcal{N} \subseteq \mathcal{X}^2$, and regard $X,X'$ as neighbors if $(X,X')\in \mathcal{N}$.

On the other hand, in the local model, each individual (or client) randomizes their data before sending it to the server, meaning that individuals are not required to trust the server. A local DP mechanism \citep{kasiviswanathan2011can} is a local randomizer $\mcal{A}$ that maps the local data $X\in \mathcal{X}$ to the output $Z  = \mcal{A}(X) \in \mathcal{Z}$.
Note that here $X$ is the data at one user, unlike central-DP where $X$ is the database with the data of all users.
We now review the notion of $(\varepsilon, \delta)$-central and local DP.

\begin{definition}[Differential privacy \citep{ dwork2006calibrating,kasiviswanathan2011can}]
Given a mechanism $\mcal{A}$ which induces the conditional distribution $P_{Z|X}$ of $Z = \mcal{A}(X)$, we say that it satisfies $(\varepsilon, \delta)$-DP if for any neighboring $(x, x') \in \mathcal{N}$ and $\mathcal{S} \subseteq \mathcal{Z}$, it holds that
\[
\Pr ( Z \in \mathcal{S}\, |\, X=x ) \leq  e^\varepsilon \Pr ( Z \in \mathcal{S}\, |\, X=x' ) + \delta.
\]
In particular, if $\mathcal{N} = \mathcal{X}^2$, we say that the mechanism satisfies $(\varepsilon, \delta)$-local DP \citep{kasiviswanathan2011can}.\footnote{Equivalently, local DP can be viewed as a special case of central DP with dataset size $n=1$.}
\end{definition}

When a mechanism satisfies $(\varepsilon, 0)$-central/local DP, we will refer to it simply as $\varepsilon$-central/local DP. 
$\varepsilon$-DP can be generalized to \emph{metric privacy} by considering a metric $d_{\mathcal{X}}(x,x')$ over $\mathcal{X}$~\cite{chatzikokolakis2013broadening,andres2013geo}.
\begin{definition}[$\varepsilon\cdot d_{\mathcal{X}}$-privacy \citep{chatzikokolakis2013broadening,andres2013geo}]
\label{def:metric_privacy}Given a mechanism $\mcal{A}$ with conditional distribution $P_{Z|X}$, and a metric $d_{\mathcal{X}}$ over $\mathcal{X}$, we say that $\mcal{A}$ satisfies $\varepsilon\cdot d_{\mathcal{X}}$-privacy if for any $x, x' \in \mathcal{X}$, $\mathcal{S} \subseteq \mathcal{Z}$, we have
\[
\Pr ( Z \in \mathcal{S}\, |\, X=x ) \leq  e^{\varepsilon \cdot d_{\mathcal{X}}(x,x')} \Pr ( Z \in \mathcal{S}\, |\, X=x' ).
\]
\end{definition}
This recovers the original $\varepsilon$-central DP by considering $d_{\mathcal{X}}$ to be the Hamming distance among databases, and recovers the original $\varepsilon$-local DP by considering $d_{\mathcal{X}}$ to be the discrete metric~\cite{chatzikokolakis2013broadening}.

The reason we use $X$ to refer to both the database in central DP and the user's data in local DP is that our proposed method can compress both central and local DP mechanisms in exactly the same manner. In the following sections, the mechanism $\mcal{A}$ to be compressed (often written as a conditional distribution $P_{Z|X}$) can be either a central or local DP mechanism, and the neighbor relation $\mathcal{N}$ can be any symmetric relation.
The ``encoder'' refers to the server in central DP, or the user in local DP. The ``decoder'' refers to the data analyst in central DP, or the server in local DP.

\section{Poisson Private Representation\label{sec:ppr}}

As introduced in Chapter~\ref{chp:existing_techniques}, given $(T_{i})_{i}$, a Poisson process with rate $1$ (i.e., $T_{1},T_{2}-T_{1},T_{3}-T_{2},\ldots\stackrel{iid}{\sim}\mathrm{Exp}(1)$) that is independent of $Z_{i}\stackrel{iid}{\sim}Q$ for $i=1,2,\ldots$, the Poisson functional representation~\citep{li2018strong,li2021unified} selects the point $Z := Z_K$ where 
\begin{equation*}
    K = \mathrm{argmin}_{i} T_{i} \cdot \Big(\frac{\mathrm{d}P}{\mathrm{d}Q}(Z_{i})\Big)^{-1}.
\end{equation*}

The calculation of $K$ can be viewed as a search problem over a Poisson process.  
The ``marked'' Poisson process $(Z_i,T_i)_i$ supports a ``query operation'' provided by the Poisson functional representation, where one can input a distribution $P$ over $\mathcal{Z}$, and obtain one sample with distribution $P$, i.e., the Poisson functional representation guarantees that $Z \sim P$~\citep{li2018strong}.

To simulate a DP mechanism with a conditional
distribution $P_{Z|X}$ using the Poisson functional representation, we can use $(Z_{i})_{i}$ as the
shared randomness between the encoder and the decoder.
\footnote{The original Poisson functional representation \cite{li2018strong,li2021unified}
uses the whole $(Z_{i},T_{i})_{i}$ as the shared randomness. It is
clear that $(T_{i})_{i}$ is not needed by the decoder, and hence
we can use only $(Z_{i})_{i}$ as the shared randomness.}
Upon observing $X$, the encoder generates the Poisson process $(T_{i})_{i}$,
computes $\tilde{T}_{i}$ and $K$ using $P=P_{Z|X}$, and transmits
$K$ to the decoder. The decoder simply outputs $Z_{K}$, which follows
the conditional distribution $P_{Z|X}$. The issue is that $K$ is
a function of $X$ and the shared randomness $(Z_{i},T_{i})_{i}$,
and a change of $X$ may affect $K$ in a deterministic manner, and
hence this method cannot be directly used to protect the privacy of $X$.

\textbf{Poisson private representation. }
To ensure privacy, we introduce randomness in the encoder by a generalization of the Poisson functional representation, which we
call\emph{ Poisson private representation (PPR)} with parameter $\alpha\in(1,\infty]$,
proposal distribution $Q$ and the simulated mechanism $P_{Z|X}$. Both $X$ and $Z$ can be discrete or continuous, though as a regularity condition, we require $P_{Z|X}(\cdot |X)$ to be absolutely continuous with respect to $Q$ almost surely. The PPR-compressed mechanism is given as:
\begin{enumerate}
\item We use $(Z_{i})_{i=1,2,\ldots}$, $Z_{i}\stackrel{iid}{\sim}Q$ as the shared
randomness between the encoder and the decoder. Practically, the encoder
and the decoder can share a random seed and generate $Z_{i}\stackrel{iid}{\sim}Q$
from it using a pseudorandom number generator.\footnote{We note that our analyses assume that the adversary knows both the index $K$ and the shared randomness $(Z_i)_i$, and we prove that the mechanism is still private despite the shared randomness between the encoder and the decoder, since the privacy is provided by locally randomizing $K$ in Step \ref{step:generate_K}.}
\item The encoder knows $(Z_{i})_{i},X, P_{Z|X}$ and performs the following steps:
\begin{enumerate}[leftmargin=1em,topsep=0pt]
\item Generates the Poisson process $(T_{i})_{i}$ with rate $1$.
\item Computes $\tilde{T}_{i}:=T_{i} \cdot (\frac{\mathrm{d}P}{\mathrm{d}Q}(Z_{i}))^{-1}$,
where $P:=P_{Z|X}(\cdot|X)$. Take $\tilde{T}_{i}=\infty$ if $\frac{\mathrm{d}P}{\mathrm{d}Q}(Z_{i})=0$.
\item \label{step:generate_K}Generates $K\in\mathbb{Z}_{+}$ using local randomness with 
\[
\Pr(K=k)=\frac{\tilde{T}_{k}^{-\alpha}}{\sum_{i=1}^{\infty}\tilde{T}_{i}^{-\alpha}}.
\]
\item Compress $K$ (e.g., using Elias delta coding \cite{elias1975universal})
and sends $K$.
\end{enumerate}
\item The decoder, which knows $(Z_{i})_{i},K$, outputs $Z=Z_{K}$.
\end{enumerate}

We will provide an algorithm with implementation details later.

Note that when $\alpha=\infty$, we have $K=\mathrm{argmin}_{i}\tilde{T}_{i}$,
and PPR reduces to the original Poisson functional representation
\cite{li2018strong,li2021unified}.
PPR can simulate the
privacy mechanism $P_{Z|X}$ precisely, as shown in the following
proposition. 
The proof is in Appendix~\ref{sec:pf_ppr_output_dist}.

\begin{proposition}\label{prop:ppr_output_dist}
The output $Z$ of PPR follows the conditional distribution $P_{Z|X}$
exactly.
\end{proposition}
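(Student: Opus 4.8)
The plan is to strip the construction down to a single structural fact about the Poisson process underlying the Poisson functional representation, after which the claim becomes almost immediate. Fix an input value $X = x$ and write $P := P_{Z|X}(\cdot \mid x)$; by the stated regularity condition $P \ll Q$, so $\frac{\mathrm{d}P}{\mathrm{d}Q}$ is well defined, and the claim to prove is that, conditionally on $X = x$, the output $Z = Z_K$ has law $P$ (everything else --- $(Z_i)_i$, $(T_i)_i$, and the local randomness used in Step \ref{step:generate_K} --- is independent of $X$). As already recorded in the definition of the Poisson functional representation, the mapping theorem gives that $\{(Z_i, \tilde{T}_i) : i \ge 1\}$ is a Poisson process on $\mathcal{Z} \times [0,\infty)$ with intensity measure $P \times \lambda_{[0,\infty)}$; points with $\frac{\mathrm{d}P}{\mathrm{d}Q}(Z_i) = 0$ get $\tilde{T}_i = \infty$ and contribute nothing, consistently with this intensity.

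The key step is a representation of this Poisson process: since $P$ is a probability measure and $\lambda_{[0,\infty)}$ is $\sigma$-finite, the marking theorem shows that a Poisson process with intensity $P \times \lambda_{[0,\infty)}$ has the same law as $\{(\xi_j, \Gamma_j) : j \ge 1\}$, where $0 < \Gamma_1 < \Gamma_2 < \cdots$ are the arrival times of a unit-rate Poisson process on $[0,\infty)$ and $\xi_1, \xi_2, \ldots \stackrel{\mathrm{iid}}{\sim} P$ are independent of $(\Gamma_j)_j$. Concretely, one re-sorts the points $(Z_i, \tilde{T}_i)$ in increasing order of the time coordinate; because the intensity factors as (probability measure) $\times$ (Lebesgue measure), the sorted times form a unit-rate process while the corresponding locations are i.i.d.\ $\sim P$ and independent of those times.

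Under this representation the PPR selection rule is transparent. Relabelling the points by their rank $j$ in $\tilde{T}$ turns the rule $\Pr(K = k) \propto \tilde{T}_k^{-\alpha}$ into: choose a rank $J$ with $\Pr(J = j \mid (\Gamma_\ell)_\ell) = \Gamma_j^{-\alpha} / \sum_{\ell} \Gamma_\ell^{-\alpha}$ using fresh local randomness, and output $Z = \xi_J$. For $\alpha \in (1,\infty)$ the normaliser $\sum_\ell \Gamma_\ell^{-\alpha}$ is finite almost surely (by the law of large numbers $\Gamma_\ell/\ell \to 1$, so $\Gamma_\ell^{-\alpha}$ decays like $\ell^{-\alpha}$), so $J$ is well defined; for $\alpha = \infty$ one has $J = \mathrm{argmin}_j \Gamma_j$, the original Poisson functional representation. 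In all cases $J$ is a measurable function of $(\Gamma_\ell)_\ell$ together with local randomness, hence independent of the mark sequence $(\xi_j)_j$. Therefore, for any measurable $A \subseteq \mathcal{Z}$, conditioning on $(\xi_j)_j$ and using $\sum_j \Pr(J = j) = 1$,
\[
\Pr(Z \in A \mid X = x) = \mathbf{E}\Big[\sum_{j \ge 1} \Pr(J = j)\, \mathbf{1}\{\xi_j \in A\}\Big] = \sum_{j \ge 1} \Pr(J = j)\, P(A) = P(A),
\]
where the interchange of sum and expectation is justified by nonnegativity. Since $x$ was arbitrary, $Z$ follows $P_{Z|X}$ exactly.

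I expect the only real obstacle to be step two --- pinning down the representation of a Poisson process with intensity $P \times \lambda_{[0,\infty)}$ as an i.i.d.-marked unit-rate process, which is where the product structure of the intensity is essential; it is however a standard consequence of the same marking/mapping machinery already cited in the paper. The remaining items (almost-sure finiteness and measurability of $\sum_\ell \Gamma_\ell^{-\alpha}$, the $\alpha = \infty$ special case, and the Fubini-type interchange) are routine bookkeeping.
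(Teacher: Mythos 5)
Your proposal is correct and follows essentially the same route as the paper's proof: apply the mapping theorem to get a Poisson process with intensity $P\times\lambda_{[0,\infty)}$, re-sort by the time coordinate to see the sorted times form a unit-rate Poisson process independent of i.i.d.\ $P$-distributed marks, note the selection rule is invariant under this relabelling and depends only on the times plus local randomness, and conclude the selected mark is distributed as $P$. The extra remarks (a.s.\ finiteness of the normaliser for $\alpha>1$, the $\alpha=\infty$ case) are correct but not needed beyond what the paper already does.
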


Due to the \emph{exactness} of PPR, it guarantees unbiasedness for tasks such as DME. 
If the goal is only to design a stand-alone privacy mechanism, we can focus on the privacy and utility of the mechanism without studying the output distribution. 
However, if the output of the mechanism is used for downstream tasks (e.g., for DME, after receiving information from clients, the server sends information about the aggregated mean to data analysts, where central DP is crucial), having an exact characterization of the conditional distribution of the output given the input allows us to obtain precise (central) privacy and utility guarantees. 

Notably, PPR is \emph{universal} in the sense that only the encoder needs to know the simulated mechanism $P_{Z|X}$. The decoder can decode the index $K$ as long as it has access to the shared randomness $(Z_i)_i$. This allows the encoder to choose an arbitrary mechanism $P_{Z|X}$ with the same $\mathcal{Z}$, and adapt the choice of $P_{Z|X}$ to the communication and privacy constraints without explicitly informing the decoder which mechanism is chosen.

Practically, the algorithm cannot compute the whole infinite sequence
$(\tilde{T}_{i})_i$. We can truncate the method and only
compute $\tilde{T}_{i},\ldots,\tilde{T}_{N}$ for a large $N$ and
select $K\in\{1,\ldots,N\}$, which incurs a small distortion in the
distribution of $Z$.\footnote{To compare to the minimal random coding (MRC) \cite{havasi2019minimal,cuff2008communication,song2016likelihood}
scheme in \cite{shah2022optimal}, which also utilizes a finite number
$N$ of samples $(Z_{i})_{i=1,\ldots,N}$, while truncating the number
of samples to $N$ in both PPR and MRC results
in a distortion in the distribution of $Z$ that tends to $0$ as
$N\to\infty$, the difference is that $\log K$ (which is approximately
the compression size) in MRC
grows like $\log N$, whereas $\log K$ does not grow as $N\to\infty$
in PPR. The size $N$ in truncated PPR merely controls the trade-off
between accuracy of the distribution of $Z$ and the running time
of the algorithm.}
While this method is practically acceptable, it might defeat the purpose of having an exact algorithm that ensures the correct conditional distribution $P_{Z|X}$.
In Appendix \ref{sec:reparametrization}, we will present an exact algorithm for PPR that terminates in a finite amount of time, using a reparametrization that allows the encoder to know when the optimal point $Z_i$ has already been encountered (see Algorithm~\ref{alg:ppr} in Appendix \ref{sec:reparametrization}).

By the lower bound for channel simulation \cite{bennett2002entanglement,li2018strong},
we must have $H(K)\ge I(X;Z)$, i.e., the compression size is at least
the mutual information between the data $X$ and the output $Z$.
The following result shows that the compression provided by PPR is
``almost optimal'', i.e., close to the theoretical lower bound $I(X;Z)$.
The proof is given in Appendix~\ref{sec:pf_logk_bound_simple}.

\begin{theorem}[Compression size of PPR]
\label{thm:logk_bound_simple}For PPR with parameter $\alpha>1$,
when the encoder is given the input $x$, the message $K$ given by
PPR satisfies
\begin{align*}
\mathbb{E}[\log K] & \le D(P\Vert Q)+(\log(3.56)) / \min\{(\alpha-1)/2,1\},
\end{align*}
where $P:=P_{Z|X}(\cdot|x)$. As a result, when the input $X\sim P_{X}$
is random, taking $Q=P_{Z}$, we have
\begin{align*}
\mathbb{E}[\log K] & \le I(X;Z)+(\log(3.56)) / \min\{(\alpha-1)/2,1\}.
\end{align*}
\end{theorem}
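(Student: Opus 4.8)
The plan is to bound $\mathbb{E}[\log K]$ directly from the defining probability mass function $\Pr(K=k) = \tilde{T}_k^{-\alpha}/\sum_i \tilde{T}_i^{-\alpha}$. Write $L := \sum_{i=1}^\infty \tilde{T}_i^{-\alpha}$ for the normalizing constant. First I would use concavity of $\log$ (Jensen, conditioning on the shared randomness and the Poisson process) to get
\[
\mathbb{E}[\log K \,|\, (Z_i),(T_i)] = \sum_k \frac{\tilde{T}_k^{-\alpha}}{L}\log k \le \log\!\Big(\sum_k \frac{\tilde{T}_k^{-\alpha}}{L}\, k\Big) = \log\!\Big(\frac{1}{L}\sum_k k\,\tilde{T}_k^{-\alpha}\Big),
\]
so it suffices to control $\mathbb{E}\big[\log\big(L^{-1}\sum_k k\,\tilde{T}_k^{-\alpha}\big)\big]$. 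The key structural fact is the mapping/scaling property already recorded in the excerpt: $(\bar U_i,\tilde T_i)$ is a Poisson process with intensity $P\times\lambda_{[0,\infty)}$, so conditionally on the values $\{\tilde T_i\}$ these are just the points of a rate-$1$ Poisson process on $[0,\infty)$ scaled appropriately, and $\mathrm{d}P/\mathrm{d}Q(\bar U_i)$ enters only through $\tilde T_i$. Crucially, by the strong functional representation analysis of~\cite{li2018strong}, if we let $K^\ast := \arg\min_i \tilde T_i$ (the $\alpha=\infty$ case), then $\mathbb{E}[\log K^\ast] \le D(P\Vert Q) + \log(D(P\Vert Q)+1) + O(1)$; but here we want the cleaner bound $D(P\Vert Q) + O(1)$, which is possible precisely because randomizing $K$ with exponent $\alpha$ spreads mass but the bound we now seek is weaker (no $\log D$ term, but a worse constant depending on $\alpha$).

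The main work is to show $\mathbb{E}\big[\log\big(L^{-1}\sum_k k\,\tilde T_k^{-\alpha}\big)\big] \le D(P\Vert Q) + c(\alpha)$. I would split $\log(L^{-1}\sum_k k \tilde T_k^{-\alpha}) = \log\big(\sum_k k\tilde T_k^{-\alpha}\big) - \log L$ and handle the two pieces. Here it is cleanest to reparametrize: writing $T_i$ as the points of a standard Poisson process and $W_i := \mathrm{d}P/\mathrm{d}Q(\bar U_i)$, we have $\tilde T_i = T_i/W_i$. The quantity $\sum_k k \tilde T_k^{-\alpha} = \sum_k k\, W_k^\alpha T_k^{-\alpha}$ — but the index $k$ is not the natural order here; one should instead sum over the $\tilde T_i$ sorted in increasing order, and bound $\mathbf{U}_{\tilde T}^{-1}$ (the rank) using the generalized Poisson matching lemma / Lemma~\ref{lem::GPML} type estimate: the expected rank of the point achieving a given $\tilde T$ value is controlled by $P/Q$ evaluated there. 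Concretely, I expect the estimate $\mathbb{E}\big[\operatorname{rank}(\tilde T_i)\,|\,\tilde T_i\big] \lesssim \tilde T_i + 1$ (Poisson counting) and then integrate against the law of the minimizer-type statistic. The factor $\alpha$ appears because $\tilde T_k^{-\alpha}$ concentrates more or less sharply on the smallest $\tilde T$; the condition $\alpha>1$ is exactly what makes $\sum_i \tilde T_i^{-\alpha} < \infty$ a.s. (since the $\tilde T_i$ grow linearly), and the constant blows up as $\alpha\downarrow 1$, matching the $1/\min\{(\alpha-1)/2,1\}$ form. The numerical constant $\log(3.56)$ would come out of optimizing a Gamma-function / exponential-integral bound, along the lines of the computations in~\cite{li2018strong, li2024pointwise}; I would not grind through these but cite that they yield the stated constant.

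Finally, the second assertion is immediate: specialize $Q = P_Z$, so that for random $X\sim P_X$ and $P = P_{Z|X}(\cdot|X)$ we have $\mathbb{E}_X[D(P_{Z|X}(\cdot|X)\Vert P_Z)] = I(X;Z)$; take expectation over $X$ in the first bound and the claim follows since the additive constant does not depend on $x$. The main obstacle I anticipate is the rank/normalization estimate — i.e., showing $\mathbb{E}[\log(L^{-1}\sum_k k\tilde T_k^{-\alpha})] \le D(P\Vert Q) + c(\alpha)$ cleanly — because one must simultaneously control the tail of the rank variable and the lower tail of the normalizer $L$ (which could be small if all $\tilde T_i$ happen to be large, i.e., when $\mathrm{d}P/\mathrm{d}Q$ is small on the early samples), and couple these through the Poisson process structure rather than treating them independently. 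The exponent $\alpha>1$ and the truncation-free reparametrization of Appendix~\ref{sec:reparametrization} are the tools that make this controllable.
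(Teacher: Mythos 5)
Your reduction of the second claim to the first (take $Q=P_Z$ and average over $X$) is fine, but the core bound $\mathbb{E}[\log K]\le D(P\Vert Q)+c(\alpha)$ is never actually established, and the route you propose for it fails at the first step. Applying Jensen conditionally on the \emph{entire} marked process asks you to control $\mathbb{E}\bigl[\log\bigl(L^{-1}\sum_k k\,\tilde T_k^{-\alpha}\bigr)\bigr]=\mathbb{E}\bigl[\log\mathbb{E}[K\mid (Z_i,T_i)_i]\bigr]$. But for $1<\alpha\le 2$ (exactly the regime of interest, since privacy wants $\alpha$ near $1$) the conditional mean of $K$ is almost surely infinite: already when $P=Q$ one has $\tilde T_k=T_k$ with $T_k/k\to 1$ a.s., so $\sum_k k\tilde T_k^{-\alpha}\asymp\sum_k k^{1-\alpha}=\infty$ while $L<\infty$, and the same divergence occurs for general $P\ll Q$. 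So the quantity you say ``it suffices to control'' is identically $+\infty$, and the bound is vacuous. Conditioning on the full process is too fine; the paper instead conditions only on $T_K$, proving $\mathbb{E}[K\mid T_K]\le T_K+1$ --- which is itself not automatic, because conditioning on the selected point biases the law of the remaining points, and the paper repairs this by adjoining an independent Poisson process filling the ``impossible region'' $\{t^{\alpha}r^{-\alpha}v<W_K\}$ --- and only then applies Jensen to get $\mathbb{E}[\log K]\le\mathbb{E}[\log(T_K+1)]$. Relatedly, your heuristic $\mathbb{E}[\mathrm{rank}(\tilde T_i)\mid\tilde T_i]\lesssim\tilde T_i+1$ is stated for the wrong variable: $K$ is the index in the original $T$-ordering, so the relevant estimate is in terms of $T_K$, and passing from $T_K$ to $\tilde T_K$ costs exactly the factor $\mathrm{d}P/\mathrm{d}Q$ that must be traded against $D(P\Vert Q)$.

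The second gap is that the $\alpha$-dependent part of the bound cannot be outsourced to~\cite{li2018strong,li2024pointwise}: those analyses concern the deterministic $\mathrm{argmin}$ ($\alpha=\infty$) and say nothing about the randomized index with $\Pr(K=k)\propto\tilde T_k^{-\alpha}$, which is what this theorem is about. The heart of the paper's proof is precisely this missing piece: rewrite the soft selection as $K=\mathrm{argmin}_k T_k^{\alpha}R_k^{-\alpha}V_k$ with i.i.d.\ $\mathrm{Exp}(1)$ marks $V_k$ (an exponential-race identity), apply the marking and mapping theorems to obtain the joint law of $(T_K,R_K,W_K)$ where $W_i=T_i^{\alpha}R_i^{-\alpha}V_i$, show $R_K$ is independent of $W_K$, compute $\mathbb{E}[T_K^{\eta}\mid R_K=r]=c_{\alpha,\eta}r^{\eta}$ for $0<\eta<\alpha-1$, and then choose $\eta=\min\{(\alpha-1)/2,1\}$ together with $\mathbb{E}_{Z\sim P}[(\mathrm{d}P/\mathrm{d}Q(Z))^{-1}]\le 1$ to extract $D(P\Vert Q)+\eta^{-1}\log(c_{\alpha,\eta}+1)$ and the numerical constant $\log 3.56$. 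None of these steps (nor any workable substitute producing an $\alpha$-dependent constant that blows up only like $1/(\alpha-1)$) appears in your sketch, so as written the proposal identifies the shape of the statement but leaves the theorem unproved.
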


Note that the running time complexity (which depends on the number of samples $Z_i$ the algorithm must examine before outputting the index $K$) can be quite high. 
Since $\mathbb{E}[\log K] \approx I(X;Z)$, $K$ (and hence the running time) is at least exponential in $I(X;Z)$. 
See more discussions in Section~\ref{sec:limitations}. 

If a prefix-free encoding of $K$ is required, then the
number of bits needed is slightly larger than $\log_2 K$. For example,
if Elias delta code \cite{elias1975universal} is used, the expected
compression size is $\le\mathbb{E}[\log_{2}K]+2\log_{2}(\mathbb{E}[\log_{2}K]+1)+1$
bits. 
If the Shannon code \cite{shannon1948mathematical} (an almost-optimal prefix-free code) for the Zipf distribution $p(k) \propto k^{-\lambda}$ with $\lambda = 1 + 1/\mathbb{E}[\log_{2}K]$ is used, the expected compression size is $\le\mathbb{E}[\log_{2}K]+\log_{2}(\mathbb{E}[\log_{2}K]+1)+2$ bits (see \cite{li2018strong}). 
Both codes yield an $I(X;Z)+O(\log I(X;Z))$
 size, within a logarithmic gap from the lower bound $I(X;Z)$.
This is similar to some other channel simulation schemes such as \cite{harsha2010communication,braverman2014public,li2018strong},
though these schemes do not provide privacy

\begin{remark}
    Note that PPR requires a variable-length code to encode the index $K\in \mathbb{Z}_+$, which is common in channel simulation~\cite{harsha2010communication,li2018strong} and distributed mean estimation~\cite{suresh2017distributed}. If we impose a fixed limit of $b$ bits on the encoding, since Theorem \ref{thm:logk_bound_simple} and Markov's inequality yields $\Pr(\log_2 K > b) \le P_e := I(X;Z)/b + \log_2(3.56) / (b \min\{(\alpha-1)/2,1\})$
\end{remark}

Note that if $P_{Z|X}$ is $\varepsilon$-DP, then by definition, for any $z \in \mcal{Z}$ and $x, x_0 \in \mcal{X}$, it holds that
\[
D\lp P_{Z|X = x} \middle \Vert P_{Z|X = x_0} \rp = \mathbb{E}_{Z \sim P_{Z|X = x}}\lb \log\lp \frac{\mathrm{d}P_{Z|X = x}}{\mathrm{d}P_{Z|X = x_0}}(Z)  \rp \rb \leq \varepsilon \log e. 
\]
Setting the proposal distribution $Q = P_{Z|X = x_0}$ for an arbitrary $x_0 \in \mcal{X}$ gives the following bound.
\begin{corollary}[Compression size under $\varepsilon$-LDP]\label{cor:generic_compression_bdd}
    Let $P_{Z|X}$ satisfy $\varepsilon$-differential privacy. Let $x_0 \in \mcal{X}$ and $Q = P_{Z|X = x_0}$. Then for PPR with parameter $\alpha > 1$, the expected compression size is at most $\ell + \log_2(\ell+1)+2$ bits,
    where $\ell \eqDef \varepsilon \log_2 e + {(\log_2(3.56))}/{\min\lbp (\alpha - 1)/2, 1 \rbp}$.
\end{corollary}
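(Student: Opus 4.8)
The plan is to derive Corollary~\ref{cor:generic_compression_bdd} as a direct specialization of Theorem~\ref{thm:logk_bound_simple}. First I would recall the chain of inequalities displayed just before the corollary: for an $\varepsilon$-DP mechanism $P_{Z|X}$, every pair $x,x_0$ satisfies $D(P_{Z|X=x} \Vert P_{Z|X=x_0}) \le \varepsilon \log e$, where the logarithm is to base $2$ (matching the paper's convention that $\log$ and entropy are base $2$ unless stated otherwise, and $\log e = 1/\ln 2$). This is the only place the privacy hypothesis is used; it converts a privacy bound into a KL-divergence bound, which is exactly the quantity controlling the compression size in Theorem~\ref{thm:logk_bound_simple}.

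Next I would fix an arbitrary $x_0 \in \mathcal{X}$ and set the proposal distribution $Q = P_{Z|X=x_0}$, as the corollary stipulates. The regularity condition needed by PPR — that $P_{Z|X}(\cdot|x)$ is absolutely continuous with respect to $Q$ almost surely — follows from $\varepsilon$-DP with $\delta = 0$, since $\Pr(Z \in \mathcal{S} \mid X=x) \le e^{\varepsilon}\Pr(Z \in \mathcal{S}\mid X=x_0)$ forces $P_{Z|X=x_0}(\mathcal{S}) = 0 \Rightarrow P_{Z|X=x}(\mathcal{S}) = 0$. Then applying the first (input-conditional) bound of Theorem~\ref{thm:logk_bound_simple} with $P := P_{Z|X}(\cdot|x)$ and this choice of $Q$ gives
\[
\mathbb{E}[\log_2 K \mid X = x] \le D(P_{Z|X=x}\Vert P_{Z|X=x_0}) + \frac{\log_2(3.56)}{\min\{(\alpha-1)/2,\,1\}} \le \varepsilon \log_2 e + \frac{\log_2(3.56)}{\min\{(\alpha-1)/2,\,1\}} = \ell,
\]
uniformly in $x$, hence also in expectation over $X$.

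Finally I would invoke the prefix-free encoding statement made in the paragraph preceding the corollary: using the Shannon code for the Zipf distribution $p(k) \propto k^{-\lambda}$ with $\lambda = 1 + 1/\mathbb{E}[\log_2 K]$, the expected encoded length is at most $\mathbb{E}[\log_2 K] + \log_2(\mathbb{E}[\log_2 K]+1) + 2$ bits. Since $t \mapsto t + \log_2(t+1) + 2$ is monotincreasing in $t \ge 0$ and $\mathbb{E}[\log_2 K] \le \ell$, the expected compression size is at most $\ell + \log_2(\ell+1) + 2$ bits, which is the claimed bound. There is no real obstacle here — the corollary is essentially a substitution plus monotonicity argument — so the only point requiring a little care is bookkeeping the base of the logarithm (ensuring the $\log_2 e$ factor appears because the earlier KL bound was written with $\log e$, i.e.\ in base $2$) and confirming the $\delta = 0$ hypothesis supplies the absolute-continuity regularity condition that PPR and Theorem~\ref{thm:logk_bound_simple} presuppose.
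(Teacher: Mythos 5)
Your proposal is correct and follows essentially the same route as the paper: the paper also derives the corollary by noting that $\varepsilon$-DP gives $D(P_{Z|X=x}\Vert P_{Z|X=x_0})\le\varepsilon\log_2 e$, substituting $Q=P_{Z|X=x_0}$ into Theorem~\ref{thm:logk_bound_simple}, and then applying the Shannon-code-for-Zipf prefix-free encoding bound stated just before the corollary. Your extra remarks (that $\delta=0$ DP supplies the absolute-continuity regularity condition, and the monotonicity of $t\mapsto t+\log_2(t+1)+2$) are correct bookkeeping the paper leaves implicit.
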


Next, we analyze the privacy guarantee of PPR. The PPR method induces
a conditional distribution
$P_{(Z_{i})_{i},K|X}$
of the knowledge of the decoder $((Z_{i})_{i},K)$, given the
data $X$. To analyze the privacy guarantee, we study whether
the randomized mapping $P_{(Z_{i})_{i},K|X}$ from $X$ to $((Z_{i})_{i},K)$
satisfies $\varepsilon$-DP or $(\varepsilon,\delta)$-DP.
\footnote{Note that the encoder does not actually send $((Z_{i})_{i},K)$; it
only sends $K$. 
The common randomness $(Z_{i})_{i}$ is independent of the data $X$, and can be pre-generated using a common random seed in practice. 
While this seed must be communicated between the client and the server as a small overhead, the client and the server only ever need to communicate \emph{one} seed to initialize a pseudorandom number generator, that can be used in \emph{all} subsequent privacy mechanisms and communication tasks (to transmit high-dimensional data or use DP mechanisms for many times). The conditional distribution
$P_{(Z_{i})_{i},K|X}$ is only relevant for privacy analysis.}
This is similar to the privacy condition in \cite{shah2022optimal},
and is referred as \emph{decoder privacy} in \cite{shahmiri2024communication},
which is stronger than \emph{database privacy} which concerns the
privacy of the randomized mapping from $X$ to the final output $Z$
\cite{shahmiri2024communication} (which is simply the privacy of
the original mechanism $P_{Z|X}$ to be compressed since PPR simulates
$P_{Z|X}$ precisely). 
Since the decoder knows $((Z_{i})_{i},K)$,
more than just the final output $Z$, we expect that the PPR-compressed
mechanism $P_{(Z_{i})_{i},K|X}$ to have a worse privacy guarantee
than the original mechanism $P_{Z|X}$, which is the price of having
a smaller communication cost.  The following result shows that, if
the original mechanism $P_{Z|X}$ is $\varepsilon$-DP, then the PPR-compressed
mechanism is guaranteed to be $2\alpha\varepsilon$-DP.

\begin{theorem}[$\varepsilon$-DP of PPR]
\label{thm:eps_dp}If the mechanism $P_{Z|X}$ is $\varepsilon$-differentially
private, then PPR $P_{(Z_{i})_{i},K|X}$ with parameter $\alpha>1$
is $2\alpha\varepsilon$-differentially private.
\end{theorem}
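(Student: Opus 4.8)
The plan is to deduce $2\alpha\varepsilon$-DP of the compressed mechanism $P_{(Z_{i})_{i},K|X}$ from a single pointwise inequality, using that the common randomness is independent of $X$ and that the sampling weights $\tilde T_i^{-\alpha}$ transform multiplicatively in the likelihood ratio. First I would \emph{condition on the shared randomness}. Since $(Z_{i})_{i}\stackrel{\mathrm{iid}}{\sim}Q$ is independent of $X$, the PPR-compressed mechanism factorizes as $P_{(Z_{i})_{i},K|X}(\mathrm dz_{1:\infty},k\,|\,x)=Q^{\otimes\infty}(\mathrm dz_{1:\infty})\,P_{K|(Z_{i})_{i},X}(k\,|\,z_{1:\infty},x)$. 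It therefore suffices to show that for $Q^{\otimes\infty}$-almost every realization $(z_i)_i$, the conditional mechanism $x\mapsto P_{K|(Z_{i})_{i}=z_{1:\infty},X}(\cdot\,|\,z_{1:\infty},x)$ is $2\alpha\varepsilon$-DP with respect to the given neighbor relation $\mathcal N$; integrating the resulting two-point inequality over the independent factor $Q^{\otimes\infty}$ then recovers the claim for the joint mechanism.

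Next, fix such a realization $(z_i)_i$ and a neighboring pair $x,x'$, and write $r_i(x):=\frac{\mathrm dP_{Z|X=x}}{\mathrm dQ}(z_i)$, a deterministic constant given $z_{1:\infty}$. With $(T_{i})_{i}$ the rate-$1$ Poisson process used by the encoder (independent of everything) and $\tilde T_i=T_i/r_i(x)$, one has
\[
\Pr\!\big(K=k\,\big|\,z_{1:\infty},x\big)=\mathbb E_{(T_{i})_{i}}\!\left[\frac{r_k(x)^{\alpha}T_k^{-\alpha}}{\sum_{i\ge 1}r_i(x)^{\alpha}T_i^{-\alpha}}\right],
\]
and similarly with $x'$; here I would invoke the PPR construction (cf.\ Proposition~\ref{prop:ppr_output_dist}), which for $\alpha>1$ ensures the mapped process is rate-$1$ Poisson and $\sum_i\tilde T_i^{-\alpha}$ is almost surely finite and positive, so these probabilities are well defined. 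Since pure $\varepsilon$-DP ($\delta=0$) with finite $\varepsilon$ forces $P_{Z|X=x}$ and $P_{Z|X=x'}$ to be mutually absolutely continuous with $\frac{\mathrm dP_{Z|X=x}}{\mathrm dP_{Z|X=x'}}\in[e^{-\varepsilon},e^{\varepsilon}]$ a.e., this translates to $r_i(x)/r_i(x')\in[e^{-\varepsilon},e^{\varepsilon}]$ for $Q$-a.e.\ $z_i$, hence $r_i(x)^\alpha/r_i(x')^\alpha\in[e^{-\alpha\varepsilon},e^{\alpha\varepsilon}]$ for every $i$, almost surely; indices with $r_i(x)=r_i(x')=0$ contribute $0$ on both sides and give $\Pr(K=i\,|\,\cdot)=0$.

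Then I would apply the \emph{pointwise comparison}: for every realization of $(T_{i})_{i}$ and every $k$ with $r_k(x)>0$,
\[
\frac{r_k(x)^{\alpha}T_k^{-\alpha}}{\sum_i r_i(x)^{\alpha}T_i^{-\alpha}}
=\frac{r_k(x)^\alpha}{r_k(x')^\alpha}\cdot\frac{\sum_i r_i(x')^{\alpha}T_i^{-\alpha}}{\sum_i r_i(x)^{\alpha}T_i^{-\alpha}}\cdot\frac{r_k(x')^{\alpha}T_k^{-\alpha}}{\sum_i r_i(x')^{\alpha}T_i^{-\alpha}}
\le e^{2\alpha\varepsilon}\,\frac{r_k(x')^{\alpha}T_k^{-\alpha}}{\sum_i r_i(x')^{\alpha}T_i^{-\alpha}},
\]
using $r_k(x)^\alpha\le e^{\alpha\varepsilon}r_k(x')^\alpha$ together with the termwise bound $\sum_i r_i(x')^{\alpha}T_i^{-\alpha}\le e^{\alpha\varepsilon}\sum_i r_i(x)^{\alpha}T_i^{-\alpha}$. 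Taking $\mathbb E_{(T_{i})_{i}}$ yields $\Pr(K=k\,|\,z_{1:\infty},x)\le e^{2\alpha\varepsilon}\Pr(K=k\,|\,z_{1:\infty},x')$ for every $k$, which is exactly $2\alpha\varepsilon$-DP of the conditional mechanism; summing over $k\in\mathcal S$ and integrating over $(z_i)_i$ completes the argument.

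The conceptual heart of the proof — the multiplicative transformation of the sampling weights — is this one-line pointwise estimate, so the bulk of the work is measure-theoretic hygiene: justifying the reduction to a fixed common-randomness realization up to null sets, establishing the a.s.\ finiteness and positivity of $\sum_i\tilde T_i^{-\alpha}$ (the only place where $\alpha>1$ is genuinely used, and also what makes PPR well defined via the mapping theorem), and passing from the set-based definition of pure DP to a pointwise bound on the density ratio, including the vanishing-density case. I expect the finiteness and positivity of $\sum_i\tilde T_i^{-\alpha}$ to be the most delicate point, though it is already subsumed by the well-definedness of PPR, so in practice it can be cited rather than re-proved.
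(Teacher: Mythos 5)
Your proposal is correct and follows essentially the same route as the paper's proof: condition on the shared randomness, use the $Q$-a.e.\ bound $e^{-\varepsilon}\le \frac{\mathrm{d}P_{Z|X=x}/\mathrm{d}Q}{\mathrm{d}P_{Z|X=x'}/\mathrm{d}Q}\le e^{\varepsilon}$ to compare the weights $\tilde T_i^{-\alpha}$ termwise in both numerator and denominator (yielding the factor $e^{2\alpha\varepsilon}$), then sum over $k$ in the event and integrate out the randomness. The measure-theoretic caveat you flag (upgrading the ratio bound from $P$-a.e.\ to $Q$-a.e., including where both densities vanish) is exactly what the paper handles via a Lebesgue-decomposition footnote, so your plan matches the paper's argument in substance.
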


Similar results also apply to $(\varepsilon,\delta)$-DP and metric DP.

\begin{theorem}[$(\varepsilon,\delta)$-DP of PPR]
\label{thm:eps_delta_dp_2}If the mechanism $P_{Z|X}$ is $(\varepsilon,\delta)$-differentially
private, then PPR $P_{(Z_{i})_{i},K|X}$ with parameter $\alpha>1$
is $(2\alpha\varepsilon, 2\delta)$-differentially private.
\end{theorem}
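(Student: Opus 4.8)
The plan is to reduce the $(\varepsilon,\delta)$ case to the pure-$\varepsilon$ case (Theorem~\ref{thm:eps_dp}) via the standard decomposition of an $(\varepsilon,\delta)$-DP mechanism. First I would recall the well-known characterization: a mechanism $P_{Z|X}$ is $(\varepsilon,\delta)$-DP if and only if, for every neighboring pair $(x,x')$, one can write $P_{Z|X=x} = (1-\delta) R_{x,x'} + \delta S_{x,x'}$ where $R_{x,x'}$ and $P_{Z|X=x'}$ (suitably decomposed as well) satisfy a pure $\varepsilon$-DP-type bound; more precisely, the cleanest route is to use the ``privacy loss'' viewpoint, i.e.\ the set $B_{x,x'} := \{z : \mathrm{d}P_{Z|X=x}/\mathrm{d}P_{Z|X=x'}(z) > e^{\varepsilon}\}$ has $P_{Z|X=x}(B_{x,x'}) \le \delta$ (this is a consequence of, and essentially equivalent to, $(\varepsilon,\delta)$-DP, up to the usual $2\delta$ slack). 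On the complement of this bad set the density ratio is bounded by $e^{\varepsilon}$ pointwise.

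Next I would track how this ``bad set'' of outputs propagates through the PPR construction. The key structural fact, already exploited in the proof of Theorem~\ref{thm:eps_dp}, is that the conditional law $P_{(Z_i)_i,K|X}$ of the decoder's knowledge depends on $X$ only through the densities $(\mathrm{d}P_{Z|X=x}/\mathrm{d}Q)(Z_i)$ entering the definition of $\tilde T_i$ and hence of $\Pr(K=k) \propto \tilde T_k^{-\alpha}$. I would condition on the shared randomness $(Z_i)_i$ and analyze the randomized index $K$: the event that $Z_K$ lands in the bad set $B_{x,x'}$ (for the chosen neighboring pair) has probability at most $O(\delta)$ under $P_{Z|X=x}$, by Proposition~\ref{prop:ppr_output_dist} (PPR simulates $P_{Z|X}$ exactly). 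On the complementary ``good'' event, the density ratio between the two input-conditioned laws of $\tilde T_i$ is bounded, and the argument of Theorem~\ref{thm:eps_dp} — which turns an $\varepsilon$-bound on $\mathrm{d}P/\mathrm{d}P'$ into a $2\alpha\varepsilon$-bound on the decoder-privacy of the $\propto\tilde T_k^{-\alpha}$ sampling — goes through verbatim to give $e^{2\alpha\varepsilon}$. Splicing the two regimes and accounting for the $\delta$ escaping on each side yields $(2\alpha\varepsilon, 2\delta)$.

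Concretely, the steps in order: (1) fix neighbors $(x,x')\in\mathcal{N}$ and a measurable event $\mathcal{S}$ in the decoder's observation space $(\mathbf{Z},K)$-space; (2) split $\mathcal{S}$ according to whether $Z_K \in B_{x,x'}$; (3) bound the contribution of $\{Z_K\in B_{x,x'}\}$ by $\delta$ using exactness of PPR, and symmetrically for the $x'$ side to get the factor $2\delta$; (4) on the good event, invoke the pointwise $e^{\varepsilon}$ density-ratio bound and re-run the computation from the proof of Theorem~\ref{thm:eps_dp} to obtain the multiplicative factor $e^{2\alpha\varepsilon}$; (5) combine to conclude $\Pr((\mathbf{Z},K)\in\mathcal{S}\mid X=x) \le e^{2\alpha\varepsilon}\Pr((\mathbf{Z},K)\in\mathcal{S}\mid X=x') + 2\delta$. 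Since $K$ is a (randomized) function of $(\mathbf{Z},K)$ and the final output $Z=Z_K$ is too, post-processing gives the claim for the PPR-compressed mechanism.

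The main obstacle I anticipate is step (4): making rigorous the claim that the pure-$\varepsilon$ analysis in Theorem~\ref{thm:eps_dp} localizes cleanly to the good event. The subtlety is that $\Pr(K=k)$ is a \emph{global} normalized quantity ($\tilde T_k^{-\alpha}/\sum_i \tilde T_i^{-\alpha}$), so even conditioning on $Z_K$ being good, the normalization $\sum_i \tilde T_i^{-\alpha}$ mixes in contributions from indices $i$ with $Z_i$ in the bad set. One must argue that replacing $P_{Z|X=x}$ by a pure-$\varepsilon$ surrogate that agrees with it off the bad set perturbs the whole distribution $P_{(Z_i)_i,K|X}$ by at most $\delta$ in the appropriate sense (e.g.\ via a coupling: with probability $\ge 1-\delta$ all relevant $Z_i$'s can be taken identical under both laws). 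Handling this coupling carefully — ideally by first passing to the equivalent ``$R_{x,x'}$ plus $\delta$ remainder'' decomposition of the input-conditioned densities \emph{before} they enter PPR, so that PPR is applied to a genuinely $\varepsilon$-DP mechanism with probability $1-\delta$ — is where the real work lies; everything else is bookkeeping already done in Theorem~\ref{thm:eps_dp}.
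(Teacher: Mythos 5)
Your high-level plan — reduce to the pure-$\varepsilon$ case via a surrogate for $P_{Z|X=x}$ and a coupling — is the right one, and your closing remark (``pass to the $R_{x,x'}$ plus $\delta$ remainder decomposition \emph{before} it enters PPR'') is essentially what the paper does. But the proposal as written has a genuine gap, and it sits exactly where you suspected. Your primary route, splitting events according to whether $Z_K$ lands in the bad set $B_{x,x'}$, does not work: the decoder-privacy statement is about the full view $((Z_i)_i,K)$, and $\Pr(K=k\mid (Z_i,T_i)_i)=\tilde T_k^{-\alpha}/\sum_i\tilde T_i^{-\alpha}$ is distorted by \emph{every} index $i$ with $Z_i\in B_{x,x'}$, not only the selected one, so conditioning on $\{Z_K\notin B\}$ (whose probability you control via Proposition~\ref{prop:ppr_output_dist}) does not localize the $e^{2\alpha\varepsilon}$ computation of Theorem~\ref{thm:eps_dp}. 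Moreover, the factor $2\delta$ in the statement does not come from two symmetric bad-set contributions as you suggest.

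What is missing from the fallback route is the actual coupling tool. The paper first constructs a probability measure $P_3$ with $d_{\mathrm{TV}}(P_1,P_3)\le\delta$ and the \emph{two-sided} pointwise bound $e^{-\varepsilon}\frac{\mathrm{d}P_2}{\mathrm{d}Q}\le\frac{\mathrm{d}P_3}{\mathrm{d}Q}\le e^{\varepsilon}\frac{\mathrm{d}P_2}{\mathrm{d}Q}$ (this needs both directions of the $(\varepsilon,\delta)$ hockey-stick inequalities and a two-case interpolation argument; a one-sided bad-set truncation is not enough, since the $e^{2\alpha\varepsilon}$ bound uses $e^{\alpha\varepsilon}$ in the numerator and $e^{-\alpha\varepsilon}$ in the normalizing sum). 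Then, to convert $d_{\mathrm{TV}}(P_1,P_3)\le\delta$ into a statement about the joint law of $((Z_i)_i,K)$, it reparametrizes PPR as the Poisson functional representation applied to the uniform Poisson process $(T_iV_i^{1/\alpha})_i$ with $V_i\stackrel{iid}{\sim}\mathrm{Exp}(1)$, and invokes the grand-coupling property of the Poisson functional representation (\cite[Theorem 3]{li2019pairwise}), which gives $\Pr(K_1\neq K_3)\le 2\,d_{\mathrm{TV}}(P_1,P_3)\le 2\delta$ \emph{for the same realization of the shared process}; this is precisely the origin of the $2\delta$. Finally the pure-DP computation of Theorem~\ref{thm:eps_dp} is applied to the pair $(P_3,P_2)$. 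Without identifying this reparametrization and coupling lemma (or an equivalent argument that a TV-perturbation of the input distribution perturbs the entire decoder view, including $K$, with probability $O(\delta)$), the proof is not complete: exactness of the output distribution alone does not control how the index $K$ and the normalization react to the perturbation.
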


\begin{theorem}[Metric privacy of PPR]
\label{thm:metric_privacy}If the mechanism $P_{Z|X}$ satisfies $\varepsilon \cdot d_{\mathcal{X}}$-privacy, then PPR $P_{(Z_{i})_{i},K|X}$ with parameter $\alpha>1$
satisfies $2\alpha\varepsilon \cdot d_{\mathcal{X}}$-privacy.
\end{theorem}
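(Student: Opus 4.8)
The plan is to obtain the result as a direct parametrized version of the proof of Theorem~\ref{thm:eps_dp}: the only place the hypothesis ``$P_{Z|X}$ is $\varepsilon$-DP'' enters that argument is through a pointwise bound on a Radon--Nikodym derivative, and under $\varepsilon\cdot d_{\mathcal X}$-privacy the identical bound holds with $e^{\varepsilon}$ replaced by $e^{\varepsilon\, d_{\mathcal X}(x,x')}$. Concretely, I would fix any $x,x'\in\mathcal X$. Since the shared randomness $(Z_i)_i$ is independent of $X$, it suffices to show that the conditional law $P_{K\mid (Z_i)_i,\, X=x}$ is dominated by $e^{2\alpha\varepsilon\, d_{\mathcal X}(x,x')}\, P_{K\mid (Z_i)_i,\, X=x'}$ for almost every realization of $(Z_i)_i$; integrating this inequality over $(Z_i)_i$ then gives the claimed metric-privacy bound for $\big((Z_i)_i,K\big)$, and this must be stated for all pairs $x,x'$ (not just neighbors), which costs nothing since the bound already depends on $d_{\mathcal X}(x,x')$.

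For the pointwise (in $(Z_i)_i$) step I would write $r_i(x):=\frac{\mathrm d P_{Z|X=x}}{\mathrm d Q}(Z_i)$, so that $\tilde T_i=T_i/r_i(x)$ and $\tilde T_i^{-\alpha}=r_i(x)^{\alpha}\,T_i^{-\alpha}$. Averaging over the encoder's local Poisson process $(T_i)_i$,
\[
P_{K\mid (Z_i)_i,\, X=x}(k)=\mathbb E_{(T_i)_i}\!\left[\frac{r_k(x)^{\alpha}\,T_k^{-\alpha}}{\sum_{i}r_i(x)^{\alpha}\,T_i^{-\alpha}}\right],
\]
and similarly for $x'$. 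Metric privacy of $P_{Z|X}$ gives $\frac{\mathrm dP_{Z|X=x}}{\mathrm dP_{Z|X=x'}}(z)\le e^{\varepsilon\, d_{\mathcal X}(x,x')}$ almost surely, and by symmetry of the metric the reciprocal obeys the same bound, so $r_i(x)/r_i(x')\in[e^{-\varepsilon d_{\mathcal X}(x,x')},\,e^{\varepsilon d_{\mathcal X}(x,x')}]$ for every $i$. Using the upper bound in the numerator ($r_k(x)^\alpha\le e^{\alpha\varepsilon d_{\mathcal X}(x,x')}r_k(x')^\alpha$) and the lower bound in the denominator ($\sum_i r_i(x)^\alpha T_i^{-\alpha}\ge e^{-\alpha\varepsilon d_{\mathcal X}(x,x')}\sum_i r_i(x')^\alpha T_i^{-\alpha}$), the integrand for $x$ is bounded by $e^{2\alpha\varepsilon d_{\mathcal X}(x,x')}$ times the corresponding integrand for $x'$, pointwise in $(T_i)_i$. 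Taking $\mathbb E_{(T_i)_i}$ and then $\mathbb E_{(Z_i)_i}$ finishes the argument.

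The main obstacle is the measure-theoretic bookkeeping for the infinite series, which is also where $\alpha>1$ is used (exactly as in Theorems~\ref{thm:eps_dp} and \ref{thm:logk_bound_simple}): one must check that for almost every $(Z_i)_i$ and $(T_i)_i$ the normalizing sum $\sum_i r_i^\alpha T_i^{-\alpha}$ is finite and strictly positive so that $\Pr(K=k\mid\cdot)$ is a genuine probability mass function, and one must handle the indices with $r_i(Z_i)=0$, where the convention $\tilde T_i=\infty$, $\tilde T_i^{-\alpha}=0$ makes those terms drop from both numerator and denominator consistently for $x$ and $x'$. Apart from these points the proof is a transcription of the $\varepsilon$-DP case, and in fact the cleanest presentation is to isolate a single lemma in which $e^{\varepsilon}$ appears only as ``an upper bound on $\mathrm dP_{Z|X=x}/\mathrm dP_{Z|X=x'}$'', so that Theorems~\ref{thm:eps_dp}, \ref{thm:eps_delta_dp_2} and \ref{thm:metric_privacy} all follow from it at once.
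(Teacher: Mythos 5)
Your proposal is correct and follows essentially the same route as the paper: the paper also deduces the pointwise two-sided bound $e^{-\varepsilon d_{\mathcal{X}}(x,x')}\frac{\mathrm{d}P_{2}}{\mathrm{d}Q}\le\frac{\mathrm{d}P_{1}}{\mathrm{d}Q}\le e^{\varepsilon d_{\mathcal{X}}(x,x')}\frac{\mathrm{d}P_{2}}{\mathrm{d}Q}$ ($Q$-a.e.), bounds the conditional law of $K$ given the shared process by $e^{2\alpha\varepsilon d_{\mathcal{X}}(x,x')}$ times the one for $x'$ (upper bound in the numerator, lower bound in the normalizing sum), and integrates, exactly as you do. Your suggestion to factor out a single lemma in which $e^{\varepsilon}$ enters only as a bound on the likelihood ratio is also how the paper organizes the argument (Theorem~\ref{thm:metric_privacy} is proved "by exactly the same arguments" as Theorem~\ref{thm:eps_dp}), with the measure-theoretic caveat about upgrading the ratio bound to $Q$-almost every $z$ handled in a footnote there.
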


Refer to Appendices~\ref{sec:pf_eps_dp} and \ref{sec:eps_delta_dp_2} for the proofs.
In Theorem \ref{thm:eps_dp} and Theorem \ref{thm:eps_delta_dp_2}, PPR imposes a multiplicative penalty
$2\alpha$ on the privacy parameter $\varepsilon$. 
This penalty can be made arbitrarily
close to $2$ by taking $\alpha$ close to $1$, which increases the
communication cost (see Theorem \ref{thm:logk_bound_simple}). 
Compared to minimal random coding which has a factor $2$ penalty in the DP guarantee \cite{havasi2019minimal,shah2022optimal}, the $2\alpha$ factor in PPR is slightly larger, though PPR ensures exact simulation (unlike  \cite{havasi2019minimal,shah2022optimal} which are approximate). 
The method in \cite{feldman2021lossless} does not have a penalty on $\varepsilon$, but the utility and compression size depends on computational hardness assumptions on the pseudorandom number generator, and there is no guarantee that the compression size is close to the optimum. In comparison, the compression and privacy guarantees of PPR are \emph{unconditional} and does not rely on computational assumptions.
In order to
make the penalty of PPR close to $1$, we have to consider $(\varepsilon,\delta)$-differential
privacy, and allow a small failure probability, i.e., a small increase in $\delta$. 
The following
result shows that PPR can compress any $\varepsilon$-DP mechanism into
a $(\approx \varepsilon,\, \approx 0)$-DP mechanism as long as $\alpha$
is close enough to $1$ (i.e., almost no inflation). 
More generally, PPR can compress an $(\varepsilon,\delta)$-DP
mechanism into an $(\approx\varepsilon,\,\approx2\delta)$-DP mechanism
for $\alpha$ close to $1$. 
The proof is in Appendix \ref{sec:pf_eps_delta_dp}. 

\begin{theorem}[Tighter $(\varepsilon,\delta)$-DP of PPR]
\label{thm:eps_delta_dp}If the mechanism $P_{Z|X}$ is $(\varepsilon,\delta)$-differentially private, then PPR $P_{(Z_{i})_{i},K|X}$ with parameter $\alpha>1$ is $(\alpha\varepsilon+\tilde{\varepsilon},\,2(\delta+\tilde{\delta}))$-differentially
private, for every $\tilde{\varepsilon}\in(0,1]$ and $\tilde{\delta}\in(0,1/3]$
that satisfy
$\alpha\le e^{-4.2}\tilde{\delta}\tilde{\varepsilon}^{2} / (-\ln\tilde{\delta}) +1$.
\end{theorem}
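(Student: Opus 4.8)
The plan is to analyze the randomized map $X\mapsto((Z_i)_i,K)$ induced by PPR directly, reduce it to the conditional law of $K$ given the shared randomness, and bound the resulting privacy loss by peeling off a low-probability ``heavy-tail'' event (absorbed into the $\delta$-budget) from a well-behaved bulk on which a factor $\alpha$ (rather than the $2\alpha$ of Theorem~\ref{thm:eps_dp}) suffices. Throughout I would use the hockey-stick characterization of $(\varepsilon,\delta)$-DP (for neighbours $x,x'$, $\Pr(Z\in\mathcal{S}\,|\,x)\le e^{\varepsilon}\Pr(Z\in\mathcal{S}\,|\,x')+\delta$ and symmetrically), the exact-output property (Proposition~\ref{prop:ppr_output_dist}, i.e.\ $Z_K\sim P_{Z|X}$), the tail/compression estimates behind Theorem~\ref{thm:logk_bound_simple}, and the structural computation already used for Theorems~\ref{thm:eps_dp} and~\ref{thm:eps_delta_dp_2}.

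\emph{Step 1 (reduction).} Since $(Z_i)_i$ is i.i.d.\ $Q$ and independent of $X$, it suffices to prove the $(\alpha\varepsilon+\tilde\varepsilon,\,2(\delta+\tilde\delta))$ bound for the conditional distribution $P_{K\mid X,(Z_i)_i}$ for $Q$-a.e.\ realization $(z_i)_i$, and then integrate. Writing $w_i(x):=\frac{\mathrm{d}P_{Z|X}(\cdot|x)}{\mathrm{d}Q}(z_i)$ and marginalizing the Poisson process, $\Pr(K=k\,|\,X=x,(z_i)_i)=\mathbb{E}_{(T_i)_i}\!\big[\,w_k(x)^{\alpha}T_k^{-\alpha}/\sum_i w_i(x)^{\alpha}T_i^{-\alpha}\,\big]$, so that, conditioned on the Poisson process, $K$ follows an exponential-mechanism-type law with score $\log w_k(x)-\log T_k$ and inverse temperature $\alpha$; under $x\mapsto x'$ the score of index $k$ shifts by exactly the privacy loss $L(z_k):=\log\frac{\mathrm{d}P_{Z|X}(\cdot|x)}{\mathrm{d}P_{Z|X}(\cdot|x')}(z_k)$, which the hockey-stick bound confines to a bounded range apart from a $\delta$-mass tail.

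\emph{Step 2 (trimming the heavy-tail of the privacy loss).} I would isolate a ``bad'' set of indices, $\{i:|L(z_i)|>\varepsilon_0\}$ for a threshold $\varepsilon_0$ just above $\varepsilon$, chosen to balance the numerator inflation against the tail mass. Because $Z_K\sim P_{Z|X}$ exactly, the probability that $K$ selects a bad index equals $\Pr_{Z\sim P_x}(|L(Z)|>\varepsilon_0)$, controlled by $\delta/(1-e^{-(\varepsilon_0-\varepsilon)})$ via the hockey-stick characterization; symmetrizing over $x\leftrightarrow x'$ this contributes the $2\delta$ (plus its share of $2\tilde\delta$) to the final budget, and the threshold-inflation $\alpha(\varepsilon_0-\varepsilon)$ is charged to a portion of $\tilde\varepsilon$. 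On the complement the numerator $w_k(x)^\alpha$ changes by at most $e^{\alpha\varepsilon_0}$.

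\emph{Step 3 (the normalizer and the role of $\alpha\to1$) and the main obstacle.} The remaining, and hardest, contribution is the fluctuation of the normalizing sum: conditioned on $(T_i)_i$, the ratio of normalizers equals $\sum_i\mu_i c_i$, where $\mu_i\propto w_i(x)^\alpha T_i^{-\alpha}$ is the conditional selection law under $x$ and $c_i:=e^{-\alpha L(z_i)}$ depends on $z_i$ only. By the exact-output property, $\mathbb{E}[\sum_i\mu_i c_i]=\mathbb{E}_{Z\sim P_x}[e^{-\alpha L(Z)}]$, which equals $1$ at $\alpha=1$ (since $\mathbb{E}_{Z\sim P_x}[e^{-L(Z)}]=1$) and exceeds $1$ only by $O(\alpha-1)$ near $\alpha=1$; combined with the boundedness $c_i\le e^{\alpha\varepsilon_0}$ on the good set, a Bernstein/sub-exponential concentration argument then shows $\log\sum_i\mu_i c_i$ exceeds the remaining share of $\tilde\varepsilon$ only with probability $\lesssim\tilde\delta$ over the Poisson and shared randomness, again charged to the $\delta$-budget. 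This is exactly where the condition $\alpha\le e^{-4.2}\tilde\delta\tilde\varepsilon^2/(-\ln\tilde\delta)+1$ must appear: the variance-type proxy for the normalizer fluctuation is proportional to $\alpha-1$, the tail level supplies $-\ln\tilde\delta$, the allowed slack is $\tilde\varepsilon$ (hence $\tilde\varepsilon^2$ after the concentration inequality), with $e^{-4.2}$ the absolute constant it produces; a reparametrization as in Appendix~\ref{sec:reparametrization}, under which the fluctuating quantity has a law governed only by $\alpha$, is what makes the condition independent of the base parameters $(\varepsilon,\delta)$. Assembling Steps 2--3, off an event of probability $\le 2(\delta+\tilde\delta)$ the privacy loss of $X\mapsto K$ given $(z_i)_i$ is at most $\alpha\varepsilon_0+(\text{normalizer slack})\le\alpha\varepsilon+\tilde\varepsilon$; the standard conversion ``bounded loss off a small-probability event $\Rightarrow$ $(\varepsilon',\delta')$-DP'' and integration over $(z_i)_i$ give the claim, and the metric-privacy analogue follows by carrying $d_{\mathcal{X}}(x,x')$ through the same estimates. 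I expect Step~3 to be the crux: upgrading the generic factor $2\alpha$ to $\alpha$ genuinely requires exploiting that the finite-$\alpha$ softmax makes the normalizer ratio concentrate near $1$, and then tracking the three interacting error sources (heavy-tail trim, normalizer concentration, and the hockey-stick slack $1-e^{-(\varepsilon_0-\varepsilon)}$) simultaneously while pinning down the exact constant in $e^{-4.2}\tilde\delta\tilde\varepsilon^2/(-\ln\tilde\delta)$ is the bulk of the bookkeeping.
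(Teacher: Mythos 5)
Your high-level intuition is right — the gain from $2\alpha$ to $\alpha$ comes from showing that the normalizer of the softmax over $\tilde{T}_k^{-\alpha}$ barely fluctuates when $\alpha$ is close to $1$, and the condition on $\alpha$ is exactly the price of that concentration — but both of your concrete devices deviate from what actually works, and as sketched they cannot deliver the stated constants. First, your Step 2 handles the $\delta$ part by trimming indices with $|L(z_i)|>\varepsilon_0$ and converting via the hockey-stick bound, which costs $\delta/(1-e^{-(\varepsilon_0-\varepsilon)})$. To keep this at the claimed $2\delta$ you would need $\varepsilon_0-\varepsilon\gtrsim\ln 2$, but then the charge $\alpha(\varepsilon_0-\varepsilon)$ cannot be absorbed into $\tilde{\varepsilon}$, which may be arbitrarily small (the theorem allows any $\tilde{\varepsilon}\in(0,1]$). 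The paper avoids this trade-off entirely: it constructs a surrogate measure $P_3$ with $\mathrm{d}P_3/\mathrm{d}Q\le e^{\varepsilon}\,\mathrm{d}P_2/\mathrm{d}Q$ pointwise and $d_{\mathrm{TV}}(P_1,P_3)\le\delta$, then uses the reparametrization $K=\mathrm{argmin}_k T_k V_k^{1/\alpha}/R_k$ (so PPR is the Poisson functional representation applied to the uniform process $(T_kV_k^{1/\alpha})_k$) and the grand coupling property to get $\Pr(K_1\neq K_3)\le 2d_{\mathrm{TV}}(P_1,P_3)\le 2\delta$, i.e.\ a $+2\delta$ with zero inflation of $\varepsilon$. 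That coupling step is the missing idea; without it your route proves a strictly weaker statement.

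Second, your Step 3 proposes concentrating the normalizer ratio $\sum_i\mu_i c_i$ around its mean $\mathbf{E}_{Z\sim P_x}[e^{-\alpha L(Z)}]$ by a Bernstein-type argument, but this object is a self-normalized sum whose weights $\mu_i$ are dominated by the few smallest $\tilde{T}_i$; it is not a sum of independent light-tailed terms and does not concentrate around its mean in general (in the limit $\alpha\to\infty$ it is just $c_K$, which has the full spread of $e^{-\alpha L(Z)}$), and the mean itself is $e^{(\alpha-1)D_\alpha}$-type, not $1+O(\alpha-1)$ when $\varepsilon$ is large. The paper's argument is different: by the mapping theorem each normalizer $\sum_i\tilde{T}_{j,i}^{-\alpha}$ ($j=1,2$) is the same functional of a rate-$1$ Poisson process, independent of the underlying distribution, and it is shown to lie in $[e^{-\tilde{\varepsilon}/2}/(\alpha-1),\,e^{\tilde{\varepsilon}/2}/(\alpha-1)]$ except with probability $\tilde{\delta}$, via Chernoff bounds on the Laplace functional — note $\mathbf{E}[\sum_i\tilde{T}_i^{-\alpha}]=\infty$, so the concentration is around the typical value $1/(\alpha-1)$, not a mean, and the right tail requires first discarding the event $\min_i\tilde{T}_i\le\tilde{\delta}/3$ because the moment generating functional diverges near $t=0$. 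Inserting these two events (cost $2\tilde{\delta}$), bounding the numerator pointwise by $e^{\alpha\varepsilon}$ times the other numerator, and swapping the lower bound of one normalizer for the upper bound of the other (cost $e^{\tilde{\varepsilon}}$) gives $(\alpha\varepsilon+\tilde{\varepsilon},2\tilde{\delta})$ in the pure case, and the $P_3$ coupling then adds $2\delta$. The explicit condition $\alpha-1\le e^{-4.2}\tilde{\delta}\tilde{\varepsilon}^2/(-\ln\tilde{\delta})$ is what makes these Chernoff exponents work out; it does not arise from a variance proxy in a Bernstein inequality as you suggest.
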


\begin{remark}

For the computation-privacy trade-off, in general, a larger $\alpha$ results in a smaller compression size (i.e., smaller $K$ and hence shorter running time) but larger privacy leakage, while a smaller $\alpha$ leads to worse compression but better privacy guarantee. Regarding the randomness requirement, although in the theory of this paper we assume ``unlimited common randomness'', it is of interest to study the trade-off between the amount of common randomness used and the required communication cost, similar to the study in~\cite{cuff2013distributed}. 
We leave the investigation of the randomness-communication-privacy trade-off for future work.

\end{remark}

\section{PPR on Distributed Mean Estimation}\label{sec:mean_estimation}

We demonstrate the efficacy of PPR by applying it to distributed mean estimation (DME) \citep{suresh2017distributed}. 
Note that this problem is closely related to the federated learning problems~\citep{konevcny2016federated, kairouz2021advances}, or similar stochastic optimization problems, e.g., ~\cite{mcmahan2017communication}. 
In the DME problem, for each iteration, the server sends a message to update the global model by a noisy mean of the local model updates. 
The noisy estimation is usually from some DME framework, hence a distributed differentially-private SGD (or a differentially-private federated learning) can be constructed based on a differentially-private DME framework. 
For such problems, as discussed in~\cite{ghadimi2013stochastic}, if the estimate of the gradient is unbiased in each round, the convergence rates of SGD are dependent on the $\ell_2$ estimation error. 
In short, private DME is the core sub-routine in various private and federated optimization algorithms, such as DP-SGD \citep{abadi2016deep} or DP-FedAvg \citep{mcmahan2017communication}.

In such distributed settings, each local client communicates a length-limited message to the central server, and the privacy (explicit differential privacy guarantee~\citep{dwork2006calibrating}) of the data can be guaranteed by adding noise to the estimated mean at the central server before releasing it to downstream components. 
For example, after estimating the average model update, the central server corrupts it with the addition of Gaussian noise).
This is usually called the trusted server or \emph{central DP} guarantee (see Section~\ref{sec::preliminary} for definitions), since the central server is trusted in privatizing the computed mean, and it is one of the most common methods in practice for federated learning and analytics. 
However, as we have discussed above, our scheme not only achieves the same level of central DP guarantee, but also ensures local DP guarantee.

Consider the following general distributed setting: each of $n$ clients holds a local data point $X_i\in \mcal{X}$, and a central server aims to estimate a function of all local data $\mu\lp X^n\rp$, subject to privacy and local communication constraints. To this end, each client $i$ compresses $X_i$ into a message $Z_i\in\mathcal{Z}_n$ via a local encoder, and we require that each $Z_i$ can be encoded into a bit string with an expected length of at most $b$ bits.
Upon receiving $Z^n:=\left(Z_1,\ldots, Z_n\right)$, the central server decodes it and outputs a DP estimate $\hat{\mu}$. 
Two DP criteria can be considered: the $(\varepsilon,\delta)$-central DP of the randomized mapping from $X^n$ to $\hat{\mu}$, and the $(\varepsilon,\delta)$-local DP of the randomized mapping from $X_i$ to $Z_i$ for each client $i$. 

In the distributed $L_2$ mean estimation problem, 
\begin{equation*}
    \mcal{X} = \mcal{B}_d(C) := \lbp v \in \mbb{R}^d \,\mv\, \lV v \rV_2 \leq C \rbp,
\end{equation*}
and the central server aims to estimate the sample mean $\mu(X^n) := \frac{1}{n} \sum_{i=1}^n X_i$ by minimizing the mean squared error (MSE) $\mathbb{E}[\lV\mu - \hat{\mu}\rV_2^2]$. It is recently proved that under $\varepsilon$-local DP, $\msf{privUnit}$ \citep{bhowmick2018protection, asi2022optimal} is the optimal mechanism. By simulating $\msf{privUnit}$ with PPR and applying Corollary~\ref{cor:generic_compression_bdd} and Theorem~\ref{thm:eps_delta_dp_2}, we immediately obtain the following corollary:

\begin{corollary}[PPR simulating $\msf{privUnit}$] Let $P$ be the density defined by $\varepsilon$-$\msf{privUnit}_2$ \citep[Algorithm~1]{bhowmick2018protection}. Let $Q$ be the uniform density over the sphere $\mbb{S}^{d-1}\lp 1/m \rp$ where the radius $1/m$ is defined in \citep[(15)]{bhowmick2018protection}. Let $r^* := e^\varepsilon$. Then the outcome of PPR (see Algorithm~\ref{alg:ppr}) satisfies (1) $2\alpha\varepsilon$-local DP; and (2) $(\alpha\varepsilon+\tilde{\varepsilon}, 2\tilde{\delta})$-DP for any $\alpha \leq e^{-4.2}\tilde{\delta}\tilde{\varepsilon}^2/\log(1/\tilde{\delta}) +1$. In addition, the average compression size is 
at most $\ell + \log_2(\ell + 1) + 2$ bits where $\ell := \varepsilon + (\log_2\lp 3.56 \rp)/\min\{(\alpha-1)/2, 1\}$.
Moreover, PPR achieves the same MSE as $\varepsilon$-$\msf{privUnit}_2$, which is $O\lp d/\min\lp \varepsilon, \varepsilon^2 \rp \rp$.
\end{corollary}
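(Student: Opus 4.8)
The plan is to verify the corollary by directly invoking the three tools already established for PPR, applied to the specific mechanism $\msf{privUnit}_2$ and the specific proposal distribution $Q$ (the uniform distribution on the sphere $\mbb{S}^{d-1}(1/m)$). First I would recall that $\varepsilon$-$\msf{privUnit}_2$ is an $\varepsilon$-local DP mechanism by \citep{bhowmick2018protection}, so $P_{Z|X}$ satisfies $\varepsilon$-DP with the discrete metric (i.e.\ $\mathcal{N} = \mathcal{X}^2$). By Proposition~\ref{prop:ppr_output_dist}, PPR simulates $P_{Z|X}$ exactly, so the output $Z$ is distributed precisely as $\msf{privUnit}_2$; hence the final-output privacy (database privacy) and the MSE are inherited verbatim. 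The known MSE of $\varepsilon$-$\msf{privUnit}_2$ is $O(d/\min(\varepsilon,\varepsilon^2))$ \citep{asi2022optimal}, giving claim about the MSE immediately.

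Next I would establish the two local-DP claims. Claim (1), $2\alpha\varepsilon$-local DP of the PPR-compressed mechanism $P_{(Z_i)_i, K|X}$, is a direct application of Theorem~\ref{thm:eps_dp} with the input mechanism being $\varepsilon$-DP. Claim (2), $(\alpha\varepsilon + \tilde\varepsilon,\, 2\tilde\delta)$-DP, follows from Theorem~\ref{thm:eps_delta_dp} applied to $P_{Z|X}$ viewed as an $(\varepsilon, 0)$-DP mechanism: setting $\delta = 0$ in that theorem yields the $(\alpha\varepsilon + \tilde\varepsilon,\, 2\tilde\delta)$-DP conclusion under the stated condition $\alpha \le e^{-4.2}\tilde\delta\tilde\varepsilon^2/(-\ln\tilde\delta) + 1$ (note $-\ln\tilde\delta = \log(1/\tilde\delta)$ up to the base, which is absorbed into the displayed constant). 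I should double-check that the base conventions for $\log$ versus $\ln$ in Theorem~\ref{thm:eps_delta_dp} and in the corollary statement are consistent, but this is routine bookkeeping.

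For the communication-size claim, I would invoke Corollary~\ref{cor:generic_compression_bdd}, which states that for an $\varepsilon$-DP mechanism, choosing $Q = P_{Z|X=x_0}$ gives expected compression size at most $\ell + \log_2(\ell+1) + 2$ bits with $\ell = \varepsilon\log_2 e + (\log_2 3.56)/\min\{(\alpha-1)/2, 1\}$. The one subtlety here is that the corollary's $Q$ in the statement is the uniform distribution on $\mbb{S}^{d-1}(1/m)$, not literally $P_{Z|X=x_0}$ for some $x_0$. So I would need to argue that this uniform $Q$ still controls $D(P_{Z|X=x}\Vert Q)$ uniformly in $x$ by the value $\varepsilon \log e$ (or close to it): this uses the structure of $\msf{privUnit}_2$, namely that its density with respect to the uniform measure on the relevant sphere is bounded between $e^{-\varepsilon}$-ish and $e^{\varepsilon}$-ish constants, so $D(P_{Z|X=x}\Vert Q) \le \varepsilon\log e$ for every $x$. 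Then Theorem~\ref{thm:logk_bound_simple}'s first bound gives $\mathbb{E}[\log K] \le \varepsilon\log e + (\log 3.56)/\min\{(\alpha-1)/2,1\}$, and the prefix-free (Shannon/Zipf) coding overhead supplies the extra $\log_2(\ell+1)+2$ bits.

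The main obstacle I anticipate is precisely this last point: verifying that the particular geometric proposal distribution $Q$ (uniform on the sphere of radius $1/m$) yields a KL divergence bound matching the one in Corollary~\ref{cor:generic_compression_bdd}, which was stated for $Q = P_{Z|X=x_0}$. This requires unpacking the definition of $\msf{privUnit}_2$ from \citep{bhowmick2018protection} — in particular the normalization and the radius $1/m$ chosen in \citep[(15)]{bhowmick2018protection} — to confirm $\max_x D(P_{Z|X=x}\Vert Q) \le \varepsilon\log e$. Everything else is a mechanical chaining of Proposition~\ref{prop:ppr_output_dist}, Theorems~\ref{thm:eps_dp} and~\ref{thm:eps_delta_dp}, and the compression bounds, together with the known utility of $\msf{privUnit}_2$.
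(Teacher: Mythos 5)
Your proposal is correct and follows essentially the same route as the paper, which states this corollary as an immediate chaining of Proposition~\ref{prop:ppr_output_dist} (exactness, hence the MSE of $\msf{privUnit}_2$), Theorems~\ref{thm:eps_dp} and~\ref{thm:eps_delta_dp} (the two local-DP claims), and the compression bound of Corollary~\ref{cor:generic_compression_bdd}/Theorem~\ref{thm:logk_bound_simple}, giving no separate appendix proof. The one subtlety you flag — that $Q$ is the uniform density on $\mbb{S}^{d-1}(1/m)$ rather than $P_{Z|X=x_0}$ — is real but resolves exactly as you suggest: since $\msf{privUnit}_2$ is $\varepsilon$-local DP, its density ratio against the uniform (equivalently, the mixture of $P_{Z|X=x_0}$ over uniform $x_0$) is pointwise at most $e^{\varepsilon}=r^*$, so $D(P_{Z|X=x}\Vert Q)\le\varepsilon\log_2 e$ uniformly in $x$, which the paper leaves implicit.
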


Note that PPR can simulate arbitrary local DP mechanisms. However, we present only the result of $\msf{privUnit}_2$ because it achieves the optimal privacy-accuracy trade-off. Besides simulating local DP mechanisms, PPR can also compress central DP mechanisms while still preserving some (albeit weaker) local guarantees. 
We give a corollary of Theorems~\ref{thm:logk_bound_simple} and~\ref{thm:eps_delta_dp_2}. The proof is in Appendix~\ref{sec:pf_gaussian_ppr2}.
\begin{corollary}[PPR-compressed Gaussian mechanism]\label{cor:gaussian_ppr2} 
Let $\varepsilon,\delta \in (0,1)$. Consider the Gaussian mechanism $P_{Z|X}(\cdot | x) = \mcal{N}( x, \frac{\sigma^2}{n}\mbb{I}_d)$, and the proposal distribution $Q=\mathcal{N}(0,(\frac{C^{2}}{d}+\frac{\sigma^{2}}{n})\mathbb{I}_{d})$, where $\sigma \geq \frac{C\sqrt{2\ln\lp 1.25/\delta \rp}}{\varepsilon}$.
For each client $i$, let 
$Z_i$ be the output of PPR applied on $P_{Z|X}(\cdot | X_i)$. We have:
\begin{itemize}
    \item $\hat{\mu}(Z^n) := \frac{1}{n}\sum_i Z_i$ yields an unbiased estimator of $\mu( X^n) = \frac{1}{n} \sum_{i=1}^n X_i$ satisfying $(\varepsilon, \delta)$-central DP and has MSE $\mathbb{E}[\lV\mu - \hat{\mu}\rV_2^2] = \sigma^2 d/n^2$.

    \item As long as $\varepsilon < 1/\sqrt{n}$, PPR satisfies $\lp 2\alpha\sqrt{n}\varepsilon, 2\delta\rp$-local DP.\footnote{The restricted range on $\varepsilon < 1/\sqrt{n}$ is due to the simpler privacy accountant \citep{dwork2006differential}. By using the R\'enyi DP accountant instead, one can achieve a tighter result that applies to any $n$. We present the R\'enyi DP version of the corollary in Appendix~\ref{sec::appendix_mean_est}. 
    Moreover, in the context of federated learning, $n$ refers to the number of clients in \emph{each round}, which is typically much smaller than the total number of clients. 
    For example, as observed in~\cite{kairouz2021advances}, the per-round cohort size in Google's FL application typically ranges from $10^3$ to $10^5$, significantly smaller than the number of trainable parameters $d\in [10^6, 10^9]$ or the number of available users $N\in [10^6, 10^8]$. } 
    
    \item The average per-client communication cost is at most $\ell + \log_2(\ell + 1) + 2$ bits where 
\begin{align*}
\ell & :=\frac{d}{2}\log_{2}\Big(\frac{C^{2}n}{d\sigma^{2}}+1\Big)+\eta_{\alpha} \; \le\;\frac{d}{2}\log_{2}\Big(\frac{n\varepsilon^{2}}{2d\ln(1.25/\delta)}+1\Big)+\eta_{\alpha},
\end{align*}
where $\eta_{\alpha}:=(\log_{2}(3.56))/\min\{(\alpha-1)/2,\,1\}$.
\end{itemize}
\end{corollary}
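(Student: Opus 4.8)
The plan is to verify the three bullet points in turn, each as a fairly direct consequence of an earlier result applied to the specific Gaussian mechanism $P_{Z|X}(\cdot|x)=\mathcal{N}(x,\frac{\sigma^2}{n}\mathbb{I}_d)$ and proposal $Q=\mathcal{N}(0,(\frac{C^2}{d}+\frac{\sigma^2}{n})\mathbb{I}_d)$. First I would record the regularity facts: $P_{Z|X}(\cdot|x)\ll Q$ for every $x\in\mathcal{B}_d(C)$ (both are nondegenerate Gaussians), and the exactness of PPR (Proposition~\ref{prop:ppr_output_dist}) gives that $Z_i\sim\mathcal{N}(X_i,\frac{\sigma^2}{n}\mathbb{I}_d)$ exactly, so $\hat\mu(Z^n)=\frac1n\sum_i Z_i$ has the same distribution as $\mu(X^n)+\frac1n\sum_i W_i$ with $W_i\sim\mathcal{N}(0,\frac{\sigma^2}{n}\mathbb{I}_d)$ i.i.d. This immediately yields unbiasedness and $\mathbb{E}[\lVert\mu-\hat\mu\rVert_2^2]=\frac{1}{n^2}\cdot n\cdot\frac{\sigma^2 d}{n}=\frac{\sigma^2 d}{n^2}$. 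For the central-DP claim, note that the end-to-end map $X^n\mapsto\hat\mu$ is, in distribution, exactly the classical Gaussian mechanism releasing $\mu(X^n)+\mathcal{N}(0,\frac{\sigma^2}{n^2}\mathbb{I}_d)$: the $\ell_2$-sensitivity of $\mu$ under one coordinate change in $\mathcal{B}_d(C)$ is $2C/n$... actually, with the standard neighbor relation used in DME the sensitivity is $2C/n$ or $C/n$ depending on convention; I would use the same convention as the cited Gaussian-mechanism bound so that $\sigma\ge C\sqrt{2\ln(1.25/\delta)}/\varepsilon$ is exactly the threshold for $(\varepsilon,\delta)$-DP by the analytic Gaussian mechanism / \citep{dwork2006differential}, and conclude.

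For the local-DP claim I would first establish that the single-client mechanism $P_{Z|X}(\cdot|X_i)=\mathcal{N}(X_i,\frac{\sigma^2}{n}\mathbb{I}_d)$, viewed as a map on $\mathcal{X}=\mathcal{B}_d(C)$ with the discrete neighbor relation $\mathcal{N}=\mathcal{X}^2$, satisfies $(\varepsilon',\delta')$-local DP for a suitable $(\varepsilon',\delta')$: since any two points of $\mathcal{B}_d(C)$ differ by at most $2C$ in $\ell_2$ norm, the relevant sensitivity is $2C$ and the per-client noise has standard deviation $\sigma/\sqrt n$, so the effective noise multiplier is a factor $\sqrt n$ worse than in the central accounting. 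Plugging $\sigma\ge C\sqrt{2\ln(1.25/\delta)}/\varepsilon$ into the simple Gaussian-mechanism bound gives that the client mechanism is $(\sqrt n\,\varepsilon,\delta)$-local DP, under the stated restriction $\varepsilon<1/\sqrt n$ needed for that bound to be in the valid regime $\varepsilon'<1$. Then Theorem~\ref{thm:eps_delta_dp_2} applied with this $(\sqrt n\,\varepsilon,\delta)$ yields that PPR is $(2\alpha\sqrt n\,\varepsilon,2\delta)$-local DP, which is the claim. I would remark (matching the footnote) that the $\varepsilon<1/\sqrt n$ restriction is an artifact of this crude accountant and can be removed with a Rényi-DP argument deferred to the appendix.

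For the communication bound I would invoke Theorem~\ref{thm:logk_bound_simple}: with $X$ random and $Q=P_Z$ one gets $\mathbb{E}[\log_2 K]\le I(X;Z)+\eta_\alpha$ with $\eta_\alpha=(\log_2 3.56)/\min\{(\alpha-1)/2,1\}$; here, however, the chosen $Q$ is not exactly the marginal $P_Z$ but a fixed Gaussian dominating every $P_{Z|X=x}$, so I would instead use the per-input bound $\mathbb{E}[\log_2 K]\le D(P_{Z|X=x}\Vert Q)+\eta_\alpha$ and then bound $D(\mathcal{N}(x,\frac{\sigma^2}{n}\mathbb{I}_d)\Vert\mathcal{N}(0,(\frac{C^2}{d}+\frac{\sigma^2}{n})\mathbb{I}_d))$ uniformly over $x\in\mathcal{B}_d(C)$. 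The KL divergence between these two Gaussians is the standard expression $\frac{d}{2}\big(\frac{\sigma^2/n}{C^2/d+\sigma^2/n}-1+\ln\frac{C^2/d+\sigma^2/n}{\sigma^2/n}\big)+\frac{\lVert x\rVert_2^2}{2(C^2/d+\sigma^2/n)}$, and using $\lVert x\rVert_2^2\le C^2$ together with the elementary inequality $t-1\le\ln t$ (applied to the reciprocal ratio) one checks that the $x$-dependent term plus the trace term is dominated, leaving exactly $\frac{d}{2}\log_2\big(\frac{C^2 n}{d\sigma^2}+1\big)$ after converting $\ln$ to $\log_2$; the further bound $\frac{C^2 n}{d\sigma^2}\le\frac{n\varepsilon^2}{2d\ln(1.25/\delta)}$ comes from substituting $\sigma^2\ge 2C^2\ln(1.25/\delta)/\varepsilon^2$. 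Finally the Shannon-code overhead for the Zipf prior (cited from \citep{li2018strong}, also quoted right before Corollary~\ref{cor:generic_compression_bdd}) turns $\mathbb{E}[\log_2 K]\le\ell$ into an expected codelength $\le\ell+\log_2(\ell+1)+2$. The main obstacle is the careful uniform-in-$x$ control of the KL term — in particular showing the trace term $\frac{d}{2}(\frac{\sigma^2/n}{C^2/d+\sigma^2/n}-1)$ is negative and absorbs the $\frac{C^2}{2(C^2/d+\sigma^2/n)}$ term so that one genuinely recovers the clean closed form $\frac{d}{2}\log_2(\frac{C^2n}{d\sigma^2}+1)+\eta_\alpha$ rather than something merely of the same order; everything else is bookkeeping.
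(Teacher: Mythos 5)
Your plan follows the paper's own proof of Corollary~\ref{cor:gaussian_ppr2} (Appendix~\ref{sec:pf_gaussian_ppr2}) almost step for step: exactness (Proposition~\ref{prop:ppr_output_dist}) gives unbiasedness and the MSE $\sigma^{2}d/n^{2}$; the communication bound comes from the per-input form of Theorem~\ref{thm:logk_bound_simple}, a uniform bound on $D(\mathcal{N}(x,\frac{\sigma^{2}}{n}\mathbb{I}_{d})\Vert Q)$ over $\Vert x\Vert_{2}\le C$, and the prefix-code overhead; and the privacy claims come from the classical Gaussian-mechanism analysis of \cite{dwork2014algorithmic} combined with Theorem~\ref{thm:eps_delta_dp_2}. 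One minor simplification: the KL step needs no domination argument, since after taking the expectation the non-logarithmic terms cancel exactly — $\mathbb{E}[\Vert Z\Vert_{2}^{2}]\le C^{2}+\sigma^{2}d/n=d\bigl(\frac{C^{2}}{d}+\frac{\sigma^{2}}{n}\bigr)$, so $\frac{1}{2}\bigl(\frac{C^{2}+\sigma^{2}d/n}{C^{2}/d+\sigma^{2}/n}-d\bigr)=0$ and one obtains exactly $\frac{d}{2}\log_{2}(\frac{C^{2}n}{d\sigma^{2}}+1)$, with the second form following from $C^{2}/\sigma^{2}\le\varepsilon^{2}/(2\ln(1.25/\delta))$.

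The one point to fix is your local-DP step, which is internally inconsistent as written: if the per-client $\ell_{2}$-sensitivity is taken to be $2C$ (any two points of $\mathcal{B}_{d}(C)$), then with per-client noise standard deviation $\sigma/\sqrt{n}\ge\frac{C\sqrt{2\ln(1.25/\delta)}}{\sqrt{n}\,\varepsilon}$ the classical calibration only gives the base mechanism $(2\sqrt{n}\,\varepsilon,\delta)$-local DP, and Theorem~\ref{thm:eps_delta_dp_2} then yields $(4\alpha\sqrt{n}\,\varepsilon,2\delta)$ — a factor $2$ worse than the corollary. The stated $(2\alpha\sqrt{n}\,\varepsilon,2\delta)$ requires the base mechanism to be $(\sqrt{n}\,\varepsilon,\delta)$-local DP, i.e., effective per-client sensitivity $C$, which is exactly the same ``contribution bounded by $C$'' convention (sensitivity $C/n$ for the mean) that you already need for the central bullet to hold at the threshold $\sigma\ge C\sqrt{2\ln(1.25/\delta)}/\varepsilon$. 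Use one consistent sensitivity convention in both bullets (as the paper implicitly does by invoking the same Gaussian-mechanism bound for both); with that, your argument, including the restriction $\varepsilon<1/\sqrt{n}$ ensuring $\sqrt{n}\,\varepsilon<1$ for the validity of the simple accountant, goes through and coincides with the paper's proof.
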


A few remarks are in order. First, notice that when $\alpha$ is fixed,
for an $O(\frac{C^2 d}{n^2\varepsilon^2}\log(1/\delta))$ MSE, the per-client communication cost is 
\[
O\Big(d\log\Big(\frac{n\varepsilon^{2}}{d\log(1/\delta)}+1\Big)+1\Big),
\]
which is at least as good as the $O(n\varepsilon^{2}/\log(1/\delta)+1)$
bound in \citep{suresh2017distributed, chen2024privacy}, and can be better than $O(n\varepsilon^{2}/\log(1/\delta)+1)$
when $n \gg d$. Hence, the PPR-compressed Gaussian mechanism
is order-wise optimal.
Second, compared to other works that also compress the Gaussian mechanism, PPR is the only lossless compressor; schemes based on random sparsification, projection, or minimum random coding (e.g., \citep{triastcyn2021dp, chen2024privacy}) are \emph{lossy}, i.e., they introduce additional distortion on top of the DP noise. Finally, other DP mechanism compressors tailored to local randomizers \citep{feldman2021lossless, shah2022optimal} do not provide the same level of central DP guarantees when applied to local Gaussian noise since the reconstructed noise is no longer Gaussian. Refer to Section~\ref{sec::exp_dme} for experiments.

\section{Empirical Results on Distributed Mean Estimation}
\label{sec::exp_dme}

We empirically evaluate our scheme on the Distributed Mean Estimation (DME) problem (which is formally introduced in Section~\ref{sec:mean_estimation}).

We examine the privacy-accuracy-communication trade-off, and compare it with the Coordinate Subsampled Gaussian Mechanism (CSGM)  \citep[Algorithm 1]{chen2024privacy}, an order-optimal scheme for DME under central DP.
In~\citep{chen2024privacy}, each client only communicates partial information (via sampling a subset of the coordinates of the data vector) about its samples to amplify the privacy, and the compression is mainly from subsampling. 
Moreover, CSGM only guarantees central DP.

\subsection{Experiment}

We use the same setup that has been used in~\cite{chen2024privacy}: 
consider $n=500$ clients, and the dimension of local vectors is $d=1000$, each of which is generated according to $X_i(j)\overset{\mathrm{i.i.d.}}{\sim} 
\left(2\cdot \mathrm{Ber}(0.8) - 1 \right)$, where $\mathrm{Ber}(0.8)$ is a Bernoulli random variable with parameter $p = 0.8$. 
We require $(\varepsilon,\delta)$-central DP with $\delta = 10^{-6}$ and $\varepsilon \in [0.05, 6]$ and apply the PPR with $\alpha =  2$ to simulate the Gaussian mechanism, where the privacy budgets are accounted via R\'enyi DP.

\begin{figure}
	\centering
    \includegraphics[scale = 0.4]{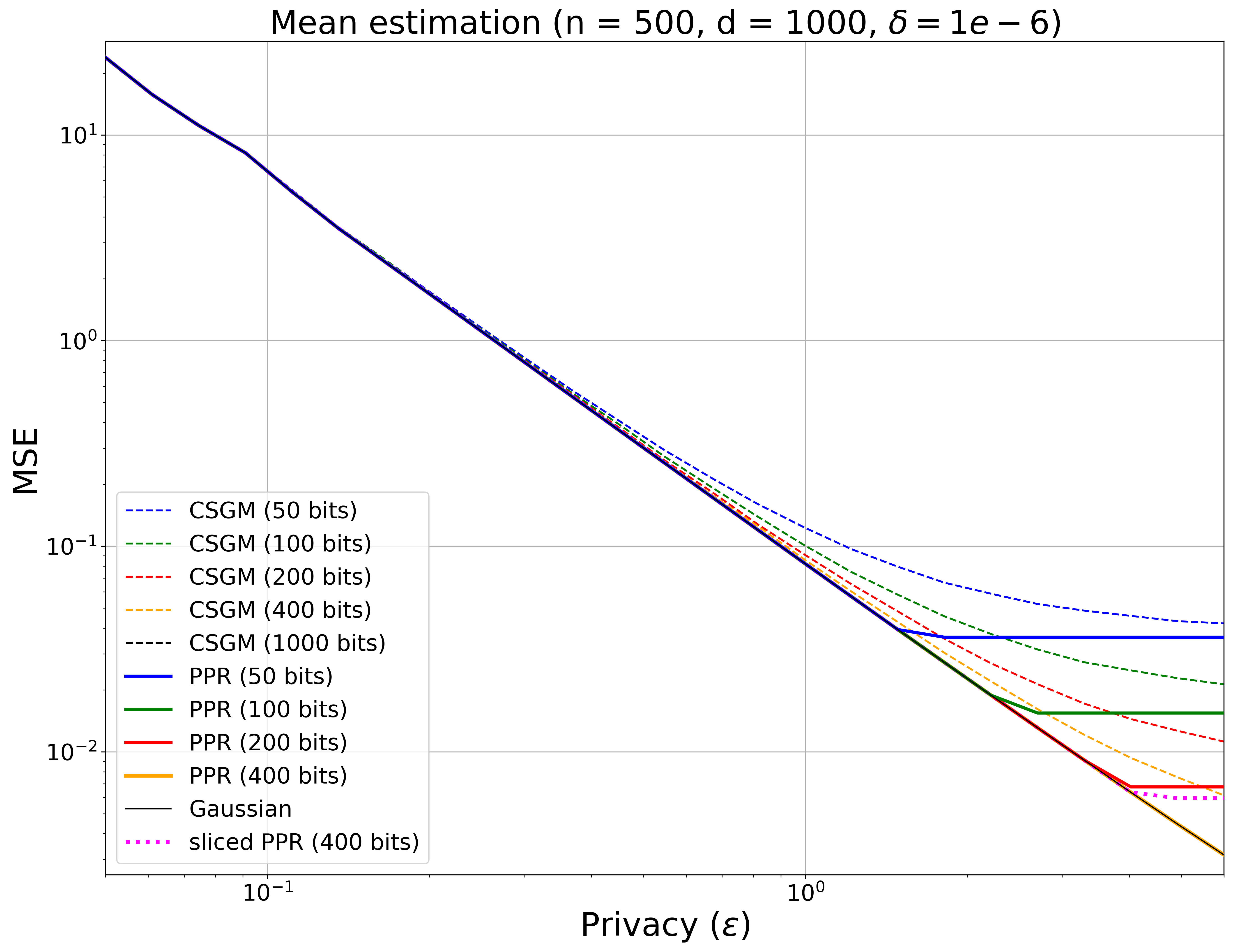}
	\caption{MSE of distributed mean estimation for PPR and CSGM~\citep{chen2024privacy} for different $\varepsilon$'s. }
\label{fig:experiment_log}
\end{figure}

We compare the MSE of PPR ($\alpha =  2$, using Theorem \ref{thm:logk_bound_simple}) and CSGM under various compression sizes in Figure~\ref{fig:experiment_log} (the $y$-axis is in logarithmic scale).\footnote{
Source code: \url{https://github.com/cheuktingli/PoissonPrivateRepr}.
Experiments were executed on M1 Pro Macbook, $8$-core CPU ($\approx 3.2$ GHz) with 16GB memory. For PPR under a privacy budget $\varepsilon$ and communication budget $b$, we find the largest $\varepsilon' \le \varepsilon$ such that the communication cost bound in Theorem \ref{thm:logk_bound_simple} (with Shannon code \cite{shannon1948mathematical}) for simulating the Gaussian mechanism with $(\varepsilon',\delta)$-central DP is at most $b$, and use PPR to simulate this Gaussian mechanism. Thus, MSE of PPR in Figure~\ref{fig:experiment_log} becomes flat for large $\varepsilon$, as PPR falls back to using a smaller $\varepsilon'$ instead of $\varepsilon$ due to the communication budget. 
} 
Note that the MSE of the (uncompressed) Gaussian mechanism coincides with the CSGM with $1000$ bits, and the PPR with only $400$ bits. 
We see that PPR consistently achieves a smaller MSE compared to CSGM for all $\varepsilon$'s and compression sizes considered.
For $\epsilon=1$ and we compress $d=1000$ to $50$ bits, CSGM has an MSE $0.1231$ , while PPR has an MSE $0.08173$, giving a $33.61\%$ reduction. 
For $\epsilon=0.5$ and we compress $d=1000$ to $25$ bits (the case of high compression and conservative privacy), CSGM has an MSE $0.3877$, while PPR has an MSE $0.3011$, giving a $22.33\%$ reduction. 
These reductions are significant, since all considered mechanisms are asymptotically close to optimal and a large improvement compared to an (almost optimal) mechanism is unexpected.
See Section~\ref{sec::mse_size} for more about MSE against the compression sizes.

We also emphasize that PPR provides \emph{both} central and local DP guarantees according to Theorem \ref{thm:eps_dp}, \ref{thm:eps_delta_dp_2} and \ref{thm:eps_delta_dp}, benefiting from the fact that PPR \emph{exactly} compresses the privacy
mechanism and hence control the distributions exactly. 
In contrast, CSGM only provides central DP guarantees.
Another advantage of PPR under conservative privacy (small $\epsilon$) is that the trade-off between $\epsilon$ and MSE of PPR exactly coincides with the trade-off of the Gaussian mechanism for small $\epsilon$ (see Figure~\ref{fig:experiment_log}), and CSGM is only close to (but strictly worse than) the Gaussian mechanism. This means that for small $\epsilon$, PPR provides compression without any drawback in terms of $\epsilon$-MSE trade-off compared to the Gaussian mechanism (which requires an infinite size communication to exactly realize). 

Moreover, although directly applying PPR on the $d$-dimensional vectors is impractical for a large $d$,
one can ensure an efficient $O(d)$ running time (see Section~\ref{sec:limitations} for details) by breaking the vector with $d=1000$ dimensions into small chunks of fixed lengths (we use $d_{\mathrm{chunk}} = 50$ dimensions for each chunk), and apply the PPR to each chunk. 
We call it the \emph{sliced PPR} in Figure~\ref{fig:experiment_log}. 
Though the sliced PPR has a small penalty on the MSE (as shown in Figure~\ref{fig:experiment_log}), it still outperforms the CSGM ($400$ bits) for the range of $\varepsilon$ in the plot. 
For the sliced PPR for one $d=1000$ vector, when $\epsilon = 0.05$, the running time is $1.3348$ seconds on average.\footnote{The running time is calculated by $\frac{1000}{50} \times T_{\mathrm{chunk}}$, where each chunk's running time $T_{\mathrm{chunk}}$ is averaged over $1000$ trials. The estimate of the mean of $T_{\mathrm{chunk}}$ is $0.0667$, whereas the standard deviation is $0.2038$.} 
For larger $\epsilon$'s, we can choose smaller $d_{\mathrm{chunk}}$'s to have reasonable running time: 
For $\epsilon=6$ and $d_{\mathrm{chunk}} = 2$ we have an average running time $0.0127$ seconds and with $d_{\mathrm{chunk}} = 4$ we have an average running time $0.6343$ seconds; for $\epsilon=10$ and $d_{\mathrm{chunk}} = 2$ we have an average running time $0.0128$ seconds and with $d_{\mathrm{chunk}} = 4$ we have an average running time $0.7301$ seconds. 
See Section~\ref{sec::dics_runtime} for more experiments on the running time of the sliced PPR.

\section{Applications to Metric Privacy\label{sec:app_metric}} 

Metric privacy~\cite{chatzikokolakis2013broadening,andres2013geo} (see Definition~\ref{def:metric_privacy}) allows users to send privatized version $Z \in \mathbb{R}^d$ of their data vectors $X \in \mathbb{R}^d$ to an untrusted server, so that the server can know $X$ approximately but not exactly. 
A popular mechanism is the \emph{Laplace mechanism}~\cite{chatzikokolakis2013broadening,andres2013geo,fernandes2019generalised,feyisetan2020privacy}, where a $d$-dimensional Laplace noise is added to $X$. The conditional density function of $Z$ given $X$ is
$f_{Z|X}(z|x) \propto e^{-\varepsilon d_{\mathcal{X}}(x,z)}$, where $\varepsilon$ is the privacy parameter, and the metric $d_{\mathcal{X}}(x,z)=\Vert x-z \Vert_2$ is the Euclidean distance.
The Laplace mechanism achieves $\varepsilon \cdot d_{\mathcal{X}}$-privacy, and has been used, for example, to privatize high-dimensional word embedding vectors~\cite{fernandes2019generalised,feyisetan2020privacy}, or for geo-indistinguishability~\cite{andres2013geo} to privatize the users' locations, where the purpose is to allow users
to send privatized version of their location information to an
untrusted server, so that the server can approximate the locations (to provide some remote services) without knowing the exact locations.

A problem is that the real vector $Z$ cannot be encoded into finitely many bits. To this end, \cite{andres2013geo} studies a \emph{discrete Laplace mechanism} where each coordinate of $Z$ is quantized to a finite number of levels, introducing additional distortion to $Z$.
PPR provides an alternative compression method that preserves the statistical behavior of $Z$ (e.g., unbiasedness) exactly.
We give a corollary of Theorems~\ref{thm:logk_bound_simple} and~\ref{thm:metric_privacy}. The proof is in Appendix~\ref{sec:pf_laplace_ppr}.
Refer to Section~\ref{sec:emprical_metricDP} for an experiment on metric privacy.

\begin{corollary}[PPR-compressed Laplace mechanism]\label{cor:laplace_ppr} 
Consider PPR applied to the Laplace mechanism $P_{Z|X}$ where $X\in\mathcal{B}_{d}(C)=\{x\in\mathbb{R}^{d}\,|\,\Vert x\Vert_{2}\le C\}$,
with a proposal distribution $Q=\mathcal{N}(0,(\frac{C^{2}}{d}+\frac{d+1}{\varepsilon^{2}})\mathbb{I}_{d})$.
It achieves an MSE $\frac{d(d+1)}{\varepsilon^{2}}$, a $2\alpha\epsilon\cdot d_{\mathcal{X}}$-privacy,
and a compression size at most $\ell+\log_{2}(\ell+1)+2$ bits, where
\begin{align*}
\ell & :=\frac{d}{2}\log_{2}\left(\frac{2}{e}\left(\frac{C^{2}\varepsilon^{2}}{d}+d+1\right)\right)-\log_{2}\frac{\Gamma(d+1)}{\Gamma(\frac{d}{2}+1)}+\eta_{\alpha},
\end{align*}
where $\eta_{\alpha}:=(\log_{2}(3.56))/\min\{(\alpha-1)/2,\,1\}$.
\end{corollary}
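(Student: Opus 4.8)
The plan is to prove the three assertions separately, using exactness of PPR (Proposition~\ref{prop:ppr_output_dist}), the metric-privacy transfer (Theorem~\ref{thm:metric_privacy}), and the compression bound (Theorem~\ref{thm:logk_bound_simple}); the nontrivial content is the KL computation that yields the stated $\ell$.

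\textbf{MSE and privacy.} Since PPR simulates $P_{Z|X}$ exactly, the output has the same law as $Z=x+W$, where $W$ is the spherically symmetric noise with density $\propto e^{-\varepsilon\|w\|_2}$. In spherical coordinates the radial density is $\propto r^{d-1}e^{-\varepsilon r}$, so $\mathbb{E}[\|W\|_2^2]=\Gamma(d+2)/(\varepsilon^{2}\Gamma(d))=d(d+1)/\varepsilon^{2}$ and, by symmetry, $\mathbb{E}[W]=0$; hence the MSE is $\mathbb{E}[\|Z-x\|_2^2]=d(d+1)/\varepsilon^{2}$. The Laplace mechanism satisfies $\varepsilon\cdot d_{\mathcal{X}}$-privacy with $d_{\mathcal{X}}(x,x')=\|x-x'\|_2$ (read directly off $f_{Z|X}(z|x)\propto e^{-\varepsilon\|x-z\|_2}$ and the triangle inequality), so Theorem~\ref{thm:metric_privacy} gives that PPR satisfies $2\alpha\varepsilon\cdot d_{\mathcal{X}}$-privacy.

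\textbf{Compression size.} Apply Theorem~\ref{thm:logk_bound_simple} with $P:=P_{Z|X}(\cdot|x)$ and $Q=\mathcal{N}(0,\sigma^{2}\mathbb{I}_d)$, $\sigma^{2}:=\frac{C^{2}}{d}+\frac{d+1}{\varepsilon^{2}}$, to get $\mathbb{E}[\log_2 K]\le D(P\Vert Q)+\eta_\alpha$. Write $D(P\Vert Q)=H(P,Q)-h_2(P)$ with $h_2$ the differential entropy and $H(P,Q)=-\mathbb{E}_{Z\sim P}[\log_2 q(Z)]$ the cross-entropy, both in bits. Since $-\log_2 q(z)=\tfrac{d}{2}\log_2(2\pi\sigma^{2})+\tfrac{\|z\|_2^2}{2\sigma^{2}}\log_2 e$ and $\mathbb{E}_{Z\sim P}[\|Z\|_2^2]=\|x\|_2^2+\mathbb{E}[\|W\|_2^2]\le C^{2}+d(d+1)/\varepsilon^{2}$, the particular choice of $\sigma^{2}$ makes $\mathbb{E}_P[\|Z\|_2^2]/(2\sigma^{2})\le d/2$, so $H(P,Q)\le \tfrac{d}{2}\log_2(2\pi e\sigma^{2})$. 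For the entropy, $h_2(P)=h_2(W)$ by translation invariance; computing the normalizer $c_d=\Gamma(d/2)\varepsilon^{d}/(2\pi^{d/2}\Gamma(d))$ and using $\mathbb{E}[\|W\|_2]=d/\varepsilon$ gives $h_2(W)=-\log_2 c_d+d\log_2 e$. Substituting, expanding $\sigma^{2}=\frac{C^{2}\varepsilon^{2}+d(d+1)}{d\varepsilon^{2}}$, collecting terms, and using $\Gamma(d+1)/\Gamma(d/2+1)=2\Gamma(d)/\Gamma(d/2)$ gives $D(P\Vert Q)\le \tfrac{d}{2}\log_2\!\big(\tfrac{2}{e}(\tfrac{C^{2}\varepsilon^{2}}{d}+d+1)\big)-\log_2\frac{\Gamma(d+1)}{\Gamma(d/2+1)}=\ell-\eta_\alpha$, hence $\mathbb{E}[\log_2 K]\le\ell$. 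Encoding $K$ with the Shannon code for the Zipf distribution (as in the discussion preceding Corollary~\ref{cor:generic_compression_bdd}) costs at most $\mathbb{E}[\log_2 K]+\log_2(\mathbb{E}[\log_2 K]+1)+2\le\ell+\log_2(\ell+1)+2$ bits, since $t\mapsto t+\log_2(t+1)$ is increasing on $[0,\infty)$.

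\textbf{Main obstacle.} The delicate part is the KL bookkeeping: one must compute both the differential entropy of the $d$-dimensional Laplace law and the Gaussian cross-entropy carefully and in a consistent logarithm base, and recognize that the proposal variance $\sigma^{2}=\frac{C^{2}}{d}+\frac{d+1}{\varepsilon^{2}}$ is engineered precisely so that $\mathbb{E}_P[\|Z\|_2^2]/(2\sigma^{2})\le d/2$, which collapses the cross-entropy to $\tfrac{d}{2}\log_2(2\pi e\sigma^{2})$ and produces the clean closed form. The remaining manipulations are routine Gamma-function algebra, and the conversion from $\mathbb{E}[\log_2 K]$ to a prefix-free length is a direct invocation of the coding facts already established in the chapter.
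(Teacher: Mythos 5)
Your proposal is correct and follows essentially the same route as the paper's proof: exactness of PPR plus the Gamma$(d,1/\varepsilon)$ radial law for the MSE, Theorem~\ref{thm:metric_privacy} for the $2\alpha\varepsilon\cdot d_{\mathcal{X}}$ guarantee, and the bound $\mathbf{E}_P[\Vert Z\Vert_2^2]\le C^2+d(d+1)/\varepsilon^2$ collapsing the Gaussian cross-entropy to $\tfrac{d}{2}\log_2(2\pi e\sigma^2)$ before invoking Theorem~\ref{thm:logk_bound_simple}. Your only deviation is computing the Laplace differential entropy directly from the normalizer and $\mathbf{E}[\Vert W\Vert_2]=d/\varepsilon$ rather than via the paper's $h(R)+h(S|R)$ decomposition with digamma terms, which is a harmless (and slightly cleaner) variant yielding the same value.
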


\section{Empirical Results on Metric Privacy}
\label{sec:emprical_metricDP}

In~\cite{andres2013geo}, to privatize the users' location information for some remote services provided by an untrusted server, $2$-dimensional Laplace noises have been used~\cite{andres2013geo} to obtain metric privacy, where the continuous planar Laplace mechanism~\cite{andres2013geo} is given by the following conditional density function $f_{Z|X}(z|x) = \frac{\varepsilon^2}{2\pi}e^{-\varepsilon d_{\mathcal{X}}(x,z)}$.

We use PPR to simulate the Laplace mechanism~\cite{andres2013geo,fernandes2019generalised,feyisetan2020privacy} $f_{Z|X}(z|x) \propto e^{-\varepsilon d_{\mathcal{X}}(x,z)}$ discussed in Section~\ref{sec:app_metric}.
We consider $X \in \mathcal{B}_d(C)$ where $C=10000$ and $d=500$. A large number of dimensions $d$ is common, for example, in privatizing word embedding vectors~\cite{fernandes2019generalised,feyisetan2020privacy}.
We compare the performance of PPR-compressed Laplace mechanism (Corollary~\ref{cor:laplace_ppr}) with the discrete Laplace mechanism~\cite{andres2013geo}. The discrete Laplace mechanism is described as follows (slightly modified from~\cite{andres2013geo} to work for the $d$-ball $\mathcal{B}_d(C)$): 1) generate a Laplace noise $Y$ with probability density function $f_Y(y) \propto e^{-\varepsilon \Vert y \Vert_2}$; 2) compute $\hat{Z}=X+Y$; 3) truncate $\hat{Z}$ to the closest point $Z$ in $\mathcal{B}_d(C)$; and 4) quantize each coordinate of $Z$ by a quantizer with step size $u>0$. The number of bits required by the discrete Laplace mechanism is $\lceil \log_2(\mathrm{Vol}(\mathcal{B}_d(C)) / u^d) \rceil$, where $\mathrm{Vol}(\mathcal{B}_d(C)) / u^d$ is the number of quantization cells (hypercube of side length $u$) inside $\mathcal{B}_d(C)$. The parameter $u$ is selected to fit the number of bits allowed.

Figure \ref{fig:experiment_laplace} shows the mean squared error of PPR-compressed Laplace mechanism ($\alpha=2$) and the discrete Laplace mechanism for different $\varepsilon$'s, when the number of bits is limited to $500$, $1000$ and $1500$.\footnote{The MSE of PPR is computed using the closed-form formula in Corollary~\ref{cor:laplace_ppr}, which is tractable since $Z$ follows the Laplace conditional distribution $f_{Z|X}$ exactly. The number of bits used by PPR is given by the bound in Corollary~\ref{cor:laplace_ppr}. The MSE of the discrete Laplace mechanism is estimated using $5000$ trials per data point. Although we do not plot the error bars, the largest coefficient of variation of the sample mean (i.e., standard error of the mean divided by the sample mean) is only 0.00117, which would be unnoticeable even if the error bars were plotted.} We can see that PPR performs better for larger $\epsilon$ or smaller MSE, whereas the discrete Laplace mechanism performs better for smaller $\epsilon$ or larger MSE. The performance of discrete Laplace mechanism for smaller $\epsilon$ is due to the truncation step which limits $Z$ to $\mathcal{B}_d(C)$, which reduces the error at the expense of introducing distortion to the distribution of $Z$, and making $Z$ a biased estimate of $X$. In comparison, PPR preserves the Laplace conditional distribution $f_{Z|X}$ exactly, and hence produces an unbiased $Z$.

\begin{figure}
	\centering
    \includegraphics[scale = 0.6]{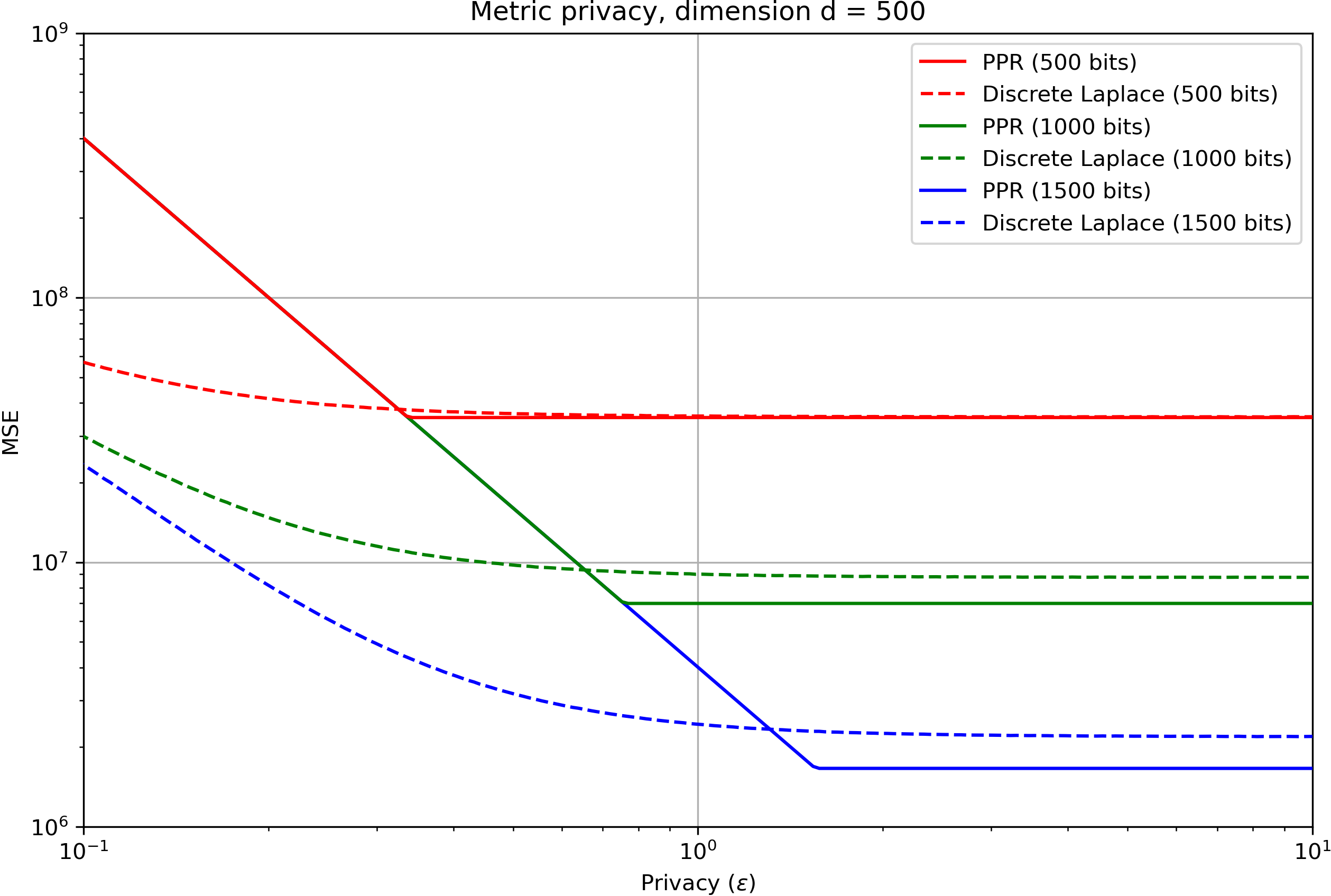}
	\caption{MSE of PPR-compressed Laplace mechanism and discrete Laplace mechanism~\cite{andres2013geo} for different $\varepsilon$'s. }
\label{fig:experiment_laplace}
\end{figure}

\section{Running Time of PPR\label{sec:limitations}}
\label{sec::dics_runtime}

\subsection{Discussions}

While PPR is communication-efficient, having only a logarithmic gap from the theoretical lower bound on the compression size as shown in Theorem~\ref{thm:logk_bound_simple}, the running time complexity can be high. However, we note that an exponential complexity is also needed in sampling methods that do not ensure privacy, such as \citep{maddison2016poisson,havasi2019minimal}. 
It has been proved in~\citep{agustsson2020universally} that no polynomial time general sampling-based method exists (even without privacy constraint), if $RP \neq NP$. All existing polynomial time exact channel simulation methods can only simulate specific noisy channels.\footnote{For example, \cite{flamich2023adaptive} and dithered-quantization-based schemes \citep{hegazy2022randomized,shahmiri2024communication} can only simulate additive noise mechanisms. Among these existing works, only \citep{shahmiri2024communication} ensures local DP.} 
Hence,
a polynomial time algorithm for exactly 
compressing a general DP mechanism is likely nonexistent.

Nevertheless, this is not an obstacle for simulating local DP mechanisms, since the mutual information $I(X;Z)$ for a reasonable local DP mechanism must be small, or else the leakage of the data $X$ in $Z$ would be too large. For an $\varepsilon$-local DP mechanism, we have $I(X;Z) \le \min\{\varepsilon, \varepsilon^2\}$ (in nats) \citep{cuff2016differential}. Hence, the PPR algorithm can terminate quickly even if has a running time exponential in $I(X;Z)$.

Another way to ensure a polynomial running time is to divide the data into small chunks and apply the mechanism to each chunk separately. 
For example, to apply the Gaussian mechanism to a high-dimensional vector, we break it into several shorter vectors and apply the mechanism to each vector. Experiments in Section~\ref{sec::exp_dme} show that this greatly reduces the running time while having only a small penalty on the compression size.

\subsection{Empirical Results}

We show empirical results on the running time of PPR on distributed mean estimation task, as discussed in Section~\ref{sec:mean_estimation} and Section~\ref{sec::exp_dme}.

\subsection{Running Time of Sliced PPR against chunk size}

As discussed in Section~\ref{sec::exp_dme}, we can ensure an $O(d)$ running time for the Gaussian mechanism by using the sliced PPR, where the $d$-dimensional vector $X$ is divided into $\lceil d/d_{\mathrm{chunk}} \rceil$ chunks, each with a fixed dimension $d_{\mathrm{chunk}}$ (possibly except the last chunk if $d_{\mathrm{chunk}}$ is not a factor of $d$). 
The average total running time is $\lceil d/d_{\mathrm{chunk}} \rceil T_{\mathrm{chunk}}$, where $T_{\mathrm{chunk}}$ is the average running time of PPR applied on one chunk.\footnote{Note that the chunks may be processed in parallel for improved efficiency.}
Therefore, to study the running time of the sliced PPR, we study how $T_{\mathrm{chunk}}$ depend on $d_{\mathrm{chunk}}$.
    
In Figure~\ref{fig:time_err} we show the running time $T_{\mathrm{chunk}}$ of PPR applied on one chunk with dimension $d_{\mathrm{chunk}}$, where $d_{\mathrm{chunk}}$ ranges from $40$ to $110$.\footnote{Experiments were executed on M1 Pro Macbook, $8$-core CPU ($\approx 3.2$ GHz) with 16GB memory.} 
With $d=1000$, $n=500$, $\varepsilon = 0.05$ and $\delta = 10^{-6}$, we require a Gaussian mechanism with noise $\mathcal{N}(0, n \tilde{\sigma}^2 \mathbb{I}_{d_{\mathrm{chunk}}})$ where 
$\tilde{\sigma} = 1.0917$ 
at each user in order to ensure $(\varepsilon,\delta)$-central DP.
We record the mean $T_{\mathrm{chunk}}$ and the standard error of the mean\footnote{\label{footnote:stderr}The standard error of the mean is given by $\sigma_{\mathrm{mean}} = \sigma_{\mathrm{time}} / \sqrt{n_{\mathrm{trials}}}$, where $\sigma_{\mathrm{time}}$ is the standard deviation of the running time among the $n_{\mathrm{trials}}=20000$ trials.} of the running time of PPR applied to simulate this Gaussian mechanism (averaged over $20000$ trials). 


\begin{figure}[H]
	\centering
    \includegraphics[scale = 0.27]{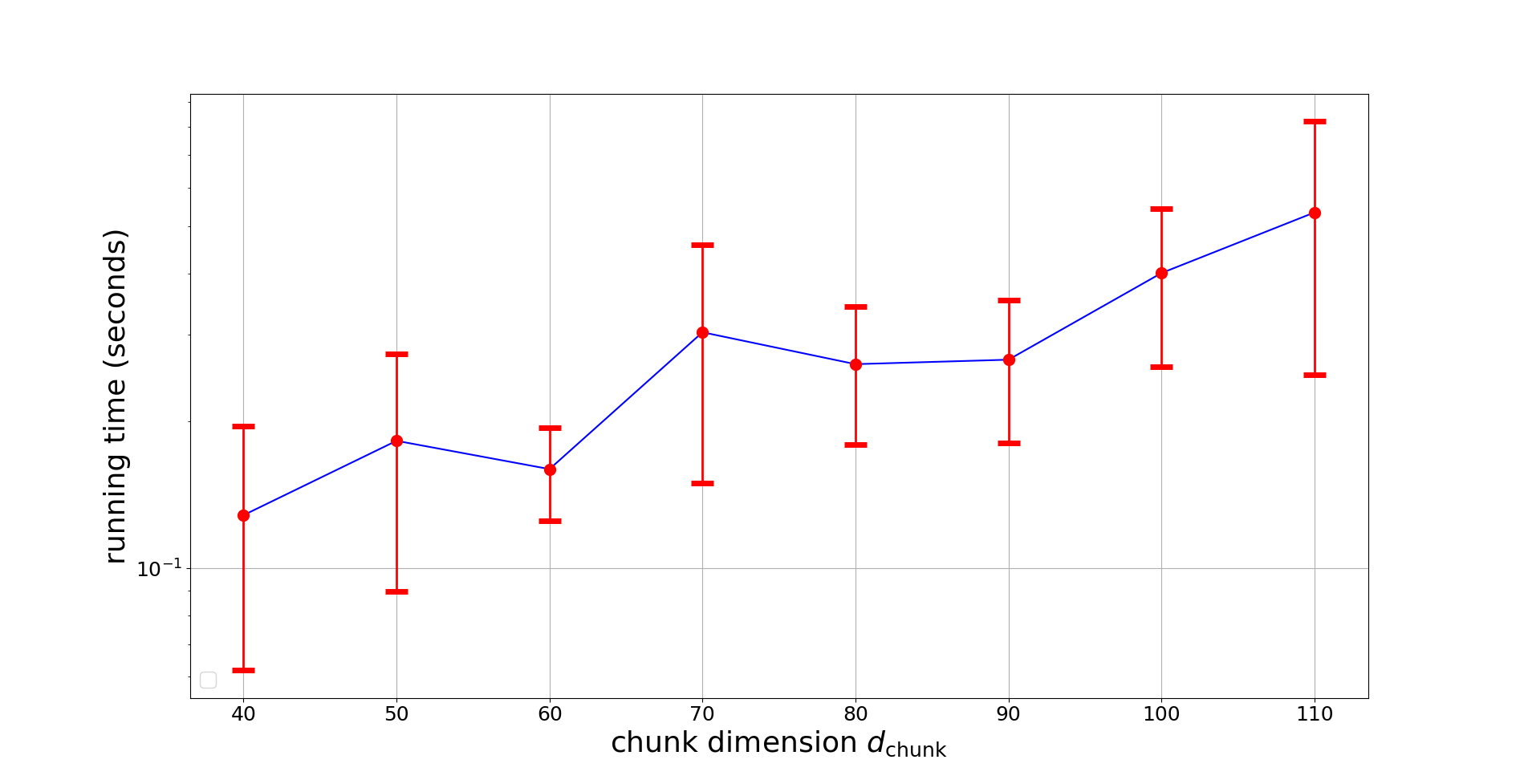}
	\caption{Average running time of PPR applied to a chunk of dimension $d_{\mathrm{chunk}}$, with error bars indicating the interval $T_{\mathrm{chunk}} \pm 2 \sigma_{\mathrm{mean}}$, where $T_{\mathrm{chunk}}$ is the sample mean of the running time, and $\sigma_{\mathrm{mean}}$ is the standard error of the mean (see Footnote \ref{footnote:stderr}). }
\label{fig:time_err}
\end{figure}

\subsection{Running Time of PPR against privacy budget $\epsilon$}

We plot the average running time (over $20000$ trials for each data point) against the values of $\epsilon\in[0.06, 10]$, with $d_{\mathrm{chunk}}$ always chosen to be $4$. 
The average running time is denoted as $T_{\mathrm{chunk}}$, and the standard error of the mean is given by $\sigma_{\mathrm{mean}} = \sigma_{\mathrm{time}} / \sqrt{n_{\mathrm{trials}}}$, where $\sigma_{\mathrm{time}}$  is the standard deviation of the running time among the $\sigma_{\mathrm{time}} = 20000$ trials.

\begin{figure}[H]
	\centering
    \includegraphics[scale = 0.32]{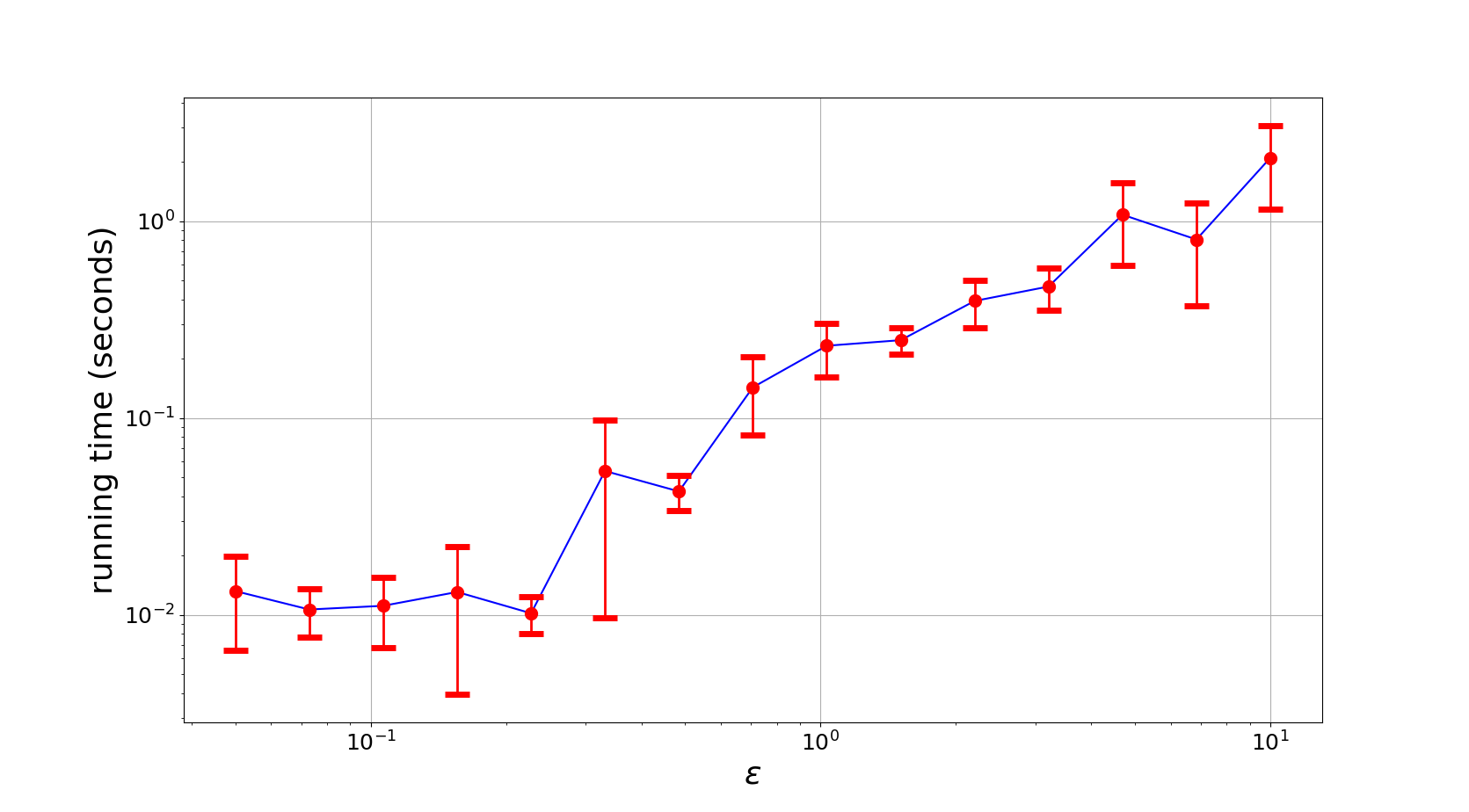}
	\caption{Average running time (over $20000$ trials),  $d_{\mathrm{chunk}}=4$ and $\varepsilon\in [0.06, 10]$, with error bars indicating the interval $T_{\mathrm{chunk}} \pm 2 \sigma_{\mathrm{mean}}$, where $T_{\mathrm{chunk}}$ is the sample mean of the running time, and $\sigma_{\mathrm{mean}}$ is the standard error of the mean.} 
\end{figure}


\chapter{Discussion and Conclusion}
\label{chp::conclu}

In conventional network information theory, the information-theoretic limits are investigated \emph{asymptotically} in the large blocklength regime, based on the law of large numbers. 
However, this assumption is impractical, as packets have bounded lengths. Over the past decade, \emph{finite blocklength}~\cite{kostina2013lossy, tan2013dispersions, polyanskiy2010channel} and \emph{one-shot}~\cite{hayashi2009information, verdu2012non, yassaee2013technique, song2016likelihood, watanabe2015nonasymp, li2021unified} information theory have been widely studied.

In this thesis, we studied one-shot information theory, code constructions, and applications to differential privacy. 
In the one-shot setting, we assumed the channel or source was used only \emph{once} (i.e., it need not be memoryless or ergodic), and the blocklength was $1$. 
Therefore, the blocklength did not approach infinity, and hence existing tools (e.g., \emph{typical sets}~\cite{el2011network}) and the law of large numbers could not be used. 
We provided several novel constructions for one-shot settings and proved achievability results, based on the Poisson functional representation~\cite{li2018strong}. 
These results were expected to recover existing (first-order and second-order) asymptotic bounds when applied to memoryless channels or sources.

In Chapter~\ref{chp:nnc}, we provide a unified one-shot coding framework over a general noisy network, applicable to any combination of source coding, channel coding, and coding for computing problems. 
Compared to the original Poisson matching lemma~\cite{li2021unified}, our scheme works for arbitrary discrete acyclic noisy networks and can be viewed as a one-shot counterpart of the asymptotic acyclic discrete memoryless network studied by Lee and Chung~\cite{lee2018unified}. 
Various one-shot results~\cite{li2021unified} and asymptotic results~\cite{el2011network} have been recovered, and novel one-shot results have been derived, including source coding, channel coding, primitive relay channel~\cite{el2011network, kim2007coding, mondelli2019new, el2021achievable, el2022strengthened}, Gelfand-Pinsker~\cite{gelfand1980coding, Heegard1980}, relay-with-unlimited-look-ahead~\cite{el2005relay, el2007relay}, Wyner-Ziv~\cite{wyner1976rate, wyner1978rate}, coding for computing~\cite{yamamoto1982wyner}, multiple access channels~\cite{ahlswede1971multi, liao1972multiple, ahlswede1974capacity}, broadcast channels~\cite{marton1979coding}, and cascade multiterminal source coding~\cite{cuff2009cascade}.

In Chapter~\ref{chp:hiding}, we derived novel one-shot achievability results for two classical secrecy problems in information theory: the information hiding problem~\cite{moulin2003information} and the compound wiretap channels~\cite{liang2009compound}. 
Our bounds are based on the Poisson matching lemma together with other techniques, and are applicable to  both continuous and discrete cases. 
Our one-shot achievability results apply to any distribution of the source data, and any class of the channels (not necessarily memoryless or ergodic), and can readily recover the existing asymptotic results on both problems when apply to discrete memoryless channels subject to potential distortion constraints, thus providing alternative proofs that are potentially simpler. 
Moreover, we generalized the information hiding setting~\cite{moulin2003information} and extended its reconstruction objective. 
For both the generalized information hiding problem and the compound wiretap channels, unlike most existing studies, we do not assume that the decoder knows the channel state in the one-shot setting.

In Chapter~\ref{chp:ppr}, we proposed a novel scheme for compressing differential privacy mechanisms, called Poisson Private Representation (PPR), to reduce the communication cost of differential privacy mechanisms. 
Unlike previous schemes, which were either constrained to special classes of DP mechanisms or introduced additional distortions to the output, our scheme could compress and exactly simulate arbitrary mechanisms while locally protecting differential privacy, thereby preserving the joint distribution of the data and the output of the original local randomizer. 
PPR achieved a compression size within a logarithmic gap from the theoretical lower bound, and a new order-wise trade-off between communication, accuracy, and central and local differential privacy for distributed mean estimation was derived. 
One possible issue was the running time of PPR, which we discussed and tackled by using sliced PPR, a method that divided long data vectors into small chunks. 
Moreover, we presented experimental results on distributed mean estimation to show that, while providing local differential privacy, PPR and sliced PPR consistently offered a better trade-off between communication, accuracy, and central differential privacy compared to the coordinate-subsampled Gaussian mechanism~\cite{chen2024privacy}.

\section{Future Directions}

While we have discussed the analysis and applications of one-shot codes based on the Poisson functional representation~\cite{li2018strong}, several avenues for future research remain, which we discuss below.

Based on the unified one-shot coding scheme over arbitrary noisy networks described in Chapter~\ref{chp:nnc}, automated theorem provers can potentially be designed. 
For example, an existing automated theorem proving tool~\cite{li2023automated} provides an algorithm for deriving asymptotic inner and outer bounds for general acyclic discrete memoryless networks~\cite{lee2018unified}. 
Considering our coding scheme in Chapter~\ref{chp:nnc} as a one-shot counterpart of~\cite{lee2018unified}, designing an automated theorem prover for one-shot inner bounds appears promising. 
Developing different schemes for one-shot outer bounds could be another potential direction for future work. 
Moreover, we should note that for the sake of universality and simplicity, our one-shot coding scheme sacrifices some performance. 
For example, for broadcast channels, we showed in Section~\ref{sec::NIT} that two corner points can be recovered, but not the entire region of Marton's inner bound, which could be derived using the original Poisson matching lemma with a more complicated analysis~\cite{li2021unified}. 
A possible extension is to generalize our scheme to improve its capability without significantly compromising its universality and simplicity.

In Chapter~\ref{chp:ppr}, we discussed the application of channel simulation schemes (in particular, a variant of the Poisson functional representation~\cite{li2018strong}) for compressing differential privacy mechanisms. 
As detailed in Section~\ref{sec::dics_runtime}, the running time of general channel simulation schemes poses challenges for practical implementation, although certain strategies, such as the sliced PPR method shown in Section~\ref{sec::dics_runtime}, can improve efficiency. 
Although it has been proven in~\cite{agustsson2020universally} that no general polynomial-time exact sampling-based method exists (even without privacy constraints) unless $RP = NP$, fast exact channel simulation schemes can still be designed for specific noisy channels. 
For the channel simulation task itself (without privacy constraints), recent works have proposed using linear error correction codes, such as polar codes~\cite{arikan2009channel}, for fast channel simulation; techniques from~\cite{flamich2022fast, flamich2024greedy} are also useful when $P_{Z|X}$ is unimodal.
On the one hand, developing fast channel simulation schemes for specific channels is an interesting research direction on its own (for example, the additive Gaussian noise channel, which plays an important role in neural compression~\cite{havasi2019minimal} and differential privacy~\cite{yan2023layered, hasircioglu2023communication}; special schemes based on vector quantization have been considered in~\cite{kobus2024gaussian}). 
On the other hand, applying these schemes to compress differential privacy mechanisms is another promising future direction.
Moreover, as discussed in Chapter~\ref{chp:ppr}, our scheme suffers a two-fold increase in the privacy budget, similar to the results based on importance sampling~\cite{shah2022optimal} (which is an approximate but not exact scheme).
It is worth investigating whether this two-fold increase is a fundamental limit for privacy mechanism simulation.

\appendix

\chapter{Proofs for Chapter~\ref{chp:nnc}}

\section{Proof of Theorem~\ref{thm::network_achievability} and Theorem~\ref{thm:det}}
\label{pf:thm}

We first generate $N$ independent exponential processes $\mathbf{U}_i$ for $i\in [N]$ according to Section \ref{sec:expon}, which serve as the random codebooks. Each node $i$ will perform two steps: the \emph{decoding step} and the \emph{encoding step}. 

We describe the decoding step at node $i$. The node observes $Y_i$ and wants to decode $\overline{U}'_i = (U_{a_{i,j}})_{j \in [d'_i]}$, while utilizing $(U_{a_{i,j}})_{j \in [d'_i+1 .. d_i]}$ by non-unique decoding. 
For the sake of notational simplicity, we omit the subscript $i$ and write $d=d_i$, $d'=d'_i$, $a_k=a_{i,k}$, $\overline{U}_{\mathcal{S}}=\overline{U}_{i,\mathcal{S}} =(U_{a_{i,j}})_{j\in \mathcal{S}}$, $\overline{\mathbf{U}}_k := \mathbf{U}_{a_{i,k}}$. 
For each $j=1,\ldots,d'$, the node will perform soft decoding via the exponential process refinement (see Section \ref{sec:expon}) on $\overline{U}_{d}$, and then on $\overline{U}_{d-1}$, and so on up to $\overline{U}_{j+1}$, and then use all the distributions obtained to decode $\overline{U}_{j}$ uniquely using the Poisson functional representation. For example, when $d=3$, $d'=2$, the decoding process will be: $\overline{U}_{3}$ (soft), $\overline{U}_{2}$ (soft), $\overline{U}_{1}$ (unique), $\overline{U}_{3}$ (soft), $\overline{U}_{2}$ (unique). The choice of the sequence $a_{i,k}$ controls the decoding ordering of the random variables. The goal is to obtain the decoded variables $\hat{\overline{U}}_{1}, \ldots, \hat{\overline{U}}_{d'}$ that equal $\overline{U}_{1}, \ldots, \overline{U}_{d'}$ with high probability.

More precisely, for $j=1,\ldots,d'$, the node computes the decoded variable $\hat{\overline{U}}_j \in \overline{\mathcal{U}}_{j}$ by first computing the joint distributions $Q^{(j)}_{ \overline{U}_{[k..d]}}$ over  $\overline{\mathcal{U}}_{k}\times \ldots \times \overline{\mathcal{U}}_{d}$ for $k=d,d-1,\ldots,j+1$ recursively using the exponential process refinement as
\begin{align*}
 Q^{(j)}_{ \overline{U}_{[k..d]}}  :=   \big( Q^{(j)}_{ \overline{U}_{[k+1..d]}} P_{\overline{U}_k| \overline{U}_{[k+1..d]}, \overline{U}_{[j-1]}, Y_i}(\cdot \,|\, \cdot ,\, \hat{\overline{U}}_{[j-1]}, Y_i)\big)^{\overline{\mathbf{U}}_k}, 
\end{align*}
i.e., first compute the semidirect product between $Q^{(j)}_{ \overline{U}_{[k+1..d]}}$ and the conditional distribution $P_{\overline{U}_k| \overline{U}_{[k+1..d]}, \overline{U}_{[j-1]}, Y_i}(\cdot \,|\, \cdot ,\, \hat{\overline{U}}_{[j-1]}, Y_i)$ (computed using the ideal joint distribution of $X^N,Y^N,U^N$) to obtain a distribution over $\overline{\mathcal{U}}_{k}\times \ldots \times \overline{\mathcal{U}}_{d}$, and then refine it by $\overline{\mathbf{U}}_k$ using Definition~\ref{def:refine}. For the base case, we assume $Q^{(j)}_{ \overline{U}_{[d+1..d]}}$ is the degenerate distribution. After we have computed $Q^{(j)}_{ \overline{U}_{[j+1..d]}}$, we can obtain $\hat{\overline{U}}_j$ using the Poisson functional representation~\eqref{eq:pfr} as $\hat{\overline{U}}_j= (\overline{\mathbf{U}}_j)_{\tilde{Q}^{(j)}_{\overline{U}_j}}$, where $\tilde{Q}^{(j)}_{\overline{U}_j}$ is the $\overline{U}_j$-marginal of
\begin{align}
&  Q^{(j)}_{ \overline{U}_{[j+1..d]}} P_{\overline{U}_j| \overline{U}_{[j+1..d]}, \overline{U}_{[j-1]}, Y_i}(\cdot \,|\, \cdot ,\, \hat{\overline{U}}_{[j-1]}, Y_i). \label{eq:Uj_pfr}
\end{align}
The node repeats this process for $j=1,\ldots,d'$ to obtain $\hat{\overline{U}}'_i=(\hat{\overline{U}}_{1}, \ldots, \hat{\overline{U}}_{d'})$. 

We then describe the encoding step at node $i$. It uses the Poisson functional representation (see Section \ref{sec:expon}) to obtain 
\begin{align}
& U_i = (\mathbf{U}_i)_{P_{U_i|Y_i, \overline{U}'_i}(\cdot | Y_i, \hat{\overline{U}}'_{i})}. \label{eq:Ui_pfr}
\end{align}
Finally, it generates $X_i$ from the conditional distribution $P_{X_i|Y_i,U_i, \overline{U}'_i}(\cdot | Y_i,U_i,\hat{\overline{U}}'_{i})$.

For the error analysis, we create a fictitious ``ideal network'' (with $N$ ``ideal nodes'') that is almost identical to the actual network. The only difference is that the ideal node $i$ uses the true $\overline{U}'_i$ (supplied by a genie) instead of the decoded $\hat{\overline{U}}'_i$ for the encoding step.
The random variables induced by the ideal network will have the same distribution as the ideal distribution of $X^N,Y^N,U^N$ in Theorem~\ref{thm::network_achievability}. Hence, we assume $X^N,Y^N,U^N$ are induced by the ideal network.
We couple the channels in the ideal network and the channels in the actual network, such that $Y_i=\tilde{Y}_i$ if $(X^{i-1},Y^{i-1})=(\tilde{X}^{i-1},\tilde{Y}^{i-1})$ (i.e., the ``channel noises'' in the two networks are the same).
If none of the actual nodes makes an error (i.e., $\hat{\overline{U}}'_i=\overline{U}'_i$ for all $i$), the actual network would coincide with the ideal network, and $(\tilde{X}^N,\tilde{Y}^N) = (X^N,Y^N)$. We consider the error probability conditional on $A:=(X^N,Y^N,U^N)$:
\[
F := \mathbf{P}\big(\exists \, i: \; \hat{\overline{U}}'_i=\overline{U}'_i\,\big|\,A \big).
\]
Note that $F$ is a random variable and is a function of $A=(X^N,Y^N,U^N)$. We have
\begin{align*}
F& =\mathbf{P}\big(\exists\, i\in [N], j \in [d'_i]:\, \hat{\overline{U}}_{i,j} \neq \overline{U}_{i,j} \,\big|\,A\big) \\
& = \sum_{i=1}^N \sum_{j=1}^{d'_i} \mathbf{P}\Big( \hat{\overline{U}}'_{[i-1]} = \overline{U}'_{[i-1]},\, \hat{\overline{U}}_{i,[j-1]} = \overline{U}_{i,[j-1]},\,   \hat{\overline{U}}_{i,j} \neq \overline{U}_{i,j} \,\big|\,A\Big).
\end{align*}
For the term inside the summation (which is the probability that the first error we make is at $\overline{U}_{i,j}$), by \eqref{eq:Uj_pfr}, \eqref{eq:Ui_pfr} and the Poisson matching lemma~\cite{li2021unified} (we again omit the subscripts $i$ as in the description of the decoding step, e.g., we write $\overline{U}_{j} =\overline{U}_{i,j} =U_{a_{j}}=U_{a_{i,j}}$; we also simply write $P(\overline{U}_{j} | Y_{a_j}, \overline{U}'_{a_j})=P_{\overline{U}_{j}| Y_{a_j}, \overline{U}'_{a_j}}(\overline{U}_{j} | Y_{a_j}, \overline{U}'_{a_j})$), we have
\begin{align*}
 & \mathbf{P}\big(\hat{\overline{U}}'_{[i-1]}=\overline{U}'_{[i-1]},\,\hat{\overline{U}}_{i,[j-1]}=\overline{U}_{i,[j-1]},\,\hat{\overline{U}}_{i,j}\neq\overline{U}_{i,j} \,\big|\,A \big)\\
 & \le\mathbf{E}\bigg[\frac{P(\overline{U}_{j}|Y_{a_{j}},\overline{U}'_{a_{j}})}{Q^{(j)}(\overline{U}_{[j+1..d]})P(\overline{U}_{j}\,|\,\overline{U}_{[j+1..d]},\,\overline{U}_{[j-1]},Y_{i})} \,\bigg|\,A  \bigg]\\
 & \stackrel{(a)}{\le}\mathbf{E}\bigg[\frac{P(\overline{U}_{j}|Y_{a_{j}},\overline{U}'_{a_{j}})}{P(\overline{U}_{j}\,|\,\overline{U}_{[j+1..d]},\,\overline{U}_{[j-1]},Y_{i})} \\
 &\;\;\;\;\;\;\;\;\;\cdot\mathbf{E}\bigg[\frac{1}{Q^{(j)}(\overline{U}_{[j+1..d]})}\,\bigg|\,\overline{U}_{[d]},Y_{i},Y_{a_{j}},\overline{U}'_{a_{j}},\overline{\mathbf{U}}_{[j+1..d]}\bigg]\,\bigg|\,A\bigg]\\
 & \stackrel{(b)}{\le}\mathbf{E}\bigg[\frac{P(\overline{U}_{j}|Y_{a_{j}},\overline{U}'_{a_{j}})}{P(\overline{U}_{j}\,|\,\overline{U}_{[j+1..d]},\,\overline{U}_{[j-1]},Y_{i})}(\ln|\overline{\mathcal{U}}_{j+1}|+1)\\
 & \;\;\;\cdot\frac{1}{Q^{(j)}(\overline{U}_{[j+2..d]})}\bigg(\frac{P(\overline{U}_{j+1}|Y_{a_{j+1}},\overline{U}'_{a_{j+1}})}{P(\overline{U}_{j+1}|\overline{U}_{[j+2..d]},\overline{U}_{[j-1]},Y_{i})}+1\bigg)\,\bigg|\,A\bigg]\\
 & \stackrel{(c)}{\le}\mathbf{E}\bigg[\frac{P(\overline{U}_{j}|Y_{a_{j}},\overline{U}'_{a_{j}})}{P(\overline{U}_{j}\,|\,\overline{U}_{[j+1..d]},\,\overline{U}_{[j-1]},Y_{i})}\\
 &  \,\cdot\prod_{k=j+1}^{d'}(\ln|\overline{\mathcal{U}}_{k}|+1)\bigg(\frac{P(\overline{U}_{k}|Y_{a_{k}},\overline{U}'_{a_{k}})}{P(\overline{U}_{k}|\overline{U}_{[k+1..d]},\overline{U}_{[j-1]},Y_{i})}+1\bigg)\,\bigg|\,A\bigg]\\
&=B_{i,j}, 
\end{align*}
where (a) is by Jensen's inequality, (b) is due to Lemma \ref{lemma::PML2},
 (c) is by applying the same steps as (a) and (b) $d'-j-1$ times, and $\beta_{i,j}$ is given in \eqref{eq::beta}. 
 The proof of Theorem~\ref{thm::network_achievability} is completed by noting that $\delta_{\mathrm{TV}}(P_{X^N,Y^N},P_{\tilde{X}^N,\tilde{Y}^N}) \le \mathbf{P}((X^N,Y^N) \neq (\tilde{X}^N,\tilde{Y}^N)) \le \mathbf{E}[F]=\mathbf{E}[\min\{F,1\}]$.

We now prove Theorem~\ref{thm:det}. Recall that the scheme we have constructed requires the public randomness $W$, which we have to fix in order to construct a deterministic coding scheme for Theorem~\ref{thm:det}. We have
\begin{align*}
&\mathbf{E}\big[\mathbf{P}\big((\tilde{X}^N,\tilde{Y}^N) \in \mathcal{E} \,\big|\, W \big)\big] \\
& = \mathbf{P}((\tilde{X}^N,\tilde{Y}^N) \in \mathcal{E}) \\
& \le \mathbf{P}\big((X^N,Y^N) \in \mathcal{E} \; \mathrm{or} \; (X^N,Y^N) \neq (\tilde{X}^N,\tilde{Y}^N)\big) \\
& = \mathbf{E}\big[ \mathbf{P}\big((X^N,Y^N) \in \mathcal{E} \; \mathrm{or} \; (X^N,Y^N) \neq (\tilde{X}^N,\tilde{Y}^N) \, \big|\, A\big) \big]\\
& \le \mathbf{E}\big[ \min\big\{ \mathbf{1}\{(X^N,Y^N) \in \mathcal{E}\}   + \mathbf{P}\big((X^N,Y^N) \neq (\tilde{X}^N,\tilde{Y}^N) \, \big|\, A\big),\, 1 \big\} \big]\\
& \le \mathbf{E}\big[ \min\big\{ \mathbf{1}\{(X^N,Y^N) \in \mathcal{E}\} + F,\, 1 \big\} \big].
\end{align*}
Therefore, there exists a value $w$ such that $\mathbf{P}((\tilde{X}^N,\tilde{Y}^N) \in \mathcal{E} \,|\, W=w)$ satisfies the upper bound. Fixing the value of $W$ to $w$ gives a deterministic coding scheme.

\chapter{Proofs for Chapter~\ref{chp:hiding}}

\section{Proof of Proposition~\ref{prop:covering_bound}}
\label{appdx::pf_prop_covering_bound}

\begin{proof}
Write $d(A,\tilde{A}):= \sup_{x \in \mathcal{X}}\Vert Aw_{Y|X}(\cdot|x) - \tilde{A}_{Y|X}(\cdot|x) \Vert_{\mathrm{TV}}$.
We use the standard method to bound the covering number, where we
start with $\tilde{\mathcal{A}}=\emptyset$, and add $A\in\mathcal{A}$
not currently covered by $\tilde{\mathcal{A}}$ (i.e., $\min_{\tilde{A}\in\tilde{\mathcal{A}}}d(A,\tilde{A})>\epsilon$)
to $\tilde{\mathcal{A}}$ one by one until all of $\mathcal{A}$ is
covered. Note that every two different $\tilde{A},\tilde{A}'\in\tilde{\mathcal{A}}$
produced this way must satisfy $d(\tilde{A},\tilde{A}')>\epsilon$,
and hence the $(\epsilon/2)$-balls $\{A:\,d(A,\tilde{A})\le\epsilon/2\}$
must be disjoint for $\tilde{A}\in\tilde{\mathcal{A}}$.

We now treat $A_{Y|X}$ as a transition probability matrix $A \in \mathbb{R}^{|\mathcal{Y}|\times|\mathcal{X}|}$. We have \begin{align*}
    d(A,\tilde{A}) &= \frac{1}{2}\Vert A-\tilde{A}\Vert_1 \\
     &= \frac{1}{2} \max_x \sum_y |A_{y,x} - \tilde{A}_{y,x}|. 
\end{align*}
The volume of the ball $\{A \in \mathbb{R}^{|\mathcal{Y}|\times|\mathcal{X}|}: \,d(A,\tilde{A})\le\epsilon/2\}$ (i.e., its Lebesgue measure in the space $\mathbb{R}^{|\mathcal{Y}|\cdot |\mathcal{X}|}$)
is $((2\epsilon)^{|\mathcal{Y}|}/(|\mathcal{Y}|!))^{|\mathcal{X}|}$, and all these balls are subsets of $\{A \in \mathbb{R}^{|\mathcal{Y}|\times|\mathcal{X}|}: \min_{x,y}A_{y,x}\ge-\epsilon,\,\max_x \sum_{y}A_{y,x}\le1+\epsilon\}$,
which has a volume $((1+(|\mathcal{Y}|+1)\epsilon)^{|\mathcal{Y}|}/(|\mathcal{Y}|!))^{|\mathcal{X}|}$. Hence, the size of $\tilde{\mathcal{A}}$ is upper-bounded by
\begin{align*}
&\frac{\Big(\big(1+(|\mathcal{Y}|+1)\epsilon \big)^{|\mathcal{Y}|}/(|\mathcal{Y}|!)\Big)^{|\mathcal{X}|}}{\big((2\epsilon)^{|\mathcal{Y}|}/(|\mathcal{Y}|!)\big)^{|\mathcal{X}|}} 
=\Big(\frac{1}{2\epsilon} + \frac{|\mathcal{Y}|+1}{2}\Big)^{|\mathcal{X}|\cdot|\mathcal{Y}|}.
\end{align*}
\end{proof}

\chapter{Proofs for Chapter~\ref{chp:ppr}}


\section{Proof of Proposition \ref{prop:ppr_output_dist}\label{sec:pf_ppr_output_dist}}

Write $(X_{i})_{i}\sim\mathrm{PP}(\mu)$ if the points $(X_{i})_{i}$
(as a multiset, ignoring the ordering) form a Poisson point process
with intensity measure $\mu$. Similarly, for $f:[0,\infty)^{n}\to[0,\infty)$,
we write $\mathrm{PP}(f)$ for the Poisson point process with intensity
function $f$ (i.e., the intensity measure has a Radon-Nikodym derivative
$f$ against the Lebesgue measure). 

Let $(T_{i})_{i}\sim\mathrm{PP}(1)$
be a Poisson process with rate $1$, independent of $Z_{1},Z_{2},\ldots\stackrel{iid}{\sim}Q$.
By the marking theorem \cite{last2017lectures}, $(Z_i,T_i)_i \sim \mathrm{PP}(Q \times \lambda_{[0,\infty)})$, where $Q \times \lambda_{[0,\infty)}$ is the product measure between $Q$ and the Lebesgues measure over $[0,\infty)$.
Let $P = P_{Z|X}(\cdot | x)$, and $\tilde{T}_{i} = T_{i} \cdot (\frac{\mathrm{d}P}{\mathrm{d}Q}(Z_{i}))^{-1}$.
By the mapping theorem \cite{last2017lectures} (also see \cite{li2018strong,li2021unified}), 
$(Z_i,\tilde{T}_i)_i \sim \mathrm{PP}(P \times \lambda_{[0,\infty)})$. 
Note that the points $(Z_i,\tilde{T}_i)_i$ may not be sorted in ascending order of $\tilde{T}_i$. Therefore, we will sort them as follows.
Let $j_1$ be the $j$ such that $\tilde{T}_{j}$ is the smallest, $j_2$ be the $j$ other than $j_1$ such that $\tilde{T}_{j}$ is the smallest, and so on. Break ties arbitrarily. Then $(\tilde{T}_{j_i})_i$ is an ascending sequence, and we still have $(Z_{j_i},\tilde{T}_{j_i})_i \sim \mathrm{PP}(P \times \lambda_{[0,\infty)})$ since we are merely rearranging the points. Comparing $(Z_{j_i},\tilde{T}_{j_i})_i \sim \mathrm{PP}(P \times \lambda_{[0,\infty)})$ with the definition of $(Z_i,T_i)_i \sim \mathrm{PP}(Q \times \lambda_{[0,\infty)})$, we can see that $(\tilde{T}_{j_i})_i \sim\mathrm{PP}(1)$
is independent of $Z_{j_1},Z_{j_2},\ldots \stackrel{iid}{\sim}P$.

Recall that in PPR, we generate $K\in\mathbb{Z}_{+}$ with 
\[
\Pr(K=k)=\frac{\tilde{T}_{k}^{-\alpha}}{\sum_{i=1}^{\infty}\tilde{T}_{i}^{-\alpha}},
\]
and the final output is $Z_K$. Rearranging the points according to $(j_i)_i$, the distribution of the final output remains the same if we instead generate $K'\in\mathbb{Z}_{+}$ with 
\[
\Pr(K'=k)=\frac{\tilde{T}_{j_k}^{-\alpha}}{\sum_{i=1}^{\infty}\tilde{T}_{j_i}^{-\alpha}},
\]
and the final output is $Z_{j_{K'}}$. Since $(\tilde{T}_{j_i})_i \sim\mathrm{PP}(1)$
is independent of $Z_{j_i}\stackrel{iid}{\sim}P$, we know that $K'$ is independent of $(Z_{j_i})_i$, and hence $Z_{j_{K'}} \sim P$ follows the desired distribution.

\section{Reparametrization and Detailed Algorithm of PPR\label{sec:reparametrization}}

We now discuss the implementation of the Poisson private representation in Section~\ref{sec:ppr}.
Practically, the algorithm cannot compute the whole infinite sequence
$(\tilde{T}_{i})_i$.
We can truncate the method and only
compute $\tilde{T}_{i},\ldots,\tilde{T}_{N}$ for a large $N$ and
select $K\in\{1,\ldots,N\}$, which incurs a small distortion in the
distribution of $Z$.\footnote{To compare to the minimal random coding (MRC) \cite{havasi2019minimal,cuff2008communication,song2016likelihood}
scheme in \cite{shah2022optimal}, which also utilizes a finite number
$N$ of samples $(Z_{i})_{i=1,\ldots,N}$, while truncating the number
of samples to $N$ in both PPR and MRC results
in a distortion in the distribution of $Z$ that tends to $0$ as
$N\to\infty$, the difference is that $\log K$ (which is approximately
the compression size) in MRC
grows like $\log N$, whereas $\log K$ does not grow as $N\to\infty$
in PPR. The size $N$ in truncated PPR merely controls the tradeoff
between accuracy of the distribution of $Z$ and the running time
of the algorithm.}
While this method is practically acceptable, it might defeat the purpose of having an exact algorithm that ensures the correct conditional distribution $P_{Z|X}$.
We now present an exact algorithm for PPR that terminates in a finite amount of time using a reparametrization.

In the proof of Theorem \ref{thm:logk_bound}, we showed that, letting
$(T_{i})_{i}\sim\mathrm{PP}(1)$, $Z_{1},Z_{2},\ldots\stackrel{iid}{\sim}Q$,
$R_{i}:=(\mathrm{d}P/\mathrm{d}Q)(Z_{i})$, $V_{1},V_{2},\ldots\stackrel{iid}{\sim}\mathrm{Exp}(1)$,
PPR can be equivalently expressed as
\[
K=\underset{k}{\mathrm{argmin}}T_{k}^{\alpha}R_{k}^{-\alpha}V_{k}.
\]
The problem of finding $K$ is that there is no stopping criteria
for the argmin. For example, if we scan the points $(T_{i},R_{i},V_{i})_{i}$
in increasing order of $T_{i}$, it is always possible that there
is a future point with $V_{i}$ so small that it makes $T_{i}^{\alpha}R_{i}^{-\alpha}V_{i}$
smaller than the current minimum. 
If we scan the points in increasing
order of $V_{i}$ instead, it is likewise possible that there is a
future point with a very small $T_{i}$. We can scan the points in
increasing order of $U_{i}:=T_{i}^{\alpha}V_{i}$, but we would not
know the indices of the points in the original process where $T_{1}\le T_{2}\le\cdots$
is in increasing order, which is necessary to find out the $Z_{i}$
corresponding to each point (recall that in PPR, the point with the
smallest $T_{i}$ corresponds to $Z_{1}$, the second smallest $T_{i}$
corresponds to $Z_{2}$, etc.).

Therefore, we will scan the points in increasing order of $B_{i}:=T_{i}^{\alpha}\min\{V_{i},1\}$
instead. By the mapping theorem \cite{last2017lectures}, $(T_{i}^{\alpha})_{i}\sim\mathrm{PP}(\alpha^{-1}t^{1/\alpha-1})$.
By the marking theorem \cite{last2017lectures}, 
\[
(T_{i}^{\alpha},V_{i})_{i}\sim\mathrm{PP}(\alpha^{-1}t^{1/\alpha-1}e^{-v}).
\]
By the mapping theorem,
\[
(T_{i}^{\alpha},T_{i}^{\alpha}V_{i})_{i}\sim\mathrm{PP}(\alpha^{-1}t^{1/\alpha-2}e^{-vt^{-1}}).
\]
Since $B_{i}=\min\{T_{i}^{\alpha},T_{i}^{\alpha}V_{i}\}$, again by the mapping theorem,
\begin{align*}
(B_{i})_{i} & \sim\mathrm{PP}\Bigg(\int_{b}^{\infty}\alpha^{-1}b^{1/\alpha-2}e^{-vb^{-1}}\mathrm{d}v  +\int_{b}^{\infty}\alpha^{-1}t^{1/\alpha-2}e^{-bt^{-1}}\mathrm{d}t\Bigg)\\
 & =\mathrm{PP}\left(\alpha^{-1}b^{1/\alpha-1}e^{-1}+\alpha^{-1}b^{1/\alpha-1}\gamma(1-\alpha^{-1},1)\right)\\
 & =\mathrm{PP}\left(\alpha^{-1}\left(e^{-1}+\gamma_{1}\right)b^{1/\alpha-1}\right),
\end{align*}
where $\gamma_{1}:=\gamma(1-\alpha^{-1},1)$ and $\gamma(\beta,x)=\int_{0}^{x}e^{-\tau}\tau^{\beta-1}\mathrm{d}\tau$
is the lower incomplete gamma function. Comparing the distribution
of $(B_{i})_{i}$ and $(T_{i}^{\alpha})_{i}$, we can generate $(B_{i})_{i}$
by first generating $(U_{i})_{i}\sim\mathrm{PP}(1)$, and then taking
$B_{i}=(U_{i}\alpha/(e^{-1}+\gamma_{1}))^{\alpha}$. The conditional
distribution of $(T_{i},V_{i})$ given $B_{i}=b$ is described as
follows: 
\begin{itemize}
\item With probability $e^{-1}/(e^{-1}+\gamma_{1})$, we have $T_{i}^{\alpha}=b$
and $T_{i}^{\alpha}V_{i}\sim b(\mathrm{Exp}(1)+1)$, and hence $T_{i}=b^{1/\alpha}$
and $V_{i}\sim\mathrm{Exp}(1)+1$.
\item With probability $\gamma_{1}/(e^{-1}+\gamma_{1})$, we have $T_{i}^{\alpha}V_{i}=b$
and 
\[
T_{i}^{\alpha}\sim\frac{\alpha^{-1}t^{1/\alpha-2}e^{-bt^{-1}}}{\alpha^{-1}\gamma(1-\alpha^{-1},1)b^{1/\alpha-1}}.
\]
Hence, for $0<\tau\le1$,
\[
\Pr(V_{i}\le\tau)=\Pr(T_{i}^{\alpha}\ge b/\tau)=\frac{\gamma(1-\alpha^{-1},\tau)}{\gamma(1-\alpha^{-1},1)},
\]
and $V_{i}$ follows the truncated gamma distribution with shape $1-\alpha^{-1}$
and scale $1$, truncated within the interval $[0,1]$. We then have
$T_{i}=(b/V_{i})^{1/\alpha}$.
\end{itemize}
The algorithm is given in Algorithm \ref{alg:ppr}. The encoder and
decoder require a shared random seed $s$. One way to generate $s$
is to have the encoder and decoder maintain two synchronized pseudorandom
number generators (PRNGs) that are always at the same state, and invoke
the PRNGs to generate $s$, guaranteeing that the $s$ at the encoder
is the same as the $s$ at the decoder. The encoder maintains a collection
of points $(T_{i},V_{i},\Theta_{i})$, stored in a heap to allow fast
query and removal of the point with the smallest $T_{i}$. The value
$\Theta_{i}\in\{0,1\}$ indicates whether it is possible that the
point $(T_{i},V_{i})$ attains the minimum of $T_{k}^{\alpha}R_{k}^{-\alpha}V_{k}$.
The encoding algorithm repeats until there is no possible points left
in the heap, and it is impossible for any future point to be better
than the current minimum of $T_{k}^{\alpha}R_{k}^{-\alpha}V_{k}$.
The encoding time complexity is $O(\sup_{z}(\mathrm{d}P/\mathrm{d}Q)(z) \log(\sup_{z}(\mathrm{d}P/\mathrm{d}Q)(z)))$, which is close to other sampling-based channel simulation schemes~\cite{harsha2010communication, flamich2023adaptive}.\footnote{It was shown in \cite{flamich2023adaptive} that greedy rejection sampling \cite{harsha2010communication} runs in $O(\sup_{z}(\mathrm{d}P/\mathrm{d}Q)(z))$ time. The PPR algorithm has an additional log term due to the use of heap.}
The decoding algorithm simply outputs the $k$-th sample generated
using the random seed $s$, which can be performed in $O(1)$ time.\footnote{\label{fn:counterbased}A counter-based PRNG \cite{salmon2011parallel}
allows us to directly jump to the state after $k$ uses of the PRNG,
without the need of generating all $k$ samples, greatly improving
the decoding efficiency. This technique is applicable to greedy rejection
sampling \cite{harsha2010communication} and the original Poisson
functional representation \cite{li2018strong,li2021unified} as well.} 

The PPR is implemented by Algorithm~\ref{alg:ppr}. 
We write $x\leftarrow\mathrm{Exp}_{\mathscr{G}}(1)$ to mean that
we generate an exponential random variate $x$ with rate $1$ using
the pseudorandom number generator $\mathscr{G}$. Write $x\leftarrow\mathrm{Exp}_{\mathrm{local}}(1)$
to mean that $x$ is generated using a local pseudorandom number generator
(not $\mathscr{G}$).

\bigskip 
\bigskip 

\textbf{Algorithm 1}: Poisson private representation
\bigskip 

\textbf{Procedure} $\textsc{PPREncode}(\alpha,Q,r,r^{*},s):$

\textbf{$\;\;\;\;$Input:} parameter $\alpha>1$, distribution $Q$,
density $r(z):=(\mathrm{d}P/\mathrm{d}Q)(z)$,

\textbf{$\;\;\;\;$$\;\;\;\;$$\;\;\;\;$}bound $r^{*}\ge\sup_{z}r(z)$,
random seed $s$

\textbf{$\;\;\;\;$Output:} index $k\in\mathbb{Z}_{>0}$

\smallskip{}

\begin{algorithmic}[1]

\State{Initialize PRNG $\mathscr{G}$ using the seed $s$}

\State{$u\leftarrow0$, $w^{*}\leftarrow\infty$, $k\leftarrow0$,
$k^{*}\leftarrow0$, $n\leftarrow0$}

\State{$\gamma_{1}\leftarrow\gamma(1-\alpha^{-1},1)=\int_{0}^{1}e^{-\tau}\tau^{-\alpha^{-1}}\mathrm{d}\tau$}

\State{$h\leftarrow\emptyset$ (empty heap)}

\While{true}

\State{$u\leftarrow u+\mathrm{Exp}_{\mathrm{local}}(1)$}\Comment{\textit{Generated
using local randomness (not }$\mathscr{G}$\textit{)}}

\State{$b\leftarrow(u\alpha/(e^{-1}+\gamma_{1}))^{\alpha}$}

\If{$n=0$ and $b(r^{*})^{-\alpha}\ge w^{*}$}\Comment{\textit{No
possible points left and future points impossible}}

\State{\Return{$k^{*}$}}

\EndIf

\If{$\mathrm{Unif}_{\mathrm{local}}(0,1)<e^{-1}/(e^{-1}+\gamma_{1})$}\Comment{\textit{Run
with prob. $e^{-1}/(e^{-1}+\gamma_{1})$}}

\State{$t\leftarrow b^{1/\alpha}$, $v\leftarrow\mathrm{Exp}_{\mathrm{local}}(1)+1$}

\Else

\Repeat

\State{$v\leftarrow\mathrm{Gamma}_{\mathrm{local}}(1-\alpha^{-1},1)$}\Comment{\textit{Gamma
distribution}}

\Until{$v\le1$}

\State{$t\leftarrow(b/v)^{1/\alpha}$}

\EndIf

\State{$\theta\leftarrow\mathbf{1}\{(t/r^{*})^{\alpha}v\le w^{*}\}$}\Comment{\textit{Is
it possible for this point to be optimal}}

\State{Push $(t,v,\theta)$ to $h$}

\State{$n\leftarrow n+\theta$}\Comment{\textit{Number of possible
points in heap}}

\While{$h\neq\emptyset$ and $\min_{(t',v',\theta')\in h}t'\le b^{1/\alpha}$}\Comment{\textit{Assign
$Z_{i}$'s to points in heap with small $T_{i}$}}

\State{$(t,v,\theta)\leftarrow\arg\min_{(t',v',\theta')\in h}t'$,
and pop $(t,v,\theta)$ from $h$}

\State{$n\leftarrow n-\theta$}

\State{$k\leftarrow k+1$}

\State{Generate $z\sim Q$ using $\mathscr{G}$}

\State{$w\leftarrow(t/r(z))^{\alpha}v$}

\If{$w<w^{*}$}

\State{$w^{*}\leftarrow w$}

\State{$k^{*}\leftarrow k$}

\EndIf

\EndWhile

\EndWhile

\end{algorithmic}

\bigskip{}

\textbf{Procedure} $\textsc{PPRDecode}(Q,k,s):$

\textbf{$\;\;\;\;$Input:} $Q$, index $k\in\mathbb{Z}_{>0}$, seed
$s$

\textbf{$\;\;\;\;$Output:} sample $z$

\smallskip{}

\begin{algorithmic}[1]

\State{Initialize PRNG $\mathscr{G}$ using the seed $s$}

\For{$i=1,2,\ldots,k$}

\State{Generate $z\sim Q$ using $\mathscr{G}$}\Comment{\textit{See
footnote \ref{fn:counterbased}}}

\EndFor

\State{\Return{$z$}}

\end{algorithmic}


\captionof{algorithm}{Poisson private representation}
\label{alg:ppr}

\section{Proofs of Theorem~\ref{thm:eps_dp} and Theorem~\ref{thm:metric_privacy}\label{sec:pf_eps_dp}}

First prove Theorem~\ref{thm:eps_dp}. Consider a $\varepsilon$-DP mechanism $P_{Z|X}$.
Consider neighbors $x_{1},x_{2}$, and let $P_{j}:=P_{Z|X}(\cdot|x_{j})$,
$\tilde{T}_{j,i}:=T_{i}/(\frac{\mathrm{d}P_{j}}{\mathrm{d}Q}(Z_{i}))$,
and $K_{j}$ be the output of PPR applied on $P_{j}$, for $j=1,2$.
Since $P_{Z|X}$ is $\varepsilon$-DP, 
\begin{equation}
e^{-\varepsilon}\frac{\mathrm{d}P_{2}}{\mathrm{d}Q}(z) \le \frac{\mathrm{d}P_{1}}{\mathrm{d}Q}(z)\le e^{\varepsilon}\frac{\mathrm{d}P_{2}}{\mathrm{d}Q}(z) \label{eq:derivative_ratios}
\end{equation}
for $Q$-almost every $z$,\footnote{$\varepsilon$-DP only implies that \eqref{eq:derivative_ratios} holds for $P_1$-almost every $z$ (or equivalently $P_2$-almost every $z$ since $P_1,P_2$ are absolutely continuous with respect to each other). We now show that \eqref{eq:derivative_ratios} holds for $Q$-almost every $z$. Apply Lebesgue's decomposition theorem to find measures $\tilde{Q}, \hat{Q}$ such that $Q=\tilde{Q}+\hat{Q}$, $\tilde{Q} \ll P_1$ and $\hat{Q} \perp P_1$. There exists $\mathcal{Z}' \subseteq \mathcal{Z}$ such that $ P_1(\mathcal{Z}')=1$ and $\hat{Q}(\mathcal{Z}')=0$. Since $P_1 \ll Q$, we have $P_1 \ll \tilde{Q}$. We have \eqref{eq:derivative_ratios} for $\tilde{Q}$-almost every $z$. Also, we have \eqref{eq:derivative_ratios} for $\hat{Q}$-almost every $z$ since $z \notin \mathcal{Z}'$ gives $\frac{\mathrm{d}P_1}{\mathrm{d}Q}(z) = \frac{\mathrm{d}P_1}{\mathrm{d}\hat{Q}}(z) = 0$ for $\hat{Q}$-almost every $z$, and also $\frac{\mathrm{d}P_2}{\mathrm{d}Q}(z) = 0$ for $\hat{Q}$-almost every $z$ since $P_2 \ll P_1$.} and hence $e^{-\varepsilon} \tilde{T}_{2,i} \le \tilde{T}_{1,i} \le e^{\varepsilon} \tilde{T}_{2,i}$. For $k \in \mathbb{Z}_{+}$, we have, almost surely,
\begin{align*}
\Pr(K_1 = k \, | \, (Z_i,T_i)_i) & = \frac{\tilde{T}_{1,k}^{-\alpha}}{\sum_{i=1}^{\infty} \tilde{T}_{1,i}^{-\alpha}} \\
& \le \frac{e^{\alpha \varepsilon}\tilde{T}_{2,k}^{-\alpha}}{\sum_{i=1}^{\infty} e^{- \alpha \varepsilon}\tilde{T}_{2,i}^{-\alpha}} \\
& = e^{2 \alpha \varepsilon} \Pr(K_2 = k \, | \, (Z_i,T_i)_i).
\end{align*}
For any measurable $\mathcal{S}\subseteq\mathcal{Z}^{\infty}\times\mathbb{Z}_{>0}$,
\begin{align}
& \Pr\left(((Z_{i})_{i},K_{1})\in\mathcal{S}\right) \nonumber \\
& =\mathbb{E}\left[\Pr\left(((Z_{i})_{i},K_{1})\in\mathcal{S}\,\big|\,(Z_{i},T_{i})_{i}\right)\right] \nonumber \\
& =\mathbb{E}\left[\sum_{k:\,((Z_{i})_{i},k)\in\mathcal{S}}\Pr\left(K_{1}=k\,\big|\,(Z_{i},T_{i})_{i}\right)\right] \nonumber \\
& \le e^{2 \alpha \varepsilon} \cdot \mathbb{E}\left[\sum_{k:\,((Z_{i})_{i},k)\in\mathcal{S}}\Pr\left(K_{2}=k\,\big|\,(Z_{i},T_{i})_{i}\right)\right] \nonumber \\
& =  e^{2 \alpha \varepsilon} \Pr\left(((Z_{i})_{i},K_{2})\in\mathcal{S}\right). \label{eq:eps_dp_set_bound}
\end{align}
Hence, $P_{(Z_i)_i,K | X}$ is $2\alpha \varepsilon$-DP.

For Theorem~\ref{thm:metric_privacy}, consider a $\varepsilon \cdot d_{\mathcal{X}}$-private mechanism $P_{Z|X}$, and
consider $x_{1},x_{2} \in \mathcal{X}$. We have
\begin{equation}
e^{-\varepsilon \cdot d_{\mathcal{X}}(x_1,x_2)}\frac{\mathrm{d}P_{2}}{\mathrm{d}Q}(z) \le \frac{\mathrm{d}P_{1}}{\mathrm{d}Q}(z)\le e^{\varepsilon \cdot d_{\mathcal{X}}(x_1,x_2)}\frac{\mathrm{d}P_{2}}{\mathrm{d}Q}(z)
\end{equation}
for $Q$-almost every $z$.
By exactly the same arguments as in the proof of Theorem~\ref{thm:eps_dp}, $\Pr\left(((Z_{i})_{i},K_{1})\in\mathcal{S}\right) \le e^{2 \alpha \varepsilon \cdot d_{\mathcal{X}}(x_1,x_2)} \Pr\left(((Z_{i})_{i},K_{2})\in\mathcal{S}\right)$, and hence $P_{(Z_i)_i,K | X}$ is $2\alpha \varepsilon\cdot d_{\mathcal{X}}$-private.

\section{Proof of Theorem~\ref{thm:eps_delta_dp_2}\label{sec:eps_delta_dp_2}}

Consider a $(\varepsilon,\delta)$-DP mechanism $P_{Z|X}$.
Consider neighbors $x_{1},x_{2}$, and let $P_{j}:=P_{Z|X}(\cdot|x_{j})$,
and $K_{j}$ be the output of PPR applied on $P_{j}$, for $j=1,2$.
By the definition of $(\varepsilon,\delta)$-differential
privacy, we have
\begin{equation}
\int\max\left\{ \rho_{1}(z)-e^{\varepsilon}\rho_{2}(z),\,0\right\} Q(\mathrm{d}z)\le\delta,\label{eq:eps_delta_dp_intbd1}
\end{equation}
\begin{equation}
\int\max\left\{ \rho_{2}(z)-e^{\varepsilon}\rho_{1}(z),\,0\right\} Q(\mathrm{d}z)\le\delta.\label{eq:eps_delta_dp_intbd2}
\end{equation}
Let
\[
\overline{\rho}(z):=\min\left\{ \max\left\{ \rho_{1}(z),\,e^{-\varepsilon}\rho_{2}(z)\right\} ,\,e^{\varepsilon}\rho_{2}(z)\right\} .
\]
Note that $e^{-\varepsilon}\rho_{2}(z)\le\overline{\rho}(z)\le e^{\varepsilon}\rho_{2}(z)$.
We then consider two cases:

Case 1: $\int\overline{\rho}(z)Q(\mathrm{d}z)\le1$. Let $\rho_{3}(z)$
be such that $\int\rho_{3}(z)Q(\mathrm{d}z)=1$ and
\[
\overline{\rho}(z)\le\rho_{3}(z)\le e^{\varepsilon}\rho_{2}(z).
\]
We can always find such $\rho_{3}$ by taking an appropriate convex
combination of the lower bound above (which integrates to $\le1$)
and the upper obund above (which integrates to $\ge1$). We then have
\begin{equation}
e^{-\varepsilon}\rho_{2}(z)\le\rho_{3}(z)\le e^{\varepsilon}\rho_{2}(z).\label{eq:eps_delta_dp_ratio1}
\end{equation}
If $\rho_{1}(z)-e^{\varepsilon}\rho_{2}(z)\le0$, then $\rho_{1}(z)-\rho_{3}(z)\le\rho_{1}(z)-\overline{\rho}(z)\le0$.
If $\rho_{1}(z)-e^{\varepsilon}\rho_{2}(z)>0$, then $\rho_{3}(z)=\overline{\rho}(z)=e^{\varepsilon}\rho_{2}(z)$.
Either way, we have
$\max\left\{ \rho_{1}(z)-\rho_{3}(z),\,0\right\} =\max\left\{ \rho_{1}(z)-e^{\varepsilon}\rho_{2}(z),\,0\right\}$.
By (\ref{eq:eps_delta_dp_intbd1}), we have
\[
\int\max\left\{ \rho_{1}(z)-\rho_{3}(z),\,0\right\} Q(\mathrm{d}z)\le\delta.
\]
Let $P_{3}=\rho_{3}Q$ be the probability measure with $\mathrm{d}P_{3}/\mathrm{d}Q=\rho_{3}$.
Then the total variation distance $d_{\mathrm{TV}}(P_{1},P_{3})$
between $P_{1}$ and $P_{3}$ is at most $\delta$, and by (\ref{eq:eps_delta_dp_ratio1}),
\begin{equation}
e^{-\varepsilon}\frac{\mathrm{d}P_{2}}{\mathrm{d}Q}(z)\le\frac{\mathrm{d}P_{3}}{\mathrm{d}Q}(z)\le e^{\varepsilon}\frac{\mathrm{d}P_{2}}{\mathrm{d}Q}(z).\label{eq:eps_delta_dp_ratio0}
\end{equation}

Case 2: $\int\overline{\rho}(z)Q(\mathrm{d}z)>1$. Let $\rho_{3}(z)$
be such that $\int\rho_{3}(z)Q(\mathrm{d}z)=1$ and
\[
e^{-\varepsilon}\rho_{2}(z)\le\rho_{3}(z)\le\overline{\rho}(z).
\]
We can always find such $\rho_{3}$ by taking an appropriate convex
combination of the lower bound above (which integrates to $\le1$)
and the upper obund above (which integrates to $>1$). We again have
$e^{-\varepsilon}\rho_{2}(z)\le\rho_{3}(z)\le e^{\varepsilon}\rho_{2}(z)$.
If $e^{-\varepsilon}\rho_{2}(z)-\rho_{1}(z)\le0$, then $\rho_{3}(z)-\rho_{1}(z)\le\overline{\rho}(z)-\rho_{1}(z)\le0$.
If $e^{-\varepsilon}\rho_{2}(z)-\rho_{1}(z)>0$, then $\rho_{3}(z)=\overline{\rho}(z)=e^{-\varepsilon}\rho_{2}(z)$.
Either way, we have
$\max\left\{ \rho_{3}(z)-\rho_{1}(z),\,0\right\} =\max\left\{ e^{-\varepsilon}\rho_{2}(z)-\rho_{1}(z),\,0\right\}$.
By (\ref{eq:eps_delta_dp_intbd2}), we have
\[
\int\max\left\{ \rho_{3}(z)-\rho_{1}(z),\,0\right\} Q(\mathrm{d}z)\le e^{-\varepsilon}\delta\le\delta.
\]
Let $P_{3}=\rho_{3}Q$ be the probability measure with $\mathrm{d}P_{3}/\mathrm{d}Q=\rho_{3}$.
Again, we have $d_{\mathrm{TV}}(P_{1},P_{3})\le\delta$ and (\ref{eq:eps_delta_dp_ratio0}).
Therefore, regardless of whether Case 1 or Case 2 holds, we can construct
$P_{3}$ satisfying $d_{\mathrm{TV}}(P_{1},P_{3})\le\delta$ and (\ref{eq:eps_delta_dp_ratio0}). Let $K_{3}$ be
the output of PPR applied on $P_{3}$. 

In the proof of Theorem \ref{thm:logk_bound}, we see that PPR has
the following equivalent formulation. Let $(T_{i})_{i}\sim\mathrm{PP}(1)$
be a Poisson process with rate $1$, independent of $Z_{1},Z_{2},\ldots\stackrel{iid}{\sim}Q$.
Let $R_{i}:=(\mathrm{d}P/\mathrm{d}Q)(Z_{i})$, and let its probability
measure be $P_{R}$. Let $V_{1},V_{2},\ldots\stackrel{iid}{\sim}\mathrm{Exp}(1)$.
PPR can be equivalently expressed as
\begin{align*}
K & =\underset{k}{\mathrm{argmin}}T_{k}^{\alpha}R_{k}^{-\alpha}V_{k} =\underset{k}{\mathrm{argmin}}\frac{T_{k}V_{k}^{1/\alpha}}{R_{k}}.
\end{align*}
Note that $(T_{i}V_{i}^{1/\alpha})_{i}\sim\mathrm{PP}(\int_{0}^{\infty}v^{-1/\alpha}e^{-v}\mathrm{d}v)=\mathrm{PP}(\Gamma(1-\alpha^{-1}))$
is a uniform Poisson process. Therefore PPR is the same as the Poisson
functional representation \cite{li2018strong,li2021unified} applied
on $(T_{i}V_{i}^{1/\alpha})_{i}$. By the grand coupling property of Poisson
functional representation \cite{li2021unified,li2019pairwise} (see \cite[Theorem 3]{li2019pairwise}), if
we apply the Poisson functional representation on $P_{1}$ and $P_{3}$
to get $K_{1}$ and $K_{3}$ respectively, then
\[
\Pr(K_{1}\neq K_{3})\le2d_{\mathrm{TV}}(P_{1},P_{3}) \le 2 \delta.
\]
Therefore, for any measurable $\mathcal{S}\subseteq\mathcal{Z}^{\infty}\times\mathbb{Z}_{>0}$,
\begin{align*}
 \Pr\left(((Z_{i})_{i},K_{1})\in\mathcal{S}\right) & \le\Pr\left(((Z_{i})_{i},K_{3})\in\mathcal{S}\right)+2\delta\\
 & \le e^{2\alpha\varepsilon}\Pr\left(((Z_{i})_{i},K_{2})\in\mathcal{S}\right)+2\delta,
\end{align*}
where the last inequality is by applying \eqref{eq:eps_dp_set_bound} on
$P_{3},P_{2}$ instead of $P_{1},P_{2}$. Hence, $P_{(Z_i)_i,K | X}$ is $(2\alpha\varepsilon, 2\delta)$-DP.

\section{Proof of Theorem~\ref{thm:eps_delta_dp}\label{sec:pf_eps_delta_dp}}

We present the proof of $(\varepsilon,\delta)$-DP of PPR (i.e., Theorem~\ref{thm:eps_delta_dp}).
\begin{proof}
We assume
\begin{equation}
\alpha-1\le\frac{\beta\tilde{\delta}\tilde{\varepsilon}^{2}}{-\ln\tilde{\delta}},\label{eq:eps_d_alpha}
\end{equation}
where $\beta:=e^{-4.2}$. Using the Laplace functional of the Poisson
process $(\tilde{T}_{i})_{i}$ \cite[Theorem 3.9]{last2017lectures},
for $w>0$,
\begin{align}
\mathbb{E}\left[\exp\left(-w\sum_{i}\tilde{T}_{i}^{-\alpha}\right)\right] & =\exp\left(-\int_{0}^{\infty}(1-\exp(-wt^{-\alpha}))\mathrm{d}t\right)\label{eq:laplace}\\
 & =\exp\left(-w^{1/\alpha}\Gamma(1-\alpha^{-1})\right).\nonumber 
\end{align}
We first bound the left tail of $\sum_{i}\tilde{T}_{i}^{-\alpha}$.
By Chernoff bound, for $d\ge0$,

\begin{align*}
 & \Pr\left(\sum_{i}\tilde{T}_{i}^{-\alpha}\le d\right)\\
 & \le\inf_{w>0}e^{wd}\mathbb{E}\left[\exp\left(-w\sum_{i}\tilde{T}_{i}^{-\alpha}\right)\right]\\
 & =\inf_{w>0}\exp\left(wd-w^{1/\alpha}\Gamma(1-\alpha^{-1})\right)\\
 & \le\exp\left(\left(\frac{\Gamma(1-\alpha^{-1})}{\alpha d}\right)^{\frac{\alpha}{\alpha-1}}d-\left(\frac{\Gamma(1-\alpha^{-1})}{\alpha d}\right)^{\frac{1}{\alpha-1}}\Gamma(1-\alpha^{-1})\right)\\
 & =\exp\left(\left(\Gamma(1-\alpha^{-1})\right)^{\frac{\alpha}{\alpha-1}}d^{-\frac{1}{\alpha-1}}\left(\alpha^{-\frac{\alpha}{\alpha-1}}-\alpha^{-\frac{1}{\alpha-1}}\right)\right)\\
 & =\exp\left(-\left(\frac{\alpha d}{\left(\Gamma(1-\alpha^{-1})\right)^{\alpha}}\right)^{-\frac{1}{\alpha-1}}\left(1-\alpha^{-1}\right)\right)\\
 & =\exp\left(-\left(\frac{\alpha d(1-\alpha^{-1})^{\alpha}}{\left(\Gamma(2-\alpha^{-1})\right)^{\alpha}}\right)^{-\frac{1}{\alpha-1}}\left(1-\alpha^{-1}\right)\right)\\
 & =\exp\left(-\left(\frac{(\alpha-1)d}{\left(\Gamma(2-\alpha^{-1})\right)^{\alpha}}\right)^{-\frac{1}{\alpha-1}}\right).
\end{align*}
Therefore, to guarantee $\Pr(\sum_{i}\tilde{T}_{i}^{-\alpha}\le d)\le\tilde{\delta}/3$,
we require
\[
d\le\frac{\Gamma(2-\alpha^{-1})^{\alpha}\left(-\ln(\tilde{\delta}/3)\right)^{-(\alpha-1)}}{\alpha-1},
\]
where
\begin{align*}
 & \Gamma(2-\alpha^{-1})^{\alpha}\left(-\ln(\tilde{\delta}/3)\right)^{-(\alpha-1)}\\
 & \ge\left(\exp\left(-\gamma(\alpha-1)\right)\right)^{\alpha}\left(-\ln(\tilde{\delta}^{2})\right)^{-(\alpha-1)}\\
 & \ge\exp\left(-\gamma\alpha\frac{\beta\tilde{\delta}\tilde{\varepsilon}^{2}}{-\ln\tilde{\delta}}\right)\left(-2\ln\tilde{\delta}\right)^{-\frac{\beta\tilde{\delta}\tilde{\varepsilon}^{2}}{-\ln\tilde{\delta}}}\\
 & \ge\exp\left(-2\gamma\frac{\beta\tilde{\delta}\tilde{\varepsilon}^{2}}{-\ln\tilde{\delta}}-2e^{-1}\beta\tilde{\delta}\tilde{\varepsilon}^{2}\right)\\
 & \ge\exp\left(-\left(\frac{2\gamma}{3\ln2}+\frac{2}{3e}\right)\beta\tilde{\varepsilon}^{2}\right)\\
 & \ge\exp\left(-0.81\cdot\beta\tilde{\varepsilon}^{2}\right)\\
 & \ge e^{-\tilde{\varepsilon}/2},
\end{align*}
since $1<\alpha\le2$, $0<\tilde{\delta}\le1/3$, $\beta=e^{-4.2}$
and $0<\tilde{\varepsilon}\le1$, where $\gamma$ is the Euler-Mascheroni
constant. Hence, we have 
\begin{equation}
\Pr\left(\sum_{i}\tilde{T}_{i}^{-\alpha}\le\frac{e^{-\tilde{\varepsilon}/2}}{\alpha-1}\right)\le\frac{\tilde{\delta}}{3}.\label{eq:sumt_lefttail}
\end{equation}

We then bound the right tail of $\sum_{i}\tilde{T}_{i}^{-\alpha}$.
Unfortunately, the Laplace functional \eqref{eq:laplace} does not
work since the integral diverges for small $t$. Therefore, we have
to bound $t$ away from $0$. Note that $\min_{i}\tilde{T}_{i}\sim\mathrm{Exp}(1)$,
and hence 
\begin{equation}
\Pr(\min_{i}\tilde{T}_{i}\le\tilde{\delta}/3)\le\tilde{\delta}/3.\label{eq:mint_lefttail}
\end{equation}
Write $\tau=\tilde{\delta}/3$. Using the Laplace functional again,
for $w>0$,
\begin{align*}
 & \mathbb{E}\left[\exp\Big(w\sum_{i:\,\tilde{T}_{i}>\tau}\tilde{T}_{i}^{-\alpha}\Big)\right]\\
 & =\exp\left(-\int_{\tau}^{\infty}(1-\exp(wt^{-\alpha}))\mathrm{d}t\right)\\
 & =\exp\left(\int_{\tau}^{\infty}(\exp(wt^{-\alpha})-1)\mathrm{d}t\right)\\
 & \le\exp\left(\int_{\tau}^{\infty}(\exp(w\tau^{-\alpha})-1)\frac{t^{-\alpha}}{\tau^{-\alpha}}\mathrm{d}t\right)\\
 & =\exp\left(\frac{\exp(w\tau^{-\alpha})-1}{\tau^{-\alpha}}\cdot\frac{\tau^{-(\alpha-1)}}{\alpha-1}\right)\\
 & =\exp\left(\frac{\tau(\exp(w\tau^{-\alpha})-1)}{\alpha-1}\right).
\end{align*}
Therefore, by Chernoff bound, for $d\ge0$,

\begin{align}
 & \Pr\Big(\sum_{i:\,\tilde{T}_{i}>\tau}\tilde{T}_{i}^{-\alpha}\ge d\Big)\nonumber \\
 & \le\inf_{w>0}\exp\left(-wd+\frac{\tau(\exp(w\tau^{-\alpha})-1)}{\alpha-1}\right)\nonumber \\
 & \le\exp\left(-d\tau^{\alpha}\ln(d(\alpha-1)\tau^{\alpha-1})+\frac{\tau(\exp(\ln(d(\alpha-1)\tau^{\alpha-1}))-1)}{\alpha-1}\right)\nonumber \\
 & =\exp\left(-d\tau^{\alpha}\ln(d(\alpha-1)\tau^{\alpha-1})+\tau\frac{d(\alpha-1)\tau^{\alpha-1}-1}{\alpha-1}\right)\nonumber \\
 & =\exp\left(-\frac{c\tau}{\alpha-1}\ln c+\tau\frac{c-1}{\alpha-1}\right)\nonumber \\
 & =\exp\left(-\frac{\tau}{\alpha-1}\left(c\ln c-c+1\right)\right)\nonumber \\
 & \le\exp\left(-\frac{\tau(2\ln2-1)(c-1)^{2}}{\alpha-1}\right),\label{eq:sumt_righttail}
\end{align}
where 
\[
c:=d(\alpha-1)\tau^{\alpha-1},
\]
and the last inequality holds whenever $c\in[1,2]$ since in this
range,
\[
c\ln c-c+1\ge(2\ln2-1)(c-1)^{2}.
\]
Substituting
\[
d=\frac{e^{\tilde{\varepsilon}/2}}{\alpha-1},
\]
we have $c=e^{\tilde{\varepsilon}/2}\tau^{\alpha-1}$. By \eqref{eq:sumt_righttail},
to guarantee $\Pr(\sum_{i:\,\tilde{T}_{i}>\tau}\tilde{T}_{i}^{-\alpha}\ge d)\le\tilde{\delta}/3=\tau$,
we require
\[
\frac{\tau(2\ln2-1)(e^{\tilde{\varepsilon}/2}\tau^{\alpha-1}-1)^{2}}{\alpha-1}\ge-\ln\tau,
\]
\begin{equation}
e^{\tilde{\varepsilon}/2}\tau^{\alpha-1}\ge\sqrt{\frac{(\alpha-1)(-\ln\tau)}{\tau(2\ln2-1)}}+1.\label{eq:sumt_righttail_2}
\end{equation}
Substituting \eqref{eq:eps_d_alpha}, we have
\begin{align*}
e^{\tilde{\varepsilon}/2}\tau^{\alpha-1} & \ge e^{\tilde{\varepsilon}/2}\tau^{\frac{\beta\tilde{\delta}\tilde{\varepsilon}^{2}}{-\ln\tilde{\delta}}}\\
 & =\exp\left(\frac{\tilde{\varepsilon}}{2}+\left(\ln\frac{\tilde{\delta}}{3}\right)\frac{\beta\tilde{\delta}\tilde{\varepsilon}^{2}}{-\ln\tilde{\delta}}\right)\\
 & \ge\exp\left(\frac{\tilde{\varepsilon}}{2}+\left(2\ln\tilde{\delta}\right)\frac{\beta\tilde{\varepsilon}}{-3\ln\tilde{\delta}}\right)\\
 & =\exp\left(\tilde{\varepsilon}\left(\frac{1}{2}-\frac{2\beta}{3}\right)\right),
\end{align*}
since $0<\tilde{\delta}\le1/3$. Note that this also guarantees $c=e^{\tilde{\varepsilon}/2}\tau^{\alpha-1}\in[1,2]$
since $\beta=e^{-4.2}$ and $0<\tilde{\varepsilon}\le1$. We also have
\begin{align*}
\frac{(\alpha-1)(-\ln\tau)}{\tau(2\ln2-1)} & \le\frac{\frac{\beta\tilde{\delta}\tilde{\varepsilon}^{2}}{-\ln\tilde{\delta}}(-\ln\tau)}{\tau(2\ln2-1)}\\
 & \le\frac{\frac{\beta\tilde{\delta}\tilde{\varepsilon}^{2}}{-\ln\tilde{\delta}}(-2\ln\tilde{\delta})}{(\tilde{\delta}/3)(2\ln2-1)}\\
 & =\frac{6\beta\tilde{\varepsilon}^{2}}{2\ln2-1}\\
 & \le16\beta\tilde{\varepsilon}^{2}.
\end{align*}
Hence,
\begin{align*}
\sqrt{\frac{(\alpha-1)(-\ln\tau)}{\tau(2\ln2-1)}}+1 & \le4\tilde{\varepsilon}\sqrt{\beta}+1\\
 & \le\exp\left(4\tilde{\varepsilon}\sqrt{\beta}\right)\\
 & \stackrel{(a)}{\le}\exp\left(\tilde{\varepsilon}\left(\frac{1}{2}-\frac{2\beta}{3}\right)\right)\\
 & \le e^{\tilde{\varepsilon}/2}\tau^{\alpha-1},
\end{align*}
where (a) is by $\beta=e^{-4.2}$. Hence \eqref{eq:sumt_righttail_2}
is satisfied, and
\[
\Pr\Big(\sum_{i:\,\tilde{T}_{i}>\tau}\tilde{T}_{i}^{-\alpha}\ge\frac{e^{\tilde{\varepsilon}/2}}{\alpha-1}\Big)\le\frac{\tilde{\delta}}{3}.
\]
Combining this with \eqref{eq:sumt_lefttail} and \eqref{eq:mint_lefttail},
\begin{align*}
 & \Pr\Big(\sum_{i}\tilde{T}_{i}^{-\alpha}\notin\Big[\frac{e^{-\tilde{\varepsilon}/2}}{\alpha-1},\,\frac{e^{\tilde{\varepsilon}/2}}{\alpha-1}\Big]\Big)\\
 & \le\Pr\Big(\sum_{i}\tilde{T}_{i}^{-\alpha}\le\frac{e^{-\tilde{\varepsilon}/2}}{\alpha-1}\Big)+\Pr(\min_{i}\tilde{T}_{i}\le\tilde{\delta}/3)\\
 & \;\;\;+\Pr\Big(\sum_{i:\,\tilde{T}_{i}>\tilde{\delta}/3}\tilde{T}_{i}^{-\alpha}\ge\frac{e^{\tilde{\varepsilon}/2}}{\alpha-1}\Big)\\
 & \le\tilde{\delta}.
\end{align*}

Consider an $(\varepsilon,\delta)$-differentially private mechanism
$P_{Z|X}$. Consider neighbors $x_{1},x_{2}$, and let $P_{j}:=P_{Z|X}(\cdot|x_{j})$,
$\tilde{T}_{j,i}:=T_{i}/(\frac{\mathrm{d}P_{j}}{\mathrm{d}Q}(Z_{i}))$,
and $K_{j}$ be the output of PPR applied on $P_{j}$, for $j=1,2$.
We first consider the case $\delta=0$, which gives $\frac{\mathrm{d}P_{1}}{\mathrm{d}Q}(z)\le e^{\varepsilon}\frac{\mathrm{d}P_{2}}{\mathrm{d}Q}(z)$
for every $z$. For any measurable $\mathcal{S}\subseteq\mathcal{Z}^{\infty}\times\mathbb{Z}_{>0}$,
\begin{align}
 & \Pr\left(((Z_{i})_{i},K_{1})\in\mathcal{S}\right)\nonumber \\
 & =\mathbb{E}\left[\Pr\left(((Z_{i})_{i},K_{1})\in\mathcal{S}\,\big|\,(Z_{i},T_{i})_{i}\right)\right]\nonumber \\
 & =\mathbb{E}\left[\sum_{k:\,((Z_{i})_{i},k)\in\mathcal{S}}\Pr\left(K_{1}=k\,\big|\,(Z_{i},T_{i})_{i}\right)\right]\nonumber \\
 & =\mathbb{E}\left[\sum_{k:\,((Z_{i})_{i},k)\in\mathcal{S}}\frac{\tilde{T}_{1,k}^{-\alpha}}{\sum_{i}\tilde{T}_{1,i}^{-\alpha}}\right]\nonumber \\
 & \le\mathbb{E}\left[\mathbf{1}\left\{ \sum_{i}\tilde{T}_{1,i}^{-\alpha}\in\Big[\frac{e^{-\tilde{\varepsilon}/2}}{\alpha-1},\,\frac{e^{\tilde{\varepsilon}/2}}{\alpha-1}\Big]\right\} \min\left\{ \sum_{k:\,((Z_{i})_{i},k)\in\mathcal{S}}\frac{\tilde{T}_{1,k}^{-\alpha}}{\sum_{i}\tilde{T}_{1,i}^{-\alpha}},\,1\right\} \right]+\tilde{\delta}\nonumber \\
 & \le\mathbb{E}\left[\min\left\{ \sum_{k:\,((Z_{i})_{i},k)\in\mathcal{S}}\frac{\tilde{T}_{1,k}^{-\alpha}}{e^{-\tilde{\varepsilon}/2}/(\alpha-1)},\,1\right\} \right]+\tilde{\delta}\nonumber \\
 & =\mathbb{E}\left[\min\left\{ \sum_{k:\,((Z_{i})_{i},k)\in\mathcal{S}}\frac{(\frac{\mathrm{d}P_{1}}{\mathrm{d}Q}(Z_{k}))^{\alpha}T_{k}^{-\alpha}}{e^{-\tilde{\varepsilon}/2}/(\alpha-1)},\,1\right\} \right]+\tilde{\delta}\nonumber \\
 & \le\mathbb{E}\left[\min\left\{ \sum_{k:\,((Z_{i})_{i},k)\in\mathcal{S}}\frac{(e^{\varepsilon}\frac{\mathrm{d}P_{2}}{\mathrm{d}Q}(Z_{k}))^{\alpha}T_{k}^{-\alpha}}{e^{-\tilde{\varepsilon}/2}/(\alpha-1)},\,1\right\} \right]+\tilde{\delta}\nonumber \\
 & \le\mathbb{E}\left[\mathbf{1}\left\{ \sum_{i}\tilde{T}_{2,i}^{-\alpha}\in\Big[\frac{e^{-\tilde{\varepsilon}/2}}{\alpha-1},\,\frac{e^{\tilde{\varepsilon}/2}}{\alpha-1}\Big]\right\} \min\left\{ \sum_{k:\,((Z_{i})_{i},k)\in\mathcal{S}}\frac{(e^{\varepsilon}\frac{\mathrm{d}P_{2}}{\mathrm{d}Q}(Z_{k}))^{\alpha}T_{k}^{-\alpha}}{e^{-\tilde{\varepsilon}/2}/(\alpha-1)},\,1\right\} \right]+2\tilde{\delta}\nonumber \\
 & \le\mathbb{E}\left[\min\left\{ e^{\alpha\varepsilon}\sum_{k:\,((Z_{i})_{i},k)\in\mathcal{S}}\frac{(\frac{\mathrm{d}P_{2}}{\mathrm{d}Q}(Z_{k}))^{\alpha}T_{k}^{-\alpha}}{e^{-\tilde{\varepsilon}}\sum_{i}\tilde{T}_{2,i}^{-\alpha}},\,1\right\} \right]+2\tilde{\delta}\nonumber \\
 & \le\mathbb{E}\left[e^{\alpha\varepsilon+\tilde{\varepsilon}}\sum_{k:\,((Z_{i})_{i},k)\in\mathcal{S}}\frac{\tilde{T}_{2,k}^{-\alpha}}{\sum_{i}\tilde{T}_{2,i}^{-\alpha}}\right]+2\tilde{\delta}\nonumber \\
 & =e^{\alpha\varepsilon+\tilde{\varepsilon}}\Pr\left(((Z_{i})_{i},K_{2})\in\mathcal{S}\right)+2\tilde{\delta}.\label{eq:pf_eps_dp}
\end{align}
Hence PPR is $(\alpha\varepsilon+\tilde{\varepsilon},\,2\tilde{\delta})$-differentially
private.

For the case $\delta>0$, by the definition of $(\varepsilon,\delta)$-differential
privacy, we have
\[
\int\max\left\{ \frac{\mathrm{d}P_{1}}{\mathrm{d}Q}(z)-e^{\varepsilon}\frac{\mathrm{d}P_{2}}{\mathrm{d}Q}(z),\,0\right\} Q(\mathrm{d}z)\le\delta.
\]
Let $P_{3}$ be a probability measure that satisfies
\[
\min\left\{ \frac{\mathrm{d}P_{1}}{\mathrm{d}Q}(z),\,e^{\varepsilon}\frac{\mathrm{d}P_{2}}{\mathrm{d}Q}(z)\right\} \le\frac{\mathrm{d}P_{3}}{\mathrm{d}Q}(z)\le e^{\varepsilon}\frac{\mathrm{d}P_{2}}{\mathrm{d}Q}(z),
\]
for every $z$. Such $P_{3}$ can be constructed by taking an appropriate
convex combination of the lower bound above (which integrates to $\le1$)
and the upper bound above (which integrates to $\ge1$) such that
$P_{3}$ integrates to $1$. We have
\[
\int\max\left\{ \frac{\mathrm{d}P_{1}}{\mathrm{d}Q}(z)-\frac{\mathrm{d}P_{3}}{\mathrm{d}Q}(z),\,0\right\} Q(\mathrm{d}z)\le\delta,
\]
and hence the total variation distance $d_{\mathrm{TV}}(P_{1},P_{3})$
between $P_{1}$ and $P_{3}$ is at most $\delta$. Let $K_{3}$ be
the output of PPR applied on $P_{3}$. 

In the proof of Theorem \ref{thm:logk_bound}, we see that PPR has
the following equivalent formulation. Let $(T_{i})_{i}\sim\mathrm{PP}(1)$
be a Poisson process with rate $1$, independent of $Z_{1},Z_{2},\ldots\stackrel{iid}{\sim}Q$.
Let $R_{i}:=(\mathrm{d}P/\mathrm{d}Q)(Z_{i})$, and let its probability
measure be $P_{R}$. Let $V_{1},V_{2},\ldots\stackrel{iid}{\sim}\mathrm{Exp}(1)$.
PPR can be equivalently expressed as
\begin{align*}
K & =\underset{k}{\mathrm{argmin}}T_{k}^{\alpha}R_{k}^{-\alpha}V_{k} =\underset{k}{\mathrm{argmin}}\frac{T_{k}V_{k}^{1/\alpha}}{R_{k}}.
\end{align*}
Note that $(T_{i}V_{i}^{1/\alpha})_{i}\sim\mathrm{PP}(\int_{0}^{\infty}v^{-1/\alpha}e^{-v}\mathrm{d}v)=\mathrm{PP}(\Gamma(1-\alpha^{-1}))$
is a uniform Poisson process. Therefore PPR is the same as the Poisson
functional representation \cite{li2018strong,li2021unified} applied
on $(T_{i}V_{i}^{1/\alpha})_{i}$. By the grand coupling property of Poisson
functional representation \cite{li2021unified,li2019pairwise} (see \cite[Theorem 3]{li2019pairwise}), if
we apply the Poisson functional representation on $P_{1}$ and $P_{3}$
to get $K_{1}$ and $K_{3}$ respectively, then
\[
\Pr(K_{1}\neq K_{3})\le2d_{\mathrm{TV}}(P_{1},P_{3}) \le 2 \delta.
\]
Therefore, for any measurable $\mathcal{S}\subseteq\mathcal{Z}^{\infty}\times\mathbb{Z}_{>0}$,
\begin{align*}
 & \Pr\left(((Z_{i})_{i},K_{1})\in\mathcal{S}\right)\\
 & \le\Pr\left(((Z_{i})_{i},K_{3})\in\mathcal{S}\right)+2\delta\\
 & \le e^{\alpha\varepsilon+\tilde{\varepsilon}}\Pr\left(((Z_{i})_{i},K_{2})\in\mathcal{S}\right)+2\tilde{\delta}+2\delta,
\end{align*}
where the last inequality is by applying \eqref{eq:pf_eps_dp} on
$P_{3},P_{2}$ instead of $P_{1},P_{2}$. This completes the proof.

\end{proof}

\section{Proof of Theorem~\ref{thm:logk_bound_simple}\label{sec:pf_logk_bound_simple}}

We now bound the size of the index output by the Poisson private representation.
The following is a refined version of Theorem \ref{thm:logk_bound_simple}.

\medskip{}

\begin{theorem}\label{thm:ldp_compression}
\label{thm:logk_bound}For PPR with parameter $\alpha>1$, when the
encoder is given the input $x$, the message $K$ given by PPR satisfies
\begin{align}
\mathbb{E}[\log K]  &\le D(P\Vert Q) \nonumber \\
&\quad\; +\inf_{\eta\in(0,1]\cap(0,\alpha-1)}\frac{1}{\eta}\log\left(\frac{\Gamma(1-\frac{\eta+1}{\alpha})\Gamma(\eta+1)}{(\Gamma(1-\frac{1}{\alpha}))^{\eta+1}}+1\right)\label{eq:logK_bd1}\\
 & \le D(P\Vert Q)+\frac{\log(3.56)}{\min\{(\alpha-1)/2,1\}},\label{eq:logK_bd2}
\end{align}
where $P:=P_{Z|X}(\cdot|x)$.
\end{theorem}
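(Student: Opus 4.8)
The plan is to analyze the distribution of the index $K$ through the equivalent formulation of PPR as a Poisson functional representation applied to a rescaled Poisson process, and to bound $\mathbb{E}[\log K]$ via a Markov-type argument on $K$ itself. First I would recall (as established in the proof of Proposition~\ref{prop:ppr_output_dist}, and also used in Appendix~\ref{sec:pf_eps_delta_dp}) that writing $(T_i)_i \sim \mathrm{PP}(1)$, $Z_i \stackrel{iid}{\sim} Q$, $R_i := (\mathrm{d}P/\mathrm{d}Q)(Z_i)$, and $V_i \stackrel{iid}{\sim} \mathrm{Exp}(1)$, the PPR index satisfies $K = \mathrm{argmin}_k \, T_k^{\alpha} R_k^{-\alpha} V_k = \mathrm{argmin}_k \, T_k V_k^{1/\alpha} / R_k$. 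The key point is that $(T_i V_i^{1/\alpha})_i$ is a Poisson process on $[0,\infty)$ with rate $\Gamma(1-\alpha^{-1})$ (by the mapping theorem, since $\int_0^\infty v^{-1/\alpha} e^{-v}\,\mathrm{d}v = \Gamma(1-\alpha^{-1})$), so after rescaling this becomes exactly the Poisson functional representation with base process at rate $1$. This reduces the problem to: given a rate-$1$ Poisson process $(S_i)_i$ (with $S_1 \le S_2 \le \cdots$) and i.i.d.\ marks, controlling the index of the minimizer of $S_i / R_i'$ for appropriately distributed $R_i'$.

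Next I would estimate $\Pr(K = k)$, or rather $\mathbb{E}[\mathbf{1}\{K=k\}]$, by the standard technique: $K = k$ only if the $k$-th point beats all others, and in particular $\log K$ can be bounded by relating $K$ to the number of points whose value $\tilde T_i^{-\alpha} V_i^{-1}$ (equivalently $S_i^{-1} R_i'$, up to constants) exceeds the minimum. Concretely, I would use the bound $\mathbb{E}[\log K] \le \log \mathbb{E}[K^\eta]/\eta$ for any $\eta \in (0,1]$ by Jensen/concavity of $x \mapsto \log x$ wait — more precisely $\log K \le \frac{1}{\eta}(K^\eta - 1) \cdot$ is not quite it; rather I would use $\mathbb{E}[\log K] \le \frac{1}{\eta}\log \mathbb{E}[K^\eta]$, valid since $\log$ is concave so $\mathbb{E}[\log K^\eta] \le \log \mathbb{E}[K^\eta]$. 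Then I must bound the fractional moment $\mathbb{E}[K^\eta]$. Here $K$ is the rank (in increasing order of $S_i$, i.e.\ in time order) of the optimal point. Conditioning on the marks and using the Poisson structure, the number of points $S_i$ smaller than $S_{\text{opt}}$ has a distribution controllable via the Laplace/exponential-functional machinery; the factor $\Gamma(1-\frac{\eta+1}{\alpha})\Gamma(\eta+1)/(\Gamma(1-\frac1\alpha))^{\eta+1}$ in \eqref{eq:logK_bd1} is exactly what emerges from computing a moment of the form $\mathbb{E}[(\sum_i \tilde T_i^{-\alpha})^{?}]$ against the winning term, using the Laplace functional $\mathbb{E}[\exp(-w\sum_i \tilde T_i^{-\alpha})] = \exp(-w^{1/\alpha}\Gamma(1-\alpha^{-1}))$ and Gamma-function integral identities (this is the same computation skeleton as in Appendix~\ref{sec:pf_eps_delta_dp}, but now extracting an $\eta$-th moment rather than a tail bound). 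The $D(P\Vert Q)$ term separates out because $\mathbb{E}[\log R_{\text{opt}}] = \mathbb{E}[\log(\mathrm{d}P/\mathrm{d}Q)(Z)] = D(P\Vert Q)/\log e$ when $Z \sim P$, using Proposition~\ref{prop:ppr_output_dist} that the output $Z_K \sim P$.

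Finally, from \eqref{eq:logK_bd1} I would derive the cleaner bound \eqref{eq:logK_bd2} by choosing $\eta = \min\{(\alpha-1)/2, 1\}$ — note this choice keeps $\eta < \alpha - 1$ and $\eta \le 1$ so it lies in the admissible set — and then crudely upper-bounding the Gamma-function ratio: with $\eta \le (\alpha-1)/2$ one has $1 - \frac{\eta+1}{\alpha} \ge \frac{1}{\alpha} - \frac{\alpha-1}{2\alpha} \cdot$ which stays bounded away from the poles of $\Gamma$, and $\Gamma(\eta+1) \le 1$, $(\Gamma(1-\alpha^{-1}))^{\eta+1}$ is bounded below, yielding a universal constant, numerically $\le 3.56$, inside the logarithm after adding $1$. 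The corollary for random $X \sim P_X$ with $Q = P_Z$ is then immediate since $D(P_{Z|X}(\cdot|X)\Vert P_Z)$ averages to $I(X;Z)$. The main obstacle I anticipate is the moment computation $\mathbb{E}[K^\eta]$: handling the argmin index requires carefully disentangling the contribution of the winning point from the rest of the process and justifying the interchange of expectation with the sum/argmin via the Poisson conditioning (Mecke's formula or the Palm calculus for Poisson processes), and then matching the resulting multi-dimensional Gamma integral to the stated closed form; the passage from \eqref{eq:logK_bd1} to \eqref{eq:logK_bd2} is then just careful but routine numerical estimation of Gamma values.
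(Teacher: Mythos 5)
Your proposal assembles several of the right ingredients (the exponential-variable representation $K=\mathrm{argmin}_k T_k^{\alpha}R_k^{-\alpha}V_k$, the Laplace functional, Gamma integrals, the choice $\eta=\min\{(\alpha-1)/2,1\}$ at the end), but the central chain of inequalities has two genuine flaws.

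First, the reduction ``PPR is the Poisson functional representation applied to the rate-$\Gamma(1-\alpha^{-1})$ process $(T_iV_i^{1/\alpha})_i$'' does not bound the quantity you need. The index $K$ that PPR transmits is the rank of the winning point under the ordering of the original times $T_1\le T_2\le\cdots$ (that is how the shared list $(Z_i)_i$ is enumerated, and it is the only ordering the decoder can reproduce), whereas re-sorting by $S_i=T_iV_i^{1/\alpha}$ produces a different index with a different distribution; the paper uses that reparametrization only for the privacy/coupling arguments, not here. Relatedly, your claim that $K$ can be related to ``the number of points whose value exceeds the minimum'' is off: every point other than the winner exceeds the minimum, so that set is infinite. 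The correct identity is $K-1=|\{i:T_i<T_K\}|$, and the paper's proof hinges on showing $\mathbb{E}[K\mid T_K]\le T_K+1$, which requires repairing the Poisson law after conditioning on the winner (delete the winning point and refill the ``impossible region'' $\{t^{\alpha}r^{-\alpha}v<T_K^{\alpha}R_K^{-\alpha}V_K\}$ so the remaining points are again $\mathrm{PP}(e^{-v}P_R(r))$ independent of $(T_K,R_K,V_K)$); you flag this as an anticipated obstacle, but it is precisely where the work lies.

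Second, and more fundamentally, the unconditional Jensen step $\mathbb{E}[\log K]\le\eta^{-1}\log\mathbb{E}[K^{\eta}]$ cannot yield the additive $D(P\Vert Q)$ term. Carrying it through, the likelihood-ratio factor enters as $\eta^{-1}\log\mathbb{E}_{Z\sim P}\big[\big((\mathrm{d}P/\mathrm{d}Q)(Z)\big)^{\eta}\big]$, i.e.\ a R\'enyi divergence of order $1+\eta$, which is in general strictly larger than $D(P\Vert Q)$ and can even be infinite when $D(P\Vert Q)$ is finite; your remark that ``$D(P\Vert Q)$ separates out because $\mathbb{E}[\log R_{\mathrm{opt}}]=D(P\Vert Q)$'' is incompatible with that global moment bound. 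The paper's route conditions on $R_K$ first: it shows $W_K=\min_i W_i$ and $R_K$ are independent, that $R_K$ has the $P$-distribution of $\mathrm{d}P/\mathrm{d}Q$, computes $\mathbb{E}[T_K^{\eta}\mid R_K=r]=c_{\alpha,\eta}r^{\eta}$ with $c_{\alpha,\eta}=\Gamma(1-\frac{\eta+1}{\alpha})\Gamma(\eta+1)(\Gamma(1-\frac{1}{\alpha}))^{-(\eta+1)}$, applies Jensen only conditionally to get $\mathbb{E}[\log(T_K+1)\mid R_K=r]\le\eta^{-1}\log(c_{\alpha,\eta}r^{\eta}+1)$, and only then splits off $\log r$ exactly, bounding the remainder by $\eta^{-1}\log(c_{\alpha,\eta}+1)$ via $\mathbb{E}_{Z\sim P}[((\mathrm{d}P/\mathrm{d}Q)(Z))^{-1}]\le1$, before combining with $\mathbb{E}[\log K]\le\mathbb{E}[\log(T_K+1)]$. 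Without this conditional decomposition (or an equivalent device), your plan proves at best a weaker bound with a R\'enyi term in place of $D(P\Vert Q)$, not the stated theorem.
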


Note that for $\alpha=\infty$, \eqref{eq:logK_bd1} with $\eta=1$
gives $\mathbb{E}[\log K]\le D(P\Vert Q)+\log2$, recovering the bound
in \cite{li2024lossy_arxiv} (which strengthened \cite{li2018strong}).

\begin{proof}
Write $(X_{i})_{i}\sim\mathrm{PP}(\mu)$ if the points $(X_{i})_{i}$
(as a multiset, ignoring the ordering) form a Poisson point process
with intensity measure $\mu$. Similarly, for $f:[0,\infty)^{n}\to[0,\infty)$,
we write $\mathrm{PP}(f)$ for the Poisson point process with intensity
function $f$ (i.e., the intensity measure has a Radon-Nikodym derivative
$f$ against the Lebesgue measure). Let $(T_{i})_{i}\sim\mathrm{PP}(1)$
be a Poisson process with rate $1$, independent of $Z_{1},Z_{2},\ldots\stackrel{iid}{\sim}Q$.
 Let $R_{i}:=(\mathrm{d}P/\mathrm{d}Q)(Z_{i})$, and let its probability
measure be $P_{R}$. We have $\tilde{T}_{i}=T_{i}/R_{i}$. Let $V_{1},V_{2},\ldots\stackrel{iid}{\sim}\mathrm{Exp}(1)$.
By the property of exponential random variables, for any $p_{1},p_{2},\ldots\ge0$
with $\sum_{i}p_{i}<\infty$, we have $\Pr(\mathrm{argmin}_{k}V_{k}/p_{k}=k)=p_{k}/\sum_{i}p_{i}$.
Therefore, PPRF can be equivalently expressed as
\[
K=\underset{k}{\mathrm{argmin}}T_{k}^{\alpha}R_{k}^{-\alpha}V_{k}.
\]
By the marking theorem \cite{last2017lectures}, $(T_{i},R_{i},V_{i})_{i}$
is a Poisson process over $[0,\infty)^{3}$ with intensity measure
\[
(T_{i},R_{i},V_{i})_{i}\sim\mathrm{PP}\left(e^{-v}P_{R}(r)\right).
\]
By the mapping theorem \cite{last2017lectures}, letting $W_{i}:=T_{i}^{\alpha}R_{i}^{-\alpha}V_{i}$,
we have
\begin{equation}
(T_{i},R_{i},W_{i})_{i}\sim\mathrm{PP}\left(r^{\alpha}t^{-\alpha}e^{-wr^{\alpha}t^{-\alpha}}P_{R}(r)\right).\label{eq:three_poisson}
\end{equation}
Again by the mapping theorem,
\begin{align*}
(W_{i})_{i} & \sim\mathrm{PP}\left(\mathbb{E}_{R\sim P_{R}}\left[\int_{0}^{\infty}R^{\alpha}t^{-\alpha}e^{-wR^{\alpha}t^{-\alpha}}\mathrm{d}t\right]\right)\\
 & =\mathrm{PP}\left(\mathbb{E}\left[\alpha^{-1}(wR^{\alpha})^{1/\alpha-1}\Gamma(1-\alpha^{-1})R^{\alpha}\right]\right)\\
 & =\mathrm{PP}\left(\mathbb{E}\left[\alpha^{-1}w^{1/\alpha-1}\Gamma(1-\alpha^{-1})R\right]\right)\\
 & =\mathrm{PP}\left(\alpha^{-1}w^{1/\alpha-1}\Gamma(1-\alpha^{-1})\right)
\end{align*}
since $\mathbb{E}[R]=\int(\mathrm{d}P/\mathrm{d}Q)(z)Q(\mathrm{d}z)=1$.
Recall that $W_{K}=\min_{i}W_{i}$ by the definition of $K$. We have
\begin{align*}
\Pr(W_{K}>w) & =\exp\left(-\int_{0}^{w}\alpha^{-1}v^{1/\alpha-1}\Gamma(1-\alpha^{-1})\mathrm{d}v\right)\\
 & =\exp\left(-w^{1/\alpha}\Gamma(1-\alpha^{-1})\right).
\end{align*}
Hence the probability density function of $W_{K}$ is
\begin{align}
 & -\frac{\mathrm{d}}{\mathrm{d}w}\exp\left(-w^{1/\alpha}\Gamma(1-\alpha^{-1})\right)\nonumber \\
 & =\alpha^{-1}w^{1/\alpha-1}\Gamma(1-\alpha^{-1})\exp\left(-w^{1/\alpha}\Gamma(1-\alpha^{-1})\right).\label{eq:w_pdf}
\end{align}
By \eqref{eq:three_poisson}, the Radon-Nikodym derivative between
the conditional distribution of $R_{K}$ given $W_{K}=w$ and $P_{R}$
is
\begin{align*}
 & \Pr(R_{K}\in[r,r+\mathrm{d}r)\,|\,W_{K}=w)/P_{R}(\mathrm{d}r)\\
 & =\frac{\int_{0}^{\infty}r^{\alpha}t^{-\alpha}e^{-wr^{\alpha}t^{-\alpha}}\mathrm{d}t}{\mathbb{E}_{R\sim P_{R}}\left[\int_{0}^{\infty}R^{\alpha}t^{-\alpha}e^{-wR^{\alpha}t^{-\alpha}}\mathrm{d}t\right]}\\
 & =\frac{\alpha^{-1}w^{1/\alpha-1}\Gamma(1-\alpha^{-1})r}{\alpha^{-1}w^{1/\alpha-1}\Gamma(1-\alpha^{-1})}\\
 & =r
\end{align*}
does not depend on $w$. Hence $R_{K}$ is independent of $W_{K}$.
By \eqref{eq:three_poisson}, for $0\le\eta<\alpha-1$,

\begin{align}
 & \mathbb{E}[T_{K}^{\eta}\,|\,R_{K}=r,\,W_{K}=w]\nonumber \\
 & =\frac{\int_{0}^{\infty}t^{\eta}r^{\alpha}t^{-\alpha}e^{-wr^{\alpha}t^{-\alpha}}\mathrm{d}t}{\int_{0}^{\infty}r^{\alpha}t^{-\alpha}e^{-wr^{\alpha}t^{-\alpha}}\mathrm{d}t}\nonumber \\
 & =\frac{\alpha^{-1}w^{(\eta+1)/\alpha-1}\Gamma(1-(\eta+1)\alpha^{-1})r^{\eta+1}}{\alpha^{-1}w^{1/\alpha-1}\Gamma(1-\alpha^{-1})r}.\label{eq:ET_gRW}
\end{align}
Since $R_{K}$ is independent of $W_{K}$, using \eqref{eq:ET_gRW}
and \eqref{eq:w_pdf}, for $\eta\in(0,1]\cap(0,\alpha-1)$, 
\begin{align}
 & \mathbb{E}[T_{K}^{\eta}\,|\,R_{K}=r]\nonumber \\
 & =\int_{0}^{\infty}\alpha^{-1}w^{(\eta+1)/\alpha-1}\Gamma(1-(\eta+1)\alpha^{-1})r^{\eta}\exp\left(-w^{1/\alpha}\Gamma(1-\alpha^{-1})\right)\mathrm{d}w\nonumber \\
 & =r^{\eta}\Gamma(1-(\eta+1)\alpha^{-1})\int_{0}^{\infty}\alpha^{-1}w^{(\eta+1)/\alpha-1}\exp\left(-w^{1/\alpha}\Gamma(1-\alpha^{-1})\right)\mathrm{d}w\nonumber \\
 & =r^{\eta}\Gamma(1-(\eta+1)\alpha^{-1})(\Gamma(1-\alpha^{-1}))^{-(\eta+1)}\Gamma(\eta+1)\nonumber \\
 & =:c_{\alpha,\eta}r^{\eta},\label{eq:ET_gR}
\end{align}
where $c_{\alpha,\eta}:=\Gamma(1-(\eta+1)\alpha^{-1})(\Gamma(1-\alpha^{-1}))^{-(\eta+1)}\Gamma(\eta+1)$.
Hence,
\begin{align}
 & \mathbb{E}[\log(T_{K}+1)\,|\,R_{K}=r]\nonumber \\
 & \le\mathbb{E}[\log((T_{K}^{\eta}+1)^{1/\eta})\,|\,R_{K}=r]\nonumber \\
 & =\mathbb{E}[\eta^{-1}\log(T_{K}^{\eta}+1)\,|\,R_{K}=r]\nonumber \\
 & \le\eta^{-1}\log(c_{\alpha,\eta}r^{\eta}+1).\label{eq:logT_gR}
\end{align}

Note that
\[
K-1=\left|\left\{ i:\,T_{i}<T_{K}\right\} \right|,
\]
and hence the expecation of $K-1$ given $T_{K}$ should be around
$T_{K}$. This is not exact since conditioning on $T_{K}$ changes
the distribution of the process $(T_{i},R_{i},V_{i})_{i}$. To resolve
this problem, we define a new process $(T'_{i},R'_{i},V'_{i})_{i}$
which includes all points in $(T_{i},R_{i},V_{i})_{i}$ excluding
the point $(T_{K},R_{K},V_{K})$, together with newly generated points
according to 
\[
\mathrm{PP}\left(e^{-v}P_{R}(r)\mathbf{1}\{t^{\alpha}r^{-\alpha}v<T_{K}^{\alpha}R_{K}^{-\alpha}V_{K}\}\right).
\]
Basically, $\{t^{\alpha}r^{-\alpha}v<T_{K}^{\alpha}R_{K}^{-\alpha}V_{K}\}$
is the ``impossible region'' where the points in $(T_{i},R_{i},V_{i})_{i}$
cannot be located in, since $K$ attains the minimum of $T_{K}^{\alpha}R_{K}^{-\alpha}V_{K}$.
The new process $(T'_{i},R'_{i},V'_{i})_{i}$ removes the point $(T_{K},R_{K},V_{K})$,
and then fills in the impossible region. It is straightforward to
check that $(T'_{i},R'_{i},V'_{i})_{i}\sim\mathrm{PP}(e^{-v}P_{R}(r))$,
independent of $(T_{K},R_{K},V_{K})$. We have
\begin{align*}
 & \mathbb{E}[K\,|\,T_{K}]\\
 & =\mathbb{E}\left[\left|\left\{ i:\,T_{i}<T_{K}\right\} \right|\,\Big|\,T_{K}\right]+1\\
 & \le\mathbb{E}\left[\left|\left\{ i:\,T'_{i}<T_{K}\right\} \right|\,\Big|\,T_{K}\right]+1\\
 & =T_{K}+1.
\end{align*}
Therefore, by \eqref{eq:logT_gR} and Jensen's inequality,
\begin{align*}
 & \mathbb{E}[\log K]\\
 & =\mathbb{E}\left[\mathbb{E}[\log K\,|\,T_{K}]\right]\\
 & \le\mathbb{E}\left[\log(T_{K}+1)\right]\\
 & =\mathbb{E}\left[\mathbb{E}\left[\log(T_{K}+1)\,\big|\,R_{K}\right]\right]\\
 & \le\mathbb{E}\left[\eta^{-1}\log(c_{\alpha,\eta}R_{K}^{\eta}+1)\right]\\
 & =\eta^{-1}\mathbb{E}_{Z\sim P}\left[\log\left(c_{\alpha,\eta}\left(\frac{\mathrm{d}P}{\mathrm{d}Q}(Z)\right)^{\eta}+1\right)\right]\\
 & =\eta^{-1}\mathbb{E}\left[\log\left(\left(\frac{\mathrm{d}P}{\mathrm{d}Q}(Z)\right)^{\eta}\right)\right]+\eta^{-1}\mathbb{E}\left[\log\left(c_{\alpha,\eta}+\Big(\frac{\mathrm{d}P}{\mathrm{d}Q}(Z)\Big)^{-\eta}\right)\right]\\
 & \le D(P\Vert Q)+\eta^{-1}\log\left(c_{\alpha,\eta}+\left(\mathbb{E}\left[\Big(\frac{\mathrm{d}P}{\mathrm{d}Q}(Z)\Big)^{-1}\right]\right)^{\eta}\right)\\
 & \le D(P\Vert Q)+\eta^{-1}\log(c_{\alpha,\eta}+1),
\end{align*}
where the last line is due to $\mathbb{E}[((\mathrm{d}P/\mathrm{d}Q)(Z))^{-1}]=\int((\mathrm{d}P/\mathrm{d}Q)(Z))^{-1}P(\mathrm{d}z)\le1$
(this step appeared in \cite{li2024lossy_arxiv}). The bound \eqref{eq:logK_bd1}
follows from minimizing over $\eta\in(0,1]\cap(0,\alpha-1)$.

To show \eqref{eq:logK_bd2}, substituting $\eta=\min\{(\alpha-1)/2,1\}$,
\begin{align*}
c_{\alpha,\eta} & =\frac{\Gamma(1-(\eta+1)\alpha^{-1})\Gamma(\eta+1)}{(\Gamma(1-\alpha^{-1}))^{\eta+1}}\\
 & \stackrel{(a)}{\le}\frac{(1-\alpha^{-1})^{\eta+1}}{0.885^{\eta+1}\cdot(1-(\eta+1)\alpha^{-1})}\\
 & \le\frac{(1-\alpha^{-1})^{\eta+1}}{0.885^{2}\cdot(1-((\alpha-1)/2+1)\alpha^{-1})}\\
 & =\frac{2}{0.885^{2}}(1-\alpha^{-1})^{\eta}\\
 & \le2.56,
\end{align*}
where (a) is because $0.885\le x\Gamma(x)=\Gamma(x+1)\le1$ for $0<x\le1$.
Hence,
\begin{align*}
\mathbb{E}[\log K] & \le D(P\Vert Q)+\eta^{-1}\log(c_{\alpha,\eta}+1),\\
 & \le D(P\Vert Q)+\frac{\log(3.56)}{\min\{(\alpha-1)/2,1\}}.
\end{align*}
\end{proof}
\medskip{}

\section{Distributed Mean Estimation with R\'enyi DP}\label{sec::appendix_mean_est}
In many machine learning applications, privacy budgets are often accounted in the moment space, and one popular moment accountant is the R\'enyi DP accountant. For completeness, we provide a R\'enyi DP version of Corollary~\ref{cor:gaussian_ppr2} in this section. We begin with the following definition of R\'enyi DP:
\begin{definition}[R\'enyi Differential privacy \citep{abadi2016deep, mironov2017renyi}]
Given a mechanism $\mcal{A}$ which induces the conditional distribution $P_{Z|X}$ of $Z = \mcal{A}(X)$, we say that it satisfies $(\gamma, \varepsilon)$- R\'enyi DP if for any neighboring $(x, x') \in \mathcal{N}$ and $\mathcal{S} \subseteq \mathcal{Z}$, it holds that
$$ D_\gamma\lp P_{Z | X=x} \middle\Vert P_{Z | X=x'} \rp \leq  \varepsilon, $$
where 
$$ D_\gamma\lp P \middle \Vert Q \rp \eqDef \frac{1}{\gamma-1}\log\lp \mathbb{E}_{Q}\lb \lp\frac{P}{Q}\rp^\gamma \rb\rp$$
is the R\'enyi divergence between $P$ and $Q$.
If $\mathcal{N} = \mathcal{X}^2$, we say that the mechanism satisfies $(\gamma, \varepsilon)$-local DP.
\end{definition}

The following conversion lemma from \cite{canonne2020discrete} relates R\'enyi DP to $\lp\varepsilon_{\msf{DP}}(\delta), \delta\rp$-DP.
\begin{lemma}\label{lemma:rdp_to_approx_dp}
    If $\mcal{A}$ satisfies $\lp \gamma, \varepsilon \rp$-R\'enyi DP for some $\gamma \geq 1$, then, for any $\delta > 0$, $\mcal{A}$ satisfies $\lp \varepsilon_{\msf{DP}}(\delta), \delta \rp$-DP, where
    \begin{equation}\label{eq:rdp_to_dp}
        \varepsilon_{\msf{DP}}(\delta) = \varepsilon+\frac{\log\lp 1/\gamma\delta \rp}{\gamma-1}+\log(1-1/\gamma).
    \end{equation}
 
\end{lemma}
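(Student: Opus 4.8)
The plan is to establish this conversion by the standard likelihood-ratio thresholding argument, with the threshold chosen optimally so as to produce the stated constant (which is sharper than the elementary bound $\varepsilon_{\mathsf{DP}}(\delta)=\varepsilon+\log(1/\delta)/(\gamma-1)$ one would get by forcing the additive term to equal $\delta$ on the nose). First I would reduce to a single pair of outputs: fix neighboring $(x,x')\in\mathcal{N}$ and a measurable set $\mathcal{S}\subseteq\mathcal{Z}$, and write $P:=P_{Z|X=x}$, $Q:=P_{Z|X=x'}$, and $r:=\mathrm{d}P/\mathrm{d}Q$ (finiteness of $\varepsilon$ with $\gamma>1$ forces $P\ll Q$, else $D_\gamma(P\Vert Q)=\infty$). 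By the formula for $D_\gamma$ in the statement, the hypothesis $D_\gamma(P\Vert Q)\le\varepsilon$ is exactly the moment bound $\mathbb{E}_{Q}[r^{\gamma}]\le e^{(\gamma-1)\varepsilon}$, where $\log$ and $e^{(\cdot)}$ are taken consistently with the multiplicative factor $e^{\varepsilon}$ in the definition of $(\varepsilon,\delta)$-DP.

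Next I would split the privacy loss. For any $a>0$,
\[
P(\mathcal{S})=\mathbb{E}_{Q}\!\left[r\,\mathbf{1}_{\mathcal{S}}\right]=\mathbb{E}_{Q}\!\left[\min(r,a)\,\mathbf{1}_{\mathcal{S}}\right]+\mathbb{E}_{Q}\!\left[(r-a)_{+}\,\mathbf{1}_{\mathcal{S}}\right]\le a\,Q(\mathcal{S})+\mathbb{E}_{Q}\!\left[(r-a)_{+}\right],
\]
so it remains to choose $a$ with $\mathbb{E}_{Q}[(r-a)_{+}]\le\delta$ and $\log a\le\varepsilon_{\mathsf{DP}}(\delta)$. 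The one computational ingredient is the elementary inequality: for $\gamma>1$ and any $b>0$, the choice $a=(1-1/\gamma)(b/\gamma)^{1/(\gamma-1)}$ satisfies $(t-a)_{+}\le t^{\gamma}/b$ for all $t\ge0$. I would verify this by noting it is trivial on $[0,a]$ and, on $(a,\infty)$, by minimizing the convex function $h(t):=t^{\gamma}/b-t+a$; its minimizer is $t^{\star}=(b/\gamma)^{1/(\gamma-1)}$ with value $h(t^{\star})=a-(1-1/\gamma)t^{\star}=0$. Taking $b=e^{(\gamma-1)\varepsilon}/\delta$ then yields $\mathbb{E}_{Q}[(r-a)_{+}]\le b^{-1}\mathbb{E}_{Q}[r^{\gamma}]\le\delta$, and the corresponding $a$ obeys
\[
\log a=\log(1-1/\gamma)+\tfrac{1}{\gamma-1}\log(b/\gamma)=\varepsilon+\tfrac{\log(1/(\gamma\delta))}{\gamma-1}+\log(1-1/\gamma)=\varepsilon_{\mathsf{DP}}(\delta).
\]
Substituting back gives $P(\mathcal{S})\le e^{\varepsilon_{\mathsf{DP}}(\delta)}Q(\mathcal{S})+\delta$; since $(x,x')$ and $\mathcal{S}$ were arbitrary, $\mathcal{A}$ is $(\varepsilon_{\mathsf{DP}}(\delta),\delta)$-DP. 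The boundary value $\gamma=1$ is excluded by the $1/(\gamma-1)$ term (or recovered as a limit).

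I do not expect a real obstacle here: the argument is a routine two-term decomposition of the loss variable, and the only place requiring care is the elementary bound $(t-a)_{+}\le t^{\gamma}/b$ with the ensuing algebra that pins $\varepsilon_{\mathsf{DP}}(\delta)$ to its closed form — that is the step I would write out in full, the rest being bookkeeping. The same likelihood-ratio thresholding already underlies the $(\varepsilon,\delta)$-DP proofs in Appendices~\ref{sec:eps_delta_dp_2} and~\ref{sec:pf_eps_delta_dp}, so nothing new is needed.
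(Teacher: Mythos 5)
Your proposal is correct. Note that the paper does not prove this lemma at all: it is imported verbatim from \cite{canonne2020discrete}, so there is no in-paper argument to compare against. Your likelihood-ratio thresholding derivation is exactly the standard proof behind that cited conversion: the decomposition $P(\mathcal{S})\le a\,Q(\mathcal{S})+\mathbf{E}_Q[(r-a)_+]$, the optimized elementary inequality $(t-a)_+\le t^{\gamma}/b$ with $a=(1-1/\gamma)(b/\gamma)^{1/(\gamma-1)}$ (your convexity check, with $h(t^\star)=0$, is right), and the choice $b=e^{(\gamma-1)\varepsilon}/\delta$ do give $\mathbf{E}_Q[(r-a)_+]\le\delta$ and $\log a=\varepsilon+\frac{\log(1/(\gamma\delta))}{\gamma-1}+\log(1-1/\gamma)$, which is precisely $\varepsilon_{\mathsf{DP}}(\delta)$. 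Your side remarks are also handled appropriately: finiteness of $D_\gamma$ with $\gamma>1$ forces $P\ll Q$, the logarithms must be read in the natural base to match the $e^{\varepsilon}$ in the DP definition (the paper's base-2 default is an inconsistency of the source, not of your argument), and $\gamma=1$ is vacuous since the bound degenerates there.
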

The following theorem states that, when simulating the Gaussian mechanism, PPR satisfies the following both central and local DP guarantee:

\begin{corollary}[PPR-compressed Gaussian mechanism]\label{cor:gaussian_ppr_renyi} 
Let $\varepsilon \geq 0$ and $\gamma \geq 1$. Consider the Gaussian mechanism $P_{Z|X}(\cdot | x) = \mcal{N}( x, \frac{\sigma^2}{n}\mbb{I}_d)$, and the proposal distribution $Q=\mathcal{N}(0,(\frac{C^{2}}{d}+\frac{\sigma^{2}}{n})\mathbb{I}_{d})$, where $\sigma \geq \sqrt{\frac{C\gamma}{2\varepsilon}}$.
For each client $i$, let 
$Z_i$ be the output of PPR applied on $P_{Z|X}(\cdot | X_i)$. We have:
\begin{itemize}
    \item $\hat{\mu}(Z^n) := \frac{1}{n}\sum_i Z_i$ yields an unbiased estimator of $\mu( X^n) = \frac{1}{n} \sum_{i=1}^n X_i$ satisfying $(\gamma, \varepsilon)$-(central) R\'enyi DP and $(\varepsilon_{\msf{DP}}(\delta), \delta)$-DP, where $\varepsilon_{\msf{DP}}(\delta)$ is defined in \eqref{eq:rdp_to_dp}.
    
    \item $P_{Z|X_i}$ satisfies $\lp 2\alpha\tilde{\varepsilon}_{\msf{DP}}(\delta), 2\delta\rp$-local DP, where
    $$ \tilde{\varepsilon}_{\msf{DP}}(\delta) \eqDef  \sqrt{n}\varepsilon+\frac{\log\lp 1/\gamma\delta \rp}{\gamma-1}+\log(1-1/\gamma).$$
    
    \item $\hat{\mu}(Z^n)$ has MSE $\mathbb{E}[\lV\mu - \hat{\mu}\rV_2^2] = \sigma^2 d/n^2$.
    
    \item The average per-client communication cost is at most $\ell + \log_2(\ell + 1) + 2$ bits where 
\begin{align*}
\ell & :=\frac{d}{2}\log_{2}\Big(\frac{C^{2}n}{d\sigma^{2}}+1\Big)+\eta_{\alpha} \; \le\;\frac{d}{2}\log_{2}\Big(\frac{n\varepsilon^{2}}{2d\ln(1.25/\delta)}+1\Big)+\eta_{\alpha},
\end{align*}

where $\eta_{\alpha}:=(\log_{2}(3.56))/\min\{(\alpha-1)/2,\,1\}$.
\end{itemize}
\end{corollary}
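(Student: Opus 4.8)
The statement is an assembly of results already in place: exactness of PPR (Proposition~\ref{prop:ppr_output_dist}), the compression-size bound (Theorem~\ref{thm:logk_bound_simple}), the $(\varepsilon,\delta)$-DP guarantee of PPR (Theorem~\ref{thm:eps_delta_dp_2}), the classical R\'enyi DP analysis of the Gaussian mechanism, and the R\'enyi-to-approximate-DP conversion (Lemma~\ref{lemma:rdp_to_approx_dp}). First, by Proposition~\ref{prop:ppr_output_dist} each client output satisfies $Z_i\mid X_i\sim\mathcal{N}(X_i,\tfrac{\sigma^2}{n}\mathbb{I}_d)$ exactly, and since the clients run PPR with independent randomness, the $Z_i$ are conditionally independent given $X^n$. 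Hence $Z_i-X_i$ are i.i.d.\ $\mathcal{N}(0,\tfrac{\sigma^2}{n}\mathbb{I}_d)$ given $X^n$, so $\hat\mu(Z^n)-\mu(X^n)\sim\mathcal{N}(0,\tfrac{\sigma^2}{n^2}\mathbb{I}_d)$; this gives $\mathbb{E}[\hat\mu\mid X^n]=\mu(X^n)$ and $\mathbb{E}\lVert\hat\mu-\mu\rVert_2^2=d\sigma^2/n^2$ immediately, and shows that the randomized map $X^n\mapsto\hat\mu$ is \emph{precisely} the Gaussian mechanism applied to the empirical mean with per-coordinate noise variance $\sigma^2/n^2$, with no distortion introduced by the compression.

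For the central guarantee, on $\mathcal{B}_d(C)$ the empirical-mean functional has $\ell_2$-sensitivity at most $C/n$ for the neighbouring relation used in Corollary~\ref{cor:gaussian_ppr2}, so the Gaussian mechanism identified above satisfies $(\gamma,\tfrac{\gamma C^2}{2\sigma^2})$-R\'enyi DP, which the hypothesis on $\sigma$ forces to be at most $(\gamma,\varepsilon)$. Feeding this into Lemma~\ref{lemma:rdp_to_approx_dp} yields the stated $(\varepsilon_{\mathsf{DP}}(\delta),\delta)$-DP via the formula~\eqref{eq:rdp_to_dp}. No multiplicative PPR penalty arises on the central side: because $\hat\mu$ has exactly the Gaussian-mechanism law and the analyst observes only $\hat\mu$ — not the index $K_i$ or the shared randomness $(Z_j^{(i)})_j$ — the central privacy of $X^n\mapsto\hat\mu$ coincides with that of the uncompressed Gaussian mechanism.

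For the local guarantee, the relevant object is what the server receives from one client, namely the pair $((Z_j^{(i)})_j,K_i)$, so Theorem~\ref{thm:eps_delta_dp_2} must be invoked. The single-client mechanism $x\mapsto\mathcal{N}(x,\tfrac{\sigma^2}{n}\mathbb{I}_d)$ has $\ell_2$-sensitivity at most $C$ on $\mathcal{B}_d(C)$ and per-coordinate noise variance $\sigma^2/n$; comparing squared sensitivity-to-noise ratios with the central computation, its R\'enyi parameter is a fixed multiple of $\varepsilon$ (larger than the central one, reflecting the heavier local view), and Lemma~\ref{lemma:rdp_to_approx_dp} converts the single-client guarantee into $(\tilde\varepsilon_{\mathsf{DP}}(\delta),\delta)$-DP. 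Applying Theorem~\ref{thm:eps_delta_dp_2}, which inflates the $\varepsilon$-parameter by the factor $2\alpha$ and doubles $\delta$, then yields the claimed $(2\alpha\tilde\varepsilon_{\mathsf{DP}}(\delta),2\delta)$-local DP for $P_{Z|X_i}$ (understood as the map to $((Z_j^{(i)})_j,K_i)$).

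For the per-client communication cost, apply Theorem~\ref{thm:logk_bound_simple} with $P=P_{Z|X}(\cdot\mid X_i)=\mathcal{N}(X_i,\tfrac{\sigma^2}{n}\mathbb{I}_d)$ and proposal $Q=\mathcal{N}(0,(\tfrac{C^2}{d}+\tfrac{\sigma^2}{n})\mathbb{I}_d)$, which needs the closed form of $D(P\Vert Q)$ for isotropic Gaussians. Writing $t:=C^2n/(d\sigma^2)$, the variance-mismatch part equals $\tfrac{d}{2}\bigl(\ln(1+t)-\tfrac{t}{1+t}\bigr)$ while the mean-mismatch part is at most $\tfrac{d}{2}\cdot\tfrac{t}{1+t}$ (using $\lVert X_i\rVert_2\le C$), and these add to $\tfrac{d}{2}\ln(1+t)$; in bits, $D(P\Vert Q)\le\tfrac{d}{2}\log_2(\tfrac{C^2n}{d\sigma^2}+1)$, so Theorem~\ref{thm:logk_bound_simple} bounds $\mathbb{E}[\log_2 K]$ by $\ell=\tfrac{d}{2}\log_2(\tfrac{C^2n}{d\sigma^2}+1)+\eta_\alpha$. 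The Shannon/Zipf prefix-free bound quoted after Theorem~\ref{thm:logk_bound_simple}, together with monotonicity of $x\mapsto x+\log_2(x+1)+2$, then gives the per-client cost $\ell+\log_2(\ell+1)+2$, and the further inequality follows from $\sigma^2\ge 2C^2\ln(1.25/\delta)/\varepsilon^2$ exactly as in Corollary~\ref{cor:gaussian_ppr2}. The routine pieces are the Gaussian-KL computation and the mechanical chaining of the cited lemmas; the step requiring genuine care is the bookkeeping that separates the two privacy views — recognising that the central guarantee is penalty-free while the local one must pass through Theorem~\ref{thm:eps_delta_dp_2} (hence the $2\alpha$ factor) — and pinning down the per-client R\'enyi parameter, and thus $\tilde\varepsilon_{\mathsf{DP}}(\delta)$, in terms of the central one.
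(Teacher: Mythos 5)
Your overall route is the same as the paper's (which simply assembles Proposition~\ref{prop:ppr_output_dist} for exactness, the R\'enyi analysis of the Gaussian mechanism plus Lemma~\ref{lemma:rdp_to_approx_dp} for the central guarantee, a PPR privacy theorem for the local guarantee, and the KL computation of Corollary~\ref{cor:gaussian_ppr2} for the communication bound), and your MSE, central-DP, and compression-size arguments are fine; your use of Theorem~\ref{thm:eps_delta_dp_2} is in fact the one whose $(2\alpha\varepsilon,2\delta)$ form matches the stated local guarantee.

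The genuine gap is in the local-DP step, which is exactly the step you flag as needing care but then do not carry out. You compute the single-client mechanism's R\'enyi parameter by ``comparing squared sensitivity-to-noise ratios with the central computation,'' and this comparison gives a factor of $n$: the central mechanism has squared ratio $(C/n)^2/(\sigma^2/n^2)=C^2/\sigma^2$, while the single client has $C^2/(\sigma^2/n)=nC^2/\sigma^2$, so by the standard Gaussian-RDP formula the per-client mechanism is $(\gamma, n\varepsilon)$-R\'enyi DP, not $(\gamma,\sqrt{n}\,\varepsilon)$. Feeding $n\varepsilon$ into Lemma~\ref{lemma:rdp_to_approx_dp} yields $n\varepsilon+\frac{\log(1/\gamma\delta)}{\gamma-1}+\log(1-1/\gamma)$, which is not the $\tilde{\varepsilon}_{\mathsf{DP}}(\delta)$ with the $\sqrt{n}\varepsilon$ term that you (and the statement) assert; the $\sqrt{n}$ scaling is a feature of the $(\varepsilon,\delta)$-level Gaussian bound used in Corollary~\ref{cor:gaussian_ppr2}, not of the R\'enyi parameter. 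Your phrase ``its R\'enyi parameter is a fixed multiple of $\varepsilon$'' papers over precisely this: the multiple is $n$ and depends on $n$, so your argument as written either proves a weaker local guarantee (with $n\varepsilon$ in place of $\sqrt{n}\varepsilon$) or needs an additional, currently missing, justification of where $\sqrt{n}$ comes from. You should either derive the claimed factor explicitly or state the local bound you can actually obtain from the squared-ratio comparison. A smaller point of the same kind: the hypothesis as written, $\sigma\ge\sqrt{C\gamma/(2\varepsilon)}$, only gives $\gamma C^2/(2\sigma^2)\le C\varepsilon$; your central step silently uses the reading $\sigma\ge C\sqrt{\gamma/(2\varepsilon)}$ (i.e.\ $\sigma^2\ge\gamma C^2/(2\varepsilon)$), and likewise the second inequality in the bound on $\ell$ uses the $(\varepsilon,\delta)$-calibration $\sigma\ge C\sqrt{2\ln(1.25/\delta)}/\varepsilon$ from Corollary~\ref{cor:gaussian_ppr2} rather than the R\'enyi hypothesis of this corollary; these mismatches should at least be noted rather than absorbed into ``the hypothesis forces it.''
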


\textbf{Proof.}
    The central DP guarantee follows from \citep{mironov2017renyi} and Lemma~\ref{lemma:rdp_to_approx_dp}. The local DP guarantee follows from Lemma~\ref{lemma:rdp_to_approx_dp} and Theorem~\ref{thm:eps_delta_dp}. Finally, the communication bound can be obtained from the same analysis as in Corollary~\ref{cor:gaussian_ppr2}.
\qed

\section{Proof of Corollary~\ref{cor:gaussian_ppr2}\label{sec:pf_gaussian_ppr2}}

Consider the PPR applied on the Gaussian mechanism $P_{Z|X}(\cdot|x)=\mathcal{N}(x,\frac{\sigma^{2}}{n}\mathbb{I}_{d})$,
with the proposal distribution $Q=\mathcal{N}(0,(\frac{C^{2}}{d}+\frac{\sigma^{2}}{n})\mathbb{I}_{d})$.
PPR ensures that $Z_{i}$ follows the distribution $\mathcal{N}(X_{i},\frac{\sigma^{2}}{n}\mathbb{I}_{d})$.
Therefore the MSE is
\begin{align*}
\mathbb{E}\left[\Vert\mu-\hat{\mu}\Vert_{2}^{2}\right] & =\mathbb{E}\left[\left\Vert \frac{1}{n}\sum_{i=1}^{n}(X_{i}-Z_{i})\right\Vert _{2}^{2}\right]\\
 & =\frac{1}{n}\cdot d\cdot\frac{\sigma^{2}}{n}\\
 & =\frac{\sigma^{2}d}{n^{2}}.
\end{align*}
For the compression size, for $x\in\mathbb{R}^{d}$ with $\Vert x\Vert_{2}\le C$,
we have
\begin{align*}
 & D(P_{Z|X}(\cdot|x)\Vert Q)\\
 & =\mathbb{E}_{Z\sim P_{Z|X}(\cdot|x)}\left[\log\frac{\mathrm{d}P_{Z|X}(\cdot|x)}{\mathrm{d}Q}(Z)\right]\\
 & =\mathbb{E}_{Z\sim P_{Z|X}(\cdot|x)}\left[\log\frac{(2\pi\sigma^{2}/n)^{-d/2}\exp(-\frac{1}{2}\Vert Z-x\Vert_{2}^{2}/(\sigma^{2}/n))}{(2\pi(\frac{C^{2}}{d}+\frac{\sigma^{2}}{n}))^{-d/2}\exp(-\frac{1}{2}\Vert Z\Vert_{2}^{2}/(\frac{C^{2}}{d}+\frac{\sigma^{2}}{n}))}\right]\\
 & =\mathbb{E}_{Z\sim P_{Z|X}(\cdot|x)}\left[\frac{d}{2}\log\frac{\frac{C^{2}}{d}+\frac{\sigma^{2}}{n}}{\sigma^{2}/n}+\frac{1}{2}\left(\frac{\Vert Z\Vert_{2}^{2}}{\frac{C^{2}}{d}+\frac{\sigma^{2}}{n}}-\frac{\Vert Z-x\Vert_{2}^{2}}{\sigma^{2}/n}\right)\right]\\
 & \le\frac{d}{2}\log\frac{\frac{C^{2}}{d}+\frac{\sigma^{2}}{n}}{\sigma^{2}/n}+\frac{1}{2}\left(\frac{C^{2}+\sigma^{2}d/n}{\frac{C^{2}}{d}+\frac{\sigma^{2}}{n}}-\frac{\sigma^{2}d/n}{\sigma^{2}/n}\right)\\
 & =\frac{d}{2}\log\left(\frac{C^{2}n}{d\sigma^{2}}+1\right).
\end{align*}
Hence, by Theorem \ref{thm:logk_bound_simple}, the compression size is at most $\ell+\log_{2}(\ell+1)+2$
bits, where
\begin{align*}
\ell & :=\frac{d}{2}\log_{2}\Big(\frac{C^{2}n}{d\sigma^{2}}+1\Big)+\eta_{\alpha}\\
 & \le\frac{d}{2}\log_{2}\Big(\frac{n\epsilon^{2}}{2d\ln(1.25/\delta)}+1\Big)+\eta_{\alpha}\\
 & \le\frac{n\epsilon^{2}\log_{2}(e)}{4\ln(1.25/\delta)}+\eta_{\alpha},
\end{align*}
where $\eta_{\alpha}:=(\log_{2}(3.56))/\min\{(\alpha-1)/2,\,1\}$.

The central-DP guarantee follows from $(\varepsilon,\delta)$-DP of Gaussian mechanism~\cite[Appendix A]{dwork2014algorithmic} since the output distribution of PPR is exactly the same as the Gaussian mechanism, whereas the local-DP guarantee follows from Theorem~\ref{thm:eps_delta_dp_2} and \cite[Appendix A]{dwork2014algorithmic}.

\section{Proof of Corollary~\ref{cor:laplace_ppr}\label{sec:pf_laplace_ppr}}

Let $\Vert X-Z\Vert_{2}=RS$ where $R\in[0,\infty)$ is the magnitude of $X-Z$, and $\Vert S\Vert_{2}=1$.
As shown in \cite{fernandes2019generalised}, $R$ follows the Gamma distribution with shape
$d$ and scale $1/\varepsilon$. Hence the MSE is

\begin{align*}
\mathbb{E}\left[\Vert X-Z\Vert_{2}^{2}\right]=\mathbb{E}\left[R^{2}\right] & =\left(\frac{d}{\varepsilon}\right)^{2}+\frac{d}{\varepsilon^{2}}=\frac{d(d+1)}{\varepsilon^{2}}.
\end{align*}
The conditional differential entropy (in nats) of $Z$ given $X$
is
\begin{align*}
h(Z|X) & =h(R)+h(S|R)\\
 & =d+\ln\Gamma(d)-(d-1)\psi(d)-\ln\varepsilon+\mathbb{E}\left[\ln(nR^{d-1}\mathrm{Vol}(\mathcal{B}_{d}(1)))\right]\\
 & =d+\ln\Gamma(d)-(d-1)\psi(d)-\ln\varepsilon+\ln d+\ln(\mathrm{Vol}(\mathcal{B}_{d}(1)))+(d-1)\mathbb{E}\left[\ln R\right]\\
 & =d+\ln\Gamma(d)-(d-1)\psi(d)-\ln\varepsilon+\ln d+\frac{d}{2}\ln\pi-\ln\Gamma\left(\frac{d}{2}+1\right) \\
 &\qquad -(d-1)\ln\epsilon+(d-1)\psi(d)\\
 & =d\ln\frac{e\sqrt{\pi}}{\varepsilon}+\ln\frac{d\Gamma(d)}{\Gamma(\frac{d}{2}+1)},
\end{align*}
where $\psi$ is the digamma function.
Therefore, the KL divergence between $P_{Z|X}(\cdot|x)$ and $Q$
(in nats) is
\begin{align*}
 & D(P_{Z|X}(\cdot|x)\Vert Q)\\
 & =-\mathbb{E}_{Z\sim P_{Z|X}(\cdot|x)}\left[\ln\left(\left(2\pi\left(\frac{C^{2}}{d}+\frac{d+1}{\varepsilon^{2}}\right)\right)^{-d/2}\exp\left(-\frac{\Vert Z\Vert_{2}^{2}}{2(\frac{C^{2}}{d}+\frac{d+1}{\varepsilon^{2}})}\right)\right)\right]-h(Z|X)\\
 & =\frac{d}{2}\ln\left(2\pi\left(\frac{C^{2}}{d}+\frac{d+1}{\varepsilon^{2}}\right)\right)+\frac{\mathbb{E}_{Z\sim P_{Z|X}(\cdot|x)}\left[\Vert Z\Vert_{2}^{2}\right]}{2(\frac{C^{2}}{d}+\frac{d+1}{\varepsilon^{2}})}-d\ln\frac{e\sqrt{\pi}}{\varepsilon}-\ln\frac{d\Gamma(d)}{\Gamma(\frac{d}{2}+1)}\\
 & \le\frac{d}{2}\ln\left(2\pi\left(\frac{C^{2}}{d}+\frac{d+1}{\varepsilon^{2}}\right)\right)+\frac{C^{2}+\frac{d(d+1)}{\varepsilon^{2}}}{2(\frac{C^{2}}{d}+\frac{d+1}{\varepsilon^{2}})}-d\ln\frac{e\sqrt{\pi}}{\varepsilon}-\ln\frac{d\Gamma(d)}{\Gamma(\frac{d}{2}+1)}\\
 & =\frac{d}{2}\ln\left(\frac{2}{e}\left(\frac{C^{2}\varepsilon^{2}}{d}+d+1\right)\right)-\ln\frac{\Gamma(d+1)}{\Gamma(\frac{d}{2}+1)}.
\end{align*}
Hence, by Theorem \ref{thm:logk_bound_simple}, the compression size is at most $\ell+\log_{2}(\ell+1)+2$
bits.
The metric privacy guarantee follows from Theorem~\ref{thm:metric_privacy}.

\section{MSE against Compression Size\label{sec::mse_size}}

We plot the MSE against the compression size (ranging from $25$ to $1000$ bits) for $\epsilon\in\{0.25, 0.5, 1.0, 2.0 \}$ in Figure~\ref{fig::mse_comp_size} as follows. 

\begin{figure}[H]
    \centering
    \includegraphics[scale = 0.45]{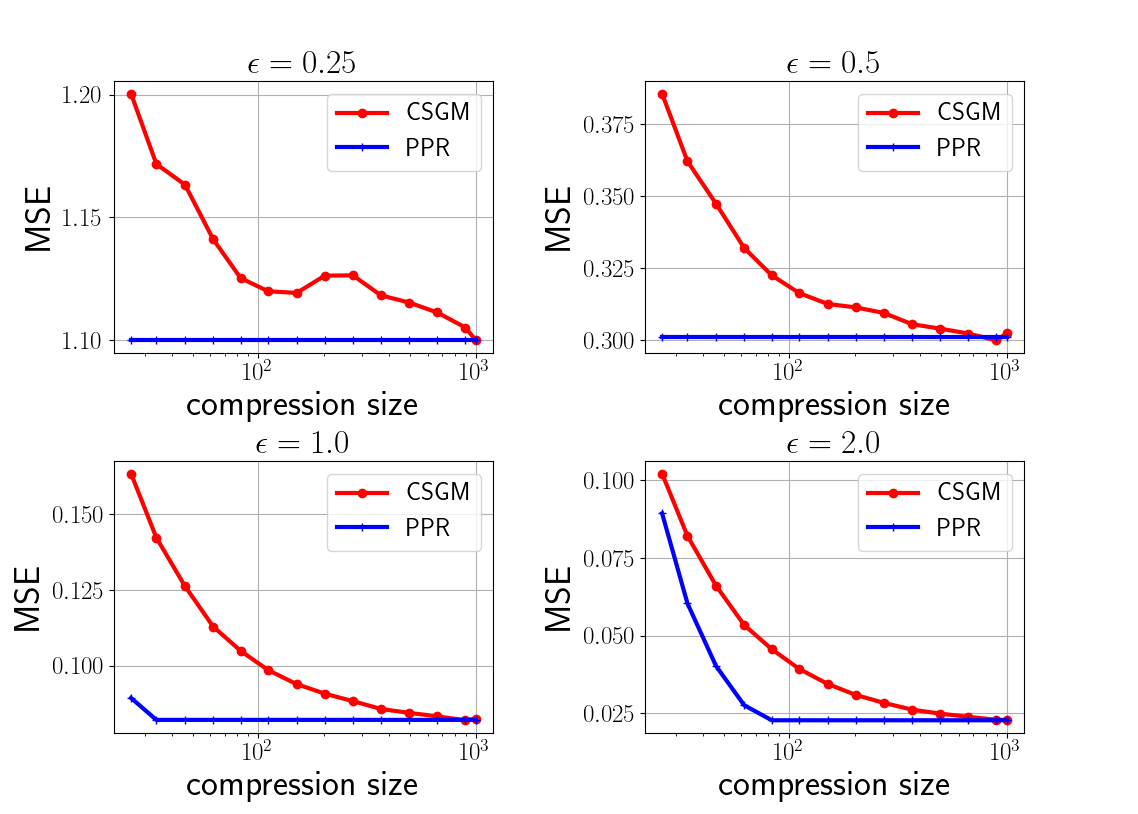}
    \caption{The MSE of PPR and CSGM against the compression size in bits, where $\varepsilon$ is chosen from $\{0.25, 0.5, 1.0, 2.0\}$ and compression sizes vary from $25$ to $1000$ bits. Note that parts of the curves for PPR are flat, because a lower compression size is already sufficient for PPR to exactly simulate the best Gaussian mechanism for that value of $\varepsilon$, so a higher compression size than necessary will not affect the result.}
    \label{fig::mse_comp_size}
\end{figure}

\bibliographystyle{plain}
\addbibtotoc
\bibliography{reference/reference.bib}

\begin{thebibliography}{100}

\bibitem{abadi2016deep}
Martin Abadi, Andy Chu, Ian Goodfellow, H~Brendan McMahan, Ilya Mironov, Kunal Talwar, and Li~Zhang.
\newblock Deep learning with differential privacy.
\newblock In {\em Proceedings of the 2016 ACM SIGSAC conference on computer and communications security}, pages 308--318, 2016.

\bibitem{acharya2019communication}
Jayadev Acharya and Ziteng Sun.
\newblock Communication complexity in locally private distribution estimation and heavy hitters.
\newblock In {\em International Conference on Machine Learning}, pages 51--60. PMLR, 2019.

\bibitem{agustsson2020universally}
Eirikur Agustsson and Lucas Theis.
\newblock Universally quantized neural compression.
\newblock {\em Advances in neural information processing systems}, 33:12367--12376, 2020.

\bibitem{ahlswede1971multi}
Rudolf Ahlswede.
\newblock Multi-way communication channels.
\newblock In {\em 2nd Int. Symp. Inform. Theory, Tsahkadsor, Armenian SSR}, pages 23--52, 1971.

\bibitem{ahlswede1974capacity}
Rudolf Ahlswede.
\newblock The capacity region of a channel with two senders and two receivers.
\newblock {\em The annals of probability}, 2(5):805--814, 1974.

\bibitem{andres2013geo}
Miguel~E Andr{\'e}s, Nicol{\'a}s~E Bordenabe, Konstantinos Chatzikokolakis, and Catuscia Palamidessi.
\newblock Geo-indistinguishability: Differential privacy for location-based systems.
\newblock In {\em Proceedings of the 2013 ACM SIGSAC conference on Computer \& communications security}, pages 901--914, 2013.

\bibitem{arikan2009channel}
Erdal Arikan.
\newblock Channel polarization: A method for constructing capacity-achieving codes for symmetric binary-input memoryless channels.
\newblock {\em IEEE Transactions on information Theory}, 55(7):3051--3073, 2009.

\bibitem{asi2022optimal}
Hilal Asi, Vitaly Feldman, and Kunal Talwar.
\newblock Optimal algorithms for mean estimation under local differential privacy.
\newblock In {\em International Conference on Machine Learning}, pages 1046--1056. PMLR, 2022.

\bibitem{barron2003duality}
Richard~J Barron, Brian Chen, and Gregory~W Wornell.
\newblock The duality between information embedding and source coding with side information and some applications.
\newblock {\em IEEE Transactions on Information Theory}, 49(5):1159--1180, 2003.

\bibitem{bassily2017practical}
Raef Bassily, Kobbi Nissim, Uri Stemmer, and Abhradeep Guha~Thakurta.
\newblock Practical locally private heavy hitters.
\newblock {\em Advances in Neural Information Processing Systems}, 30, 2017.

\bibitem{bassily2015local}
Raef Bassily and Adam Smith.
\newblock Local, private, efficient protocols for succinct histograms.
\newblock In {\em Proceedings of the forty-seventh annual ACM symposium on Theory of computing}, pages 127--135, 2015.

\bibitem{bennet2014reverse}
Charles~H Bennett, Igor Devetak, Aram~W Harrow, Peter~W Shor, and Andreas Winter.
\newblock The quantum reverse shannon theorem and resource tradeoffs for simulating quantum channels.
\newblock {\em IEEE Transactions on Information Theory}, 60(5):2926--2959, 2014.

\bibitem{bennett2002entanglement}
Charles~H Bennett, Peter~W Shor, John Smolin, and Ashish~V Thapliyal.
\newblock Entanglement-assisted capacity of a quantum channel and the reverse {S}hannon theorem.
\newblock {\em IEEE Trans. Inf. Theory}, 48(10):2637--2655, 2002.

\bibitem{berger1978multiterminal}
Toby Berger.
\newblock Multiterminal source coding.
\newblock In G.~Longo, editor, {\em The Information Theory Approach to Communications}, pages 171--231. Springer-Verlag, New York, 1978.

\bibitem{berry1941accuracy}
Andrew~C Berry.
\newblock The accuracy of the {G}aussian approximation to the sum of independent variates.
\newblock {\em Transactions of the {A}merican {M}athematical {S}ociety}, 49(1):122--136, 1941.

\bibitem{bhowmick2018protection}
Abhishek Bhowmick, John Duchi, Julien Freudiger, Gaurav Kapoor, and Ryan Rogers.
\newblock Protection against reconstruction and its applications in private federated learning.
\newblock {\em arXiv preprint arXiv:1812.00984}, 2018.

\bibitem{bjelakovic2013secrecy}
Igor Bjelakovi{\'c}, Holger Boche, and Jochen Sommerfeld.
\newblock Secrecy results for compound wiretap channels.
\newblock {\em Problems of Information Transmission}, 49(1):73--98, 2013.

\bibitem{blackwell1959capacity}
David Blackwell, Leo Breiman, AJ~Thomasian, et~al.
\newblock The capacity of a class of channels.
\newblock {\em The Annals of Mathematical Statistics}, 30(4):1229--1241, 1959.

\bibitem{boche2015continuity}
Holger Boche, Rafael~F Schaefer, and H~Vincent Poor.
\newblock On the continuity of the secrecy capacity of compound and arbitrarily varying wiretap channels.
\newblock {\em IEEE Transactions on Information Forensics and Security}, 10(12):2531--2546, 2015.

\bibitem{bommasani2021opportunities}
Rishi Bommasani, Drew~A Hudson, Ehsan Adeli, Russ Altman, Simran Arora, Sydney von Arx, Michael~S Bernstein, Jeannette Bohg, Antoine Bosselut, Emma Brunskill, et~al.
\newblock On the opportunities and risks of foundation models.
\newblock {\em arXiv preprint arXiv:2108.07258}, 2021.

\bibitem{boneh1998collusion}
Dan Boneh and James Shaw.
\newblock Collusion-secure fingerprinting for digital data.
\newblock {\em IEEE Transactions on Information Theory}, 44(5):1897--1905, 1998.

\bibitem{braverman2014public}
Mark Braverman and Ankit Garg.
\newblock Public vs private coin in bounded-round information.
\newblock In {\em International Colloquium on Automata, Languages, and Programming}, pages 502--513. Springer, 2014.

\bibitem{braverman2016communication}
Mark Braverman, Ankit Garg, Tengyu Ma, Huy~L Nguyen, and David~P Woodruff.
\newblock Communication lower bounds for statistical estimation problems via a distributed data processing inequality.
\newblock In {\em Proceedings of the forty-eighth annual ACM symposium on Theory of Computing}, pages 1011--1020, 2016.

\bibitem{bun18hh}
Mark Bun, Jelani Nelson, and Uri Stemmer.
\newblock Heavy hitters and the structure of local privacy.
\newblock In {\em Proceedings of the 37th ACM SIGMOD-SIGACT-SIGAI Symposium on Principles of Database Systems}, SIGMOD/PODS ’18, page 435–447, New York, NY, USA, 2018. Association for Computing Machinery.

\bibitem{bun2019heavy}
Mark Bun, Jelani Nelson, and Uri Stemmer.
\newblock Heavy hitters and the structure of local privacy.
\newblock {\em ACM Transactions on Algorithms (TALG)}, 15(4):1--40, 2019.

\bibitem{canonne2020discrete}
Cl{\'e}ment~L Canonne, Gautam Kamath, and Thomas Steinke.
\newblock The discrete {G}aussian for differential privacy.
\newblock {\em Advances in Neural Information Processing Systems}, 33:15676--15688, 2020.

\bibitem{carlini2021extracting}
Nicholas Carlini, Florian Tramer, Eric Wallace, Matthew Jagielski, Ariel Herbert-Voss, Katherine Lee, Adam Roberts, Tom Brown, Dawn Song, Ulfar Erlingsson, et~al.
\newblock Extracting training data from large language models.
\newblock In {\em 30th USENIX Security Symposium (USENIX Security 21)}, pages 2633--2650, 2021.

\bibitem{carlini2023extracting}
Nicolas Carlini, Jamie Hayes, Milad Nasr, Matthew Jagielski, Vikash Sehwag, Florian Tramer, Borja Balle, Daphne Ippolito, and Eric Wallace.
\newblock Extracting training data from diffusion models.
\newblock In {\em 32nd USENIX Security Symposium (USENIX Security 23)}, pages 5253--5270, 2023.

\bibitem{chatzikokolakis2013broadening}
Konstantinos Chatzikokolakis, Miguel~E Andr{\'e}s, Nicol{\'a}s~Emilio Bordenabe, and Catuscia Palamidessi.
\newblock Broadening the scope of differential privacy using metrics.
\newblock In {\em Privacy Enhancing Technologies: 13th International Symposium, PETS 2013, Bloomington, IN, USA, July 10-12, 2013. Proceedings 13}, pages 82--102. Springer, 2013.

\bibitem{chaudhuri2022privacy}
Kamalika Chaudhuri, Chuan Guo, and Mike Rabbat.
\newblock Privacy-aware compression for federated data analysis.
\newblock In {\em Uncertainty in Artificial Intelligence}, pages 296--306. PMLR, 2022.

\bibitem{chen2001quantization}
Brian Chen and Gregory~W Wornell.
\newblock Quantization index modulation: A class of provably good methods for digital watermarking and information embedding.
\newblock {\em IEEE Transactions on Information theory}, 47(4):1423--1443, 2001.

\bibitem{chen2020breaking}
Wei-Ning Chen, Peter Kairouz, and Ayfer {\"O}zg{\"u}r.
\newblock Breaking the communication-privacy-accuracy trilemma.
\newblock {\em Advances in Neural Information Processing Systems}, 33:3312--3324, 2020.

\bibitem{chen2022breaking}
Wei-Ning Chen, Peter Kairouz, and Ayfer {\"O}zg{\"u}r.
\newblock Breaking the communication-privacy-accuracy trilemma.
\newblock {\em IEEE Transactions on Information Theory}, 69(2):1261--1281, 2022.

\bibitem{chen2024privacy}
Wei-Ning Chen, Dan Song, Ayfer {\"O}zg{\"u}r, and Peter Kairouz.
\newblock Privacy amplification via compression: Achieving the optimal privacy-accuracy-communication trade-off in distributed mean estimation.
\newblock {\em Advances in Neural Information Processing Systems}, 36, 2024.

\bibitem{cohen2002gaussian}
Aaron~S Cohen and Amos Lapidoth.
\newblock The gaussian watermarking game.
\newblock {\em IEEE transactions on Information Theory}, 48(6):1639--1667, 2002.

\bibitem{cover1979capacity}
Thomas Cover and Abbas~El Gamal.
\newblock Capacity theorems for the relay channel.
\newblock {\em IEEE Transactions on information theory}, 25(5):572--584, 1979.

\bibitem{cox1997secure}
Ingemar~J Cox, Joe Kilian, F~Thomson Leighton, and Talal Shamoon.
\newblock Secure spread spectrum watermarking for multimedia.
\newblock {\em IEEE transactions on image processing}, 6(12):1673--1687, 1997.

\bibitem{csiszar1998method}
Imre Csisz{\'a}r.
\newblock The method of types [information theory].
\newblock {\em IEEE Transactions on Information Theory}, 44(6):2505--2523, 1998.

\bibitem{cuff2008communication}
Paul Cuff.
\newblock Communication requirements for generating correlated random variables.
\newblock In {\em 2008 IEEE International Symposium on Information Theory}, pages 1393--1397. IEEE, 2008.

\bibitem{cuff2013distributed}
Paul Cuff.
\newblock Distributed channel synthesis.
\newblock {\em IEEE Trans. Inf. Theory}, 59(11):7071--7096, Nov 2013.

\bibitem{cuff2010coordination}
Paul Cuff, Haim Permuter, and Thomas~M Cover.
\newblock Coordination capacity.
\newblock {\em IEEE Trans. Inf. Theory}, 56(9):4181--4206, Sept 2010.

\bibitem{cuff2009cascade}
Paul Cuff, Han-I Su, and Abbas El~Gamal.
\newblock Cascade multiterminal source coding.
\newblock In {\em 2009 IEEE International Symposium on Information Theory}, pages 1199--1203. IEEE, 2009.

\bibitem{cuff2016differential}
Paul Cuff and Lanqing Yu.
\newblock Differential privacy as a mutual information constraint.
\newblock In {\em Proceedings of the 2016 ACM SIGSAC Conference on Computer and Communications Security}, pages 43--54, 2016.

\bibitem{cuff2009communication}
Paul~W Cuff.
\newblock {\em Communication in networks for coordinating behavior}.
\newblock Stanford University, 2009.

\bibitem{dobrushin1959optimum}
RL~Dobrushin.
\newblock Optimum information transmission through a channel with unknown parameters.
\newblock {\em Radio Eng. Electron}, 4(12):1--8, 1959.

\bibitem{duchi2019lower}
John Duchi and Ryan Rogers.
\newblock Lower bounds for locally private estimation via communication complexity.
\newblock In {\em Conference on Learning Theory}, pages 1161--1191. PMLR, 2019.

\bibitem{duchi2013local}
John~C Duchi, Michael~I Jordan, and Martin~J Wainwright.
\newblock Local privacy and statistical minimax rates.
\newblock In {\em 2013 IEEE 54th Annual Symposium on Foundations of Computer Science}, pages 429--438. IEEE, 2013.

\bibitem{durisi2016toward}
Giuseppe Durisi, Tobias Koch, and Petar Popovski.
\newblock Toward massive, ultrareliable, and low-latency wireless communication with short packets.
\newblock {\em Proceedings of the IEEE}, 104(9):1711--1726, 2016.

\bibitem{dwork2006differential}
Cynthia Dwork.
\newblock Differential privacy.
\newblock In {\em International colloquium on automata, languages, and programming}, pages 1--12. Springer, 2006.

\bibitem{dwork2006our}
Cynthia Dwork, Krishnaram Kenthapadi, Frank McSherry, Ilya Mironov, and Moni Naor.
\newblock Our data, ourselves: Privacy via distributed noise generation.
\newblock In {\em Advances in Cryptology-EUROCRYPT 2006: 24th Annual International Conference on the Theory and Applications of Cryptographic Techniques, St. Petersburg, Russia, May 28-June 1, 2006. Proceedings 25}, pages 486--503. Springer, 2006.

\bibitem{dwork2006calibrating}
Cynthia Dwork, Frank McSherry, Kobbi Nissim, and Adam Smith.
\newblock Calibrating noise to sensitivity in private data analysis.
\newblock In {\em Theory of cryptography conference}, pages 265--284. Springer, 2006.

\bibitem{dwork2014algorithmic}
Cynthia Dwork and Aaron Roth.
\newblock The algorithmic foundations of differential privacy.
\newblock {\em Foundations and Trends in Theoretical Computer Science}, 9(3--4):211--407, 2014.

\bibitem{ekrem2010gaussian}
Ersen Ekrem and Sennur Ulukus.
\newblock On gaussian mimo compound wiretap channels.
\newblock In {\em 2010 44th Annual Conference on Information Sciences and Systems (CISS)}, pages 1--6. IEEE, 2010.

\bibitem{ekrem2012degraded}
Ersen Ekrem and Sennur Ulukus.
\newblock Degraded compound multi-receiver wiretap channels.
\newblock {\em IEEE Transactions on Information Theory}, 58(9):5681--5698, 2012.

\bibitem{el2021achievable}
Abbas El~Gamal, Amin Gohari, and Chandra Nair.
\newblock Achievable rates for the relay channel with orthogonal receiver components.
\newblock In {\em 2021 IEEE Information Theory Workshop (ITW)}, pages 1--6. IEEE, 2021.

\bibitem{el2022strengthened}
Abbas El~Gamal, Amin Gohari, and Chandra Nair.
\newblock A strengthened cutset upper bound on the capacity of the relay channel and applications.
\newblock {\em IEEE Transactions on Information Theory}, 68(8):5013--5043, 2022.

\bibitem{el2005relay}
Abbas El~Gamal and Navid Hassanpour.
\newblock Relay-without-delay.
\newblock In {\em Proceedings. International Symposium on Information Theory, 2005. ISIT 2005.}, pages 1078--1080. IEEE, 2005.

\bibitem{el2007relay}
Abbas El~Gamal, Navid Hassanpour, and James Mammen.
\newblock Relay networks with delays.
\newblock {\em IEEE Transactions on Information Theory}, 53(10):3413--3431, 2007.

\bibitem{el2011network}
Abbas El~Gamal and Young-Han Kim.
\newblock {\em Network information theory}.
\newblock Cambridge university press, 2011.

\bibitem{elias1975universal}
Peter Elias.
\newblock Universal codeword sets and representations of the integers.
\newblock {\em IEEE transactions on information theory}, 21(2):194--203, 1975.

\bibitem{erlingsson2019amplification}
{\'U}lfar Erlingsson, Vitaly Feldman, Ilya Mironov, Ananth Raghunathan, Kunal Talwar, and Abhradeep Thakurta.
\newblock Amplification by shuffling: From local to central differential privacy via anonymity.
\newblock In {\em Proceedings of the Thirtieth Annual ACM-SIAM Symposium on Discrete Algorithms}, pages 2468--2479. SIAM, 2019.

\bibitem{esseen1942liapunov}
Carl-Gustaf Esseen.
\newblock On the {L}iapunov limit error in the theory of probability.
\newblock {\em Ark. Mat. Astr. Fys.}, 28:1--19, 1942.

\bibitem{feinstein1954new}
Amiel Feinstein.
\newblock A new basic theorem of information theory.
\newblock {\em IRE Trans. Inf. Theory}, (4):2--22, 1954.

\bibitem{feldman2022hiding}
Vitaly Feldman, Audra McMillan, and Kunal Talwar.
\newblock Hiding among the clones: A simple and nearly optimal analysis of privacy amplification by shuffling.
\newblock In {\em 2021 IEEE 62nd Annual Symposium on Foundations of Computer Science (FOCS)}, pages 954--964. IEEE, 2022.

\bibitem{feldman2021lossless}
Vitaly Feldman and Kunal Talwar.
\newblock Lossless compression of efficient private local randomizers.
\newblock In {\em International Conference on Machine Learning}, pages 3208--3219. PMLR, 2021.

\bibitem{fernandes2019generalised}
Natasha Fernandes, Mark Dras, and Annabelle McIver.
\newblock Generalised differential privacy for text document processing.
\newblock In {\em Principles of Security and Trust: 8th International Conference, POST 2019}, pages 123--148. Springer International Publishing, 2019.

\bibitem{feyisetan2020privacy}
Oluwaseyi Feyisetan, Borja Balle, Thomas Drake, and Tom Diethe.
\newblock Privacy-and utility-preserving textual analysis via calibrated multivariate perturbations.
\newblock In {\em Proceedings of the 13th international conference on web search and data mining}, pages 178--186, 2020.

\bibitem{flamich2024greedy}
Gergely Flamich.
\newblock Greedy poisson rejection sampling.
\newblock {\em Advances in Neural Information Processing Systems}, 36, 2024.

\bibitem{flamich2020compressing}
Gergely Flamich, Marton Havasi, and Jos{\'{e}}~Miguel Hern{\'{a}}ndez-Lobato.
\newblock Compressing images by encoding their latent representations with relative entropy coding.
\newblock {\em Advances in Neural Information Processing Systems}, 33:16131--16141, 2020.

\bibitem{flamich2022fast}
Gergely Flamich, Stratis Markou, and Jos{\'e}~Miguel Hern{\'a}ndez-Lobato.
\newblock Fast relative entropy coding with {A*} coding.
\newblock In {\em International Conference on Machine Learning}, pages 6548--6577. PMLR, 2022.

\bibitem{flamich2023faster}
Gergely Flamich, Stratis Markou, and Jos{\'{e}} Miguel~Hern{\'{a}}ndez Lobato.
\newblock Faster relative entropy coding with greedy rejection coding.
\newblock {\em arXiv preprint arXiv:2309.15746}, 2023.

\bibitem{flamich2023adaptive}
Gergely Flamich and Lucas Theis.
\newblock Adaptive greedy rejection sampling.
\newblock In {\em 2023 IEEE International Symposium on Information Theory (ISIT)}, pages 454--459. IEEE, 2023.

\bibitem{garg2014communication}
Ankit Garg, Tengyu Ma, and Huy Nguyen.
\newblock On communication cost of distributed statistical estimation and dimensionality.
\newblock In {\em Advances in Neural Information Processing Systems}, pages 2726--2734, 2014.

\bibitem{gelfand1980coding}
S.~I. Gel'fand and M.~S. Pinsker.
\newblock Coding for channel with random parameters.
\newblock {\em Probl. Contr. and Inf. Theory}, 9(1):19--31, 1980.

\bibitem{ghadimi2013stochastic}
Saeed Ghadimi and Guanghui Lan.
\newblock Stochastic first-and zeroth-order methods for nonconvex stochastic programming.
\newblock {\em SIAM journal on optimization}, 23(4):2341--2368, 2013.

\bibitem{goc2024channel}
Daniel Goc and Gergely Flamich.
\newblock On channel simulation with causal rejection samplers.
\newblock {\em arXiv preprint arXiv:2401.16579}, 2024.

\bibitem{goryczka2015comprehensive}
Slawomir Goryczka and Li~Xiong.
\newblock A comprehensive comparison of multiparty secure additions with differential privacy.
\newblock {\em IEEE transactions on dependable and secure computing}, 14(5):463--477, 2015.

\bibitem{grover2015information}
Pulkit Grover, Aaron~B Wagner, and Anant Sahai.
\newblock Information embedding and the triple role of control.
\newblock {\em IEEE Transactions on Information Theory}, 61(4):1539--1549, 2015.

\bibitem{guan2022double}
Qingxiao Guan, Peng Liu, Weiming Zhang, Wei Lu, and Xinpeng Zhang.
\newblock Double-layered dual-syndrome trellis codes utilizing channel knowledge for robust steganography.
\newblock {\em IEEE Transactions on Information Forensics and Security}, 18:501--516, 2022.

\bibitem{guo2023privacy}
Chuan Guo, Kamalika Chaudhuri, Pierre Stock, and Michael Rabbat.
\newblock Privacy-aware compression for federated learning through numerical mechanism design.
\newblock In {\em International Conference on Machine Learning}, pages 11888--11904. PMLR, 2023.

\bibitem{guo2024hypothesis}
Yuanxin Guo, Sadaf Salehkalaibar, Stark~C Draper, and Wei Yu.
\newblock One-shot achievability region for hypothesis testing with communication constraint.
\newblock In {\em 2024 IEEE Information Theory Workshop (ITW)}, pages 55--60. IEEE, 2024.

\bibitem{harsha2007communication}
Prahladh Harsha, Rahul Jain, David McAllester, and Jaikumar Radhakrishnan.
\newblock The communication complexity of correlation.
\newblock In {\em Twenty-Second Annual IEEE Conference on Computational Complexity (CCC'07)}, pages 10--23. IEEE, 2007.

\bibitem{harsha2010communication}
Prahladh Harsha, Rahul Jain, David McAllester, and Jaikumar Radhakrishnan.
\newblock The communication complexity of correlation.
\newblock {\em IEEE Trans. Inf. Theory}, 56(1):438--449, Jan 2010.

\bibitem{hartung1999multimedia}
Frank Hartung and Martin Kutter.
\newblock Multimedia watermarking techniques.
\newblock {\em Proceedings of the IEEE}, 87(7):1079--1107, 1999.

\bibitem{hasircioglu2023communication}
Burak Has{\i}rc{\i}o{\u{g}}lu and Deniz G{\"u}nd{\"u}z.
\newblock Communication efficient private federated learning using dithering.
\newblock In {\em ICASSP 2024-2024 IEEE International Conference on Acoustics, Speech and Signal Processing (ICASSP)}, pages 7575--7579. IEEE, 2024.

\bibitem{havasi2019minimal}
Marton Havasi, Robert Peharz, and Jos{\'{e}}~Miguel Hern{\'{a}}ndez-Lobato.
\newblock Minimal random code learning: Getting bits back from compressed model parameters.
\newblock In {\em 7th International Conference on Learning Representations, ICLR 2019}, 2019.

\bibitem{hayashi2006general}
Masahito Hayashi.
\newblock General nonasymptotic and asymptotic formulas in channel resolvability and identification capacity and their application to the wiretap channel.
\newblock {\em IEEE Transactions on Information Theory}, 52(4):1562--1575, 2006.

\bibitem{hayashi2009information}
Masahito Hayashi.
\newblock Information spectrum approach to second-order coding rate in channel coding.
\newblock {\em IEEE Transactions on Information Theory}, 55(11):4947--4966, 2009.

\bibitem{Heegard1980}
C.~Heegard and A.~El~Gamal.
\newblock On the capacity of computer memory with defects.
\newblock {\em IEEE Transactions on Information Theory}, 29(5):731--739, 1983.

\bibitem{heegard1983capacity}
Chris Heegard and A~El~Gamal.
\newblock On the capacity of computer memory with defects.
\newblock {\em IEEE transactions on Information theory}, 29(5):731--739, 1983.

\bibitem{hegazy2024compression}
Mahmoud Hegazy, R\'{e}mi Leluc, Cheuk~Ting Li, and Aymeric Dieuleveut.
\newblock Compression with exact error distribution for federated learning.
\newblock In {\em Proceedings of The 27th International Conference on Artificial Intelligence and Statistics}, volume 238 of {\em Proceedings of Machine Learning Research}, pages 613--621. PMLR, 02--04 May 2024.

\bibitem{hegazy2022randomized}
Mahmoud Hegazy and Cheuk~Ting Li.
\newblock Randomized quantization with exact error distribution.
\newblock In {\em 2022 IEEE Information Theory Workshop (ITW)}, pages 350--355. IEEE, 2022.

\bibitem{hentila2024communication}
Henri Hentil{\"a}, Yanina~Y Shkel, and Visa Koivunen.
\newblock Communication-constrained secret key generation: Second-order bounds.
\newblock {\em IEEE Transactions on Information Theory}, 2024.

\bibitem{isik2024exact}
Berivan Isik, Wei-Ning Chen, Ayfer {\"O}zg{\"u}r, Tsachy Weissman, and Albert No.
\newblock Exact optimality of communication-privacy-utility tradeoffs in distributed mean estimation.
\newblock {\em Advances in Neural Information Processing Systems}, 36, 2024.

\bibitem{ji2018ultra}
Hyoungju Ji, Sunho Park, Jeongho Yeo, Younsun Kim, Juho Lee, and Byonghyo Shim.
\newblock Ultra-reliable and low-latency communications in 5g downlink: Physical layer aspects.
\newblock {\em IEEE Wireless Communications}, 25(3):124--130, 2018.

\bibitem{kairouz2021advances}
Peter Kairouz, H~Brendan McMahan, Brendan Avent, Aur{\'e}lien Bellet, Mehdi Bennis, Arjun~Nitin Bhagoji, Kallista Bonawitz, Zachary Charles, Graham Cormode, Rachel Cummings, et~al.
\newblock Advances and open problems in federated learning.
\newblock {\em Foundations and trends{\textregistered} in machine learning}, 14(1--2):1--210, 2021.

\bibitem{kalker2002capacity}
TON Kalker and Frans~MJ Willems.
\newblock Capacity bounds and constructions for reversible data-hiding.
\newblock In {\em 2002 14th International Conference on Digital Signal Processing Proceedings. DSP 2002 (Cat. No. 02TH8628)}, volume~1, pages 71--76. IEEE, 2002.

\bibitem{kasiviswanathan2011can}
Shiva~Prasad Kasiviswanathan, Homin~K Lee, Kobbi Nissim, Sofya Raskhodnikova, and Adam Smith.
\newblock What can we learn privately?
\newblock {\em SIAM Journal on Computing}, 40(3):793--826, 2011.

\bibitem{keshet2008channel}
Guy Keshet, Yossef Steinberg, Neri Merhav, et~al.
\newblock Channel coding in the presence of side information.
\newblock {\em Foundations and Trends{\textregistered} in Communications and Information Theory}, 4(6):445--586, 2008.

\bibitem{khisti2024unequal}
Ashish Khisti, Arash Behboodi, Gabriele Cesa, and Pratik Kumar.
\newblock Unequal message protection: One-shot analysis via poisson matching lemma.
\newblock In {\em 2024 IEEE International Symposium on Information Theory (ISIT)}. IEEE, 2024.

\bibitem{kim2007coding}
Young-Han Kim.
\newblock Coding techniques for primitive relay channels.
\newblock In {\em Proc. Forty-Fifth Annual Allerton Conf. Commun., Contr. Comput}, page 2007, 2007.

\bibitem{kim2008state}
Young-Han Kim, Arak Sutivong, and Thomas~M Cover.
\newblock State amplification.
\newblock {\em IEEE Transactions on Information Theory}, 54(5):1850--1859, 2008.

\bibitem{kobayashi2009compound}
Mari Kobayashi, Yingbin Liang, Shlomo Shamai, and M{\'e}rouane Debbah.
\newblock On the compound mimo broadcast channels with confidential messages.
\newblock In {\em 2009 IEEE International Symposium on Information Theory}, pages 1283--1287. IEEE, 2009.

\bibitem{kobus2024gaussian}
Szymon Kobus, Lucas Theis, and Deniz G{\"u}nd{\"u}z.
\newblock Gaussian channel simulation with rotated dithered quantization.
\newblock In {\em 2024 IEEE International Symposium on Information Theory (ISIT)}, pages 1907--1912. IEEE, 2024.

\bibitem{konevcny2016federated}
Jakub Kone{\v{c}}n{\`y}, H~Brendan McMahan, Felix~X Yu, Peter Richt{\'a}rik, Ananda~Theertha Suresh, and Dave Bacon.
\newblock Federated learning: Strategies for improving communication efficiency.
\newblock {\em arXiv preprint arXiv:1610.05492}, 2016.

\bibitem{kostina2013lossy}
Victoria Kostina and Sergio Verd{\'u}.
\newblock Lossy joint source-channel coding in the finite blocklength regime.
\newblock {\em IEEE Transactions on Information Theory}, 59(5):2545--2575, 2013.

\bibitem{kotz2012laplace}
Samuel Kotz, Tomasz Kozubowski, and Krzystof Podgorski.
\newblock {\em The Laplace distribution and generalizations: a revisit with applications to communications, economics, engineering, and finance}.
\newblock Springer Science \& Business Media, 2012.

\bibitem{lang2023joint}
Natalie Lang, Elad Sofer, Tomer Shaked, and Nir Shlezinger.
\newblock Joint privacy enhancement and quantization in federated learning.
\newblock {\em IEEE Transactions on Signal Processing}, 71:295--310, 2023.

\bibitem{last2017lectures}
G{\"u}nter Last and Mathew Penrose.
\newblock {\em Lectures on the {P}oisson process}, volume~7.
\newblock Cambridge University Press, 2017.

\bibitem{lee2018unified}
Si-Hyeon Lee and Sae-Young Chung.
\newblock A unified random coding bound.
\newblock {\em IEEE Transactions on Information Theory}, 64(10):6779--6802, 2018.

\bibitem{lei2022neural}
Eric Lei, Hamed Hassani, and Shirin~Saeedi Bidokhti.
\newblock Neural estimation of the rate-distortion function with applications to operational source coding.
\newblock {\em IEEE Journal on Selected Areas in Information Theory}, 3(4):674--686, 2022.

\bibitem{li2023automated}
Cheuk~Ting Li.
\newblock An automated theorem proving framework for information-theoretic results.
\newblock {\em IEEE Transactions on Information Theory}, 69(11):6857--6877, 2023.

\bibitem{li2024pointwise}
Cheuk~Ting Li.
\newblock Pointwise redundancy in one-shot lossy compression via {P}oisson functional representation.
\newblock In {\em International Zurich Seminar on Information and Communication (IZS 2024)}, 2024.

\bibitem{li2024lossy_arxiv}
Cheuk~Ting Li.
\newblock Pointwise redundancy in one-shot lossy compression via {P}oisson functional representation.
\newblock {\em arXiv preprint}, 2024.

\bibitem{li2025discrete}
Cheuk~Ting Li.
\newblock Discrete layered entropy, conditional compression and a tighter strong functional representation lemma.
\newblock {\em arXiv preprint arXiv:2501.13736}, 2025.

\bibitem{li2019pairwise}
Cheuk~Ting Li and Venkat Anantharam.
\newblock Pairwise multi-marginal optimal transport and embedding for earth mover's distance.
\newblock {\em arXiv preprint arXiv:1908.01388}, 2019.

\bibitem{li2021unified}
Cheuk~Ting Li and Venkat Anantharam.
\newblock A unified framework for one-shot achievability via the {P}oisson matching lemma.
\newblock {\em IEEE Transactions on Information Theory}, 67(5):2624--2651, 2021.

\bibitem{li2018strong}
Cheuk~Ting Li and Abbas El~Gamal.
\newblock Strong functional representation lemma and applications to coding theorems.
\newblock {\em IEEE Transactions on Information Theory}, 64(11):6967--6978, 2018.

\bibitem{li2024channel}
Cheuk~Ting Li et~al.
\newblock Channel simulation: Theory and applications to lossy compression and differential privacy.
\newblock {\em Foundations and Trends{\textregistered} in Communications and Information Theory}, 21(6):847--1106, 2024.

\bibitem{li2018minimax}
Cheuk~Ting Li, Xiugang Wu, Ayfer {\"O}zg{\"u}r, and Abbas {El Gamal}.
\newblock Minimax learning for remote prediction.
\newblock In {\em 2018 IEEE ISIT}, pages 541--545, June 2018.

\bibitem{li2020minimax}
Cheuk~Ting Li, Xiugang Wu, Ayfer Ozgur, and Abbas El~Gamal.
\newblock Minimax learning for distributed inference.
\newblock {\em IEEE Transactions on Information Theory}, 66(12):7929--7938, 2020.

\bibitem{li2020designing}
Weixiang Li, Weiming Zhang, Li~Li, Hang Zhou, and Nenghai Yu.
\newblock Designing near-optimal steganographic codes in practice based on polar codes.
\newblock {\em IEEE Transactions on Communications}, 68(7):3948--3962, 2020.

\bibitem{liang2009compound}
Yingbin Liang, Gerhard Kramer, and H~Vincent Poor.
\newblock Compound wiretap channels.
\newblock {\em EURASIP Journal on Wireless Communications and Networking}, 2009:1--12, 2009.

\bibitem{liang2008multiple}
Yingbin Liang and H~Vincent Poor.
\newblock Multiple-access channels with confidential messages.
\newblock {\em IEEE Transactions on Information Theory}, 54(3):976--1002, 2008.

\bibitem{liao1972multiple}
H~Liao.
\newblock {\em Multiple Access Channels}.
\newblock PhD thesis, Department of Electrical Engineering, University of Hawaii, Honolulu, HI, 1972.

\bibitem{lim2011noisy}
Sung~Hoon Lim, Young-Han Kim, Abbas El~Gamal, and Sae-Young Chung.
\newblock Noisy network coding.
\newblock {\em IEEE Transactions on Information Theory}, 57(5):3132--3152, 2011.

\bibitem{ling2022weighted}
Chih~Wei Ling, Yanxiao Liu, and Cheuk~Ting Li.
\newblock Weighted parity-check codes for channels with state and asymmetric channels.
\newblock In {\em 2022 IEEE International Symposium on Information Theory (ISIT)}, pages 3103--3108. IEEE, 2022.

\bibitem{ling2023weighted_tit}
Chih~Wei Ling, Yanxiao Liu, and Cheuk~Ting Li.
\newblock Weighted parity-check codes for channels with state and asymmetric channels.
\newblock {\em IEEE Transactions on Information Theory}, 2024.

\bibitem{liu2015one}
Jingbo Liu, Paul Cuff, and Sergio Verd{\'u}.
\newblock One-shot mutual covering lemma and {M}arton's inner bound with a common message.
\newblock In {\em 2015 IEEE International Symposium on Information Theory (ISIT)}, pages 1457--1461. IEEE, 2015.

\bibitem{liu2016e_}
Jingbo Liu, Paul Cuff, and Sergio Verd{\'u}.
\newblock $e_\gamma$-resolvability.
\newblock {\em IEEE Transactions on Information Theory}, 63(5):2629--2658, 2016.

\bibitem{liu2008secrecy}
Tie Liu, Vinod Prabhakaran, and Sriram Vishwanath.
\newblock The secrecy capacity of a class of parallel gaussian compound wiretap channels.
\newblock In {\em 2008 IEEE International Symposium on Information Theory}, pages 116--120. IEEE, 2008.

\bibitem{liu2024universal}
Yanxiao Liu, Wei-Ning Chen, Ayfer {\"O}zg{\"u}r, and Cheuk~Ting Li.
\newblock Universal exact compression of differentially private mechanisms.
\newblock {\em Advances in Neural Information Processing Systems}, 37:91492--91531, 2024.

\bibitem{liu2024one}
Yanxiao Liu and Cheuk~Ting Li.
\newblock One-shot coding over general noisy networks.
\newblock In {\em 2024 IEEE International Symposium on Information Theory (ISIT)}, pages 3124--3129. IEEE, 2024.
\newblock (c) 2024 IEEE. Reprinted, with permission.

\bibitem{liu2024hiding}
Yanxiao Liu and Cheuk~Ting Li.
\newblock One-shot information hiding.
\newblock In {\em 2024 IEEE Information Theory Workshop (ITW)}, pages 169--174. IEEE, 2024.
\newblock (c) 2024 IEEE. Reprinted, with permission.

\bibitem{maddison2016poisson}
Chris~J Maddison.
\newblock A {P}oisson process model for {M}onte {C}arlo.
\newblock {\em Perturbation, Optimization, and Statistics}, pages 193--232, 2016.

\bibitem{maddison2014sampling}
Chris~J Maddison, Daniel Tarlow, and Tom Minka.
\newblock A* sampling.
\newblock {\em Advances in neural information processing systems}, 27, 2014.

\bibitem{malik2024distributionally}
Vikrant Malik, Taylan Kargin, Victoria Kostina, and Babak Hassibi.
\newblock A distributionally robust approach to shannon limits using the wasserstein distance.
\newblock In {\em 2024 IEEE International Symposium on Information Theory (ISIT)}, pages 861--866. IEEE, 2024.

\bibitem{marton1979coding}
Katalin Marton.
\newblock A coding theorem for the discrete memoryless broadcast channel.
\newblock {\em IEEE Transactions on Information Theory}, 25(3):306--311, 1979.

\bibitem{mcmahan2017communication}
Brendan McMahan, Eider Moore, Daniel Ramage, Seth Hampson, and Blaise~Aguera y~Arcas.
\newblock Communication-efficient learning of deep networks from decentralized data.
\newblock In {\em Artificial intelligence and statistics}, pages 1273--1282. PMLR, 2017.

\bibitem{merhav2000random}
Neri Merhav.
\newblock On random coding error exponents of watermarking systems.
\newblock {\em IEEE Transactions on Information Theory}, 46(2):420--430, 2000.

\bibitem{mironov2017renyi}
Ilya Mironov.
\newblock R{\'e}nyi differential privacy.
\newblock In {\em 2017 IEEE 30th computer security foundations symposium (CSF)}, pages 263--275. IEEE, 2017.

\bibitem{mondelli2019new}
Marco Mondelli, S~Hamed Hassani, and R{\"u}diger Urbanke.
\newblock A new coding paradigm for the primitive relay channel.
\newblock {\em Algorithms}, 12(10):218, 2019.

\bibitem{moulin2008universal}
Pierre Moulin.
\newblock Universal fingerprinting: Capacity and random-coding exponents.
\newblock In {\em 2008 IEEE International Symposium on Information Theory}, pages 220--224. IEEE, 2008.

\bibitem{moulin2003information}
Pierre Moulin and Joseph~A O'Sullivan.
\newblock Information-theoretic analysis of information hiding.
\newblock {\em IEEE Transactions on information theory}, 49(3):563--593, 2003.

\bibitem{moulin2004new}
Pierre Moulin and Ying Wang.
\newblock New results on steganographic capacity.
\newblock In {\em Proc. CISS Conference}. Citeseer, 2004.

\bibitem{moulin2007capacity}
Pierre Moulin and Ying Wang.
\newblock Capacity and random-coding exponents for channel coding with side information.
\newblock {\em IEEE Transactions on Information Theory}, 53(4):1326--1347, 2007.

\bibitem{phan2024importance}
Buu Phan, Ashish Khisti, and Christos Louizos.
\newblock Importance matching lemma for lossy compression with side information.
\newblock In {\em International Conference on Artificial Intelligence and Statistics}, pages 1387--1395. PMLR, 2024.

\bibitem{polyanskiy2013dispersion}
Yury Polyanskiy.
\newblock On dispersion of compound dmcs.
\newblock In {\em 2013 51st Annual Allerton Conference on Communication, Control, and Computing (Allerton)}, pages 26--32. IEEE, 2013.

\bibitem{polyanskiy2010channel}
Yury Polyanskiy, H~Vincent Poor, and Sergio Verd{\'u}.
\newblock Channel coding rate in the finite blocklength regime.
\newblock {\em IEEE Trans. Inf. Theory}, 56(5):2307--2359, 2010.

\bibitem{salmon2011parallel}
John~K Salmon, Mark~A Moraes, Ron~O Dror, and David~E Shaw.
\newblock Parallel random numbers: as easy as 1, 2, 3.
\newblock In {\em Proceedings of 2011 international conference for high performance computing, networking, storage and analysis}, pages 1--12, 2011.

\bibitem{scarlett2015dispersions}
Jonathan Scarlett.
\newblock On the dispersions of the {G}el’fand--{P}insker channel and dirty paper coding.
\newblock {\em IEEE Transactions on Information Theory}, 61(9):4569--4586, 2015.

\bibitem{schaefer2015secrecy}
Rafael~F. Schaefer and Sergey Loyka.
\newblock The secrecy capacity of compound gaussian mimo wiretap channels.
\newblock {\em IEEE Transactions on Information Theory}, 61(10):5535--5552, 2015.

\bibitem{shah2022optimal}
Abhin Shah, Wei-Ning Chen, Johannes Balle, Peter Kairouz, and Lucas Theis.
\newblock Optimal compression of locally differentially private mechanisms.
\newblock In {\em International Conference on Artificial Intelligence and Statistics}, pages 7680--7723. PMLR, 2022.

\bibitem{shahmiri2024communication}
Ali~Moradi Shahmiri, Chih~Wei Ling, and Cheuk~Ting Li.
\newblock Communication-efficient laplace mechanism for differential privacy via random quantization.
\newblock In {\em ICASSP 2024-2024 IEEE International Conference on Acoustics, Speech and Signal Processing (ICASSP)}, pages 4550--4554. IEEE, 2024.

\bibitem{shannon1948mathematical}
Claude~E Shannon.
\newblock A mathematical theory of communication.
\newblock {\em Bell system technical journal}, 27(3):379--423, 1948.

\bibitem{shannon1957certain}
Claude~E Shannon.
\newblock Certain results in coding theory for noisy channels.
\newblock {\em Information and control}, 1(1):6--25, 1957.

\bibitem{somekh2003error}
Anelia Somekh-Baruch and Neri Merhav.
\newblock On the error exponent and capacity games of private watermarking systems.
\newblock {\em IEEE Transactions on Information Theory}, 49(3):537--562, 2003.

\bibitem{somekh2004capacity}
Anelia Somekh-Baruch and Neri Merhav.
\newblock On the capacity game of public watermarking systems.
\newblock {\em IEEE Transactions on Information Theory}, 50(3):511--524, 2004.

\bibitem{song2016likelihood}
Eva~C Song, Paul Cuff, and H~Vincent Poor.
\newblock The likelihood encoder for lossy compression.
\newblock {\em IEEE Trans. Inf. Theory}, 62(4):1836--1849, 2016.

\bibitem{steinberg2006reversible}
Yossef Steinberg.
\newblock Reversible information embedding with compressed host at the decoder.
\newblock In {\em 2006 IEEE International Symposium on Information Theory}, pages 188--191. IEEE, 2006.

\bibitem{steinberg2008coding}
Yossef Steinberg.
\newblock Coding for channels with rate-limited side information at the decoder, with applications.
\newblock {\em IEEE transactions on information theory}, 54(9):4283--4295, 2008.

\bibitem{sumszyk2009information}
Orna Sumszyk and Yossef Steinberg.
\newblock Information embedding with reversible stegotext.
\newblock In {\em 2009 IEEE International Symposium on Information Theory}, pages 2728--2732. IEEE, 2009.

\bibitem{suresh2017distributed}
Ananda~Theertha Suresh, X~Yu Felix, Sanjiv Kumar, and H~Brendan McMahan.
\newblock Distributed mean estimation with limited communication.
\newblock In {\em International conference on machine learning}, pages 3329--3337. PMLR, 2017.

\bibitem{sutivong2005channel}
Arak Sutivong, Mung Chiang, Thomas~M Cover, and Young-Han Kim.
\newblock Channel capacity and state estimation for state-dependent gaussian channels.
\newblock {\em IEEE Transactions on Information Theory}, 51(4):1486--1495, 2005.

\bibitem{tan2013dispersions}
Vincent~YF Tan and Oliver Kosut.
\newblock On the dispersions of three network information theory problems.
\newblock {\em IEEE Transactions on Information Theory}, 60(2):881--903, 2013.

\bibitem{theis2022algorithms}
Lucas Theis and Noureldin~Y Ahmed.
\newblock Algorithms for the communication of samples.
\newblock In {\em International Conference on Machine Learning}, pages 21308--21328. PMLR, 2022.

\bibitem{theis2022lossy}
Lucas Theis, Tim Salimans, Matthew~D Hoffman, and Fabian Mentzer.
\newblock Lossy compression with {G}aussian diffusion.
\newblock {\em arXiv preprint arXiv:2206.08889}, 2022.

\bibitem{triastcyn2021dp}
Aleksei Triastcyn, Matthias Reisser, and Christos Louizos.
\newblock {DP}-{REC}: Private \& communication-efficient federated learning.
\newblock {\em arXiv preprint arXiv:2111.05454}, 2021.

\bibitem{tung1978multiterminal}
S.-Y. Tung.
\newblock {\em Multiterminal source coding}.
\newblock PhD thesis, School of Electrical Engineering, Cornell University, Ithaca, NY, 1978.

\bibitem{unsal2023information}
Ay{\c{s}}e {\"U}nsal and Melek {\"O}nen.
\newblock Information-theoretic approaches to differential privacy.
\newblock {\em ACM Computing Surveys}, 56(3):1--18, 2023.

\bibitem{van1971three}
Edward~C Van Der~Meulen.
\newblock Three-terminal communication channels.
\newblock {\em Advances in applied Probability}, 3(1):120--154, 1971.

\bibitem{verdu2012non}
Sergio Verd{\'u}.
\newblock Non-asymptotic achievability bounds in multiuser information theory.
\newblock In {\em 2012 50th Annual Allerton Conference on Communication, Control, and Computing (Allerton)}, pages 1--8. IEEE, 2012.

\bibitem{verdu1994general}
Sergio Verd{\'{u}} and Te~Sun Han.
\newblock A general formula for channel capacity.
\newblock {\em IEEE Trans. Inf. Theory}, 40(4):1147--1157, 1994.

\bibitem{wang2011dispersion}
D.~{Wang}, A.~{Ingber}, and Y.~{Kochman}.
\newblock The dispersion of joint source-channel coding.
\newblock In {\em 2011 49th Annual Allerton Conference on Communication, Control, and Computing (Allerton)}, pages 180--187, Sep. 2011.

\bibitem{wang2008perfectly}
Ying Wang and Pierre Moulin.
\newblock Perfectly secure steganography: Capacity, error exponents, and code constructions.
\newblock {\em IEEE Transactions on Information Theory}, 54(6):2706--2722, 2008.

\bibitem{warner1965randomized}
Stanley~L Warner.
\newblock Randomized response: A survey technique for eliminating evasive answer bias.
\newblock {\em Journal of the American Statistical Association}, 60(309):63--69, 1965.

\bibitem{watanabe2015nonasymp}
S.~Watanabe, S.~Kuzuoka, and V.~Y.~F. Tan.
\newblock Nonasymptotic and second-order achievability bounds for coding with side-information.
\newblock {\em IEEE Trans. Inf. Theory}, 61(4):1574--1605, April 2015.

\bibitem{watanabe2015nonasymptotic}
Shun Watanabe, Shigeaki Kuzuoka, and Vincent~YF Tan.
\newblock Nonasymptotic and second-order achievability bounds for coding with side-information.
\newblock {\em IEEE Transactions on Information Theory}, 61(4):1574--1605, 2015.

\bibitem{wolfowitz1980simultaneous}
J~Wolfowitz.
\newblock {\em Simultaneous Channels}.
\newblock New York: Springer-Verlag, 1980.

\bibitem{wyner1976rate}
Aaron Wyner and Jacob Ziv.
\newblock The rate-distortion function for source coding with side information at the decoder.
\newblock {\em IEEE Transactions on information Theory}, 22(1):1--10, 1976.

\bibitem{wyner1975wire}
Aaron~D Wyner.
\newblock The wire-tap channel.
\newblock {\em Bell system technical journal}, 54(8):1355--1387, 1975.

\bibitem{wyner1978rate}
Aaron~D Wyner.
\newblock The rate-distortion function for source coding with side information at the decoder-ii. general sources.
\newblock {\em Information and control}, 38(1):60--80, 1978.

\bibitem{xu2023information}
Yinfei Xu, Jian Lu, Xuan Guang, and Wei Xu.
\newblock Information embedding with stegotext reconstruction.
\newblock {\em IEEE Transactions on Information Forensics and Security}, 19:1415--1428, 2023.

\bibitem{yamamoto1982wyner}
Hirosuke Yamamoto.
\newblock Wyner-ziv theory for a general function of the correlated sources (corresp.).
\newblock {\em IEEE Transactions on Information Theory}, 28(5):803--807, 1982.

\bibitem{yan2023layered}
Guangfeng Yan, Tan Li, Tian Lan, Kui Wu, and Linqi Song.
\newblock Layered randomized quantization for communication-efficient and privacy-preserving distributed learning.
\newblock {\em arXiv preprint arXiv:2312.07060}, 2023.

\bibitem{yang2019wiretap}
Wei Yang, Rafael~F Schaefer, and H~Vincent Poor.
\newblock Wiretap channels: Nonasymptotic fundamental limits.
\newblock {\em IEEE Transactions on Information Theory}, 65(7):4069--4093, 2019.

\bibitem{yassaee2015one}
Mohammad~Hossein Yassaee.
\newblock One-shot achievability via fidelity.
\newblock In {\em Proc. IEEE Int. Symp. Inf. Theory}, pages 301--305. IEEE, 2015.

\bibitem{yassaee2013non}
Mohammad~Hossein Yassaee, Mohammad~Reza Aref, and Amin Gohari.
\newblock Non-asymptotic output statistics of random binning and its applications.
\newblock In {\em 2013 IEEE International Symposium on Information Theory}, pages 1849--1853. IEEE, 2013.

\bibitem{yassaee2013technique}
Mohammad~Hossein Yassaee, Mohammad~Reza Aref, and Amin Gohari.
\newblock A technique for deriving one-shot achievability results in network information theory.
\newblock In {\em 2013 IEEE International Symposium on Information Theory}, pages 1287--1291. IEEE, 2013.

\bibitem{zamir2002universal}
Ram Zamir and Meir Feder.
\newblock On universal quantization by randomized uniform/lattice quantizers.
\newblock {\em IEEE Transactions on Information Theory}, 38(2):428--436, 2002.

\bibitem{ziv1985universal}
Jacob Ziv.
\newblock On universal quantization.
\newblock {\em IEEE Transactions on Information Theory}, 31(3):344--347, 1985.

\end{thebibliography}

\end{document}